\documentclass[pra,twocolumn,a4paper,showpacs,aps,10pt,nofootinbib,allowtoday]{revtex4-1}
\usepackage{graphicx}
\usepackage{soul}
\usepackage{hyperref}
\usepackage[section]{placeins}
\usepackage{amsmath}
\usepackage{amsthm}
\usepackage{amssymb}
\usepackage{dsfont}
\usepackage{bbold}
\usepackage{float}
\usepackage{color}
\setlength{\paperheight}{11in}
\newtheorem{theorem}{Theorem}
\newtheorem{lemma}{Lemma}

\newcommand{\sgn}{\mathrm{sgn}}
\newcommand{\Imag}{\mathrm{Im}}
\newcommand{\Real}{\mathrm{Re}}

\newcommand{\ack}{\subsection*{\normalsize \textbf{Acknowledgment}}}

\DeclareFontFamily{U}{mathx}{\hyphenchar\font45}
\DeclareFontShape{U}{mathx}{m}{n}{
      <5> <6> <7> <8> <9> <10>
      <10.95> <12> <14.4> <17.28> <20.74> <24.88>
      mathx10
      }{}
\DeclareSymbolFont{mathx}{U}{mathx}{m}{n}
\DeclareFontSubstitution{U}{mathx}{m}{n}
\DeclareMathSymbol{\bigplus}{1}{mathx}{"90}
\DeclareMathSymbol{\bigtimes}{1}{mathx}{"91}

\newcommand{\bit}{\begin{itemize}}
	\newcommand{\eit}{\end{itemize}\par\noindent}
\newcommand{\ben}{\begin{enumerate}}
	\newcommand{\een}{\end{enumerate}\par\noindent}
\newcommand{\beq}{\begin{equation}}
\newcommand{\eeq}{\end{equation}\par\noindent}
\newcommand{\beqa}{\begin{eqnarray*}}
	\newcommand{\eeqa}{\end{eqnarray*}\par\noindent}
\newcommand{\beqn}{\begin{eqnarray}}
\newcommand{\eeqn}{\end{eqnarray}\par\noindent}

\usepackage{hyperref}
\usepackage{amsmath}
\usepackage{amsthm}
\usepackage{graphicx}
\usepackage{braket}
\usepackage[makeroom]{cancel}

\newtheorem{definition}{Definition}
\newtheorem{proposition}{Proposition}
\newtheorem{corollary}{Corollary}
\newtheorem{example}{Example}
\newtheorem{remark}{Remark}
\newtheorem{conjecture}{Conjecture}
\begin{document}
	
\title{Joint measurability structures realizable with qubit measurements:\\ incompatibility via marginal surgery}
	
\author{Nikola Andrejic}
	\email{nikola.andrejic@pmf.edu.rs}
	\affiliation{University of Ni\v s, Faculty of Sciences and Mathematics,  Vi\v segradska 33, 18000 Ni\v s, Serbia}

\author{Ravi Kunjwal}
	\email{rkunjwal@ulb.ac.be}
	\affiliation{Perimeter Institute for Theoretical Physics, 31 Caroline Street North, Waterloo, Ontario Canada N2L 2Y5}
	\affiliation{Centre for Quantum Information and Communication, Ecole polytechnique de Bruxelles,
		CP 165, Universit\'e libre de Bruxelles, 1050 Brussels, Belgium}
	
	\date{\today}
	
	\begin{abstract}
Measurements in quantum theory exhibit incompatibility, i.e., they can fail to be jointly measurable. An intuitive way to represent the (in)compatibility relations among a set of measurements is via a hypergraph representing their joint measurability structure: its vertices represent measurements and its hyperedges represent (all and only) subsets of compatible measurements. Projective measurements in quantum theory realize (all and only) joint measurability structures that are graphs. On the other hand, general measurements represented by positive operator-valued measures (POVMs) can realize arbitrary joint measurability structures. Here we explore the scope of joint measurability structures realizable with qubit POVMs. We develop a technique that we term {\em marginal surgery} to obtain nontrivial joint measurability structures starting from a set of compatible measurements. We show explicit examples of marginal surgery on a special set of qubit POVMs to construct joint measurability structures such as  $N$-cycle and $N$-Specker scenarios for any integer $N\geq 3$. We also show the realizability of various joint measurability structures with $N\in\{4,5,6\}$ vertices. In particular, we show that {\em all} possible joint measurability structures with $N=4$ vertices are realizable. We conjecture that {\em all} joint measurability structures are realizable with qubit POVMs. This contrasts with the unbounded dimension required in \href{https://doi.org/10.1103/PhysRevA.89.052126}{R.~Kunjwal {\em et al.}, Phys.~Rev.~A 89, 052126 (2014)}. Our results also render this previous construction maximally efficient in terms of the required Hilbert space dimension. We also obtain a sufficient condition for the joint measurability of any set of binary qubit POVMs which powers many of our results and should be of independent interest.
\end{abstract}
	
\maketitle
\tableofcontents

\section{Introduction}
A fundamental sense in which quantum theory departs from classical physics is the existence of incompatible measurements in the former. That is, it is impossible to simulate certain sets of quantum measurements by coarse-graining a single quantum measurement \cite{HRS08}. This incompatibility is crucial to the demonstration of nonclassicality in quantum theory, e.g., both Bell inequality violations \cite{Bell64, Bell66, BCP14} and Kochen-Specker (KS) contextuality \cite{KS67} are impossible in the absence of incompatible measurements. The joint measurability (or compatibility) of a set of projective measurements is a binary property, characterized entirely by their pairwise commutativity. On the other hand, the joint measurability of general quantum measurements represented by positive operator-valued measures (POVMs) is, in general, not characterized by pairwise commutativity. It is possible to have nonprojective measurements that are noncommuting and that nonetheless admit a joint POVM that can be
coarse-grained to obtain their statistics. Recent years have seen a  steady increase in research activity geared towards a better understanding of the joint measurability of POVMs, its connection to steering, state discrimination, Bell nonlocality, as well as the general area of determining conditions for joint measurability of sets of POVMs \cite{WPF09, QVB14, UMG14, KHF14, HMZ16, ULMH16, CS16, CHT19, SSC19, UKS19, OB19}.

The joint measurability relations among a set of POVMs can be represented by a hypergraph that we will refer to as the {\em joint measurability structure} of the set, following Ref.~\cite{KHF14}.\footnote{We will recall the definition of joint measurability structure more formally later in this paper but for now it suffices to note that it
specifies whether any subset of POVMs in a given set is compatible or incompatible. Note also that we use the terms `joint measurability' and `compatibility' interchangeably in this paper and `incompatibility' denotes the lack of joint measurability.} This joint measurability structure is the collection of (all and only) jointly measurable subsets of the given set of POVMs. Since any set of projective measurements is jointly measurable if and only if they commute pairwise \cite{HRS08}, its joint measurability structure corresponds to a graph,
i.e., a collection of jointly measurable subsets, each of size two. Ref.~\cite{HFR14} showed that all joint measurability structures corresponding to graphs can be realized by projective measurements. Later, Ref.~\cite{KHF14} realized arbitrary joint measurability structures (going beyond the ones that are graphs) using POVMs. 

In the sphere of contextuality research, the use of nonprojective POVMs in proofs of KS-contextuality has been controversial \cite{Spekkens14}. However, within the general framework of contextuality \`a la Spekkens \cite{Spekkens05}, arbitrary POVMs can be accommodated. Indeed, in an important application of this framework to POVMs \cite{MPK16}, the fact that the same POVM can be achieved via different measurement procedures -- namely, a uniform mixture of three binary outcome measurements and a fair coin flip, each realizing the POVM $\{I/2,I/2\}$ -- was used to obtain a noise-robust noncontextuality inequality. When this framework is applied to the case of KS-type experiments, the different measurement procedures used to implement a POVM correspond to different joint measurements of the POVM with other POVMs. In this situation, the nonclassicality of POVMs can be revealed using noise-robust noncontextuality inequalities inspired by proofs of the KS theorem \cite{KS15,KS18,Kunjwal19, Kunjwal20}. The realizability of arbitrary joint measurability structures by POVMs \cite{KHF14} indicates that there is a whole zoo of joint measurability structures that remains to be explored from the point of view of contextuality in KS-type experiments. These structures were never the object of study in treatments of KS-contextuality because they admit no realizations with projective measurements. For such structures, only nonprojective POVMs can provide any evidence of nonclassicality. The simplest example of such a joint measurability structure is Specker's scenario (Fig.~\ref{speckerscenario}), which was shown to be realizable on a qubit \cite{LSW11} even before the quantum realizability of arbitrary joint measurability structures with POVMs was shown \cite{KHF14}. In this scenario, at least under the assumption that the operational theory is quantum theory, one can witness contextuality \cite{KG14, ZCL17}. 

It remains to extend such a demonstration
of contextuality \cite{KG14} to arbitrary joint measurability structures in general operational theories \cite{Kunjwal14, Kunjwal16, Kunjwal17}. Before such a project can be undertaken, however, one needs a better understanding of the scope of joint measurability structures that can be realized by quantum systems of limited Hilbert space dimension, particularly if one is keen to build experimental tests and quantum information protocols based on such joint measurability structures. Such an understanding can also potentially aid the elucidation of facts about joint measurability that are particular to quantum theory in the broader landscape of general probabilistic theories \cite{Barrett07,BGG13, SB14, FHL17, GKS18}. Already, in Ref.~\cite{GKS18}, it was shown that considering the joint measurability structure corresponding to a complete graph on four vertices is enough to separate almost quantum correlations (generalized to include single laboratory situations) \cite{NGH15, AFL15} from those correlations that can be realized with projective measurements in quantum theory \cite{AFL15}. This result was taken as a failure of Specker's principle \cite{Cabello12, GKS18} -- that any set of pairwise compatible measurements is globally compatible -- for any measurements underlying almost quantum correlations. 

However, Specker's principle also fails to hold within quantum theory if the measurements are not projective, so one can always ask: what if quantum measurements are allowed to be arbitary POVMs rather than just projective measurements? Is there still, in any sense, a qualitative difference between measurements in an almost quantum theory and those in quantum theory if we allow arbitrary POVMs in the latter? Presumably, an answer to this question would require a framework for addressing nonclassicality of quantum correlations arising from arbitrary POVMs and, hence, the ability to deal with arbitary joint measurability structures. Such a framework would be a natural extension of the ones proposed in Refs.~\cite{Kunjwal19, Kunjwal20, KS15, KS18}. 

Coming back to the construction of Ref.~\cite{KHF14}, note that it requires a steadily growing Hilbert space dimension to realize joint measurability structures corresponding to ever larger sets of measurements, rendering it quite inefficient in this sense. This raises the natural question of whether it is possible to realize joint measurability structures for arbitrarily large sets of measurements using the smallest possible Hilbert space dimension, i.e., using qubit POVMs. 

All of the considerations above motivate the present study as a first step towards addressing general features of joint measurability in quantum theory for systems of limited Hilbert space dimension. A summary of our results follows. 

We introduce a technique that we term {\em marginal surgery} for realizing new joint measurability structures starting from a joint POVM of a set of compatible POVMs. We use marginal surgery to show that two families of joint measurability structures can be realized with qubit POVMs: $N$-cycle scenarios and $N$-Specker scenarios for any finite number, $N\geq3$, of measurements. We also show the realizability of many joint measurability structures with $N$ vertices, $N\in\{4,5,6\}$. In particular, for $N=4$, we show that {\em all} conceivable joint measurability structures can be realized with qubit POVMs. This is also obviously true for $N=3$ vertices, where the realizability of $3$-Specker scenario is enough to make this claim, the realizability of other scenarios requiring only incompatibility of pairs of POVMs. Motivated by the fact that all conceivable joint measurability structures for $N\in\{3,4\}$ vertices are thus realizable with qubit POVMs, and the fact that structures such as $N$-Specker and $N$-cycle with arbitrary large $N$ are also realizable in this way, we conjecture that arbitrary joint measurability structures can be realized with qubit POVMs. Although we do not have a concrete proposal, we expect that a general recipe for realizing arbitrary joint measurability structures with qubit POVMs could potentially be obtained by some clever application of the method of marginal surgery. On the other hand, we also mention a potential counter-example to the conjecture, namely, a joint measurability structure which can perhaps be shown to be not realizable with qubit POVMs. Along the way, we obtain a significant result which should be of independent interest: a sufficient condition for the joint measurability of any set of binary qubit POVMs which goes some way towards addressing the general problem of obtaining necessary and sufficient conditions for the joint measurability of arbitrary sets of POVMs.

We now outline the structure of the paper:
\begin{itemize}
	\item In Section \ref{sec2}, we start with a review of definitions and some known facts about joint measurability of POVMs that will be used in this paper.  Sections \ref{subsec2_1}, and \ref{subsec2_3} deal with defining the notion of geometric equivalence between sets of qubit POVMs and its application to a special type of these sets, namely, the planar symmetric POVMs. These definitions are then used in deriving later results.
	
	\item   In Section \ref{sec3}, we introduce marginal surgery and apply it, firstly, to the case of pairs of POVMs out of a set of qubit POVMs (Sec.~\ref{subsec3_1}), realizing $N$-cycle scenarios as an example, and then to the case of arbitrary subsets of a set of qubit POVMs (Sec.~\ref{subsec3_2}), realizing $N$-Specker scenarios as an example. 
	
	\item In Section \ref{sec4}, we obtain a sufficient condition for joint measurability of a set of coplanar and unbiased binary qubit POVMs with the same sharpness parameter (Theorem \ref{counbiqu}). We then generalize this obtain a sufficient condition for joint measurability of any set of binary qubit POVMs (Theorem \ref{suffNnsp}).
	
	\item In Section \ref{sec5}, we study the qubit realizability of various joint measurability structures ``in between" $N$-cycle and $N$-Specker for $N=4,5,6$ vertices (cf.~Sec.~\ref{subsec5_1}). In Sections \ref{subsec5_2} and \ref{subsec5_3}, we show the qubit realizability of all joint measurability structures with $N=4$ vertices.
	
	\item We conclude with discussion and some open questions in Section \ref{sec6}.
\end{itemize}

\section{A review of joint measurability of binary qubit measurements}\label{sec2}
Here we review the joint measurability conditions -- necessary and/or sufficient -- for binary qubit POVMs that are known from previous work in this area. We also make some general observations about joint measurability that will be useful in later sections. We begin with basic definitions below.

\subsection{POVMs and their joint measurability}
\begin{definition}[POVMs and binary qubit POVMs]
Given a non-empty set $\mathcal{O}$ and a $\sigma$-algebra\footnote{A $\sigma$-algebra on $\mathcal{O}$ is any collection of subsets of $\mathcal{O}$ containing the empty subset and closed under taking complements, countable unions, and countable intersections.} $\mathcal{F}$ of subsets of $\mathcal{O}$, a positive operator-valued measure (POVM) $E$ on $\mathcal{F}$ is defined as the map $E:\mathcal{F}\rightarrow \mathcal{B}_{+}(\mathcal{H})$, where $E(\cup_{X\in\mathcal{F}} X)=\sum_{X\in\mathcal{F}}E(X)=I$, $\cup_{X\in\mathcal{F}} X$ being a union of pairwise disjoint subsets $X\in\mathcal{F}$ satisfying $\cup_{X\in\mathcal{F}} X=\mathcal{O}$. Here $\mathcal{B}_+(\mathcal{H})$ denotes the set of positive semidefinite operators on the Hilbert space $\mathcal{H}$ and $I$ is the identity operator on $\mathcal{H}$. $E$ becomes a projection-valued measure (PVM) under the additional constraint that $E(X)^2=E(X)$ for all $X\in\mathcal{F}$.
	
When the Hilbert space is two-dimensional, i.e., $\mathcal{H}\cong\mathbb{C}^2$, and we let $\mathcal{O}=\{x^{(1)},x^{(2)}\}$ with $\mathcal{F}=\{\emptyset,\{x^{(1)}\},\{x^{(2)}\},$ $\{x^{(1)},x^{(2)}\} \}$, we have the notion of a binary qubit POVM as any POVM $E:\mathcal{F}\to\mathcal{B}_+(\mathbb{C}^2)$.
\label{def1}
\end{definition}
Definition \ref{def1} implies that for binary qubit POVMs it is enough to specify $E(\{x^{(1)}\})$ since $E(\{x^{(2)}\})=I-E(\{x^{(1)}\}$. $x^{(1)}$ and $x^{(2)}$ are outcomes of the measurement and we will henceforth label them $x^{(1)} = 1$ and $x^{(2)}=-1$ in keeping with the convention for spin-1/2 POVMs in quantum theory and also for later notational convenience. For simplicity, we write $E(x)\equiv E(\{x\})$ where $x=\pm1$. Thus, $E(x)$ is a positive semidefinite operator bounded above by the identity, i.e., an effect. We can write it in the usual Pauli operator basis of $\mathcal{B}(\mathbb{C}^2)$. Then every binary qubit POVM $E$ can be parametrized with four real parameters $\alpha$ and $\vec{a}=(a_1,a_2,a_3)$ (see Sec.~3.1 in Ref.~\cite{HRS08}):
\begin{align}
&E(1)=\frac{1}{2}\left(\alpha I+\vec{a}\cdot\vec{\sigma}\right),\quad E(-1)=\frac{1}{2}\left((2-\alpha) I -\vec{a}\cdot\vec{\sigma}\right),\nonumber\\
&a\leq\alpha\leq 2-a, \textrm{ where }a\equiv||\vec{a}||=\sqrt{a_1^2+a_2^2+a_3^2}.
\label{biqodef}
\end{align}
The POVM is a projection-valued measure (PVM) when $\alpha=a=1$. We will call the vector $\vec{a}$ the {\em Bloch vector} of the POVM $E$. Three or more POVMs will be called {\em coplanar} if their Bloch vectors are coplanar.\\

We now define the joint measurability (or, equivalently, compatibility) of POVMs, following Ref.~\cite{HRS08}: 

\begin{definition}[Joint measurability of POVMs]
A set of POVMs $\{E_k\}_{k=1}^N$ with respective outcome sets and $\sigma$-algebras given by $\{\mathcal{O}_k,\mathcal{F}_k)\}_{k=1}^N$ is said to be jointly measurable (or compatible) if there exists a (joint) POVM $G$ defined on $(\mathcal{O}_1\times \mathcal{O}_2\times\dots\times \mathcal{O}_N,\mathcal{F}_1\otimes\mathcal{F}_2\otimes\dots\otimes\mathcal{F}_N)$ such that 
\begin{align}
&G(X_1\times \mathcal{O}_1\times\dots\times \mathcal{O}_N)=E_1(X_1),\nonumber\\
&G(\mathcal{O}_1\times X_2\times\dots\times \mathcal{O}_N)=E_2(X_2),\nonumber\\
&\vdots\nonumber\\
&G(\mathcal{O}_1\times \mathcal{O}_2\times\dots\times X_N)=E_N(X_N),
\end{align}
for all $X_k\in \mathcal{F}_k$. Here, for the $\sigma$-algebras $\mathcal{F}_1,\mathcal{F}_2,\dots,\mathcal{F}_N$, we denote by $\mathcal{F}_1\otimes\mathcal{F}_2\otimes\dots\otimes\mathcal{F}_N$  the product $\sigma$-algebra generated by sets of the form $X_1\times X_2\times \dots\times X_N$, $X_k\in\mathcal{F}_k$ for $k=\overline{1,N}$. Formally, this product $\sigma$-algebra is the intersection of all $\sigma$-algebras containing $\{X_1\times X_2\times \dots\times X_N|X_k\in\mathcal{F}_k, k=\overline{1,N}\}$.
\label{def2}
\end{definition}

Following Ref.~\cite{KHF14}, we also need the notion of a joint measurability structure:
\begin{definition}[Joint measurability structure]
A joint measurability structure is a hypergraph with a set of vertices, $V$, and a family of (finite) subsets (called hyperedges) of $V$, denoted $E\subseteq \{e|e\subseteq V\}$. Each vertex denotes a measurement in a set of measurements (indexed by $V$) and each hyperedge denotes a subset of compatible (or jointly measurable) measurements. Any subset of vertices that do not share a common hyperedge represents an incompatible subset of the given set of measurements. To model the requirement that every subset of a set of compatible measurements is also compatible, a joint measurability structure must also satisfy
\begin{equation}
e\in E, e'\subseteq e \Rightarrow e'\in E.
\end{equation}
\end{definition}
We will call a joint measurability structure {\em quantum-realizable}, or that it admits a {\em quantum representation}, if its vertices can be represented by quantum measurements, i.e., POVMs, satisfying the (in)compatibility relations dictated by it.

We now mention some important examples of joint measurability structures, some of which we will realize with binary qubit POVMs:
\begin{example}[$N$-cycle scenario]\label{NCycdefn}
Let $s=\{E_1,\ldots,E_N\}$ be a set of $N$ vertices (representing measurements). (In)compatibility relations on $s$ of the form
\begin{align}
\Big\{&\{E_{1},E_{2}\},\{E_{2},E_{3}\},\ldots,\{E_{N-1},E_N\},\{E_N,E_1\}\Big\},
\end{align}
are said to form a joint measurability structure called the $N$-cycle scenario (Fig.~\ref{NCycSl}).
\begin{figure}[H]
\centering
\includegraphics[scale=0.17]{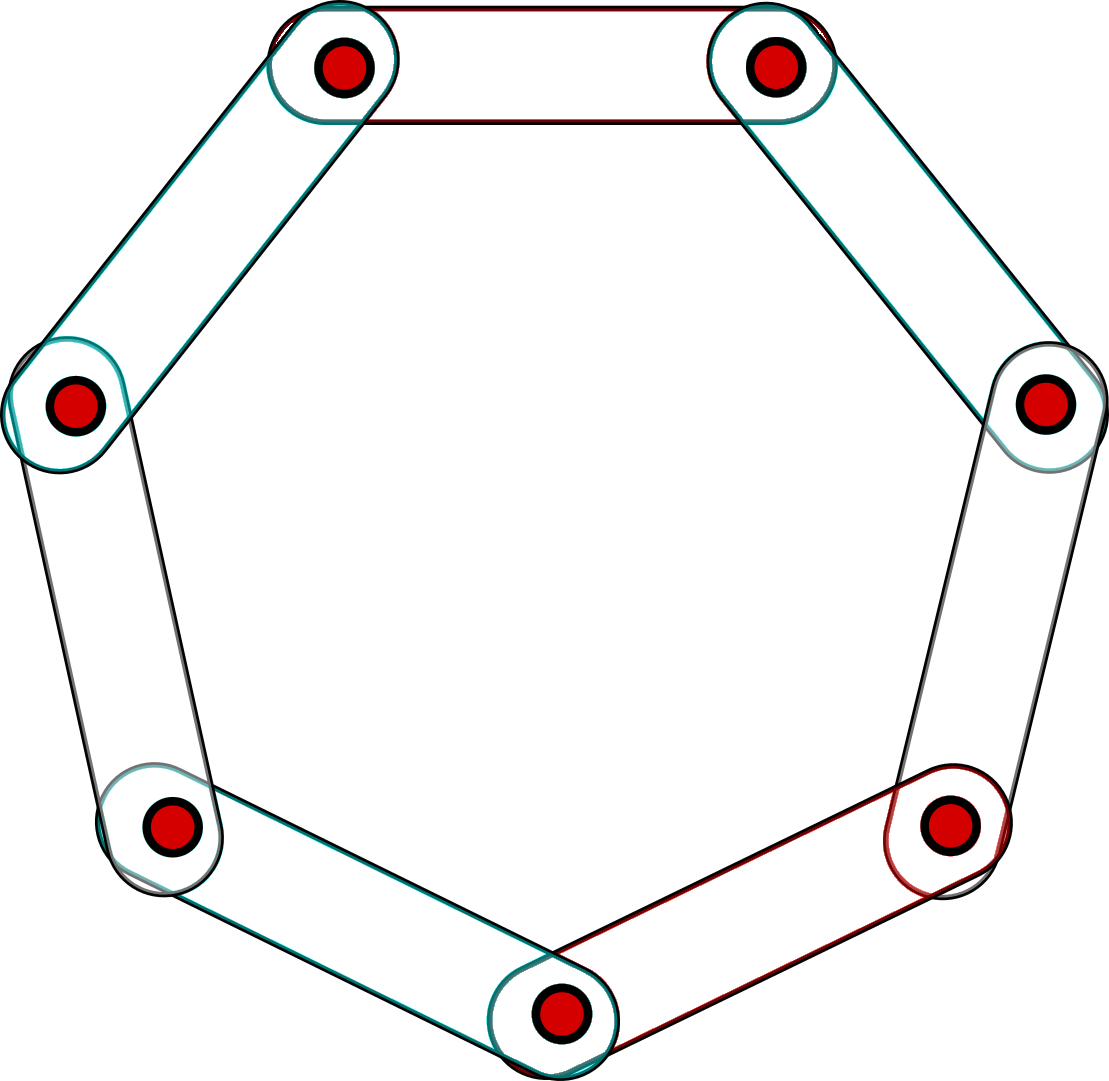}
\caption{$N$-cycle scenario for $N=7$.}
\label{NCycSl}
\end{figure}
\end{example}
\begin{example}[$(N,M)$-compatible set]
Let $s=\{E_1,\ldots,E_N\}$ be a set of $N$ vertices (representing measurements). If each $M$-element subset of $s$ is compatible ($M\leq N$), while every subset of $s$ of higher cardinality than $M$ is incompatible, then the set $s$ is said to be $(N,M)$-compatible and this joint measurability structure is called $(N,M)$-compatibility scenario.
\end{example}
Trivially, an $(N,1)$-compatible set is just a set of $N$ pairwise incompatible observables, while $(N,N)$-compatible set is just a set of $N$ compatible measurements. We now follow with two more special cases of $(N,M)$-compatibility scenarios for $M=2$ and $M=N-1$.
\begin{example}[$N$-complete scenario]
Let $s=\{E_1,\ldots,E_N\}$ be a set of $N$ vertices (representing measurements). A joint measurability structure on $s$ where each pair of measurements is compatible but no three measurements are (i.e., $s$ is an $(N,2)$-compatible set) is called an $N$-complete scenario. The hypergraph representing this structure is a complete graph with $N$ vertices.
\end{example}
\begin{example}[$N$-Specker scenario]\label{NSpeckdefn}
Let $s=\{E_1,\ldots,\\E_N\}$ be the set of $N$ vertices (representing measurements). If each $(N-1)$-element subset of $s$ is compatible while $s$ itself is incompatible (i.e. $s$ is an $(N,N-1)$-compatible set) then the joint measurability structure of $s$ is called an $N$-Specker scenario. This is the generalization of the notion of Specker's scenario which is the simplest non-trivial $N$-Specker scenario for $N=3$.
\begin{figure}[H]
\centering
\includegraphics[scale=0.18]{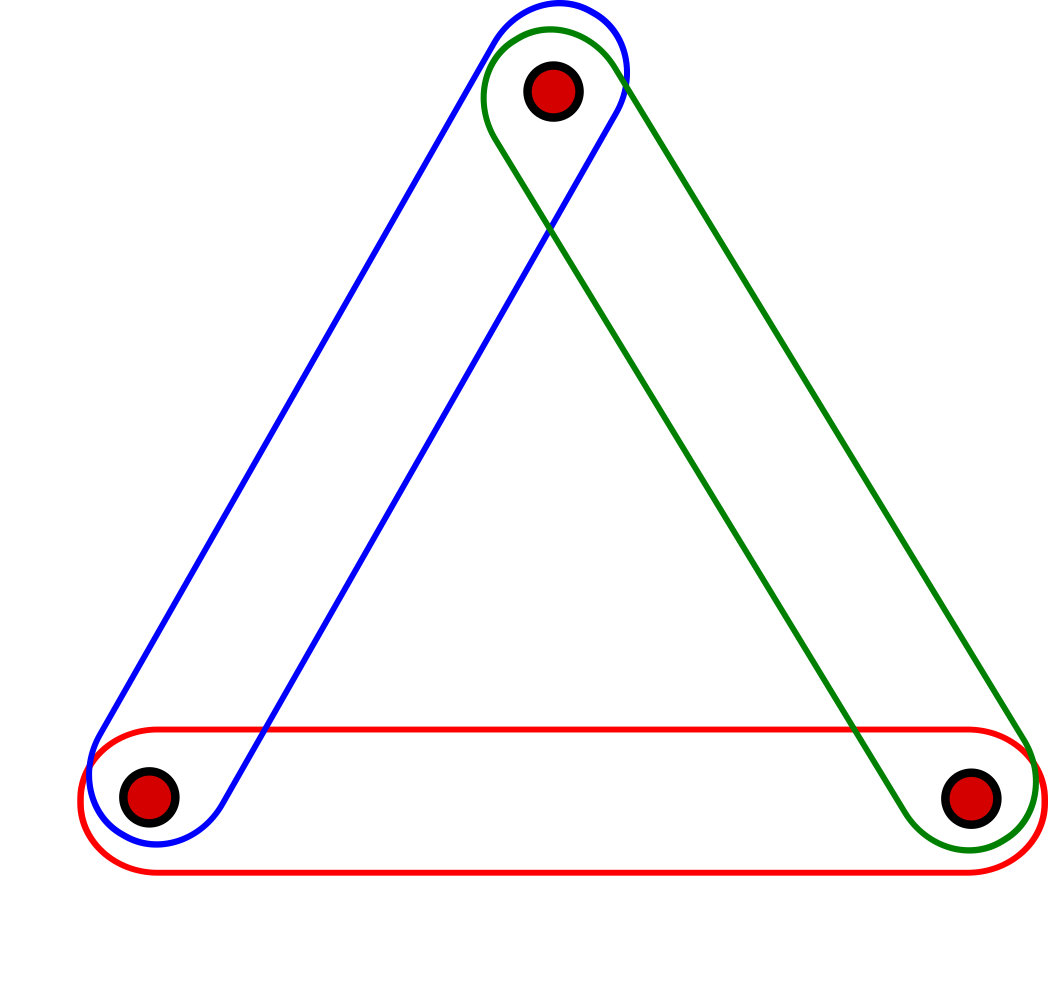}
\caption{Specker's scenario. Notice that this is the special case for $N$-Specker, $N$-Cycle and $N$-Complete scenarios when $N=3$.}
\label{speckerscenario}
\end{figure}
\end{example}
We provide more examples of joint measurability structures in Section~\ref{sec5} along with their quantum realizations with binary qubit POVMs.

\subsection{Conditions for joint measurability of qubit POVMs}
Previous research has uncovered many analytical criteria for the joint measurability of POVMs, particularly for qubits. We now collect some known results on joint measurability of qubit POVMs that will be of interest to us. 

\subsubsection{Two binary qubit POVMs}
Yu {\em et al.}~\cite{YLL10} proved a necessary and sufficient condition for the joint measurability of two binary qubit POVMs. We state this condition below.
\begin{theorem}\label{2povms}
Two binary qubit POVMs $E_1(1)=\frac{1}{2}(\alpha_1I+\vec{a}_1\cdot\vec{\sigma})$ and $E_2(1)=\frac{1}{2}(\alpha_2I+\vec{a}_2\cdot\vec{\sigma})$ are jointly measurable if and only if 
\begin{align}
&(1-F_1^2-F_2^2)\left(1-\frac{(\alpha_1-1)^2}{F_1^2}-\frac{(\alpha_2-1)^2}{F_2^2}\right)\nonumber\\
\leq&\Big(\vec{a}_1\cdot\vec{a}_2-(\alpha_1-1)(\alpha_2-1)\Big)^2,
\label{gencdt2}
\end{align}
where we have
\begin{align}
F_1&\equiv\frac{1}{2}\Bigg(\sqrt{\alpha_1^2-a_1^2}+\sqrt{(2-\alpha_1)^2-a_1^2}\Bigg),\nonumber\\
F_2&\equiv\frac{1}{2}\Bigg(\sqrt{\alpha_2^2-a_2^2}+\sqrt{(2-\alpha_2)^2-a_2^2}\Bigg).
\end{align}
\end{theorem}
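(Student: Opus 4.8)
\emph{Proof strategy.} I would reduce joint measurability of $E_1,E_2$ to an explicit feasibility problem and then solve it by convex optimization. Write $\beta_i\equiv\alpha_i-1$. A joint POVM is a collection of four effects $G_{x_1x_2}$, $x_1,x_2\in\{+1,-1\}$, with $\sum_{x_1,x_2}G_{x_1x_2}=I$ and the prescribed $x_1$- and $x_2$-marginals; counting linear constraints shows these $G$ form a four-real-parameter affine family, and one checks directly that the general member is
\begin{equation}
G_{x_1x_2}=\tfrac14\big((1+x_1\beta_1+x_2\beta_2+x_1x_2\tau)I+(x_1\vec a_1+x_2\vec a_2+x_1x_2\vec c)\cdot\vec\sigma\big),
\end{equation}
with free parameters $\tau\in\mathbb R$ and $\vec c\in\mathbb R^3$. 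Since a qubit operator $\tfrac12(\mu I+\vec m\cdot\vec\sigma)$ is positive semidefinite iff $\mu\ge\|\vec m\|$, joint measurability is equivalent to the existence of $(\tau,\vec c)$ satisfying the four scalar inequalities $G_{x_1x_2}\succeq 0$.

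Next I would decouple these. Put $\vec u\equiv\vec a_1+\vec a_2$, $\vec v\equiv\vec a_1-\vec a_2$, $P\equiv1+\tau$, $Q\equiv1-\tau$. The two effects with $x_1x_2=+1$ require $P\ge\max\{\|\vec u+\vec c\|-(\beta_1+\beta_2),\ \|\vec u-\vec c\|+(\beta_1+\beta_2)\}$, the two with $x_1x_2=-1$ require $Q\ge\max\{\|\vec v-\vec c\|-(\beta_1-\beta_2),\ \|\vec v+\vec c\|+(\beta_1-\beta_2)\}$, and $P+Q=2$. Eliminating $\tau$, joint measurability holds iff the minimum over $\vec c\in\mathbb R^3$ of the sum of these two maxima is $\le 2$.

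It remains to perform this minimization. The objective is a sum of maxima of the norms $\|\vec u\pm\vec c\|$, $\|\vec v\pm\vec c\|$, hence convex in $\vec c$, and it is invariant under reflecting $\vec c$ across $\mathrm{span}(\vec a_1,\vec a_2)$ (which contains $\vec u,\vec v$); so by convexity a minimizer lies in that span, reducing the problem to an at-most-two-variable convex minimization. I would then carry out a KKT analysis, keeping track of which branch of each maximum is active; the optimal value comes out in closed form, and ``optimal value $\le 2$'' rearranges---after squaring away the nested radicals and noting that the combination of $\beta_i$-dependent shifts thrown up by the optimization is precisely $F_i^2=\tfrac14\big(\sqrt{\alpha_i^2-a_i^2}+\sqrt{(2-\alpha_i)^2-a_i^2}\big)^2$ (equivalently $F_i=\sqrt{\det E_i(1)}+\sqrt{\det E_i(-1)}$)---into inequality (\ref{gencdt2}). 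A built-in sanity check: for $\beta_1=\beta_2=0$ one has $F_i=\sqrt{1-a_i^2}$ and (\ref{gencdt2}) collapses to $a_1^2+a_2^2-(\vec a_1\cdot\vec a_2)^2\le 1$, which is equivalent to Busch's condition $\|\vec a_1+\vec a_2\|+\|\vec a_1-\vec a_2\|\le 2$ for unbiased qubit effects.

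The main obstacle is this last step. Because of the maxima, the objective is only piecewise smooth, so one must either identify the correct active branch a priori---guided by the $\vec a_1\leftrightarrow\vec a_2$, $\beta_1\leftrightarrow\beta_2$ symmetry and by the shape of the target inequality---or analyze all branches and check consistency with both stationarity and the physical constraint $a_i\le\alpha_i\le 2-a_i$ (which forces $a_i+|\beta_i|\le 1$ and hence $F_i^2\ge\beta_i^2$, so the factor $1-\beta_1^2/F_1^2-\beta_2^2/F_2^2$ is well posed away from the PVM locus $\alpha_i=a_i=1$). Degenerate configurations---$\vec a_1\parallel\vec a_2$, $\vec a_i=0$, or some $G_{x_1x_2}$ pinned to a rank-one effect at a boundary $\alpha_i\in\{a_i,2-a_i\}$---must be treated separately, as must the final simplification of the radical expressions. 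An alternative that avoids the explicit primal optimization is to phrase the feasibility as a semidefinite program and invoke SDP duality, but the compact closed form (\ref{gencdt2}) is most directly obtained from the calculation sketched above.
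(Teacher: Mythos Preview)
The paper does not actually prove Theorem~\ref{2povms}: it is quoted from Yu \emph{et al.}~\cite{YLL10} as a known result, with no argument supplied beyond the citation. So there is no ``paper's own proof'' to compare against here.

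That said, your strategy is the right one and is essentially how the result is obtained in the source. The parametrization of the candidate joint POVM by $(\tau,\vec c)$ is correct, as is the observation that positivity of the four effects reduces to a convex feasibility problem in $\vec c$ (after eliminating $\tau$ via $P+Q=2$), and your symmetry argument pinning the minimizer to $\mathrm{span}(\vec a_1,\vec a_2)$ is valid. Your unbiased-case sanity check is also correct. Where your write-up stops short is exactly where the work lies: carrying out the branch-by-branch KKT analysis and then massaging the resulting radical expressions into the form \eqref{gencdt2}. You flag this honestly, but as written the proposal is a roadmap rather than a proof---the identification of $F_i$ with $\sqrt{\det E_i(1)}+\sqrt{\det E_i(-1)}$ and the claim that the optimum ``rearranges into'' \eqref{gencdt2} are asserted, not derived. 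If you want a self-contained proof, that algebra (including the degenerate cases you list) must be written out; otherwise a citation to \cite{YLL10}, as the paper does, is the appropriate move.
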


Of particular interest to us will be a class of POVMs called {\em unbiased} binary qubit POVMs. We define them below:
\begin{definition}[Unbiased binary qubit POVMs]\label{unbiqodef}
Binary qubit POVMs specified by 
\begin{equation}
E(x)=\frac{1}{2}(I+x\eta\vec{n}\cdot\vec{\sigma})
\end{equation}
where $x=\pm1$, $n=||\vec{n}||=1$ and $0\leq\eta\leq1$ are called unbiased. Otherwise, they are biased. The parameter $\eta$ is usually referred to as the purity or sharpness parameter, it's upper bound $\eta=1$ corresponding to the case of a sharp (projective) measurement.
\end{definition}
In the light of the Definition~\ref{unbiqodef}, we define the bias $b$ associated with an outcome $x=+1$ of a binary qubit POVM $E$ (simply, ``the bias of $E$") by rewriting Eq.~\eqref{biqodef} as
\begin{equation}
E(\pm 1)=\frac{1}{2}\left((1\pm b)I\pm\vec{a}\cdot\vec{\sigma}\right),\quad |b|\leq 1-a.
\end{equation}
An unbiased qubit POVM has zero bias.
\begin{theorem}\label{2unbjmc}
Two unbiased binary qubit POVMs specified by $E_1(x_1)=\frac{1}{2}(I+x_1\eta_1\vec{n}_1\cdot\vec{\sigma})$ and\\ $E_2(x_2)=\frac{1}{2}(I+x_2\eta_2\vec{n}_2\cdot\vec{\sigma})$ are jointly measurable if and only if 
\begin{equation}
||\eta_1\vec{n}_1+\eta_2\vec{n}_2||+||\eta_1\vec{n}_1-\eta_2\vec{n}_2||\leq 2
\end{equation}
\end{theorem}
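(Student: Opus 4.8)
\emph{Proof plan.} The plan is to obtain Theorem~\ref{2unbjmc} as a direct specialization of the general two-POVM criterion of Theorem~\ref{2povms}. Setting $\alpha_1=\alpha_2=1$ and $\vec{a}_i=\eta_i\vec{n}_i$ in Eq.~\eqref{gencdt2} (so that $a_i=||\vec{a}_i||=\eta_i$, using $||\vec{n}_i||=1$), the quantities $F_i$ collapse to $F_i=\frac{1}{2}\big(\sqrt{1-\eta_i^2}+\sqrt{1-\eta_i^2}\big)=\sqrt{1-\eta_i^2}$, hence $F_i^2=1-\eta_i^2$. Since $\alpha_i-1=0$, every term in Eq.~\eqref{gencdt2} carrying a factor $(\alpha_i-1)$ vanishes (when $\eta_i=1$, so that $F_i=0$, one reads the ratio $(\alpha_i-1)^2/F_i^2$ as its limiting value $0$). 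What is left is precisely
\begin{equation}
\eta_1^2+\eta_2^2-1\ \leq\ \eta_1^2\eta_2^2\,(\vec{n}_1\cdot\vec{n}_2)^2 .
\label{eq:reducedYLL}
\end{equation}

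The second step is to verify that Eq.~\eqref{eq:reducedYLL} is equivalent to the inequality in the statement. Writing $\vec{p}\equiv\eta_1\vec{n}_1+\eta_2\vec{n}_2$ and $\vec{q}\equiv\eta_1\vec{n}_1-\eta_2\vec{n}_2$, the parallelogram identity gives $||\vec{p}||^2+||\vec{q}||^2=2(\eta_1^2+\eta_2^2)$, while expanding the product gives $||\vec{p}||^2\,||\vec{q}||^2=(\eta_1^2+\eta_2^2)^2-4\eta_1^2\eta_2^2(\vec{n}_1\cdot\vec{n}_2)^2$. Squaring $||\vec{p}||+||\vec{q}||\leq2$ once turns it into $||\vec{p}||\,||\vec{q}||\leq 2-\eta_1^2-\eta_2^2$; squaring a second time and substituting the two identities above reproduces Eq.~\eqref{eq:reducedYLL}. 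The only delicate point is the sign of $2-\eta_1^2-\eta_2^2$ in the two squaring steps: in one direction it is automatically nonnegative because $||\vec{p}||\,||\vec{q}||\geq0$; in the other direction Eq.~\eqref{eq:reducedYLL} together with $\eta_i\leq1$ and $|\vec{n}_1\cdot\vec{n}_2|\leq1$ forces $\eta_1^2+\eta_2^2\leq 1+\eta_1^2\eta_2^2(\vec{n}_1\cdot\vec{n}_2)^2\leq 2$, so taking square roots is legitimate. Keeping track of this is essentially the whole obstacle in this route; everything else is bookkeeping.

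For completeness I would also record an explicit joint POVM witnessing the sufficiency direction, since the same construction is convenient elsewhere: for $x_1,x_2=\pm1$ put
\begin{align}
G(x_1,x_2)=\frac{1}{4}\big[&(1+x_1x_2\kappa)\,I\nonumber\\
&+x_1\eta_1\vec{n}_1\cdot\vec{\sigma}+x_2\eta_2\vec{n}_2\cdot\vec{\sigma}\big].
\end{align}
For every real $\kappa$ one checks $\sum_{x_2}G(x_1,x_2)=E_1(x_1)$, $\sum_{x_1}G(x_1,x_2)=E_2(x_2)$ and $\sum_{x_1,x_2}G(x_1,x_2)=I$; positivity of the four effects $G(x_1,x_2)$ reduces to $1+\kappa\geq||\vec{p}||$ and $1-\kappa\geq||\vec{q}||$, and such a $\kappa$ exists iff $||\vec{p}||+||\vec{q}||\leq2$. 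If one wanted a self-contained argument one would still need the converse of this construction, which is the genuinely nontrivial half; that is precisely where I would lean on Theorem~\ref{2povms} rather than reprove the semidefinite feasibility analysis of Yu {\em et al.}
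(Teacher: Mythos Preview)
Your proposal is correct and follows exactly the route the paper indicates: the paper simply states that ``the proof of this theorem follows directly from Eq.~\eqref{gencdt2}'' without spelling out the algebra, and you have carried out precisely that specialization, including the careful handling of the sign of $2-\eta_1^2-\eta_2^2$ needed to make the two squarings reversible. Your explicit joint POVM $G(x_1,x_2)$ is a nice bonus that the paper does not record here, and your acknowledgement that the necessity direction still rests on Theorem~\ref{2povms} is accurate.
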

The proof of this theorem follows directly from Eq.~\eqref{gencdt2}, although it was independently proven earlier \cite{Busch86}. The reader may consult Ref.~\cite{HRS08} for a guide to previous literature on necessary/sufficient conditions for joint measurability.

\begin{corollary}\label{Leta2}
Two unbiased binary qubit POVMs with the same purity $\eta$ -- given by $E_1(x_1)=\frac{1}{2}(I+x_1\eta\vec{n}_1\cdot\vec{\sigma})$ and  $E_2(x_2)=\frac{1}{2}(I+x_2\eta\vec{n}_2\cdot\vec{\sigma})$ -- are jointly measurable if and only if
\begin{equation}
\eta\leq\frac{1}{\left|\sin\frac{\phi}{2}\right|+\left|\cos\frac{\phi}{2}\right|}
\label{eta2},
\end{equation}
where $\phi$ is the angle between their Bloch vectors $\vec{n}_1$ and $\vec{n}_2$, i.e., $\vec{n}_1\cdot\vec{n}_2=\cos\phi$.
\end{corollary}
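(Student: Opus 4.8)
The plan is to derive this directly from Theorem~\ref{2unbjmc}, which is already an \emph{if and only if} statement, so no independent argument is needed. Setting $\eta_1=\eta_2=\eta$ in the hypothesis of Theorem~\ref{2unbjmc}, the joint measurability condition becomes $\|\eta\vec{n}_1+\eta\vec{n}_2\|+\|\eta\vec{n}_1-\eta\vec{n}_2\|\leq 2$. Since $\eta\geq 0$ we may pull it out of both norms, reducing the condition to $\eta\big(\|\vec{n}_1+\vec{n}_2\|+\|\vec{n}_1-\vec{n}_2\|\big)\leq 2$.

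The next step is to evaluate the two vector norms in terms of $\phi$. Using $\vec{n}_1\cdot\vec{n}_2=\cos\phi$ and $\|\vec{n}_1\|=\|\vec{n}_2\|=1$, I would write $\|\vec{n}_1+\vec{n}_2\|^2=2+2\cos\phi$ and $\|\vec{n}_1-\vec{n}_2\|^2=2-2\cos\phi$, and then apply the half-angle identities $1+\cos\phi=2\cos^2(\phi/2)$ and $1-\cos\phi=2\sin^2(\phi/2)$ to obtain $\|\vec{n}_1+\vec{n}_2\|=2|\cos(\phi/2)|$ and $\|\vec{n}_1-\vec{n}_2\|=2|\sin(\phi/2)|$. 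Substituting back, the condition becomes $2\eta\big(|\sin(\phi/2)|+|\cos(\phi/2)|\big)\leq 2$, i.e.\ $\eta\leq\big(|\sin(\phi/2)|+|\cos(\phi/2)|\big)^{-1}$, which is exactly Eq.~\eqref{eta2}. (Note that $|\sin(\phi/2)|+|\cos(\phi/2)|\geq 1$, so the bound is always well-defined and never forces $\eta$ below the trivial range; one may also take $\phi\in[0,\pi]$ without loss of generality, in which case the absolute values are automatic.)

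There is essentially no obstacle here: the only thing to be a little careful about is keeping the absolute values when taking square roots of $\sin^2(\phi/2)$ and $\cos^2(\phi/2)$, so that the formula remains valid for any representative of the angle between the Bloch vectors rather than only for $\phi\in[0,\pi]$. Everything else is a one-line substitution into Theorem~\ref{2unbjmc}.
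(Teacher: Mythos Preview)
Your proposal is correct and follows precisely the route the paper intends: the corollary is stated immediately after Theorem~\ref{2unbjmc} with no separate proof, so the specialization $\eta_1=\eta_2=\eta$ together with the half-angle evaluation of $\|\vec{n}_1\pm\vec{n}_2\|$ is exactly what is expected. There is nothing to add.
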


\subsubsection{Joint measurability and geometrically equivalent sets of POVMs} \label{subsec2_1}
In this subsection we will describe a relation between two sets of qubit POVMs that, when satisfied, leads to the same joint measurability structure for them. We are motivated to define this relation based on two relatively obvious observations that are formalized in the next two propositions.

\begin{proposition} \label{mpovms} Let $s=\{E_1,\ldots,E_N\}$ and $s'=\{F_1,\ldots,F_N\}$ be two sets of POVMs, each $E_k$ and $F_k$ with the same outcome set $\mathcal{O}_k$. If there exist permutations $\mathbb{Perm}=(\mathrm{Perm}_1,\ldots,\mathrm{Perm}_N)$ such that
\begin{align}
\forall k=\overline{1,N},\text{ } \forall x_k\in O_k,\text{ } F_k(x_k)=E_k(\mathrm{Perm}_k(x_k)),
\end{align}
then we formally write $F_k=\mathrm{Perm}_k E_k$ or, succinctly, $s'=\mathbb{Perm} s$, and
\begin{enumerate}
\item the sets $s$ and $s'$ exhibit the same joint measurability structure, 
\item if a subset $r\subseteq s$, denoted $r=\{E_{k_1},\ldots,E_{k_{|r|}}\}$, is jointly measurable, with a joint POVM $G^r$, then its corresponding subset $r'\subseteq s'$, denoted $r'=\{F_{k_1},\ldots,F_{k_{|r|}}\}$, is also jointly measurable, with a joint POVM given by
\begin{align}
G^{r'}(x_{k_1},\ldots,x_{k_{|r|}})&=G^{r}(\mathrm{Perm}_{k_1}(x_{k_1}),\ldots,\mathrm{Perm}_{k_{|r|}}(x_{k_{|r|}}))\nonumber\\
&\equiv\mathbb{Perm}^rG^r(x_{k_1},\ldots,x_{k_{|r|}}),
\label{GPerm}
\end{align}
where $\mathbb{Perm}^r$ contains those permutations that refer to the POVMs from the set $r$.
\end{enumerate}
\begin{proof}
If $s'=\mathbb{Perm}s$ then $s=\mathbb{Perm}^{-1}s'$, so it is enough to show that if $r\subseteq s$ is jointly measurable then its corresponding subset $r'\subseteq s'$ is jointly measurable as well: this would imply that any subset of $s$ is compatible if and only if the corresponding subset of $s'$ is compatible, i.e., $s$ and $s'$ exhibit the same joint measurability structure. This is easy to show by noticing that if $G^r$ is a joint POVM for $r$ then $G^{r'}$ given by Eq.~\eqref{GPerm} is a joint POVM for $r'$ since it is a valid POVM, as it has the same range as the POVM $G^r$, and its marginals recover the different $F_{k}\in r'$, i.e., 
\begin{align}
\sum_{\vec{y}\in\bigtimes_{i=1}^{|r|}\mathcal{O}_i}^{y_k=x_k}G^{r'}(\vec{y})&=\sum_{\vec{y}\in\bigtimes_{i=1}^{|r|}\mathcal{O}_i}^{y_k=\mathrm{Perm}_k(x_k)}G^r(\vec{y})=\nonumber\\
&=E_k(\mathrm{Perm}_k(x_k))=F_k(x_k).
\end{align} 
Here, for convenience, we assumed that the POVMs are labelled such that $r=\{E_k\}_{k=1}^{|r|}$ meaning that also $r'=\{F_k\}_{k=1}^{|r|}$. Also, by putting $y_k=a$ above the summation sign, we mean that the summation is carried over all elements of the string $\vec{y}$ except the $k$th element which is held fixed at $y_k=a$.
\end{proof}
\end{proposition}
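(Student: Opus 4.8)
The plan is to exploit the built-in symmetry of the relation $s'=\mathbb{Perm}\,s$. Since each $\mathrm{Perm}_k$ is a bijection of the outcome set $\mathcal{O}_k$, it is invertible, and the hypothesis $F_k(x_k)=E_k(\mathrm{Perm}_k(x_k))$ for all $x_k\in\mathcal{O}_k$ is equivalent to $E_k(x_k)=F_k(\mathrm{Perm}_k^{-1}(x_k))$ for all $x_k$, i.e.\ $s=\mathbb{Perm}^{-1}s'$. Consequently it suffices to prove a single implication — that if $r\subseteq s$ is jointly measurable then the corresponding image $r'\subseteq s'$ is jointly measurable — because applying the very same argument with $\mathbb{Perm}^{-1}$ in place of $\mathbb{Perm}$ yields the converse. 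Together these give that $r\subseteq s$ is compatible if and only if $r'\subseteq s'$ is compatible for every subset $r$, which is exactly statement (1); statement (2) will be the explicit joint POVM produced while proving the implication.

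For the forward implication I would relabel the POVMs so that $r=\{E_1,\ldots,E_{|r|}\}$ (hence $r'=\{F_1,\ldots,F_{|r|}\}$), take a joint POVM $G^r$ for $r$ on the product outcome set $\bigtimes_{i=1}^{|r|}\mathcal{O}_i$, and define the candidate $G^{r'}$ on the same set by $G^{r'}(x_1,\ldots,x_{|r|})=G^r(\mathrm{Perm}_1(x_1),\ldots,\mathrm{Perm}_{|r|}(x_{|r|}))$. The key elementary fact I would isolate first is that the coordinatewise map $(x_1,\ldots,x_{|r|})\mapsto(\mathrm{Perm}_1(x_1),\ldots,\mathrm{Perm}_{|r|}(x_{|r|}))$ is a bijection of $\bigtimes_{i=1}^{|r|}\mathcal{O}_i$ onto itself, and remains a bijection of the remaining coordinates onto themselves after any single coordinate is held fixed.

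With that in hand, checking that $G^{r'}$ is a valid POVM is immediate: each $G^{r'}(x_1,\ldots,x_{|r|})$ equals some value of $G^r$, hence is positive semidefinite, and summing over all outcome strings gives $\sum G^{r'}=\sum G^r=I$ by the bijection. The marginal check is the same idea applied one coordinate at a time: fixing the $k$th outcome at $x_k$ and summing $G^{r'}$ over the rest equals summing $G^r$ over all strings whose $k$th entry is $\mathrm{Perm}_k(x_k)$, which is the $k$th marginal of $G^r$ evaluated at $\mathrm{Perm}_k(x_k)$, namely $E_k(\mathrm{Perm}_k(x_k))=F_k(x_k)$. Hence $G^{r'}$ is a joint POVM for $r'$, establishing statement (2) and, through the symmetry observation, statement (1).

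I expect no genuine conceptual obstacle here; the work is essentially bookkeeping. The one place to be careful is precisely the bijection statement together with its ``after fixing one coordinate'' version, since the marginalization sums run over a product of outcome sets and one wants to be explicit about which coordinates are free and which is frozen. Once that observation is stated cleanly, both required properties of $G^{r'}$ follow in a single line each.
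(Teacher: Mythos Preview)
Your proposal is correct and follows essentially the same approach as the paper: reduce to one implication via the symmetry $s=\mathbb{Perm}^{-1}s'$, define $G^{r'}$ by precomposing $G^r$ with the coordinatewise permutations, and verify positivity, normalization, and the correct marginals. You are a bit more explicit than the paper in isolating the bijection of the product outcome set (and its restriction when one coordinate is fixed) as the reason the sums match up, but this is exactly the mechanism underlying the paper's computation.
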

\begin{remark}\label{unbiasedremark}
Interpreted for the case of unbiased binary qubit POVMs, Proposition \ref{mpovms} says that their joint measurability is dependent only on the lines on which their Bloch vectors lie and not the particular orientation of a Bloch vector along a line because the choice of which outcome is labelled $+1$ and which is labelled $-1$ does not affect their joint measurability.
\end{remark}
\begin{proposition}\label{geeq1}
Let $s=\{E_1,\ldots,E_N\}$ and $s'=\{F_1,\ldots,F_N\}$ be two sets of qubit POVMs, each $E_k$ and $F_k$ with the same outcome set $\mathcal{O}_k$, where for all $k=\overline{1,N}$ and $x_k\in\mathcal{O}_k$,
\begin{align}
E_k&=\frac{1}{2}\Big(\alpha_k(x_k)I+\vec{e}_k(x_k)\cdot\vec{\sigma}\Big),\nonumber\\
F_k&=\frac{1}{2}\Big(\alpha_k(x_k)I+\vec{f}_k(x_k)\cdot\vec{\sigma}\Big).
\end{align} 
If $\vec{f}_k(\cdot)$ and $\vec{e}_k(\cdot)$ are related by some orthogonal transformation $O\in \mathrm{O}(3)$, where $\mathrm{O}(3)$ is the orthogonal group, i.e., 
\begin{align}
\exists O\in\mathrm{O}(3)\text{ } \forall k=\overline{1,N},\text{ } \forall x_k\in\mathcal{O}_k:\text{ } \text{ }\vec{f}_k(x_k)=O\vec{e}_k(x_k),
\end{align}
then we formally write $F_k=OE_k$, or, succinctly, $s'=Os$, and we have that 
\begin{enumerate}
\item the sets $s$ and $s'$ exhibit the same joint measurability structure,
\item if $r\subseteq s$ is jointly measurable, with a joint POVM $G^r$, then its corresponding $r'=Or\subseteq s'$ is jointly measurable, with a joint POVM given by $G^{r'}=OG^r$. 
\end{enumerate}
\begin{proof}
This follows from the fact that $\vec{e}_k(\cdot),\vec{f_k}(\cdot)\in\mathbb{R}^3$ and $O(3)$ is the group of isometries that fix the origin in $\mathbb{R}^3$. If we passively act on the chosen axes in $\mathbb{R}^3$ with $O^{-1}$ we would get the new orthogonal coordinate axes such that $s'$ looks the same way in the new coordinate system as $s$ looks in the old one and vice-versa. Since joint measurability is a notion independent of choice of the axes in $\mathbb{R}^3$, $s$ and $s'$ must exhibit the same joint measurability structure.

More rigorously, we show that any $r\subseteq s$ is compatible if and only if the corresponding subset $r'=Or\subseteq s'$ is compatible. If $s'=Os$, then $s=O^{-1}s'$, so it is enough to show that the compatibility of $r$ implies the compatibility of $r'$. Consider a jointly measurable subset $r\subseteq s$, assuming a labelling of POVMs such that $r=\{E_k\}_{k=1}^{|r|}$, with a joint POVM
\begin{equation}
G^r(\vec{x})=\frac{1}{2}\Big(\gamma(\vec{x})I+\vec{g}(\vec{x})\cdot\vec{\sigma}\Big),\quad \vec{x}\in\mathcal{O}_1\times\cdots\times\mathcal{O}_{|r|}.
\end{equation}
Requiring that $E_k\in r$ are marginals of $G^r$ we have that 
\begin{align}
\sum_{\vec{y}\in\bigtimes_{i=1}^{|r|}\mathcal{O}_i}^{y_k=x_k}\gamma(\vec{y})=\alpha_k(x_k),\text{ } \sum_{\vec{y}\in\bigtimes_{i=1}^{|r|}\mathcal{O}_i}^{y_k=x_k}\vec{g}(\vec{y})=\vec{e}_k(\vec{x}_k).\label{gemarg}
\end{align}
Consider the set of operators 
\begin{equation}
G^{r'}=OG^{r}=\frac{1}{2}\Big(\gamma(\vec{x})+O\vec{g}(x)\cdot\vec{\sigma}\Big).
\end{equation}
Since $O$ is an orthogonal transformation we have
\begin{equation}
||\vec{g}(\vec{x})||=||O\vec{g}(\vec{x})||,
\end{equation}
so that all of the operators $G^{r'}(\vec{x})$, $\vec{x}\in\mathcal{O}_1\times\cdots\times O_{|r|}$, are positive semidefinite. Using Eq.~\eqref{gemarg} and linearity of $O$ we find that the marginals of $G^{r'}$ are indeed $F_k$s:
\begin{align}
\sum_{\vec{y}\in\vec{y}\in\bigtimes_{i=1}^{|r|}\mathcal{O}_i}^{y_k=x_k}G^{r'}(\vec{x})&=\sum_{\vec{y}\in\bigtimes_{i=1}^{|r|}\mathcal{O}_i}^{y_k=x_k}\frac{1}{2}\Big(\gamma(\vec{y})+O\vec{g}(\vec{y})\cdot\vec{\sigma}\Big)=\nonumber\\
&=\frac{1}{2}\alpha_k(x_k)+\frac{1}{2}O\vec{e}_k(x_k)=\nonumber\\
&=OE_k(x_k)=F_k(x_k).
\end{align}
Hence, $r'$ is compatible, with a joint POVM $G^{r'}$.
\end{proof}
\end{proposition}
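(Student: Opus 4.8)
The plan is to reduce the whole proposition to the single assertion contained in statement (2) — that the orthogonal image of a joint POVM is again a joint POVM — and then let the group structure of $\mathrm{O}(3)$ do the rest. First I would observe that since $\mathrm{O}(3)$ is a group we have $O^{-1}\in\mathrm{O}(3)$ and $s=O^{-1}s'$, so statement (2), once proved, applies symmetrically in both directions; this gives that a subset $r\subseteq s$ is compatible \emph{if and only if} its image $r'=Or\subseteq s'$ is compatible. Because $r\mapsto r'$ is a bijection between subsets of $s$ and subsets of $s'$ preserving containment and cardinality, this immediately yields statement (1): $s$ and $s'$ have the same joint measurability structure. So the entire content is statement (2).

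To prove statement (2), fix a compatible $r$, relabelled as $r=\{E_k\}_{k=1}^{|r|}$, with joint POVM $G^r(\vec x)=\tfrac12\big(\gamma(\vec x)I+\vec g(\vec x)\cdot\vec\sigma\big)$ on $\mathcal O_1\times\cdots\times\mathcal O_{|r|}$. I would record the three facts that characterize $G^r$: positivity of each qubit effect means $\gamma(\vec x)\ge\|\vec g(\vec x)\|$; normalization $\sum_{\vec x}G^r(\vec x)=I$ means $\sum_{\vec x}\gamma(\vec x)=2$ and $\sum_{\vec x}\vec g(\vec x)=\vec 0$; and the marginal conditions recovering $E_k$ read $\sum_{\vec y:\,y_k=x_k}\gamma(\vec y)=\alpha_k(x_k)$ and $\sum_{\vec y:\,y_k=x_k}\vec g(\vec y)=\vec e_k(x_k)$. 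Then I define the candidate $G^{r'}(\vec x)=\tfrac12\big(\gamma(\vec x)I+O\vec g(\vec x)\cdot\vec\sigma\big)$, i.e.\ $G^{r'}=OG^r$, and verify it works using only two properties of $O$, namely that it is norm-preserving and linear. Norm preservation gives $\|O\vec g(\vec x)\|=\|\vec g(\vec x)\|\le\gamma(\vec x)$, so each $G^{r'}(\vec x)$ is positive semidefinite; linearity gives $\sum_{\vec x}G^{r'}(\vec x)=\tfrac12\big(2I+O\vec 0\cdot\vec\sigma\big)=I$, so $G^{r'}$ is a POVM; and linearity applied to the marginal sums gives $\sum_{\vec y:\,y_k=x_k}G^{r'}(\vec y)=\tfrac12\big(\alpha_k(x_k)I+O\vec e_k(x_k)\cdot\vec\sigma\big)=\tfrac12\big(\alpha_k(x_k)I+\vec f_k(x_k)\cdot\vec\sigma\big)=F_k(x_k)$. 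Hence $G^{r'}$ is a joint POVM for $r'$, proving (2) and therefore the proposition.

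I do not expect a genuine obstacle here: the substantive observation is simply that $\mathrm{O}(3)$ acts on qubit effects by a transformation preserving both positivity and the normalization/marginal structure. The one point I would flag is that for $O\in\mathrm{SO}(3)$ the map $E\mapsto OE$ is implemented by unitary conjugation, which would make the claim transparent, but for improper $O$ (determinant $-1$) no such unitary exists — one would need an antiunitary such as transposition — so the cleanest argument is to avoid this case split altogether and work directly with Bloch vectors, as above, which handles all of $\mathrm{O}(3)$ uniformly. It is also worth noting in passing that the same norm/linearity argument already shows the hypothesis is self-consistent, i.e.\ that $s'$ genuinely consists of POVMs.
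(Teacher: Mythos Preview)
Your proposal is correct and follows essentially the same route as the paper: reduce to the one-directional implication via the group property $s=O^{-1}s'$, define $G^{r'}=OG^r$ at the Bloch-vector level, and verify positivity via norm preservation and marginals via linearity of $O$. Your version is slightly more careful (you explicitly check normalization of $G^{r'}$, which the paper leaves implicit) and your remark about the $\mathrm{SO}(3)$/improper-$O$ distinction is a nice aside, but the core argument is the same.
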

\begin{definition}\label{geeq3}
Two sets of qubit POVMs $s$ and $s'$ are said to be geometrically equivalent if they are related by some relabelling of outcomes $\mathbb{Perm}$ and some orthogonal transformation $O$ such that $s'=O\mathbb{Perm}s$.
\end{definition}
\begin{proposition}\label{geeq2}
Two geometrically equivalent sets of POVMs, $s$ and $s'$, with $s'=O\mathbb{Perm}s$, exhibit the same joint measurability structure. If $r\subseteq s$ is compatible, with a joint POVM $G^r$, then its corresponding subset $r'=O\mathbb{Perm}^{r}r\subseteq s'$ is also compatible, with a joint POVM $G^{r'}=O\mathbb{Perm}^{r}G^r$.
\begin{proof}
This follows directly from Propositions~\ref{mpovms} and \ref{geeq1}.
\end{proof} 
\end{proposition}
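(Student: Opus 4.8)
The plan is to view the geometric equivalence $s'=O\mathbb{Perm}s$ as a two-stage transformation and simply chain together the two propositions just proved. First I would introduce the intermediate set $\tilde s:=\mathbb{Perm}s$, i.e.\ $\tilde E_k(x_k)=E_k(\mathrm{Perm}_k(x_k))$ for all $k=\overline{1,N}$ and all $x_k\in\mathcal{O}_k$. By construction $s'=O\tilde s$, so it suffices to relate $s$ to $\tilde s$ via the permutation, and then $\tilde s$ to $s'$ via the orthogonal transformation.

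For the first leg, Proposition~\ref{mpovms} applied to the pair $(s,\tilde s)$ gives directly that $s$ and $\tilde s$ exhibit the same joint measurability structure, and that whenever $r=\{E_{k_1},\dots,E_{k_{|r|}}\}\subseteq s$ is compatible with a joint POVM $G^r$, the corresponding subset $\tilde r=\mathbb{Perm}^r r\subseteq\tilde s$ is compatible with the joint POVM $\tilde G^r:=\mathbb{Perm}^r G^r$ given by Eq.~\eqref{GPerm}. For the second leg, I would observe that writing $\tilde E_k(x_k)=\frac{1}{2}\bigl(\tilde\alpha_k(x_k)I+\tilde{\vec e}_k(x_k)\cdot\vec\sigma\bigr)$ with $\tilde\alpha_k:=\alpha_k\circ\mathrm{Perm}_k$ and $\tilde{\vec e}_k:=\vec e_k\circ\mathrm{Perm}_k$ puts $\tilde s$ into exactly the form required by Proposition~\ref{geeq1}, and that $F_k(x_k)=\frac{1}{2}\bigl(\tilde\alpha_k(x_k)I+O\tilde{\vec e}_k(x_k)\cdot\vec\sigma\bigr)$, i.e.\ $s'=O\tilde s$ in the sense of that proposition. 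Hence Proposition~\ref{geeq1} applied to $(\tilde s,s')$ yields that $\tilde s$ and $s'$ exhibit the same joint measurability structure, and that the compatible $\tilde r$ (with joint POVM $\tilde G^r$) maps to the compatible subset $r'=O\tilde r=O\mathbb{Perm}^r r\subseteq s'$ with joint POVM $O\tilde G^r=O\mathbb{Perm}^r G^r$. Concatenating the two legs gives both that $s$ and $s'$ exhibit the same joint measurability structure and the explicit joint POVM $G^{r'}=O\mathbb{Perm}^r G^r$, as claimed.

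I do not expect any genuine obstacle here — the argument is pure bookkeeping — but the one point deserving care is verifying that the intermediate object $\tilde G^r$ is a \emph{bona fide} joint POVM for $\tilde r$ before it is fed into Proposition~\ref{geeq1}; this is precisely what Proposition~\ref{mpovms} guarantees (the permuted operators have the same range, hence remain positive semidefinite and sum to $I$, and their marginals recover the $\tilde E_k$), so the composition is legitimate. One could equally carry out the two stages in the opposite order, $s\to Os\to O\mathbb{Perm}s$, since relabelling outcomes and rotating Bloch vectors act on independent data and therefore commute; I would just fix one order for definiteness.
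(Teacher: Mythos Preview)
Your proposal is correct and is exactly the approach the paper takes: the paper's proof simply reads ``This follows directly from Propositions~\ref{mpovms} and \ref{geeq1},'' and your two-stage composition through the intermediate set $\tilde s=\mathbb{Perm}s$ is precisely the bookkeeping that this one-line proof is implicitly invoking.
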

\begin{corollary}\label{GEbinary}
Two sets of unbiased binary qubit POVMs, $s$ and $s'$, are geometrically equivalent (and therefore exhibit the same joint measurability structure) if and only if the lines defined by their Bloch vectors are related by an orthogonal transformation $O\in\mathrm{O}(3).$ If $r\subseteq s$ is compatible, with a joint POVM $G^r$, then its corresponding subset $r'=O\mathbb{Perm}^{r}r\subseteq s'$ is also compatible, with a joint POVM $G^{r'}=O\mathbb{Perm}^rG^{r}$, where $\mathbb{Perm}^r$ denotes the relabelling of outcomes, if necessary, on the set $r$.
\begin{proof}
This follows from the Definition~\ref{geeq3} (of geometrically equivalent sets of POVMs), Proposition~\ref{geeq2} and Remark~\ref{unbiasedremark} following Proposition~\ref{mpovms}.
\end{proof} 
\end{corollary}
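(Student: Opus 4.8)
The plan is to read the corollary as a direct specialization of Propositions~\ref{mpovms}, \ref{geeq1}, and \ref{geeq2} to unbiased binary qubit POVMs, exploiting the observation in Remark~\ref{unbiasedremark} that such a POVM $E_k$ is fixed by its single Bloch vector $\vec{a}_k=\eta_k\vec{n}_k$, and that the only nontrivial outcome relabelling $\mathrm{Perm}_k$ acts on it by $\vec{a}_k\mapsto-\vec{a}_k$. First I would record the elementary facts I will use: $\mathrm{Perm}_k$ preserves the line $\ell_k:=\mathrm{span}\{\vec{n}_k\}$ (at most reversing its orientation), an $O\in\mathrm{O}(3)$ carries $\ell_k$ to the line $O\ell_k$, and $O$ preserves norms (so that geometric equivalence of unbiased binary POVMs forces the sharpness parameters of corresponding POVMs to coincide).

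For the ``only if'' direction I would assume $s'=O\mathbb{Perm}s$ as in Definition~\ref{geeq3} and note that for each $k$ the Bloch vector of $F_k$ equals $O$ applied to $\pm\vec{a}_k$, hence lies on $O\ell_k$; so the lines of $s'$ are precisely the images under the single map $O$ of the lines of $s$. For the ``if'' direction, given $O\in\mathrm{O}(3)$ with $O\ell_k=\ell'_k$ for all $k$ (and matched sharpness, $\eta_k=\eta'_k$, which geometric equivalence demands anyway), I would write $O\vec{n}_k=\epsilon_k\vec{m}_k$ with $\epsilon_k\in\{+1,-1\}$, take $\mathrm{Perm}_k$ to be the identity when $\epsilon_k=+1$ and the outcome swap when $\epsilon_k=-1$, and check that $O\mathrm{Perm}_kE_k$ then has Bloch vector $\eta_k O(\epsilon_k\vec{n}_k)=\eta_k\vec{m}_k$, i.e.\ $O\mathrm{Perm}_kE_k=F_k$; hence $s'=O\mathbb{Perm}s$, so $s$ and $s'$ are geometrically equivalent. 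Indices with $\eta_k=0$ correspond to the trivial POVM $\{I/2,I/2\}$, whose ``line'' is undefined but which is compatible with everything, so they can be assigned an arbitrary line and set aside.

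With the equivalence of the two characterizations established, the remaining claims — that $s$ and $s'$ then share a joint measurability structure and that a joint POVM $G^r$ of a compatible $r\subseteq s$ passes to the joint POVM $G^{r'}=O\mathbb{Perm}^{r}G^{r}$ of $r'=O\mathbb{Perm}^{r}r\subseteq s'$ — follow immediately from Proposition~\ref{geeq2}, so no further work is needed there. I expect the only place that needs any care to be the bookkeeping in the ``if'' direction: choosing the relabellings $\mathrm{Perm}_k$ so that the sign $\epsilon_k$ is absorbed and $F_k=O\mathrm{Perm}_kE_k$ holds exactly (not merely up to the line $\ell'_k$), plus the harmless $\eta_k=0$ edge case. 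This is routine rather than a genuine obstacle; the substance of the corollary already lives in the earlier propositions.
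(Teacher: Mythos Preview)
Your proposal is correct and follows essentially the same route as the paper's proof, which simply cites Definition~\ref{geeq3}, Proposition~\ref{geeq2}, and Remark~\ref{unbiasedremark}; you have just unpacked those citations into the explicit bookkeeping of signs $\epsilon_k$ and relabellings $\mathrm{Perm}_k$. Your observation that the ``if'' direction tacitly requires matched sharpness parameters $\eta_k=\eta'_k$ (since $O$ is an isometry) is a valid caveat about the statement that the paper leaves implicit.
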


\subsubsection{Multiple binary qubit POVMs}
We first note a necessary condition for the joint measurability of three binary qubit POVMs (that may be biased) obtained by Pal and Ghosh \cite{PG11}:
\begin{theorem}\label{necc3povms}
A necessary condition for the joint measurability of three binary qubit POVMs --- denoted $E_k(x_k)=\frac{1}{2}(\alpha_k I+x_k\eta_k\vec{n}_k\cdot\vec{\sigma})$, where $x_k\in\{\pm1\}, k=\overline{1,3}$ --- is the following: 

\begin{equation}\label{jmfor3ubiquos}
\sum_{i=0}^3||\vec{v}_i-\vec{v}_{\rm FT}||\leq 4,
\end{equation}
where $\vec{v}_{\rm FT}$ is the Fermat-Torricelli (FT) point of the following four points in $\mathbb{R}^3$: $\vec{v}_0=-\sum_{i=1}^3\eta_i\vec{n}_i$ and $\vec{v}_j=-2\eta_j\vec{n}_j-\vec{v}_0$ for $j\in\{1,2,3\}$.\footnote{Given a set of points in $\mathbb{R}^3$, the Fermat-Torricelli point of this set is a point that minimizes the sums of distances from itself to points in the set. See Refs.~\cite{PG11, KM94} for more on the FT point.}
\end{theorem}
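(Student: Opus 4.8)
The plan is to turn joint measurability into a purely geometric inequality on the Bloch vectors of a joint POVM, retaining only as much information as the (bias-free) bound actually needs. First I would assume $E_1,E_2,E_3$ are jointly measurable, so there is a joint POVM $G$ with the $2^3=8$ outcomes $\vec{x}=(x_1,x_2,x_3)\in\{+1,-1\}^3$; write each effect in the Pauli basis as $G(\vec{x})=\frac{1}{2}\big(g_0(\vec{x})I+\vec{g}(\vec{x})\cdot\vec{\sigma}\big)$. Positivity $G(\vec{x})\geq 0$ is equivalent to $g_0(\vec{x})\geq||\vec{g}(\vec{x})||$, and $\sum_{\vec{x}}G(\vec{x})=I$ forces $\sum_{\vec{x}}g_0(\vec{x})=2$ together with $\sum_{\vec{x}}\vec{g}(\vec{x})=0$; hence $\sum_{\vec{x}}||\vec{g}(\vec{x})||\leq\sum_{\vec{x}}g_0(\vec{x})=2$. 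This is the only consequence of positivity and normalization I intend to use --- in particular I deliberately discard the detailed marginal content of the scalar parts $g_0(\vec{x})$, which is exactly why the resulting condition is blind to the biases $\alpha_k$ (and why one should expect it to be necessary but not sufficient).

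Next I would impose the marginal constraints. Summing $G$ over two of the three outcomes returns $E_i$, which at the level of Bloch vectors reads $\sum_{x_j,x_k}\vec{g}(\vec{x})=x_i\eta_i\vec{n}_i$ for each $i$. Expanding $\vec{g}$ over the sign monomials $\chi_T(\vec{x})=\prod_{i\in T}x_i$, $T\subseteq\{1,2,3\}$, i.e. $\vec{g}(\vec{x})=\sum_T\hat{\vec{g}}_T\chi_T(\vec{x})$, the constraint $\sum_{\vec{x}}\vec{g}(\vec{x})=0$ gives $\hat{\vec{g}}_{\emptyset}=0$ and the marginal constraints give $\hat{\vec{g}}_{\{i\}}=\frac{1}{4}\eta_i\vec{n}_i$, while $\hat{\vec{g}}_{\{1,2\}},\hat{\vec{g}}_{\{1,3\}},\hat{\vec{g}}_{\{2,3\}}$ and $\vec{w}:=\hat{\vec{g}}_{\{1,2,3\}}$ stay free.

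The core step is to group the eight outcomes into the four antipodal pairs $\{\vec{x},-\vec{x}\}$ and on each use $||\vec{g}(\vec{x})||+||\vec{g}(-\vec{x})||\geq||\vec{g}(\vec{x})-\vec{g}(-\vec{x})||$. Only the odd-degree monomials survive the difference, so $\vec{g}(\vec{x})-\vec{g}(-\vec{x})=\frac{1}{2}\big(\sum_i x_i\eta_i\vec{n}_i+4(x_1x_2x_3)\vec{w}\big)$. Picking from each pair the representative with $x_1x_2x_3=-1$, the vectors $\sum_i x_i\eta_i\vec{n}_i$ run over precisely $\{\vec{v}_0,\vec{v}_1,\vec{v}_2,\vec{v}_3\}$ of the statement --- e.g. $(x_1,x_2,x_3)=(-,-,-)$ gives $\vec{v}_0$ and $(+,+,-)$ gives $\vec{v}_3$ --- so $||\vec{g}(\vec{x})-\vec{g}(-\vec{x})||=\frac{1}{2}||\vec{v}_k-4\vec{w}||$. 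Summing the four inequalities,
\begin{align}
\sum_{\vec{x}}||\vec{g}(\vec{x})||&\geq\frac{1}{2}\sum_{i=0}^{3}||\vec{v}_i-4\vec{w}||\nonumber\\
&\geq\frac{1}{2}\min_{\vec{p}\in\mathbb{R}^3}\sum_{i=0}^{3}||\vec{v}_i-\vec{p}||=\frac{1}{2}\sum_{i=0}^{3}||\vec{v}_i-\vec{v}_{\rm FT}||,
\end{align}
the last equality being the defining property of the Fermat-Torricelli point. Combined with $\sum_{\vec{x}}||\vec{g}(\vec{x})||\leq 2$ this gives $\sum_{i=0}^{3}||\vec{v}_i-\vec{v}_{\rm FT}||\leq 4$.

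I expect the main obstacle to be conceptual rather than computational: recognizing that one must relax to the single scalar identity $\sum_{\vec{x}}g_0(\vec{x})=2$ (which is what makes the bound clean and bias-free, and also only necessary), and spotting the antipodal pairing in the sign-monomial basis that exposes the free vector $\vec{w}$ and hence the minimization defining $\vec{v}_{\rm FT}$; everything after that is the triangle inequality plus the sign bookkeeping identifying the odd-parity strings with $\vec{v}_0,\ldots,\vec{v}_3$. As a sanity check I would confirm that the inequality is vacuous when $\vec{n}_1,\vec{n}_2,\vec{n}_3$ are collinear and that it collapses to $\eta\leq 1/\sqrt{3}$ for three mutually orthogonal unbiased qubit POVMs of equal sharpness $\eta$, where by symmetry $\vec{v}_{\rm FT}=0$ and the $\vec{v}_i$ form a regular tetrahedron.
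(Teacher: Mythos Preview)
Your argument is correct. Note, however, that the paper does not supply its own proof of this theorem: it is quoted as a result of Pal and Ghosh \cite{PG11} and simply used as input. So there is no ``paper's proof'' to compare against beyond the observation that your derivation is essentially the standard one underlying that reference: parametrize the eight joint effects in the Pauli basis, use positivity and normalization to obtain $\sum_{\vec{x}}\lVert\vec{g}(\vec{x})\rVert\le 2$, fix the degree-one Fourier--Walsh coefficients by the marginal constraints, pair antipodally so that only the odd coefficients survive, and then recognize the residual freedom in the degree-three coefficient as the minimization that defines the Fermat--Torricelli point. Your bookkeeping (the factor $\tfrac{1}{2}$ in $\vec{g}(\vec{x})-\vec{g}(-\vec{x})$, the identification of the four odd-parity strings with $\vec{v}_0,\dots,\vec{v}_3$, and the substitution $\vec{p}=4\vec{w}$) checks out, and your sanity check in the orthogonal equal-sharpness case indeed reproduces $\eta\le 1/\sqrt{3}$.
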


For the case of unbiased qubit POVMs where $\vec{n}_1,\vec{n}_2,\vec{n}_3$ are mutually orthogonal directions, this condition (Eq.~\eqref{jmfor3ubiquos}) reduces to the sufficient condition for joint measurability proved earlier in Ref.~\cite{Busch86}, namely,
\begin{equation}
\eta_1^2+\eta_2^2+\eta_3^2\leq 1.
\end{equation}
Note that $\vec{v}_{\rm FT}=(0,0,0)$ in this case.

 In Ref.~\cite{YO13}, the sufficiency of Eq.~\eqref{jmfor3ubiquos} for the case of any three unbiased qubit POVMs (that need not be mutually orthogonal) was shown. Hence, Eq.~\eqref{jmfor3ubiquos} is necessary and sufficient for joint measurability of three unbiased binary qubit POVMs. 
 Furthermore, if a set of three unbiased binary qubit POVMs is also coplanar, then the four vectors $\{\vec{v}_i\}_{i=0}^3$ are also coplanar and their FT vector can be explicitly computed and is given by the following recipe: if the four coplanar vectors $\{\vec{v}_i\}_{i=0}^3$ make a convex quadrilateral then their FT point is located at the intersection of the diagonals of that quadrilateral; on the other hand, if one of the points is inside the triangle determined by the convex hull of the other three (i.e., it is a convex combination of the other three points), then the FT point is the point inside the triangle (see Ref~\cite{Plastria06}). Using this property in Ref.~\citep{YO13} (their Example 1), the necessary and sufficient conditions were derived for this coplanar case and we mention them below.
 	\begin{corollary}\label{3copljmc}
 		Three unbiased binary and coplanar qubit POVMs $\{E_k\}_{k=1}^3$, with Bloch vectors $\{\vec{a}_k\}_{k=1}^3$ lying on the Bloch lines as in the Fig~\ref{s1} and pointing towards the upper half of the plane, are jointly measurable if and only if
 	\begin{equation}
 			\left|\left|\vec{a}_1+\vec{a}_3\right|\right|+\left|\left|\vec{a}_2-\vec{a}_1\right|\right|+\left|\left|\vec{a}_3-\vec{a}_2\right|\right|\leq2,
 	\end{equation} 
 			when $\vec{a}_2$ is not a convex combination of $\vec{a}_1$ and $\vec{a}_3$, and
 	 \begin{equation}
 			\left|\left|\vec{a}_1+\vec{a}_3\right|\right|+\left|\left|\vec{a}_3-\vec{a}_1\right|\right|\leq2,
 	\end{equation}
 			when $\vec{a}_2$ is a convex combination of $\vec{a}_1$ and $\vec{a}_3$. 
 		
\begin{proof}
 In Example 1 of Ref~\cite{YO13}, it was shown that in the case that $\vec{a}_2$ is not a convex combination of $\vec{a}_1$ and $\vec{a}_3$ then $\vec{v}_2=\vec{a}_1-\vec{a}_2+\vec{a}_3$ from Eq~\eqref{jmfor3ubiquos} lies inside the triangle determined by the convex hull of $\vec{v}_0$, $\vec{v}_1$ and $\vec{v}_3$ and thus $\vec{v}_\text{FT}=\vec{v}_2$. In the case that $\vec{a}_2$ is a convex combination of $\vec{a}_1$ and $\vec{a}_3$, the vectors $\{\vec{v}_j\}_{j=0}^3$ make a convex quadrilateral and we have that $\vec{v}_\text{FT}$ points from the origin to the intersection of the diagonals of that quadrilateral. Substituting these values for $\vec{v}_\text{FT}$ into Eq~\eqref{jmfor3ubiquos}, we obtain the joint measurability conditions of Corollary \ref{3copljmc}.
\end{proof}
\end{corollary}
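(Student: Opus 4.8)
The plan is to reduce the claim to the Fermat--Torricelli condition of Eq.~\eqref{jmfor3ubiquos} --- which, combining Refs.~\cite{PG11} and \cite{YO13}, is both necessary and sufficient for three unbiased binary qubit POVMs --- and then evaluate the FT point explicitly using the coplanarity hypothesis. First I would specialize $\vec{v}_0,\dots,\vec{v}_3$ to the unbiased case $\vec{a}_k=\eta_k\vec{n}_k$, obtaining
\begin{align}
\vec{v}_0&=-(\vec{a}_1+\vec{a}_2+\vec{a}_3),\quad \vec{v}_1=-\vec{a}_1+\vec{a}_2+\vec{a}_3,\nonumber\\
\vec{v}_2&=\vec{a}_1-\vec{a}_2+\vec{a}_3,\quad \vec{v}_3=\vec{a}_1+\vec{a}_2-\vec{a}_3.\nonumber
\end{align}
Since the $\vec{a}_k$ all lie in the upper half of a fixed plane, the four $\vec{v}_i$ are coplanar too, so the explicit recipe for the FT point of four coplanar points quoted just above the corollary applies.

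The second step is the geometric case analysis. I would show that the position of $\vec{v}_2=\vec{a}_1-\vec{a}_2+\vec{a}_3$ relative to the triangle with vertices $\vec{v}_0,\vec{v}_1,\vec{v}_3$ is governed precisely by whether $\vec{a}_2$ is a convex combination of $\vec{a}_1$ and $\vec{a}_3$: in the non-convex-combination case $\vec{v}_2$ lies inside that triangle, so $\vec{v}_{\rm FT}=\vec{v}_2$, whereas in the convex-combination case the four $\vec{v}_i$ are the vertices of a convex quadrilateral and $\vec{v}_{\rm FT}$ is the intersection of its diagonals. I expect this to be the main obstacle: it requires a careful affine/sign argument, expressing $\vec{a}_2$ in the two-dimensional affine coordinates spanned by $\vec{a}_1$ and $\vec{a}_3$, checking the linear inequalities that cut out the triangle with vertices $\vec{v}_0,\vec{v}_1,\vec{v}_3$, and tracking the ordering of the Bloch lines in Fig.~\ref{s1}; the hypothesis that all three Bloch vectors point into the upper half-plane fixes the relevant signs and, in the quadrilateral case, identifies the cyclic order of the $\vec{v}_i$ (hence which pairs are diagonally opposite).

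The final step is bookkeeping. In the interior case, $\sum_{i=0}^3||\vec{v}_i-\vec{v}_2||=||\vec{v}_0-\vec{v}_2||+||\vec{v}_1-\vec{v}_2||+||\vec{v}_3-\vec{v}_2||$, and the identities $\vec{v}_0-\vec{v}_2=-2(\vec{a}_1+\vec{a}_3)$, $\vec{v}_1-\vec{v}_2=2(\vec{a}_2-\vec{a}_1)$, $\vec{v}_3-\vec{v}_2=2(\vec{a}_2-\vec{a}_3)$ turn Eq.~\eqref{jmfor3ubiquos} into $||\vec{a}_1+\vec{a}_3||+||\vec{a}_2-\vec{a}_1||+||\vec{a}_3-\vec{a}_2||\leq2$. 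In the quadrilateral case the two diagonals cross at $\vec{v}_{\rm FT}$, so the sum of the four distances equals the sum of the two diagonal lengths; identifying the diagonals as $\vec{v}_0\vec{v}_2$ and $\vec{v}_1\vec{v}_3$ and using $||\vec{v}_0-\vec{v}_2||=2||\vec{a}_1+\vec{a}_3||$ and $||\vec{v}_1-\vec{v}_3||=2||\vec{a}_3-\vec{a}_1||$ gives $||\vec{a}_1+\vec{a}_3||+||\vec{a}_3-\vec{a}_1||\leq2$. All equivalences are genuine ``iff'' statements because Eq.~\eqref{jmfor3ubiquos} is itself necessary and sufficient in the unbiased case, which completes the argument.
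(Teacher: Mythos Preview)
Your proposal is correct and follows essentially the same route as the paper's proof: specialize Eq.~\eqref{jmfor3ubiquos} to the coplanar unbiased case, locate the Fermat--Torricelli point via the two-case recipe (interior point vs.\ diagonal intersection), and substitute. The only difference is that the paper outsources the geometric case analysis---that $\vec{v}_2$ lies inside $\triangle\vec{v}_0\vec{v}_1\vec{v}_3$ precisely when $\vec{a}_2$ is not a convex combination of $\vec{a}_1,\vec{a}_3$, and that otherwise the $\vec{v}_i$ form a convex quadrilateral with diagonals $\vec{v}_0\vec{v}_2$ and $\vec{v}_1\vec{v}_3$---to Example~1 of Ref.~\cite{YO13}, whereas you propose to verify it directly; your explicit distance bookkeeping is also exactly what the paper leaves implicit in ``substituting these values for $\vec{v}_{\rm FT}$.''
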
 
Notice that in the case where $\vec{a}_2$ is a convex combination of $\vec{a}_1$ and $\vec{a}_3$ (and, thus, $E_2$ is a convex combination of $E_1$ and $E_3$ due to unbiasedness), then the 3-way joint measurability condition reduces to the joint measurability condition for two POVMs $E_1$ and $E_3$, cf.~Theorem ~\ref{2unbjmc}. In other words, as intuitively expected, if two unbiased binary qubit POVMs are compatible, then any convex combination of theirs is also compatible with them.
\begin{corollary}\label{3unbcopjmc}
Three unbiased binary coplanar qubit POVMs, $\{E_k\}_{k=1}^3$, with the same purity $\eta$ are jointly measurable if and only if 
\begin{equation}\label{3sameeta}
\eta\leq\frac{1}{\cos\frac{\phi_1+\phi_2}{2}+\sin\frac{\phi_1}{2}+\sin\frac{\phi_2}{2}},
\end{equation}
where $\phi_1,\phi_2\leq\pi/2$ are angles between the lines determined by the Bloch vectors of $E_1,$ $E_2$ and $E_3$ with the arrangement shown in Fig.~\ref{s1}.
\label{3unb}
\begin{proof}
		This is a special case of the Corollary~\ref{3copljmc} where the observables have the same purity. Then the norm of Bloch vectors of all three POVMs is equal to $\eta$ and, hence, none of them can be a convex combination of other two. It is then easy to verify that inequality given in Eq.~\eqref{jmfor3ubiquos} reduces to Eq.~\eqref{3sameeta}.
\end{proof}
\end{corollary}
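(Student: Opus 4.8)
The plan is to specialize Corollary~\ref{3copljmc} to the situation in which the three Bloch vectors $\vec a_1,\vec a_2,\vec a_3$ all have the common length $\eta$. First I would rule out the ``convex combination'' branch of that corollary. A nontrivial convex combination $\lambda\vec a_i+(1-\lambda)\vec a_j$ with $\lambda\in(0,1)$ of two vectors of equal norm $\eta$ has norm strictly below $\eta$ when $\vec a_i,\vec a_j$ are linearly independent (strict convexity of the Euclidean norm), and norm at most $\eta$ when they are (anti)parallel, with equality only at the endpoints; hence none of $\vec a_1,\vec a_2,\vec a_3$ can be a nondegenerate convex combination of the other two. (If two of the Bloch lines coincide one may, after an outcome relabelling justified by Remark~\ref{unbiasedremark}, take those two vectors equal and fall back on the two-POVM criterion of Corollary~\ref{Leta2}; Eq.~\eqref{3sameeta} reproduces it in that limit, so we may assume $\phi_1,\phi_2>0$.) Thus the applicable branch of Corollary~\ref{3copljmc} is
\[
\|\vec a_1+\vec a_3\|+\|\vec a_2-\vec a_1\|+\|\vec a_3-\vec a_2\|\le 2 .
\]

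Next I would evaluate the three norms using the configuration of Fig.~\ref{s1}. Applying the outcome relabellings of Remark~\ref{unbiasedremark}/Corollary~\ref{GEbinary} to orient all three Bloch vectors into one half-plane, $\vec a_2$ sits in the angular sector bounded by $\vec a_1$ and $\vec a_3$, subtending an angle $\phi_1$ with $\vec a_1$ and $\phi_2$ with $\vec a_3$, so $\vec a_1$ and $\vec a_3$ subtend $\phi_1+\phi_2$. Using $\|\vec u-\vec v\|=2\eta\sin(\theta/2)$ and $\|\vec u+\vec v\|=2\eta\cos(\theta/2)$ for $\|\vec u\|=\|\vec v\|=\eta$ at angle $\theta\in[0,\pi]$, and noting $\phi_1+\phi_2\le\pi$ since $\phi_1,\phi_2\le\pi/2$ (so $\cos\frac{\phi_1+\phi_2}{2}\ge 0$ and no absolute value is needed), the inequality becomes
\[
2\eta\left(\cos\frac{\phi_1+\phi_2}{2}+\sin\frac{\phi_1}{2}+\sin\frac{\phi_2}{2}\right)\le 2 ,
\]
which is exactly Eq.~\eqref{3sameeta}; since Corollary~\ref{3copljmc} is an equivalence, so is this, giving both necessity and sufficiency.

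The computation itself is entirely routine. The only point that demands care is the geometric bookkeeping: confirming that after the relabellings the three vectors really do sit in the arrangement assumed by Corollary~\ref{3copljmc}, and in particular that $\vec a_1$ and $\vec a_3$ subtend $\phi_1+\phi_2$ rather than, say, $|\phi_1-\phi_2|$. I expect this to be the main obstacle, and it is a mild one: once the ordering of the Bloch lines in Fig.~\ref{s1} is fixed and the bound $\phi_1,\phi_2\le\pi/2$ is used to drop the absolute value on the cosine, nothing analytic remains.
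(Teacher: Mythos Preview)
Your proposal is correct and follows the paper's approach exactly: specialize Corollary~\ref{3copljmc} to equal-norm Bloch vectors, observe that the convex-combination branch cannot occur, and reduce the remaining inequality to Eq.~\eqref{3sameeta}. You simply supply the norm computations and the degenerate-case discussion that the paper leaves implicit under ``easy to verify.''
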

\begin{figure}[H]
\centering
\includegraphics[scale=0.54]{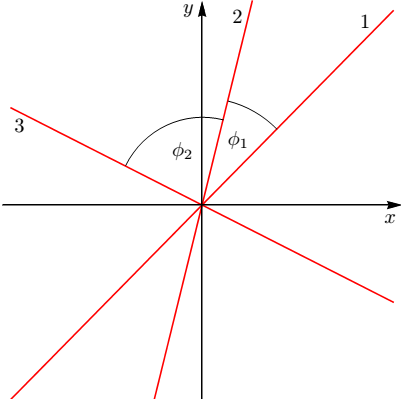}
\caption{The direction of the Bloch vector of $E_3$ is between those of $E_1$ and $E_2$. Here $\phi_1$ is the (acute) angle between $1$ and $3$ and $\phi_2$ is the (acute) angle between $2$ and $3$.}
\label{s1}
\end{figure}

So far, we have a necessary and sufficient condition for compatibility of two binary qubit POVMs (Theorem \ref{2povms}) \cite{YLL10} and a condition that is necessary and sufficient for compatibility of three unbiased binary qubit POVMs (Theorem \ref{necc3povms}) \cite{PG11, YO13}. We will now state one necessary and one sufficient condition for joint measurability of a set of $N$ unbiased binary qubit POVMs with same purity $\eta$.

\begin{theorem} Let $E_k(x_k)=\frac{1}{2}(I+\eta x_k\vec{n_k}\cdot\vec{\sigma}),$ $x_k\in\{\pm1\},$ $k=\overline{1,N}$ be $N$ unbiased binary qubit POVMs. Then:
\begin{enumerate}
\item a necessary condition for their joint measurability is 
\begin{align}\label{NNecess}
\eta&\leq\frac{1}{N}\underset{\vec{x}}{\max}\left|\left|\sum_{k=1}^Nx_k\vec{n}_k\right|\right|,\quad\text{and}
\end{align}
\item a sufficient condition for their joint measurability is 
\begin{align}\label{NSuff}
\eta&\leq\frac{2^N}{\sum_{\vec{x}}\left|\left|\sum_{k=1}^Nx_k\vec{n}_k\right|\right|},
\end{align}
where $\vec{x}=(x_1,x_2,\ldots,x_N)\in\{\pm1\}^N$.
\end{enumerate}
\label{NNecSuf}
\end{theorem}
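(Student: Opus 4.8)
\emph{Proof plan.} For both parts I would work with the Bloch parametrization of a hypothetical joint POVM, writing it as $G(\vec{x})=\frac{1}{2}\big(\gamma(\vec{x})I+\vec{g}(\vec{x})\cdot\vec{\sigma}\big)$ over the outcome strings $\vec{x}=(x_1,\ldots,x_N)\in\{\pm1\}^N$. Positivity of $G(\vec{x})$ is equivalent to $\gamma(\vec{x})\geq\|\vec{g}(\vec{x})\|\geq0$, normalization $\sum_{\vec{x}}G(\vec{x})=I$ gives $\sum_{\vec{x}}\gamma(\vec{x})=2$, and the marginal constraints $\sum_{\vec{x}:x_k=\pm1}G(\vec{x})=E_k(\pm1)$ give, upon comparing Bloch vectors, $\sum_{\vec{x}}x_k\vec{g}(\vec{x})=2\eta\vec{n}_k$ for each $k$. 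For the necessary condition I would dot this last identity with the unit vector $\vec{n}_k$, sum over $k$, and exchange the order of summation to get $\sum_{\vec{x}}\big(\sum_k x_k\vec{n}_k\big)\cdot\vec{g}(\vec{x})=2N\eta$; then the triangle inequality together with $\|\vec{g}(\vec{x})\|\leq\gamma(\vec{x})$ and $\sum_{\vec{x}}\gamma(\vec{x})=2$ bounds the left-hand side by $2\max_{\vec{x}}\|\sum_k x_k\vec{n}_k\|$, which is precisely Eq.~\eqref{NNecess}.

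For the sufficient condition I would exhibit an explicit joint POVM, first at the extremal value $\eta^{*}:=2^{N}/W$, where $W:=\sum_{\vec{x}}\|\sum_k x_k\vec{n}_k\|$ (note $W>0$ since the $\vec{n}_k$ are unit vectors), and then dilute it down to all smaller $\eta$. The idea is to take each outcome operator to be rank one, pointing along $\sum_k x_k\vec{n}_k$ and weighted by its length; concretely,
\[
G^{*}(\vec{x})\;:=\;\frac{1}{W}\left(\big\|\sum_{k}x_{k}\vec{n}_{k}\big\|\,I+\big(\sum_{k}x_{k}\vec{n}_{k}\big)\cdot\vec{\sigma}\right),
\]
with $G^{*}(\vec{x})=0$ when $\sum_k x_k\vec{n}_k=\vec{0}$. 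Each $G^{*}(\vec{x})$ is manifestly positive semidefinite. The verification then rests on three elementary identities: (i) $\sum_{\vec{x}}\sum_k x_k\vec{n}_k=\vec{0}$, which gives $\sum_{\vec{x}}G^{*}(\vec{x})=I$; (ii) the involution $\vec{x}\mapsto-\vec{x}$ exchanges $\{\vec{x}:x_j=+1\}$ with $\{\vec{x}:x_j=-1\}$ and fixes $\|\sum_k x_k\vec{n}_k\|$, so $\sum_{\vec{x}:x_j=+1}\|\sum_k x_k\vec{n}_k\|=W/2$; and (iii) $\sum_{\vec{x}:x_j=+1}\sum_k x_k\vec{n}_k=2^{N-1}\vec{n}_j$, since only the $k=j$ term survives averaging over the remaining signs. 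Combining (ii) and (iii), the $j$-th marginal of $G^{*}$ equals $\frac{1}{2}\big(I\pm\tfrac{2^{N}}{W}\vec{n}_j\cdot\vec{\sigma}\big)$, which is exactly $E_j(\pm1)$ at $\eta=\eta^{*}$; incidentally, since marginals of a POVM are again POVMs, this forces $\eta^{*}\leq1$, so the hypothesis $\eta\leq\eta^{*}$ is never vacuous.

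To reach an arbitrary $\eta\leq\eta^{*}$ I would use convexity of joint measurability: since $E_k^{(\eta)}(\pm1)=\tfrac{\eta}{\eta^{*}}E_k^{(\eta^{*})}(\pm1)+\big(1-\tfrac{\eta}{\eta^{*}}\big)\tfrac{I}{2}$ with $\tfrac{\eta}{\eta^{*}}\in[0,1]$, the operators $G(\vec{x}):=\tfrac{\eta}{\eta^{*}}G^{*}(\vec{x})+\big(1-\tfrac{\eta}{\eta^{*}}\big)2^{-N}I$ form a joint POVM whose marginals are the $E_k^{(\eta)}$, the term $\{2^{-N}I\}_{\vec{x}}$ being the joint POVM of $N$ trivial POVMs $\{I/2\}$. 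This yields Eq.~\eqref{NSuff}.

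The identities (i)--(iii) and the marginal bookkeeping are routine; the one genuine idea is the ansatz for $G^{*}$, and recognizing it is where I expect the main obstacle to lie. The naive symmetric guess --- a flat $\gamma(\vec{x})$ together with $\vec{g}(\vec{x})\propto\sum_k x_k\vec{n}_k$ --- only survives positivity for $\eta\leq1/\max_{\vec{x}}\|\sum_k x_k\vec{n}_k\|$, which is strictly weaker than Eq.~\eqref{NSuff} in general. Pushing the threshold up to the \emph{average} appearing in Eq.~\eqref{NSuff} requires making each $G^{*}(\vec{x})$ rank one and aligning it exactly along $\sum_k x_k\vec{n}_k$, so that its full operator norm is used; this is the step that is not forced by symmetry alone.
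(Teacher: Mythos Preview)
Your proposal is correct. The paper does not itself prove this theorem but defers to Refs.~\cite{KG14} (for the necessary condition) and \cite{LSW11} (for the sufficient condition); your argument recovers essentially the standard proofs given there. For part~1, the key identity $\sum_{\vec{x}}\big(\sum_k x_k\vec{n}_k\big)\cdot\vec{g}(\vec{x})=2N\eta$ together with Cauchy--Schwarz, $\|\vec{g}(\vec{x})\|\leq\gamma(\vec{x})$, and $\sum_{\vec{x}}\gamma(\vec{x})=2$ is exactly the mechanism used in \cite{KG14}. For part~2, your rank-one ansatz $G^{*}(\vec{x})\propto\|\sum_k x_k\vec{n}_k\|I+(\sum_k x_k\vec{n}_k)\cdot\vec{\sigma}$ is the construction of \cite{LSW11}; the three identities (i)--(iii) you invoke are the right bookkeeping, and your convex-mixture step to extend from $\eta^{*}$ down to all $\eta\leq\eta^{*}$ is clean. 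Your closing remark that the flat ansatz $\gamma(\vec{x})=\mathrm{const}$ only reaches $\eta\leq1/\max_{\vec{x}}\|\sum_k x_k\vec{n}_k\|$ is also correct and nicely isolates why the rank-one choice is the essential idea.
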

The necessary condition (part 1) above is proven in Appendix B, Theorem 3, of Ref.~\cite{KG14} while the sufficient condition (part 2) is proven in Appendix F, Theorem 13, of Ref.~\cite{LSW11}.\footnote{Note that a different necessary condition was incorrectly proven in Ref.~\cite{LSW11}. However, the proof of the sufficient condition in Ref.~\cite{LSW11}, to which we refer, is correct.}
\begin{corollary}
A set of $N\in\{2,3\}$ orthogonal and unbiased binary qubit POVMs --- i.e., $E_k=\frac{1}{2}(I+\eta x_k\vec{e}_k\cdot\vec{\sigma})$,  $k=\overline{1,3}$, $x_k\in\{\pm1\}$, where $\vec{e}_k$ are unit vectors such that $\vec{e}_1\cdot\vec{e}_2=\vec{e}_2\cdot\vec{e}_3=\vec{e}_3\cdot\vec{e}_1=0$ --- are jointly measurable if and only if 
\begin{equation}
\eta\leq\frac{1}{\sqrt{N}}.
\end{equation}
\begin{proof}
Because the Bloch vectors are orthogonal we have that 
$$\left|\left|\sum_{\vec{x}}x_k \vec{e}_k\right|\right|=\sqrt{N},\quad \forall \vec{x}\in\{\pm1\}^3.$$
Putting that into Eqs.~\eqref{NNecess} and \eqref{NSuff} from Theorem \ref{NNecSuf} we get
\begin{align*}
\textrm{Neccesary condition}: \eta&\leq\frac{1}{N}\max\{\sqrt{N}\}=\frac{1}{\sqrt{N}},\textrm{ and}\\
\textrm{Sufficient condition}: \eta&\leq \frac{2^N}{\sqrt{N}\cdot2^N}=\frac{1}{\sqrt{N}}.
\end{align*}
\end{proof}
\label{ortnecsuf}
\end{corollary}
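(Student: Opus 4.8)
The plan is to simply evaluate the two bounds of Theorem~\ref{NNecSuf} on the orthonormal configuration and observe that they collapse to the same value. First I would compute the relevant norm: since $\vec{e}_1,\ldots,\vec{e}_N$ are mutually orthogonal unit vectors, for any sign string $\vec{x}=(x_1,\ldots,x_N)\in\{\pm1\}^N$ we have
\begin{equation}
\left\|\sum_{k=1}^N x_k\vec{e}_k\right\|^2=\sum_{k=1}^N x_k^2\,\|\vec{e}_k\|^2+\sum_{j\neq k}x_jx_k\,\vec{e}_j\cdot\vec{e}_k=N,
\end{equation}
so $\left\|\sum_k x_k\vec{e}_k\right\|=\sqrt{N}$ for \emph{every} $\vec{x}$, independently of the signs. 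This is the one computation that makes everything work, and it is the crux of the argument.

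Next I would feed this into the two halves of Theorem~\ref{NNecSuf}. For the necessary condition, Eq.~\eqref{NNecess} becomes $\eta\leq\frac{1}{N}\max_{\vec{x}}\sqrt{N}=\frac{1}{\sqrt{N}}$, the maximum being trivial since the argument is constant. For the sufficient condition, Eq.~\eqref{NSuff} becomes $\eta\leq\frac{2^N}{\sum_{\vec{x}}\sqrt{N}}=\frac{2^N}{2^N\sqrt{N}}=\frac{1}{\sqrt{N}}$, using that there are exactly $2^N$ sign strings. Since the necessary bound and the sufficient bound coincide at $\eta=\frac{1}{\sqrt{N}}$, the condition $\eta\leq\frac{1}{\sqrt{N}}$ is both necessary and sufficient for joint measurability, which is the claim. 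This argument works verbatim for $N=2$ and $N=3$ — indeed for any $N$ for which Theorem~\ref{NNecSuf} is invoked — but the corollary only asserts $N\in\{2,3\}$ because it is only for these values that Theorem~\ref{NNecSuf}'s two bounds are known to be tight in general; here the orthonormality forces tightness regardless.

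There is essentially no obstacle: the only thing to be careful about is invoking Theorem~\ref{NNecSuf} correctly (the necessary part from Ref.~\cite{KG14}, the sufficient part from Ref.~\cite{LSW11}) and noting that the orthonormality is exactly what lets the $\max$ in Eq.~\eqref{NNecess} and the sum in Eq.~\eqref{NSuff} be evaluated in closed form so that the two bounds meet. As a consistency check one could also note that for $N=2$ this agrees with Corollary~\ref{Leta2} evaluated at $\phi=\pi/2$, where $|\sin\frac{\phi}{2}|+|\cos\frac{\phi}{2}|=2\cdot\frac{1}{\sqrt{2}}=\sqrt{2}$, giving $\eta\leq\frac{1}{\sqrt{2}}$; and for the orthogonal $N=3$ case it recovers Busch's condition $\eta_1^2+\eta_2^2+\eta_3^2\leq1$ with $\eta_1=\eta_2=\eta_3=\eta$.
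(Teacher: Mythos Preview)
Your proposal is correct and follows essentially the same approach as the paper: compute $\left\|\sum_k x_k\vec{e}_k\right\|=\sqrt{N}$ from orthonormality, then substitute into both bounds of Theorem~\ref{NNecSuf} to see that the necessary and sufficient conditions coincide at $\eta\leq 1/\sqrt{N}$. One small aside: your remark that the restriction $N\in\{2,3\}$ is about tightness of Theorem~\ref{NNecSuf} is not quite the point --- the restriction is there because Bloch vectors live in $\mathbb{R}^3$, so one cannot have more than three mutually orthogonal unit vectors.
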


\subsubsection{Specker's scenario with unbiased binary qubit POVMs}
Among the most important immediate consequences of Corollaries \ref{Leta2}, \ref{3unb} and \ref{ortnecsuf} is the existence of Specker's scenario on a qubit, i.e., the setting with three incompatible qubit POVMs that are pairwise compatible, which was introduced in Example~\ref{NSpeckdefn}. Note that such a joint measurability structure is impossible with projective measurements \cite{LSW11,KHF14}. There are at least two standard ways to construct Specker's scenario using unbiased binary qubit POVMs \cite{LSW11}:
\begin{example}(Specker's scenario with orthogonal spin axes)
\upshape Let $E_k(x_k)=\frac{1}{2}(I+x_k\eta\vec{n}_k\cdot\vec{\sigma}),\quad k=\overline{1,3}$ be three orthogonal unbiased binary qubit POVMs (see Fig.~\ref{orttrine}, left), i.e., $\vec{n}_1\cdot\vec{n}_2=\vec{n}_2\cdot\vec{n}_3=\vec{n}_3\cdot\vec{n}_1=0$. By Corollary \ref{ortnecsuf} we have that each pair of POVMs is jointly measurable if $\eta\leq1/\sqrt{2}$ while all three of them are jointly measurable only if $\eta\leq1/\sqrt{3}$. Thus, there is a gap $\eta\in\left(1/\sqrt{3},1/\sqrt{2}\right]$ that yields pairwise joint measurability but no triplewise joint measurability.
\end{example}
\begin{example}
(Specker's scenario with trine spin axes)\upshape 
\hspace{2pt} Consider three POVMs in an equatorial plane of the Bloch ball equiangularly separated (see Fig.~\ref{orttrine}, right): $E_k(x_k)=\frac{1}{2}(I+x_k\eta\vec{n}_k\cdot\vec{\sigma})$, $k=\overline{1,3}$, where $\vec{n}_k=\left(\cos\frac{k-1}{3}\pi,\sin\frac{k-1}{3}\pi,0\right)$. These are the so-called trine spin POVMs.

Corollary \ref{Leta2} combined with Corollary \ref{3unb} says that each pair of these POVMs is jointly measurable if and only if 
$$\eta\leq\frac{1}{\sin\frac{\pi}{6}+\cos\frac{\pi}{6}}=\sqrt{3}-1,$$
while all three of them are compatible if and only if
$$\eta\leq\frac{1}{\cos\frac{\pi}{3}+2\sin\frac{\pi}{6}}=\frac{2}{3},$$
which means that Specker's scenario is realized for $\eta\in\left(\frac{2}{3},\sqrt{3}-1\right]$. 
\end{example}
\begin{figure*}
	\centering
	\includegraphics[scale=0.428]{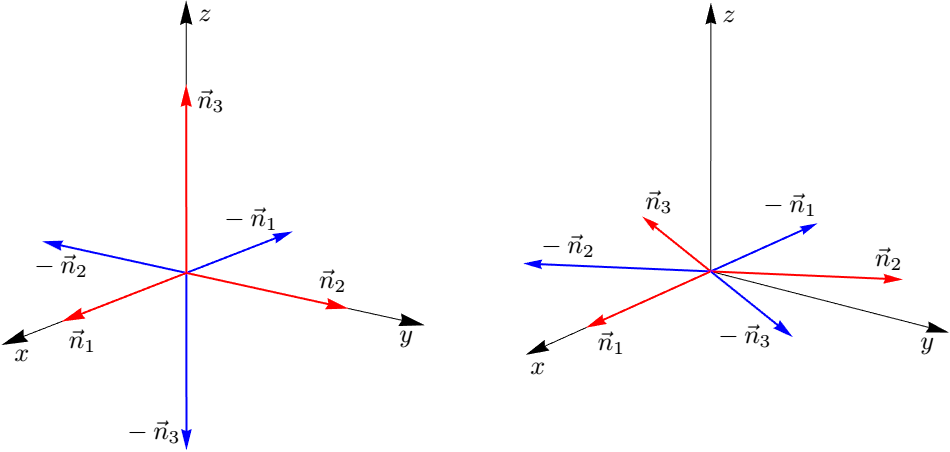}
	\caption{Orthogonal (left) and Trine Spin POVMs (right)}
	\label{orttrine}
\end{figure*}
\subsubsection{Adaptive strategy for constructing joint POVMs}
Here we recall the general adaptive strategy outlined for joint measurements in Ref.~\cite{ULMH16}. This is a technique for obtaining the joint POVM $G^s$ for the set of unbiased binary qubit POVMs $s=\left\{E_k(\pm1)=\frac{1}{2}(I\pm\eta\vec{n}_k\cdot\vec{\sigma})\right|k=\overline{1,N}\}$ that consists of the following steps. Take a set of projectors $\left\{\Pi'_k(\pm1)=\frac{1}{2}(I\pm\vec{n}'_k\cdot\vec{\sigma})|k=\overline{1,N'}\right\}$. We seek a joint measurement of $s$ in the form 
\begin{equation}
G^s(\vec{x})=\sum_{k=1}^{N'}\mu_k\sum_{y_k\in\{\pm1\}}\Pi'_k(y_k)p(\vec{x}|\Pi'_k=y_k),\text{ } \vec{x}\in\{\pm1\}^N,
\end{equation}
where $\{\mu_k\}_{k=1}^{N'}$ is a probability distribution according to which the PVMs $\{\Pi'_k\}_{k=1}^{N'}$ are implemented. By ``$\Pi'_k=y_k$", we denote the fact that the effect $\Pi'_k(y_k)$ (or the outcome $y_k$) was registered in a measurement of $\Pi'_k$.

The vectors $\{\vec{n}'_k\}_{k=1}^{N'}$ are chosen such that $\vec{n}_i\cdot\vec{n}'_k\neq 0$ for all $i=\overline{1,N}$ and $k=\overline{1,N'}$. The conditional probability distribution $p(\vec{x}|
\Pi'_k=\beta_k)$ is determined by signs of the projections of vectors $\vec{n}_i$ onto the vector $\vec{n}'_k$ in the following way:
\begin{align}
&p(\vec{x}|\Pi'_k=y_k)=\prod_{i=1}^N\delta_{x_i,\mathrm{sgn}(\vec{n}_i\cdot\vec{n}'_k)y_k},
\end{align}
where ${\rm sgn}(\vec{n}_i\cdot\vec{n}'_k)=+1$ if $\vec{n}_i\cdot\vec{n}'_k>0$, and $-1$ if $\vec{n}_i.\vec{n}'_k<0$.

The goal is to choose an appropriate set of parameters for $G^s$ such that it is indeed a valid joint POVM for $s$ i.e. that it is positive semidefinite and that the marginalization is correct:
\begin{equation}
\sum_{\vec{y}\in\{\pm1\}^N}^{y_k=x_k}G^s(\vec{y})=E_k(x_k).
\end{equation}

\subsubsection{Planar symmetric POVMs}
We now define a family of qubit POVMs that will be of particular interest in this paper:
\begin{definition}[Planar symmetric POVMs]
A set of planar symmetric POVMs is any set of $N$ coplanar and unbiased binary qubit POVMs with the same sharpness whose Bloch vectors define lines that equiangularly dissect the plane, i.e., the angle between successive lines is $\phi_0=\frac{\pi}{N}$, where $N$ is a positive integer.
\label{plansymdef}
\end{definition}
Note that, by Proposition~\ref{mpovms}, the joint measurability structure of  such a set of POVMs is independent of how we label the outcomes, i.e., the orientation along the line of a particular Bloch vector does not matter. However, unless specified otherwise, we will always consider the case where the POVMs are in an equatorial plane of the Bloch ball and the ``$+1$" outcomes are assigned to vectors in the upper half of the plane and ``$-1$" to the opposite vectors in the lower half, cf.~Fig.~\ref{plansymm}. With this convention, these POVMs are:
\begin{align}
&E_k(x_k)=\frac{1}{2}\Big(I+\eta x_k\vec{n}_k\cdot\vec{\sigma}\Big),\text{ }k=\overline{1,N},\text{ }x_k\in\{\pm1\},\nonumber\\
&\textrm{where}\quad\vec{n}_k=\left(\cos\frac{k-1}{N}\pi,\sin\frac{k-1}{N}\pi,0\right).
\label{plansymconv}
\end{align}
Note that trine spin POVMs are just the special case when $N=3$.
The following theorem, proven in Ref.~\cite{ULMH16}, is of important for many constructions in this paper:
\begin{theorem}\label{plansymjmc}
$N$ planar symmetric POVMs are jointly measurable if and only if their sharpness $\eta$ satisfies
\begin{equation}
\eta\leq\frac{1}{N\sin\frac{\pi}{2N}}.
\end{equation}
The effects of a joint POVM that reproduces the whole range $\eta\in\left(0,\frac{1}{N\sin\frac{\pi}{2N}}\right]$ are given by 
\begin{align}
G(\vec{x})&=\frac{1}{2N}\Big(I+\mu\vec{e}(\vec{x})\cdot\vec{\sigma}\Big),\nonumber\\
&\textrm{ where }\vec{x}\in\{\pm1\}^N,\text{ }\mu=\eta N\sin\frac{\pi}{2N}\in(0,1],\textrm{ and} \nonumber\\
\vec{e}(\vec{x})&\in\left\{\pm\left(\left.\cos\frac{(k-1)\pi}{N},\sin\frac{(k-1)\pi}{N},0\right)\right|k=\overline{1,N}\right\}\nonumber\\
&\textrm{ for odd } N,\textrm{ while}\nonumber\\
\vec{e}(\vec{x})&\in\left\{\pm\left(\left.\cos\frac{(2k-1)\pi}{2N},\sin\frac{(2k-1)\pi}{2N},0\right)\right|k=\overline{1,N}\right\}\nonumber\\
&\text{ for even } N.
\label{optN}
\end{align}
Here, outcome $\vec{x}$ is associated uniquely to the unit vector $\vec{e}(\vec{x})$ by the relation 
\begin{align}\label{plansymout}
\vec{x}=\left(\sgn\Big(\vec{n}_1\cdot\vec{e}(\vec{x})\Big),\ldots,\sgn\Big(\vec{n}_N\cdot \vec{e}(\vec{x})\Big)\right),
\end{align}
and the null (or zero) effect is assigned to any outcome $\vec{x}$ not obtainable through Eq.~\eqref{plansymout}.
\label{gluslov}
\end{theorem}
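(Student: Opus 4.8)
The plan is to prove both directions separately. For the \emph{necessary} direction, I would invoke the general necessary condition of Theorem~\ref{NNecSuf}, part~1, specialized to the planar symmetric configuration of Eq.~\eqref{plansymconv}. Concretely, I would compute $\max_{\vec{x}}\bigl\|\sum_{k=1}^N x_k\vec{n}_k\bigr\|$ for $\vec{n}_k=\bigl(\cos\frac{(k-1)\pi}{N},\sin\frac{(k-1)\pi}{N},0\bigr)$. Since all the $\vec{n}_k$ lie on equiangularly spaced lines through the origin, flipping signs $x_k$ just reflects individual vectors to the opposite ray; the sum is maximized when all chosen vectors fall within a half-plane of opening angle $\pi$, i.e.\ the $2N$ candidate rays $\{\pm\vec{n}_k\}$ restricted to a half-plane. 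Summing the $N$ unit vectors at angles $\theta, \theta+\frac{\pi}{N}, \dots, \theta+\frac{(N-1)\pi}{N}$ and optimizing over $\theta$ gives, after a routine trigonometric-series evaluation (a telescoping of $\sum e^{i(k-1)\pi/N}$), the value $1/\sin\frac{\pi}{2N}$. Dividing by $N$ yields exactly $\eta\le \frac{1}{N\sin\frac{\pi}{2N}}$, establishing necessity.

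For the \emph{sufficient} direction, the cleanest route is to exhibit the explicit joint POVM $G(\vec{x})$ given in Eq.~\eqref{optN} and verify that (i) each $G(\vec{x})$ is a positive semidefinite operator, and (ii) the marginals reproduce each $E_k$. Positivity is immediate from the form $G(\vec{x})=\frac{1}{2N}\bigl(I+\mu\,\vec{e}(\vec{x})\cdot\vec{\sigma}\bigr)$ with $\|\vec{e}(\vec{x})\|=1$ and $\mu=\eta N\sin\frac{\pi}{2N}\in(0,1]$ precisely in the claimed range of $\eta$; the eigenvalues are $\frac{1\pm\mu}{2N}\ge 0$. For the marginals I would carry out the computation $\sum_{\vec{y}:\,y_k=x_k} G(\vec{y})$. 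Using the identification Eq.~\eqref{plansymout} between outcome strings $\vec{x}$ and unit vectors $\vec{e}(\vec{x})$ on the finer equiangular mesh (the $2N$ directions at angles multiples of $\pi/N$ for odd $N$, and at half-integer multiples for even $N$), fixing $y_k=x_k=\sgn(\vec{n}_k\cdot\vec{e})$ amounts to summing over those $\vec{e}$ lying on the correct side of the line orthogonal to $\vec{n}_k$. The identity-operator parts sum correctly by a straightforward counting argument (each half-plane of the mesh contains exactly $N$ of the $2N$ sign-patterns compatible with a fixed $y_k$, contributing $N\cdot\frac{1}{2N}I=\frac12 I$), and the Pauli parts sum to $\eta\vec{n}_k$ after a symmetric cancellation: the contributing $\vec{e}$ vectors are symmetric about the direction $\vec{n}_k$, their transverse components cancel pairwise, and their projections onto $\vec{n}_k$ sum to $\mu\cdot\frac{1}{\mu}\eta N\sin\frac{\pi}{2N}\cdot(\dots)$—again a finite geometric/trigonometric sum that collapses to give exactly $\eta\vec{n}_k$.

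The main obstacle I anticipate is bookkeeping the combinatorics of which outcome strings $\vec{x}$ are ``realizable'' via Eq.~\eqref{plansymout} and correctly pairing them in the marginal sum, particularly handling the parity distinction between odd and even $N$ (where the optimal mesh of $\vec{e}(\vec{x})$ directions is shifted by $\frac{\pi}{2N}$). One must check that no sign-pattern is double-counted or omitted, that boundary cases where $\vec{n}_i\cdot\vec{e}(\vec{x})=0$ do not arise for the chosen mesh (they do not, by the parity choice, which is exactly why the shift is needed for even $N$), and that the null effect assigned to non-realizable $\vec{x}$ is consistent. Since this argument essentially reproduces the construction of Ref.~\cite{ULMH16}, I would organize the proof so that the necessary direction is self-contained via Theorem~\ref{NNecSuf}, and for sufficiency I would either reproduce the verification in detail or cite Ref.~\cite{ULMH16} for the construction while independently confirming the marginal identities for the reader's convenience.
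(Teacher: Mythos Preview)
The paper does not give its own proof of this theorem; it is stated with the attribution ``proven in Ref.~\cite{ULMH16}''. Your proposal is a correct self-contained reconstruction of such a proof. The sufficiency argument---verifying positivity of $G(\vec x)=\frac{1}{2N}(I+\mu\,\vec e(\vec x)\cdot\vec\sigma)$ and checking the marginals via the half-plane counting you describe---is exactly the content of the cited construction, which is in fact an instance of the adaptive strategy the paper recalls in the subsection immediately preceding this theorem. Your use of Theorem~\ref{NNecSuf}, part~1, for the necessity direction is a clean shortcut; one small clarification: the phrase ``optimizing over $\theta$'' is slightly misleading, since once all $x_k\vec n_k$ lie in a common half-plane the norm of the sum is fixed and independent of which half-plane was chosen. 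What remains to justify is that no sign pattern does better, which follows from the identity $\max_{\vec x}\bigl\|\sum_k x_k\vec n_k\bigr\|=\max_{\|\vec u\|=1}\sum_k|\vec u\cdot\vec n_k|$ together with an explicit evaluation of the right-hand side. Your identification of the odd/even parity bookkeeping as the main technical nuisance is accurate and is precisely why the two meshes of $\vec e(\vec x)$ directions appear in the statement.
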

\begin{figure*}[htb!]
	\centering
	\includegraphics[scale=0.285]{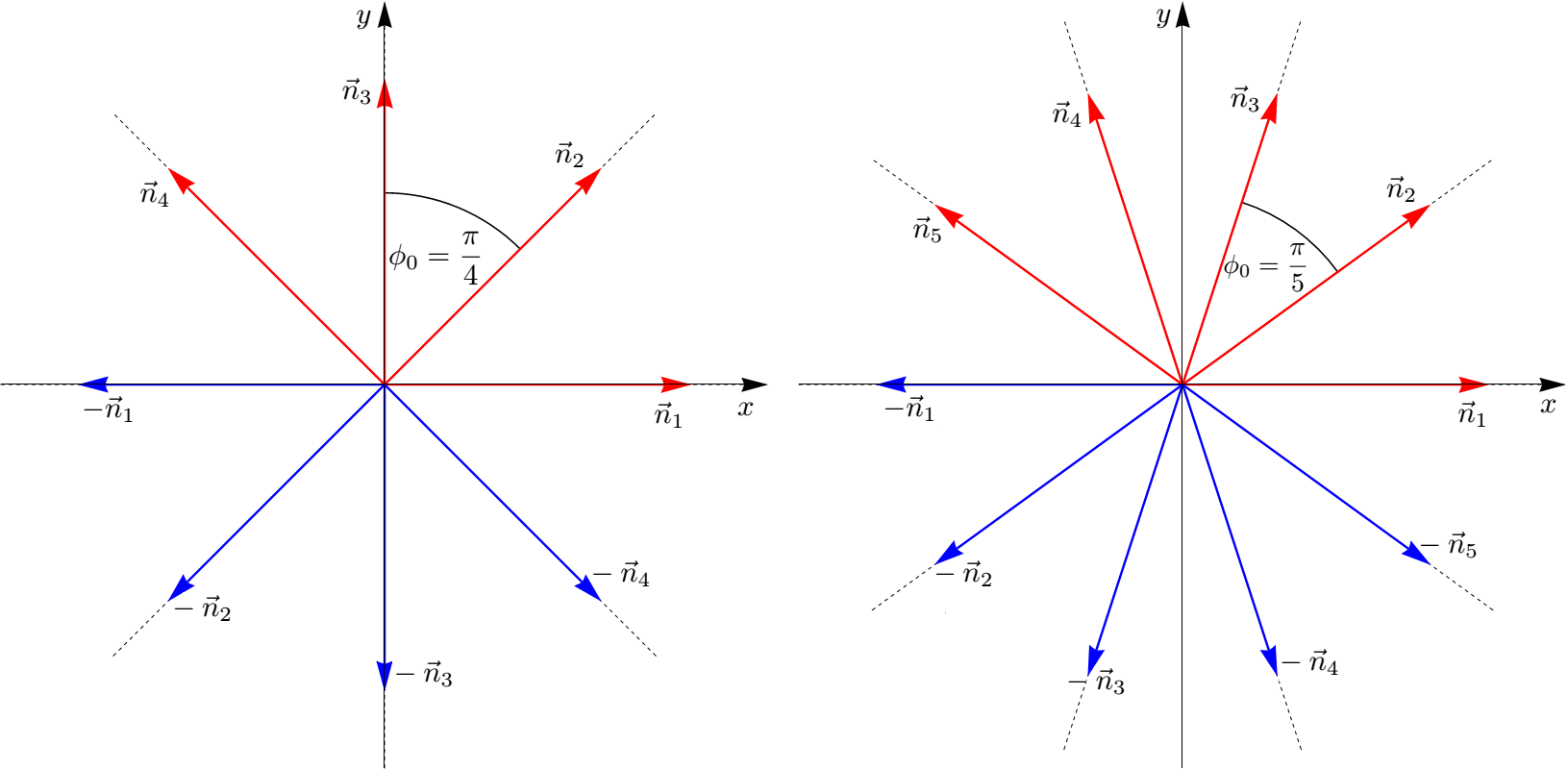}
	\caption{Planar symmetric POVMs for $N=4$ (left) and $N=5$ (right).}
	\label{plansymm}
\end{figure*}
\begin{remark}\label{rem1}\upshape Note that the set $\{\vec{e}(\vec{x})\}_{\vec{x}}$:
\begin{itemize}
\item coincides with the set $\{\pm\vec{n}_k\}$ for odd $N$;
\item contains vectors that bisect the angles between the successive $\{\pm\vec{n}_k\}$ for even $N$ (see Fig.~\ref{zeleni})
\end{itemize}
\begin{figure}[htb!]
\includegraphics[scale=0.2375]{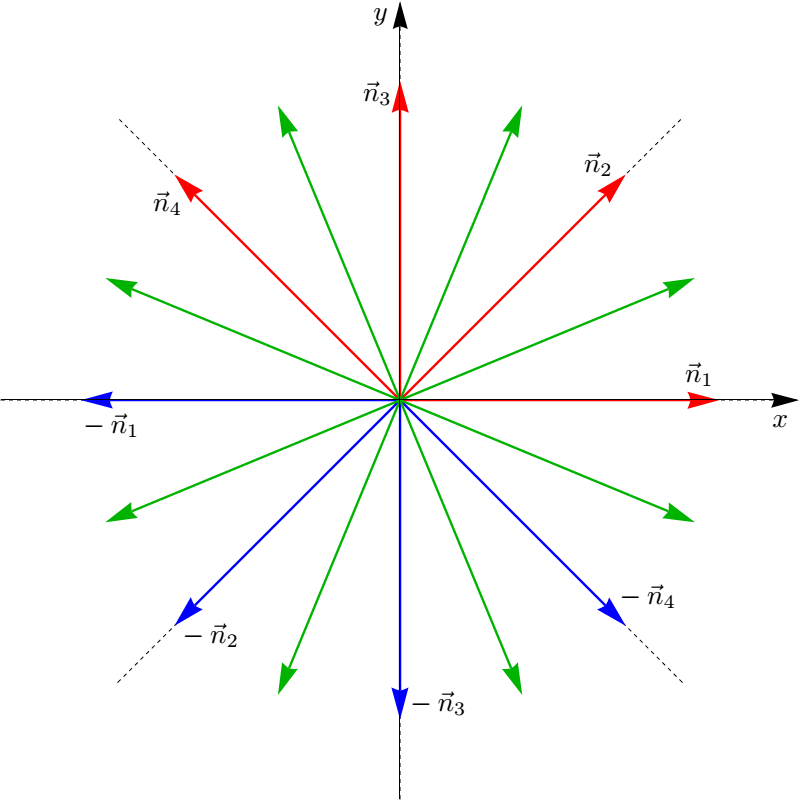}
\caption{Green arrows represent the set $\{\vec{e}(\vec{x})\}_{\vec{x}}$ for $N=4$, a representative example for even $N$. The first `green arrow' above the (positive) $x$ axis is assigned to the outcome $\vec{x}=(+1,+1,+1,-1)$. Going counter-clockwise, the next one is assigned to $\vec{x}=(+1,+1,+1,+1)$ and so on. For odd $N$, the  correspondence between $\vec{e}(\vec{x})$ and $\vec{x}$ works in a similar way.}
\label{zeleni}
\end{figure}
\label{eNset}
\end{remark}
\begin{remark}\label{remcyc}\upshape Note that all of the outcomes corresponding to non-zero effects of the joint POVM correspond to strings of length $N$ arranged in a cycle of consecutive `$+1$'s concatenated with consecutive `$-1$'s (Fig.~\ref{ciklus}). In other words, the outcome strings associated with non-zero effects are of the form \begin{align}
&\vec{x}=(\underbrace{+1,+1,\ldots,+1}_{p\text{ `$+1$'s }},\underbrace{-1,-1,\ldots,-1}_{q\text{ `$-1$'s }}) \text{ or }\nonumber\\
&\vec{x}=(\underbrace{-1,-1,\ldots,-1}_{p\text{ `$-1$'s }},\underbrace{+1,+1,\ldots,+1}_{q\text{ `$+1$'s }}),\nonumber\\
&\textrm{where } p,q=\overline{0,N},\text{ }p+q=N.
\end{align}
\end{remark}
\begin{figure}[htb!]
\centering
\includegraphics[scale=0.2]{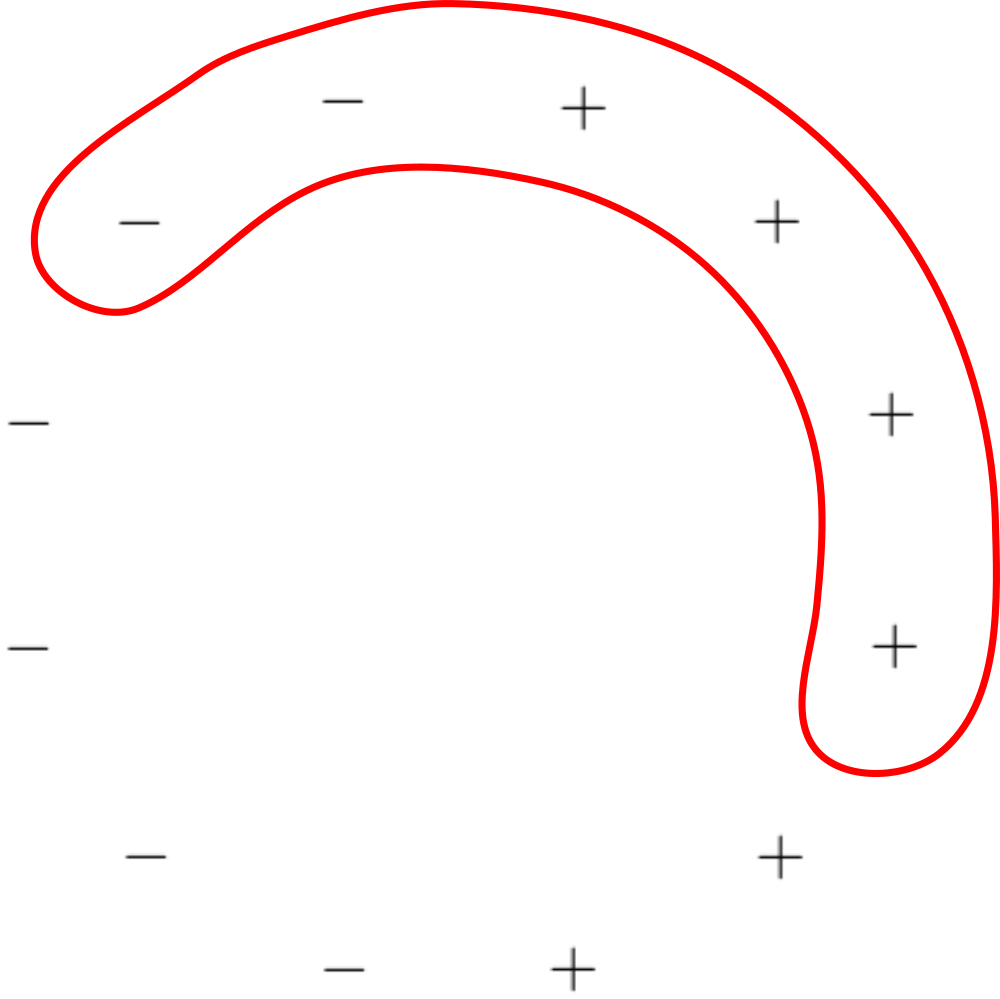}
\caption{For example, for $N=6$ we have a cycle of $6$ pluses concatenated to $6$ minuses. In this figure, the red loop identifies an outcome that has a non-zero effect associated with it. Reading counter-clockwise, this outcome is $\vec{x}=(+1,+1,+1,+1,-1,-1)$. All the other outcomes associated to non-zero effects are obtained in a similar way, i.e., by identifying, counter-clockwise, a string of six $\pm1$ in the figure.}
\label{ciklus}
\end{figure}

\subsubsection{Geometric equivalence of planar symmetric POVMs}\label{subsec2_3}
We discussed the geometric equivalence of general sets of qubit POVMs in Section \ref{subsec2_1}. One of the results presented there is Corollary \ref{GEbinary} which states that two sets of binary qubit POVMs are geometrically equivalent (and therefore exhibit the same joint measurability structure) if and only if the lines determined by their Bloch vectors are related by some orthogonal transformation $O\in \mathrm{O}(3)$. In this section we explore the consequences of this result for the joint measurability properties of planar symmetric POVMs. Considering their Bloch lines, we see that certain subsets are certainly geometrically equivalent. For example, if we act with an anti-clockwise rotation of $\frac{\pi}{N}$ on the set $\{E_1,E_2\}$, we obtain the set $\{E_2,E_3\}$ (see Fig. \ref{rot123}), and these two subsets are thus geometrically equivalent.
\begin{figure}[htb!]
\centering
\includegraphics[scale=0.41]{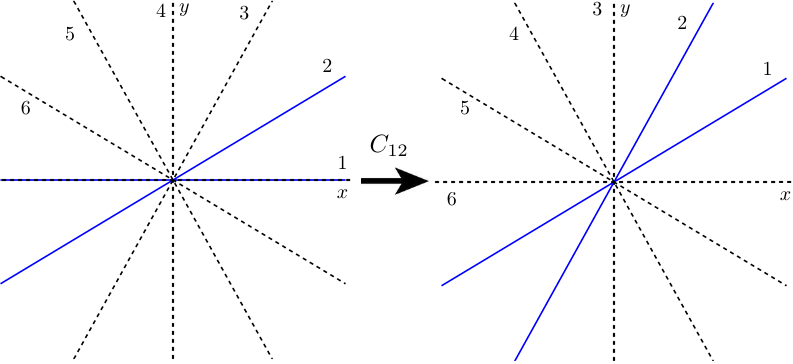}
\caption{If we act with $C_{2N}$, the lines that correspond to $E_1$ and $E_2$ take the place previously held by the lines of $E_2$ and $E_3$. (We have used $N=6$ in this figure.)}
\label{rot123}
\end{figure} 
A general question then arises: what is the symmetry group of the Bloch lines of planar symmetric POVMs? Since they are coplanar, we expect this to be a discrete group that is a subgroup of linear isometries of the plane, namely, $\mathrm{O}(2)$.\footnote{$\mathrm{O}(2)$ is itself a subgroup of $\mathrm{O}(3)$.} To see what the symmetry group is, consider a regular $2N$-gon overlaid on the Bloch lines as in Fig. \ref{sl56}. This makes it clear that the group we are searching for is the group of symmetries of the diagonals of this $2N$-gon. If the opposite ends of the $2N$-gon are labelled differently, this group is just the dihedral group $D_{2N}$\footnote{We adopt so called geometrical notation where the index denotes the number of vertices of the regular polygon and not the order of the group.} of order $4N$ (see Refs. \cite{Baumslag68} and \cite{Rotman12}). However, our Bloch lines do not have a sense of orientation since the orientation of the particular Bloch vector along them does not matter. Therefore the actual symmetry group is the symmetry group of a regular $2N$-gon with opposite vertices identified which we will denote by $S_{N}$. This group is a quotient group of $D_{2N}$ and the group $H_{2N}$ of symmetries of the regular $2N$-gon that sends every vertex either to itself or to its opposite vertex i.e.,
\begin{equation}
S_N=\frac{D_{2N}}{H_{2N}}.
\end{equation}
\begin{figure}[htb!]
\centering
\includegraphics[scale=0.431]{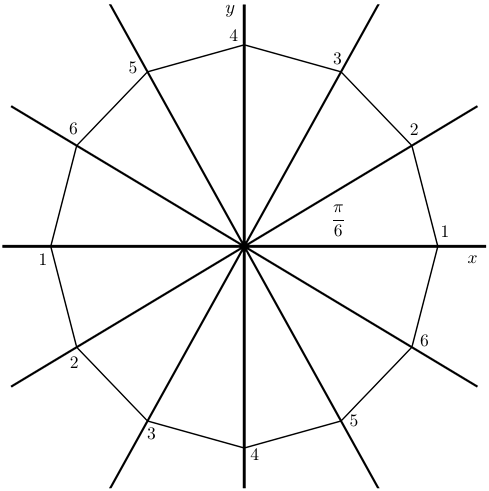}
\caption{Bloch lines of planar symmetric POVMs for $N=6$. A regular 12-gon is placed over them and its opposite vertices are identified.}
\label{sl56}
\end{figure}

We now recall of some basic definitions from group theory, namely those of a normal subgroup and semidirect product.
\begin{definition}[Normal Subgroup] Let $G$ be a group with the identity element {\rm id} and $N$ be its subgroup. We say that $N$ is a normal or invariant subgroup of $G$, which is denoted by $N\triangleleft G$, if for each $g\in G$ and for each $n\in N$ the product $gng^{-1}$ (which is called the conjugation of $n$ by $g$) belongs to $N$.
\end{definition}
\begin{theorem}[Equivalent definitions of Semidirect product] Let $G$ be a group with the identity element {\em id}, $H$ its subgroup and $N$ its normal subgroup. The following statements are equivalent:
	\begin{enumerate}
		\item for each $g\in G$, there exist $n\in N$ and $h\in H$ such that $g=nh$, where $N\cap H=\{\rm id\}$;
		\item for each $g\in G$ there exist unique $n\in N$ and $h\in H$ such that $g=nh$;
		\item for each $g\in G$ there exist unique $n\in N$ and $h\in H$ such that $g=hn$;
		\item $G$ is a semidirect product of $N$ and $H$ denoted by $G=N\ltimes H=H\rtimes N$.
	\end{enumerate}
\end{theorem}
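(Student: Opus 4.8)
The plan is to establish the cyclic chain of implications $(1)\Rightarrow(2)\Rightarrow(3)\Rightarrow(1)$, at which point $(4)$ is merely the standard name and notation for a group satisfying these (equivalent) conditions — the one in which $H$ acts on $N$ by conjugation and $G$ is reconstructed as the corresponding semidirect product. The only structural input beyond elementary manipulations in $G$ is the standing hypothesis $N\triangleleft G$, and it is needed at exactly one place: to trade a left factor for a right factor.

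For $(1)\Rightarrow(2)$, existence of a decomposition $g=nh$ is part of $(1)$, so only uniqueness is at issue; if $g=n_1h_1=n_2h_2$ with $n_i\in N$ and $h_i\in H$, then $n_2^{-1}n_1=h_2h_1^{-1}$ lies in $N\cap H=\{\mathrm{id}\}$, whence $n_1=n_2$ and $h_1=h_2$. For the reverse directions I would first record the key auxiliary fact that the \emph{uniqueness} clause in $(2)$ (and likewise in $(3)$) already forces $N\cap H=\{\mathrm{id}\}$: any $x\in N\cap H$ has the two factorizations $x=x\cdot\mathrm{id}=\mathrm{id}\cdot x$ of the required type, so uniqueness gives $x=\mathrm{id}$.

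Then $(2)\Rightarrow(3)$: given $g\in G$, write $g=nh$ by $(2)$; since $N$ is normal, $n':=h^{-1}nh\in N$, so $g=h\,n'$ exhibits the form required by $(3)$, and uniqueness of this form follows from $N\cap H=\{\mathrm{id}\}$ exactly as above (if $g=h_1n_1=h_2n_2$ then $h_2^{-1}h_1=n_2n_1^{-1}\in H\cap N=\{\mathrm{id}\}$). Symmetrically, $(3)\Rightarrow(1)$: $N\cap H=\{\mathrm{id}\}$ again comes from the uniqueness in $(3)$, and for existence of $g=nh$ one writes $g=hn$ by $(3)$ and uses normality to put $g=(hnh^{-1})h$ with $hnh^{-1}\in N$. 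This closes the cycle.

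The argument is routine; the only genuine subtlety is conceptual rather than computational — one must notice that the explicit condition $N\cap H=\{\mathrm{id}\}$ appearing in $(1)$ is a consequence of the uniqueness assertions in $(2)$ and $(3)$, and that normality of $N$ is precisely what makes the left factorization $NH$ coincide with the right factorization $HN$ (so that the same group is obtained whether one writes $G=N\ltimes H$ or $G=H\rtimes N$); dropping normality would leave only a general Zappa--Sz\'ep product, not a semidirect one.
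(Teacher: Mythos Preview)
Your proof is correct and follows the standard textbook argument. Note, however, that the paper does not actually supply a proof of this theorem: it is stated as a background fact and the reader is referred to Rose's group-theory text (Definition~7.14 and Lemma~7.15) for details. Your cyclic chain $(1)\Rightarrow(2)\Rightarrow(3)\Rightarrow(1)$, together with the observation that uniqueness in $(2)$ or $(3)$ already forces $N\cap H=\{\mathrm{id}\}$ and that normality of $N$ is exactly what lets one swap the order of factors, is precisely the argument one finds in such references, so there is nothing to compare beyond noting that you have filled in what the paper left to a citation.
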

For more properties of the semidirect product see Ref.~\cite{Rose09}, specifically Definition~7.14 and Lemma~7.15.
\begin{lemma} \label{H2N} Group $H_{2N}$ has the following properties
	\begin{enumerate}
		\item for $N=2$ it holds 
		$$H_{4}=\left<C_{2}\right>\ltimes \left<\sigma_x\right>$$ 
		where $\left<C_2\right>$ is the second order cyclic group generated by the positive rotation by $\pi$, denoted $C_2$\footnote{In general, transformation $C_n$ is the positive rotation by $2\pi/n$.}, and $\left<\sigma_x\right>$ is the second order cyclic group generated by the reflection on the $x$ axis, defined by two vertices of the polygon, here denoted by $\sigma_x$;
		\item for $N>2$ it holds
		$$H_{2N}=\left<C_2\right>.$$
	\end{enumerate}
\begin{proof}
	The symmetry transformation that preserves three vertices or two non-opposite vertices i.e. that sends each of them to itself, must preserve all vertices so it is just the identity transformation, id (cf.~Theorem 2.2 in Ref.~\cite{Johnson12})\footnote{Theorem 2.2 in Ref.~\cite{Johnson12} reads: ``Any isometry of $\mathbb{R}^2$ is determined by its effect on any three non-collinear points."}. 
	The symmetry transformation that preserves two opposite vertices (i.e. sends each of them to itself) is  
	a reflection about the diagonal connecting them.
	It is easy to see that this transformation satisfies the properties required by $H_{2N}$ (namely, that it sends every vertex either to itself or to its opposite vertex) only for $N=2$ (see Fig. \ref{sigmax}). The symmetry transformation that preserves no vertices and satisfies the requirements of $H_{2N}$ is the one that sends each vertex to its opposite one i.e. the central symmetry on the origin, and in the planar case this is the same as the positive (or counter-clockewise) rotation by $\pi$, namely the operation $C_2$. 
\begin{figure}[H]
		\centering
		\includegraphics[scale=0.41]{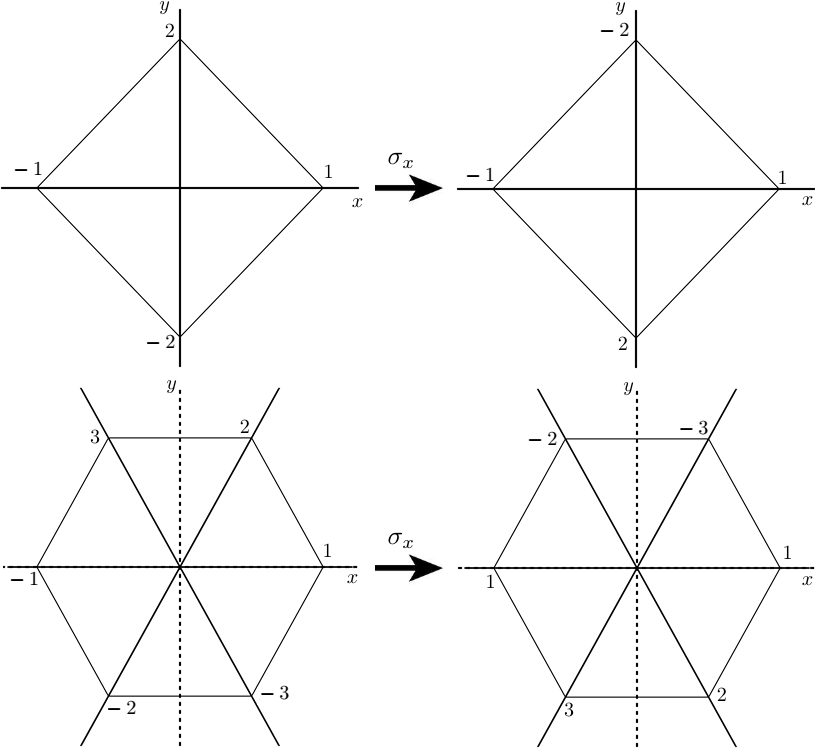}
		\caption{Reflection about the $x$-axis preserves the vertices $1$ and $-1$ and swaps the opposite vertices $2$ and $-2$ when $N=2$. When $N=3$ the same transformation sends $2$ to $-3$ etc. which is forbidden by the requirements of $H_{2N}$.}
		\label{sigmax}
	\end{figure}
	So, only in the case $N=2$ we have that 
	\begin{align}
	H_4&=\left<C_2\right>\ltimes\left<\sigma_x\right>\nonumber\\
	&=\{C_2,\sigma_x,\sigma_y=C_2\sigma_x=\sigma_x C_2,\sigma_x^2=C_2^2=\text{id}\},
	\end{align}
	and in the case $N>2$ we have 
	\begin{equation}
	H_{2N}=\left<C_2\right>=\{C_2,C_2^2=\mathrm{id}\}.
	\end{equation}
\end{proof}
\end{lemma}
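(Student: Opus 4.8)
The plan is to realise $H_{2N}$ as a subgroup of the full symmetry group of the regular $2N$-gon, namely the dihedral group $D_{2N}$ of order $4N$ (its $2N$ rotations together with its $2N$ reflections), and to single out exactly those elements that satisfy the defining property of $H_{2N}$: that every vertex is mapped to itself or to its opposite vertex. I would label the vertices $0,1,\dots,2N-1$ cyclically, so that vertex $j$ is opposite to vertex $j+N\pmod{2N}$, and then run a short case analysis, first over rotations and then over reflections, organised by how many vertices an element fixes.

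For the rotations this is immediate: the rotation by $\pi k/N$ acts as $j\mapsto j+k$, and the requirement that ``$j+k\equiv j$ or $j+k\equiv j+N\pmod{2N}$'' hold simultaneously for all $j$ forces $k\equiv 0$ or $k\equiv N$, so the only rotations in $H_{2N}$ are the identity and $C_2$ (the rotation by $\pi$, which indeed sends every vertex to its opposite), and this holds for every $N\geq 1$. For the reflections I would use a fixed-point count, relying on the standard fact quoted from Ref.~\cite{Johnson12} that a plane isometry is determined by its action on three non-collinear points, together with the observation that no three vertices of a convex $2N$-gon are collinear: a non-identity element of $H_{2N}$ therefore fixes either no vertex or exactly one antipodal pair of vertices. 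Indeed, an element fixing three or more vertices is the identity; and since the stabiliser of a vertex $v$ in $D_{2N}$ is $\{\mathrm{id},s_v\}$ with $s_v$ the reflection about the diagonal through $v$, which also fixes the opposite vertex $v+N$, an element fixing exactly one vertex cannot exist.

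The two surviving cases then finish the proof. If $g\in H_{2N}$ fixes no vertex, it sends every vertex to its opposite, hence (as the vertices span $\mathbb{R}^2$) it is the central inversion $-I$; since $\det(-I)=+1$ this is orientation-preserving and cannot be a reflection, so it is a rotation, and the only rotation doing this is $C_2$. If $g$ fixes exactly two (necessarily antipodal) vertices, it is the reflection $\sigma$ about the diagonal joining them; writing vertices as $v+j$ measured from a fixed vertex $v$, $\sigma$ acts by $v+j\mapsto v-j$, and the $H_{2N}$ condition becomes ``$2j\equiv 0$ or $2j\equiv N\pmod{2N}$ for all $j$'', which already fails at $j=1$ when $N>2$ but holds when $N=2$ (where $\sigma$ merely swaps the one non-fixed antipodal pair of the square). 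This yields part (2) at once: $H_{2N}=\{\mathrm{id},C_2\}=\langle C_2\rangle$ for $N>2$. For part (1), $N=2$, the admissible elements are $\mathrm{id}$, $C_2$, and the two diagonal reflections $\sigma_x,\sigma_y$ of the square; as the two diagonals are perpendicular, $\sigma_x\sigma_y=C_2$, and since $C_2$ is central in $D_4$ we get $\langle C_2\rangle\triangleleft H_4$, $\langle C_2\rangle\cap\langle\sigma_x\rangle=\{\mathrm{id}\}$ and $H_4=\langle C_2\rangle\langle\sigma_x\rangle$, i.e.\ $H_4=\langle C_2\rangle\ltimes\langle\sigma_x\rangle$ (the semidirect product being a direct one here, as $H_4$ is abelian).

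I expect the one point needing care to be the ``fixes no vertex'' case: one must exclude an orientation-reversing symmetry that nonetheless swaps every antipodal pair, and the cleanest way is the determinant/orientation remark that a planar reflection can never equal the central inversion. A minor bookkeeping subtlety is the difference between even and odd $N$ — for even $N$ some reflection axes pass through vertices while for odd $N$ they pass through edge-midpoints — but this never actually enters, because we have already shown that any non-identity, non-$C_2$ element of $H_{2N}$ must fix an antipodal pair of vertices, so only vertex-diagonal reflections can occur and the edge-midpoint reflections are ruled out automatically.
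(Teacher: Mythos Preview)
Your proposal is correct and follows essentially the same line as the paper: both arguments classify elements of $H_{2N}$ by the number of fixed vertices, invoke the ``three non-collinear points determine an isometry'' fact, identify $C_2$ as the unique fixed-point-free element, and check that the diagonal reflection satisfies the $H_{2N}$ condition only when $N=2$. Your version is simply more explicit---you separate rotations from reflections upfront via the modular-arithmetic check on $k$, and you spell out the semidirect (in fact direct) product verification for $H_4$---but the structure and key ideas coincide with the paper's proof.
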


\begin{proposition}
Group $S_{N}$ has the following properties
\begin{enumerate}
	\item For $N=2$ it is isomorphic to $\left<C_2\right>$ i.e. 
	$$S_{2}\cong\left<C_2\right>.$$
	\item For $N>2$ it is isomorphic to the dihedral group $D_N$ i.e.
	\begin{equation}
	S_N\cong D_N.
	\end{equation}
\end{enumerate}
\begin{proof}
	Remember that the dihedral group $D_{N}$ is the semidirect product of $\left<C_{N}\right>$ and $\left<\sigma_x\right>$ (we always take $x$ axis to be defined by two vertices of the polygon). Therefore, for $N=2$, 
	\begin{equation}
	S_2=\frac{D_4}{\left<C_2\right>\ltimes\left<\sigma_x\right>}=\frac{\left<C_4\right>\ltimes\left<\sigma_x\right>}{\left<C_2\right>\ltimes\left<\sigma_x\right>}\cong \left<C_2\right>.
	\end{equation}
	For $N>2$ we have 
	\begin{equation}
	S_N=\frac{D_{2N}}{\left<C_2\right>}=\frac{\left<C_{2N}\right>\ltimes\left<\sigma_x\right>}{\left<C_2\right>}\cong\left<C_N\right>\ltimes\left<\sigma_x\right>\cong D_N.
	\end{equation}
	Here we have used the following facts: that each cyclic subgroup is always a normal subgroup; that two cyclic groups of the same order are isomorphic to each other; that a cyclic group of the order $n$ generated by $r$ has cyclic subgroups of the order $p$ generated by $r^{n/p}$ if $p$ divides $n$ and we applied the second and the third isomorphism Theorem (cf. Theorems~4.22 and 4.23 in Ref.~\cite{Rotman12}).
\end{proof}
\end{proposition}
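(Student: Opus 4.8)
The strategy is to compute the quotient $S_N=D_{2N}/H_{2N}$ directly, splitting into the cases $N=2$ and $N>2$ because Lemma~\ref{H2N} identifies $H_{2N}$ by structurally different formulas in the two regimes. Throughout I would work with the semidirect-product description $D_{2N}=\langle C_{2N}\rangle\rtimes\langle\sigma_x\rangle$ recalled in the discussion above (here $\langle C_{2N}\rangle$ is the rotation subgroup of order $2N$ and $\sigma_x$ is the reflection in the $x$-axis), together with the elementary facts about cyclic groups quoted in the paper (every subgroup of a cyclic group is normal and itself cyclic, the unique subgroup of order $p\mid n$ in $\langle r\rangle$ being $\langle r^{n/p}\rangle$; two cyclic groups of equal order are isomorphic) and the second and third isomorphism theorems.

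\emph{Case $N>2$.} Here $H_{2N}=\langle C_2\rangle$. First I would observe that $C_2$, the rotation by $\pi$, equals $C_{2N}^{\,N}$ inside $\langle C_{2N}\rangle$, so $\langle C_2\rangle$ is the unique order-$2$ subgroup of the rotation subgroup; since the polygon has an even number of vertices this subgroup is in fact the centre of $D_{2N}$, hence normal, so the quotient is well defined. Next I would push the quotient through the semidirect product: since $\langle C_2\rangle\le\langle C_{2N}\rangle\trianglelefteq D_{2N}$ and $\langle\sigma_x\rangle\cap\langle C_{2N}\rangle=\{\mathrm{id}\}$, a short check (trivial intersection of the images, normality of the image of $\langle C_{2N}\rangle$, product equal to the whole quotient) gives
\[
S_N=D_{2N}/\langle C_2\rangle\;\cong\;\bigl(\langle C_{2N}\rangle/\langle C_2\rangle\bigr)\rtimes\langle\sigma_x\rangle .
\]
The factor $\langle C_{2N}\rangle/\langle C_2\rangle$ is cyclic of order $N$, i.e.\ $\cong\langle C_N\rangle$, and the conjugation action of $\sigma_x$ descends to the usual inversion $C_N\mapsto C_N^{-1}$ on it; for $N>2$ this inversion is a nontrivial automorphism, so the semidirect product is genuinely the dihedral group $D_N$. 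A count of orders, $|S_N|=4N/2=2N=|D_N|$, confirms the identification.

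\emph{Case $N=2$.} By Lemma~\ref{H2N}, $H_4=\langle C_2\rangle\ltimes\langle\sigma_x\rangle=\{\mathrm{id},C_2,\sigma_x,\sigma_y\}$, an order-$4$ subgroup of $D_4$ (order $8$). I would first check $H_4\trianglelefteq D_4$ --- conjugation by $C_4$ fixes $C_2$ and swaps $\sigma_x\leftrightarrow\sigma_y$, while $H_4$ is abelian and contains $\sigma_x$, so conjugation by $\sigma_x$ acts trivially on it --- whence $|S_2|=|D_4|/|H_4|=2$, and the only group of order $2$ is cyclic, so $S_2\cong\langle C_2\rangle$. Equivalently, matching the paper's phrasing, write $D_4/H_4=\bigl(\langle C_4\rangle\ltimes\langle\sigma_x\rangle\bigr)/\bigl(\langle C_2\rangle\ltimes\langle\sigma_x\rangle\bigr)$ and cancel the common factor $\langle\sigma_x\rangle$ using the isomorphism theorems, leaving $\langle C_4\rangle/\langle C_2\rangle\cong\langle C_2\rangle$.

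\emph{Main obstacle.} The only step requiring care is the passage ``$\bigl(\langle C_{2N}\rangle\rtimes\langle\sigma_x\rangle\bigr)/K\cong\bigl(\langle C_{2N}\rangle/K\bigr)\rtimes\langle\sigma_x\rangle$'' with $K=\langle C_2\rangle$: it needs $K$ to be normal in the \emph{whole} group $D_{2N}$ (not merely in the cyclic factor) and it needs the induced action of $\sigma_x$ on the quotient to remain faithful inversion, since otherwise one could land on $\mathbb{Z}_N\times\mathbb{Z}_2$ rather than on $D_N$. Normality is settled by the observation that $K$ is the centre of $D_{2N}$; faithfulness of the induced inversion is exactly what forces the hypothesis $N>2$, and its failure at $N=2$ (inversion on $\mathbb{Z}_2$ being trivial) is precisely why that case must be treated separately and yields the abelian answer $S_2\cong\mathbb{Z}_2$.
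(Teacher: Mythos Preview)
Your proposal is correct and follows essentially the same route as the paper: compute $S_N=D_{2N}/H_{2N}$ by writing $D_{2N}=\langle C_{2N}\rangle\rtimes\langle\sigma_x\rangle$, quotienting the rotation factor by $\langle C_2\rangle$, and identifying the result as $\langle C_N\rangle\rtimes\langle\sigma_x\rangle\cong D_N$ (respectively $\langle C_2\rangle$ for $N=2$). Your version is in fact more careful than the paper's on two points: you justify normality of $\langle C_2\rangle$ in $D_{2N}$ via centrality (the paper's blanket assertion that ``each cyclic subgroup is always a normal subgroup'' is false in general), and you explicitly verify that the induced $\sigma_x$-action on $\langle C_{2N}\rangle/\langle C_2\rangle$ is still nontrivial inversion for $N>2$, which is what distinguishes $D_N$ from $\mathbb{Z}_N\times\mathbb{Z}_2$.
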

Let us explain this qualitatively for the case of $N>2$. Group $D_{2N}$ consists of the identity transformation, positive (i.e., counter-clockwise) rotations through $p\pi/N$, where $p=\overline{1,N-1}$, reflections about the lines connecting opposite vertices and reflections about the lines connecting the mid-points of opposite sides of the polygon (all the symmetries of the regular $2N$-gon). However, if we identify each pair of opposite vertices, we have that the reflections about mutually orthogonal lines give the same result and therefore they are identified. Also, the rotation $C_{2N}^p$ (counter-clockwise, by an angle $p\pi/N$), and $C_{2N}^{N+p}$ (counter-clockwise, by an angle $\pi+p\pi/N$) produce the same result so they have to be identified as well. Hence, the group $S_N$ consists of the following rotations: $C_{2N},C_{2N}^2,\ldots, C_{2N}^N=\mathrm{id}$\footnote{In general, $C^N_{2N}=C_2$. However, when the opposite vertices are identified, $C_2=\mathrm{id}$.} and the reflections in $D_{2N}$, where reflections about any pair of orthogonal lines are identified. This group is of order $2N$ (since it is isomorphic to $D_{N}$).
\begin{corollary}
\label{geqplansym}
Let $s$ be a set of $N$ planar symmetric POVMs. For any two subsets of $s$, say $r_1$ and $r_2$, that are related by a transformation $O\in S_N$ (i.e., $r_2=O\mathbb{ Perm}r_1$), where
\begin{align}
S_N=\big\{&C_{2N},C_{2N}^2,\ldots,C_{2N}^N=C_2=\mathrm{id},\nonumber\\
&\sigma_{0}=\sigma_{x},\sigma_{\frac{\pi}{2N}},\sigma_{\frac{2\pi}{2N}},\sigma_{\frac{3\pi}{2N}},\ldots\sigma_{\frac{(N-1)\pi}{2N}}\big\}
\end{align}
and $\sigma_\phi$ is a reflection about the line making the angle of $\phi$ with the positive $x$-axis (cf. Fig \ref{Ctablesl}), we have that $r_1$ and $r_2$ are geometrically equivalent and hence exhibit the same joint measurability structure. If a joint POVM for $r_1$ is $G^{r_1}$ then the corresponding joint POVM for $r_2$ is given by $G^{r_2}=O\mathbb{Perm}G^{r_1}$, where $\mathbb{Perm}$ denotes the relabelling of outcomes on $r_2$ that may be necessary for the orientations of the Bloch vectors of $r_1$, after being acted upon by $O$, to match those of $r_2$.
\begin{proof}
This follows from the fact that the discrete group $S_N$ is the symmetry group of planar symmetric POVMs and Corollary~\ref{GEbinary}.
\end{proof} 
\end{corollary}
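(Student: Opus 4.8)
The plan is to reduce the claim to Corollary~\ref{GEbinary} by checking that every transformation in the stated list $S_N$ carries the arrangement of $N$ Bloch lines of the planar symmetric set $s$ onto itself. First I would fix the bookkeeping: with the convention of Eq.~\eqref{plansymconv}, the Bloch line of $E_k$ is the line through the origin at angle $(k-1)\pi/N$, where angles are read modulo $\pi$ since a line has no preferred orientation (Remark~\ref{unbiasedremark}). I would then verify, by elementary angle arithmetic, that the rotation $C_{2N}^{j}$ sends the line at angle $\theta$ to the line at angle $\theta+j\pi/N \pmod{\pi}$, and that the reflection $\sigma_{j\pi/(2N)}$ sends it to the line at angle $j\pi/N-\theta \pmod{\pi}$; in either case the set $\{(k-1)\pi/N\}_{k=1}^{N}$ of line-angles is merely permuted among itself, so each listed $O$ stabilizes the arrangement. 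That this list is in fact \emph{the} symmetry group $S_N$ of the arrangement --- with $S_N\cong D_N$ for $N>2$ and $S_N\cong\langle C_2\rangle$ for $N=2$ --- has already been established in the preceding proposition, so I would simply cite that.

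Next I would spell out the reduction. Fix a subset $r_1\subseteq s$ and some $O\in S_N$. Since $O$ stabilizes the Bloch-line arrangement of $s$, it carries the Bloch lines of $r_1$ to lines that again belong to $s$, hence to the Bloch lines of a unique subset $r_2\subseteq s$ with $|r_2|=|r_1|$. The transformation $O$ may, however, flip an individual Bloch vector from the upper to the lower half-plane (or vice versa); to bring the image vectors into the sign convention of $s$ one composes with an outcome relabelling $\mathbb{Perm}$, which by Proposition~\ref{mpovms} and Remark~\ref{unbiasedremark} does not affect joint measurability. Thus $r_2=O\mathbb{Perm}r_1$, and $r_1$, $r_2$ are two sets of unbiased binary qubit POVMs whose Bloch lines are related by the orthogonal transformation $O\in S_N\subseteq\mathrm{O}(2)\subseteq\mathrm{O}(3)$.

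The conclusion is then immediate from Corollary~\ref{GEbinary}: such $r_1$ and $r_2$ are geometrically equivalent, hence exhibit the same joint measurability structure, and if $G^{r_1}$ is a joint POVM for $r_1$ then $G^{r_2}=O\mathbb{Perm}G^{r_1}$ is a joint POVM for $r_2$ --- valid because $O$ preserves the norms of Bloch vectors (so the operators stay positive semidefinite) and correct on the marginals by linearity of $O$, exactly as packaged in Propositions~\ref{mpovms} and \ref{geeq1} behind Corollary~\ref{GEbinary}.

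I do not anticipate a genuine obstacle here. The only step requiring care is the first one: confirming that each rotation and reflection in the explicit list for $S_N$ really does permute the $N$ Bloch-line directions modulo $\pi$, and noting that in the degenerate case $N=2$ the listed transformations, though no longer all distinct as symmetries of the arrangement, still each stabilize it. This is routine modular arithmetic on angles, and the identification $S_N\cong D_N$ (resp.\ $\langle C_2\rangle$) needed to justify calling this list the symmetry group has already been carried out above.
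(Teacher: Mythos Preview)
Your proposal is correct and takes essentially the same approach as the paper: the paper's proof is the one-liner ``This follows from the fact that the discrete group $S_N$ is the symmetry group of planar symmetric POVMs and Corollary~\ref{GEbinary},'' and you have simply unpacked both ingredients --- the explicit verification that each listed rotation and reflection permutes the Bloch lines (i.e., that $S_N$ stabilizes the arrangement, already established in the preceding proposition and Lemma~\ref{H2N}) and the invocation of Corollary~\ref{GEbinary} to transport joint measurability and the joint POVM. Nothing is missing; your version is just a more detailed spelling-out of the same argument.
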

\begin{definition}\label{classdef}
Geometrically equivalent sets of planar symmetric POVMs comprise an equivalence class. If the set $\{E_1,\ldots\}$ is in one of the elements of such an equivalence class, the whole equivalence class is denoted by $\left[E_1,\ldots\right]$.\footnote{Notice that there is still some ambiguity in this notation. For example, take $N=6$. The sets $\{E_1,E_2,E_5\}$ and $\{E_1,E_3,E_4\}$ are geometrically equivalent, so we may denote their equivalence class by either of them.}
\end{definition}
\begin{example}\upshape
We find the Cayley multiplication table of the group $S_4$ by applying transformations successively. This group has the following elements:
\begin{equation}
S_4=\{\mathrm{id},C_8,C_8^2,C_8^3,\sigma_x,\sigma_{\frac{\pi}{8}},\sigma_{\frac{\pi}{4}},\sigma_{\frac{3\pi}{8}}\},
\end{equation} 
where $\sigma$ denotes the reflections on the lines shown in Fig. \ref{Ctablesl}
\begin{figure}[htb!]
\centering
\includegraphics[scale=0.55]{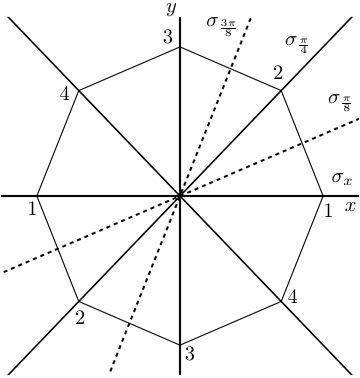}
\caption{Bloch lines of $N=4$ planar symmetric POVMs with the reflection symmetry lines denoted explicitly.}
\label{Ctablesl}
\end{figure}
\begin{table}[H]
\large
\centering
\begin{tabular}{|c|c|c|c|c|c|c|c|c|}\hline
 $S_4$& id & $C_8$ & $C_8^2$ & $C_8^3$ & $\sigma_x$ & $\sigma_{\frac{\pi}{8}}$ & $\sigma_{\frac{\pi}{4}}$ & $\sigma_{\frac{3\pi}{8}}$ \\ \hline
id & id & $C_8$ & $C_8^2$ & $C_8^3$ & $\sigma_x$ & $\sigma_{\frac{\pi}{8}}$ & $\sigma_{\frac{\pi}{4}}$ & $\sigma_{\frac{3\pi}{8}}$ \\ \hline
$C_8$ & $C_8$ & $C_8^2$ & $C_8^3$ & id & $\sigma_{\frac{\pi}{8}}$ & $\sigma_{\frac{\pi}{4}}$ & $\sigma_{\frac{3\pi}{8}}$ & $\sigma_x$ \\ \hline
$C_8^2$ & $C_8^2$ & $C_8^3$ & id & $C_8$ & $\sigma_{\frac{\pi}{4}}$ & $\sigma_{\frac{3\pi}{8}}$ & $\sigma_x$ & $\sigma_{\frac{\pi}{8}}$ \\ \hline
$C_8^3$ & $C_8^3$ & id & $C_8$ & $C_8^2$ & $\sigma_{\frac{3\pi}{8}}$ & $\sigma_x$ & $\sigma_{\frac{\pi}{8}}$ & $\sigma_{\frac{\pi}{4}}$ \\ \hline
$\sigma_x$ & $\sigma_x$ & $\sigma_{\frac{3\pi}{8}}$ & $\sigma_{\frac{\pi}{4}}$ & $\sigma_{\frac{\pi}{8}}$ & id & $C_8^3$ & $C_8^2$ & $C_8$ \\ \hline
$\sigma_{\frac{\pi}{8}}$ & $\sigma_{\frac{\pi}{8}}$ & $\sigma_x$ & $\sigma_{\frac{3\pi}{8}}$ & $\sigma_{\frac{\pi}{4}}$ & $C_8$ & id & $C_8^3$ & $C_8^2$ \\ \hline
$\sigma_{\frac{\pi}{4}}$ & $\sigma_{\frac{\pi}{4}}$ & $\sigma_{\frac{\pi}{8}}$ & $\sigma_x$ & $\sigma_{\frac{3\pi}{8}}$ & $C_8^2$ & $C_8$ & id & $C_8^3$ \\ \hline
$\sigma_{\frac{3\pi}{8}}$ & $\sigma_{\frac{3\pi}{8}}$ & $\sigma_{\frac{\pi}{4}}$ & $\sigma_{\frac{\pi}{8}}$ & $\sigma_x$ & $C_8^3$ & $C_8^2$ & $C_8$ & id \\ \hline
\end{tabular}
\label{Ctable}
\caption{Cayley's table for $S_4$. We first apply the column operation and then the row operation.}
\end{table}
\end{example}

\section{Marginal surgery: a method for tweaking joint POVM\lowercase{s} to realize new joint measurability structures}\label{sec3}
Given some joint POVM with the effects $E^{s_1}(\vec{x})$ for some set $s_1$ of $N$ POVMs, we can obtain a joint POVM for any subset $s'_1$ (with $M<N$ POVMs, say) of $s_1$ by marginalizing the effects $E^{s_1}(\vec{x})$ over the outcomes of POVMs from $s_1\backslash s'_1$.
In what follows, we will introduce a technique that tweaks the marginalized joint POVM on $s'_1$ and uses that to obtain an {\em incompatible} set $s_2$ (of same cardinality as $s_1$) and its compatible subset $s'_2$ (of same cardinality as $s'_1$). Thus, using this technique, we can go from a set of $N$ compatible POVMs to a set of $N$ POVMs that are incompatible but such that a subset of $M$ of them is still compatible.

As an example, consider a set $s_1$ of unbiased qubit POVMs with the same purity $\eta$ that are compatible if and only if $\eta\in(0,\eta_{\rm max}]$ and let  $\{E^{s_1}(\vec{x})\}_{\vec{x}\in\{\pm 1\}^N}$ be a  joint POVM for them, also valid for any $\eta\in(0,\eta_{\rm max}]$. Then we choose the set $s'_1\subset s_1$ of $M<N$ POVMs and marginalize to obtain $\{E^{s'_1}(\vec{x}')\}_{\vec{x}'\in\{\pm1\}^M}$, a joint POVM for $s'_1$. This joint POVM for $s'_1$ is thus valid for $\eta\in(0,\eta_{\rm max}]$. Now we tweak the effects $E^{s'_1}(\vec{x}')$ to get new effects $E^{s'_2}(\vec{x}')$ which still constitute a joint measurement on $s'_1$ but this new $M$-wise joint POVM allows for a broader range of purity $\eta\in(0,\eta_{\rm MAX}]$, where $\eta_{\rm MAX}>\eta_{\rm max}$, and consequently allows us to obtain a set $s'_2$ (similar to set $s'_1$ but with $\eta\notin(0,\eta_{\rm max}]$) of $M$ compatible POVMs (by marginalization) with $\eta\in (\eta_{\rm max}, \eta_{\rm MAX}]$.  It thus also allows us to obtain an {\em incompatible} set $s_2$ (of $N$ POVMs) similar to the compatible set $s_1$ but with $\eta\in (\eta_{\rm max}, \eta_{\rm MAX}]$ and containing the compatible subset $s'_2$. 

Thus, by choosing $\eta\in(\eta_{\rm max},\eta_{\rm MAX}]$ we can realize a joint measurability structure where the $N$ POVMs in $s_2$ are incompatible but the $M$ POVMs in its subset $s'_2$ are compatible. This technique can be applied to obtain many new joint measurability structures from a set of compatible POVMs, as we will show below. Since the technique proceeds by tweaking marginal joint POVMs obtained by marginalizing a bigger joint POVM, we term this technique {\em marginal surgery}. We now proceed to illustrate how this method works in practice by obtaining $N$-cycle and $N$-Specker compatibility scenarios using qubit POVMs. As a consequence of this, we will also be able to construct arbitrary joint measurability structures using these qubit POVMs, improving upon the (rather weak) dimension bounds for realizing arbitrary joint measurability structures obtained in Ref.~\cite{KHF14}.
\subsection{Marginal surgery on any pair of compatible POVMs:  constructing $N$-cycle scenarios on a qubit}\label{subsec3_1}
We will now apply marginal surgery on the joint POVM of any pair of POVMs out of $N$ planar symmetric compatible POVMs, $\{E_1,E_2,\dots,E_N\}$, and use this to construct $N$-cycle scenarios on a qubit.

Consider two of the POVMs, $E_1$ and $E_{k}$, where $1<k\leq N$. We will also denote $p\equiv k-1$, so that the angular separation between the lines defined by the considered POVMs is $\phi_{1,k}\equiv\frac{(k-1)\pi}{N}=\frac{p\pi}{N}$. Now consider the joint POVM effects $G(\vec{x}), \vec{x}\in\{\pm1\}^N,$ for all $N$ POVMs from Eq.~\eqref{optN}. To get the pairwise joint measurement $E^{(1,k)}$ we marginalize $G$ over all of the outcomes of POVMs $E_i$, $i\notin\{1,k\}$:
\begin{equation}
E^{(1,k)}(x_1,x_k)=\sum_{\vec{x}\in\{\pm1\}^N}^{x_1,x_k}G(\vec{x}),
\end{equation}
where we have chosen to indicate the outcomes that are held fixed (while all the others are summed over) on top of the summation sign.

We will now exploit the fact that a pairwise joint POVM here is completely determined by knowing just one effect, say $E^{(1,k)}(+1,+1)$, with the others given by
\begin{align}
&E^{(1,k)}(+1,-1)=E_1(+1)-E^{(1,k)}(+1,+1),\nonumber\\
&E^{(1,k)}(-1,+1)=E_k(+1)-E^{(1,k)}(+1,+1),\nonumber\\
&E^{(1,k)}(-1,-1)=I-E^{(1,k)}(+1,+1)-E^{(1,k)}(+1,-1)\nonumber\\
&\hphantom{E^{(1,k)}(-1,-1)=}-E^{(1,k)}(-1,+1),
\label{determin2}
\end{align}
so we only have to focus on finding $E^{(1,k)}(+1,+1)$:
\begin{align}
E^{(1,k)}(+1,+1)&=\sum_{\vec{x}\in\{\pm1\}^N}^{x_1=x_k=+1}G(\vec{x})\nonumber\\
&=\sum_{\vec{x}\in\{\pm1\}^N}^{x_1=x_k=+1}\frac{1}{2N}I+\sum_{\vec{x}\in\{\pm1\}^N}^{x_1=x_k=+1}\frac{\gamma}{2N}\vec{e}(\vec{x})\cdot\vec{\sigma}.
\end{align}
Paying attention to Remark \ref{remcyc}, consider the outcomes $\vec{x}=(x_1,\ldots,x_N)$ to which non-zero effects of the joint POVM $\{G(\vec{x})\}_{\vec{x}\in\{\pm1\}^N}$ are assigned. Of these, those which have $+1$ at the first index and the $k$-th index ($x_1=x_k=+1$) must be of the type 
\begin{align}
(x_1=+1,x_2=+1,&\ldots,x_k=+1,x_{k+1},\ldots,x_N)\hphantom{a}\text{ where}\nonumber\\
(x_{k+1},\ldots,x_N)\in\{&(-1,\ldots,-1),(+1,-1,\ldots,-1),\ldots,\nonumber\\
&(+1,\ldots,+1,-1),(+1,\ldots,+1)\}
\end{align}  and
there are exactly $N-k+1=N-p$ of them. This lets us write
\begin{equation}
E^{(1,k)}(+1,+1)=\frac{N-p}{2N}I+\frac{\gamma}{2N}\vec{\sigma}\cdot\sum_{\vec{x}\in\{\pm1\}^N}^{x_1,\ldots,x_k=+1}\vec{e}(\vec{x}).
\end{equation}
From the Theorem \ref{gluslov} we see that the set $\Big\{\vec{e}(\vec{x})|\vec{x}\in\{\pm1\}^N\Big\}$ is equal to 
$\Big\{\big(\cos\frac{(i-1)\pi}{N},\sin\frac{(i-1)\pi}{N},0\big)\big|i=\overline{-N+1,N}\Big\}$ in the case of odd $N$ and $\Big\{\big(\cos\frac{(2i-1)\pi}{2N},\sin\frac{(2i-1)\pi}{2N},0\big)\big|i=\overline{-N+1,N}\Big\}$ in the case of even $N$. We can convert the sum $\sum_{\vec{x}\in\{\pm1\}^N}^{x_1,\ldots,x_k=+1}\vec{e}(\vec{x})$ into the sum over some index $i$ from the conditions that
\begin{subequations}
\begin{align}
&\vec{n}_1\cdot\left(\cos\frac{(i-1)\pi}{N},\sin\frac{(i-1)\pi}{N},0\right)>0\textrm{ and}\\
&\vec{n}_k\cdot\left(\cos\frac{(i-1)\pi}{N},\sin\frac{(i-1)\pi}{N},0\right)>0
\end{align}
\end{subequations}
for odd $N$, while for even $N$
\begin{subequations}
\begin{align}
&\vec{n}_1\cdot\left(\cos\frac{(2i-1)\pi}{2N},\sin\frac{(2i-1)\pi}{2N},0\right)>0\textrm{ and}\\
&\vec{n}_k\cdot\left(\cos\frac{(2i-1)\pi}{2N},\sin\frac{(2i-1)\pi}{2N},0\right)>0,
\end{align}
\end{subequations}
which arise from the way of assigning effects to outcomes described in Theorem $\ref{gluslov}$.
In both cases, we get (see Appendix \ref{AppA}, Lemma \ref{id1})
\begin{equation}
\left\lceil p+1-\frac{N}{2}\right\rceil\leq i\leq \left\lfloor\frac{N+1}{2}\right\rfloor.
\end{equation}
This enables us to write
\begin{align}
\sum_{\vec{x}\in\{\pm1\}^N}^{x_1,\ldots,x_k=+1}\vec{e}(\vec{x})=\begin{cases}
\sum\limits_{i=p+\frac{3-N}{2}}^{\frac{N+1}{2}}\left(\cos\frac{(i-1)\pi}{N},\sin\frac{(i-1)\pi}{N},0\right)\\ \hspace{2cm
} \qquad\qquad\text{for odd } N,\\
\sum\limits_{i=p+1-\frac{N}{2}}^{\frac{N}{2}}\left(\cos\frac{(2i-1)\pi}{2N},\sin\frac{(2i-1)\pi}{2N},0\right)\\ \hspace{2cm}\qquad\qquad\text{for even } N.
\end{cases}
\end{align}
which in combination with results of Appendix \ref{AppA}, Corollary \ref{appAcorr} of Lemma \ref{id2}, Eq.~\eqref{corl2} gives
\begin{align}
\sum_{\vec{x}\in\{\pm1\}}^{x_1,\ldots,x_k=+1}\vec{e}(\vec{x})=\frac{\cos\frac{p\pi}{2N}}{\sin\frac{\pi}{2N}}\vec{s},\text{ }\vec{s}\equiv\left(\cos\frac{p\pi}{2N},\sin\frac{p\pi}{2N},0\right)
\label{jzid2}
\end{align}
where $\vec{s}$ is a unit vector. This completely specifies $E^{(1,k)}(+1,+1)$ which combined with Eq.~\eqref{determin2} yields (recalling that $p=k-1$)
\begin{align}
E^{(1,k)}(\pm1,\pm1)&=\frac{N-p}{2N}I\pm\gamma\frac{\cos\frac{p\pi}{2N}}{2N\sin\frac{\pi}{2N}}\vec{s}\cdot\vec{\sigma};\nonumber\\
E^{(1,k)}(\pm1,\mp1)&=\frac{p}{2N}I\pm\gamma\frac{\sin\frac{p\pi}{2N}}{2N\sin\frac{\pi}{2N}}\vec{t}\cdot\vec{\sigma};
\end{align}
where $\vec{t}$ is found to be 
\begin{equation}
\vec{t}=\left(\sin\frac{p\pi}{2N},-\cos\frac{p\pi}{2N},0\right).
\end{equation}
This motivates the following ansatz for the general form of the joint POVM of $E_1$ and $E_k$: 
\begin{align}
G^{(1,k)}(\pm1,\pm1)&=\beta I \pm \mu \frac{\cos\frac{p\pi}{2N}}{\sin\frac{\pi}{2N}}\vec{s}\cdot\vec{\sigma},\nonumber\\
G^{(1,k)}(\pm1,\mp1)&=\alpha I \pm \mu\frac{\sin\frac{p\pi}{2N}}{\sin\frac{\pi}{2N}}\vec{t}\cdot\vec{\sigma}.
\end{align}
Requiring correct marginalization for, say,
$E_1(+1)=G^{(1,k)}(+1,+1)+G^{(1,k)}(+1,-1)$,
gives
\begin{equation}
\alpha+\beta=\frac{1}{2};\quad\frac{\mu}{\sin\frac{\pi}{2N}}=\frac{1}{2}\eta.
\label{margarb2}
\end{equation}
Requiring positivity of $G^{(1,k)}$ we have that
\begin{equation}
\mu\frac{\sin\frac{p\pi}{2N}}{\sin\frac{\pi}{2N}}\leq\alpha,\quad \mu\frac{\cos\frac{p\pi}{2N}}{\sin\frac{\pi}{2N}}\leq\beta.
\label{posarb2}
\end{equation}
Eliminating everything but $\eta$ from the linear system consisting of Eqs.~\eqref{margarb2} and \eqref{posarb2} we get
\begin{equation}
\eta\leq\frac{1}{\sin\frac{p\pi}{2N}+\cos\frac{p\pi}{2N}}.
\label{boundarb2}
\end{equation}
This means that for every $\displaystyle\eta\in\left(0,\frac{1}{\sin\frac{p\pi}{2N}+\cos\frac{p\pi}{2N}}\right]$ there are suitable $\alpha$, $\beta$ and $\mu$ such that $G^{(1,k)}$ is a valid joint POVM for $E_1(x_1)=\frac{1}{2}(I\pm x_1\eta\vec{n}_1\cdot\vec{\sigma})$ and $E_k(x_k)=\frac{1}{2}(I\pm x_k\eta\vec{n}_k\cdot\vec{\sigma})$. We propose one particular choice that is consistent with Eqs.~\eqref{margarb2} and \eqref{posarb2}:
\begin{equation}
\alpha=\frac{1}{2}\eta\sin\frac{p\pi}{2N},\quad \beta=\frac{1}{2}\left(1-\eta\sin\frac{p\pi}{2N}\right),
\end{equation} 
which gives
\begin{subequations} 
\begin{align}
G^{(1,k)}(\pm1,\pm1)&=\frac{1}{2}\left(1-\eta\sin\frac{p\pi}{2N}\right)\pm \frac{1}{2}\eta\cos\frac{p\pi}{2N}\vec{s}\cdot\vec{\sigma},\\ G^{(1,k)}(\pm1,\mp1)&=\frac{1}{2}\eta\sin\frac{p\pi}{2N}\Big(I\pm\vec{t}\cdot\vec{\sigma}\Big),
\end{align}
\label{arb21got}
\end{subequations}
Note that this construction is such that when $\eta$ saturates the upper bound of Eq.~\eqref{boundarb2}, $G^{(1,k)}$ is a rank one POVM.
\begin{figure}[htb!]
\includegraphics[scale=0.29]{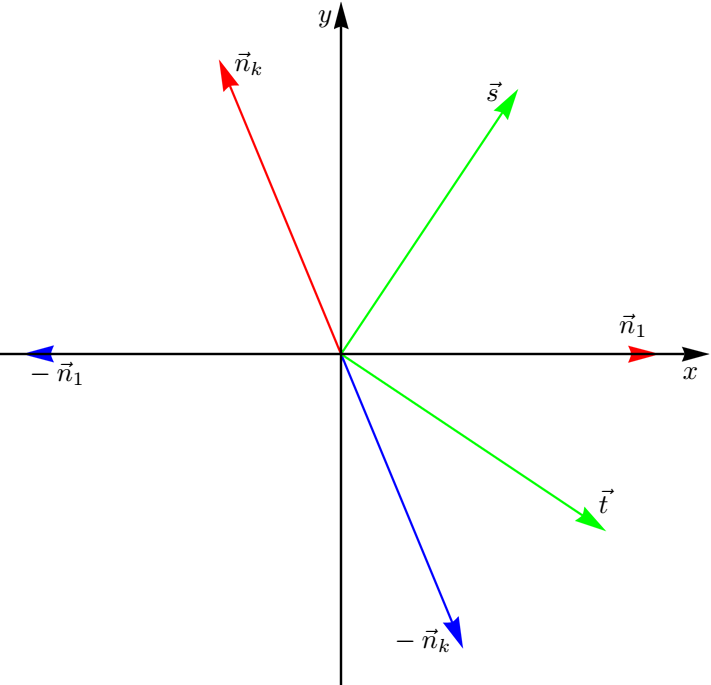}
\caption{Situation of $\vec{s}$ and $\vec{t}$ with respect to $\vec{n}_1$ and $\vec{n}_k$}
\label{figarb2}
\end{figure}
From Corollary \ref{Leta2}, we see that the sufficient condition proven by this construction for the joint measurability of $E_1$ and $E_k$ is also necessary, so the joint POVM $G^{(1,k)}$ that we have constructed is optimal. That is, it applies to the whole range of $\eta$ that is necessary and sufficient for joint measurability of $E_1$ and $E_k$.
Vectors $\vec{s}$ and $\vec{t}$ are situated so that $\vec{s}$ is the unit vector bisecting the angle between the lines determined by $\vec{n}_1$ and $\vec{n}_k$, while the vector $\vec{t}$ is the unit vector perpendicular to $\vec{s}$ that makes an acute angle with $\vec{n}_1$ (see Fig.~\ref{figarb2}).
For convenience we have chosen the set $\{E_1,E_k\}$. Acting on this set by the symmetry operation $C_{2N}^{k_1-1}$, i.e., by the positive rotation through $\frac{(k_1-1)\pi}{N}$, we can obtain every other two element set, $\{E_{k_1},E_{k_1+p\mod N}\}$, $k_1>1$, that has the angular separation of $\frac{p\pi}{N}$ between the Bloch lines of its POVMs. By Corollary \ref{geqplansym}, all of these sets are geometrically equivalent to $\{E_1,E_{k}\}$ and, thus, compatible if and only if the set $\{E_1,E_k\}$ is compatible. Therefore, the condition of Eq.~\eqref{boundarb2} is sufficient (and necessary by Corollary \ref{Leta2}) for joint measurability of every pair $E_{k_1}$ and $E_{k_2}$, where $k_2=k_1+p\mod N$.
According to Corollary~\ref{geqplansym}, the joint POVM for this pair is given by $G^{(k_1,k_2)}=C_{2N}^{k_1-1}G^{(1,k)}$. In the particular case analysed in this section this reduces to rotating the vectors $\vec{s}$ and $\vec{t}$ through $\frac{(k_1-1)\pi}{N}$ in the counter-clockwise direction to preserve the geometrical situation in Fig.~\ref{figarb2}. Using Eqs.~\eqref{arb21got} we immediately find: 
\begin{subequations}
\begin{align}
&G^{(k_1,k_2)}(\pm1,\pm1)\nonumber\\
&=\frac{1}{2}\left(1-\eta\sin\frac{p\pi}{2N}\right)\pm \frac{1}{2}\eta\cos\frac{p\pi}{2N}\vec{s}(k_1,k_2)\cdot\vec{\sigma},\\
&\vec{s}(k_1,k_2)=\left(\cos\frac{(k_2+k_1-2)\pi}{2N},\sin\frac{(k_2+k_1-2)\pi}{2N},0\right),\\
&G^{(k_1,k_2)}(\pm1,\mp1)\nonumber\\
&=\frac{1}{2}\eta\sin\frac{p\pi}{2N}\Big(I\pm\vec{t}(k_1,k_2)\cdot\vec{\sigma}\Big),\\
&\vec{t}(k_1,k_2)=\left(\sin\frac{(k_2+k_1-2)\pi}{2N},-\cos\frac{(k_2+k_1-2)\pi}{2N},0\right).
\end{align}
\end{subequations}

\begin{proposition}\label{ncycletheorem}
 Let $s=\{E_1,E_2,\ldots,E_N\}$ be planar symmetric (qubit) POVMs. Then the joint measurability structure of $s$ is an $N$-cycle scenario (introduced in Example~\ref{NCycdefn}) if and only if
\begin{equation}\label{ncyclebounds}
\eta\in\left(\frac{1}{\sin\frac{\pi}{N}+\cos\frac{\pi}{N}},\frac{1}{\sin\frac{\pi}{2N}+\cos\frac{\pi}{2N}}\right].
\end{equation}
\begin{proof}
For any set of three adjacent POVMs $\{E_j,E_{j+1},E_{j+2}\}$ ($j=\overline{1,N}$ and addition modulo $N$), we have that the pairs $\{E_j,E_{j+1}\}$ and $\{E_{j+1},E_{j+2}\}$ form compatible subsets if and only if (following Corollary \ref{Leta2})
\begin{equation}\label{uppbd}
\eta\leq \frac{1}{\sin\frac{\pi}{2N}+\cos\frac{\pi}{2N}}.
\end{equation}
This gives us the necessary and sufficient condition for all the compatibility relations required in an $N$-cycle scenario, i.e., every POVM is compatible with its immediate neighbours if and only if the purity $\eta$ satisfies Eq.~\eqref{uppbd}.
What remains is to obtain the necessary and sufficient condition for the incompatibility of all the remaining pairs of POVMs so that the set of $N$ POVMs 
realizes an $N$-cycle scenario.

Note that the function $f(x)\equiv \frac{1}{\sin \frac{x}{2}+\cos \frac{x}{2}}$ is a monotonically decreasing function of $x$ for $x\in[0,\pi/2]$, going from $f(x=0)=1$ to $f(x=\pi/2)=\frac{1}{\sqrt{2}}$. On the other hand, $f(x)$ is monotonically increasing over $x\in[\pi/2,\pi]$, going from $f(x=\pi/2)=\frac{1}{\sqrt{2}}$ to $f(x=\pi)=1$ (see Fig. \ref{graphf}).
For any $j=\overline{1,N}$, denoting the angle between $\vec{n}_{j+1}$ and $\vec{n}_{l}$ (where $l\in\{1,2,\ldots,j-1,j+3,\ldots,N\}$) by $x\equiv \frac{|l-(j+1)|\pi}{N}$,\footnote{Recall that the addition in $j+1$ is modulo $N$.} so that $\vec{n}_{j+1}\cdot\vec{n_l}=\cos x$, we have that $x\geq\frac{2\pi}{N}$ for all $l$. Overall, we have  $x\in\left\{\frac{2\pi}{N},\frac{3\pi}{N},\ldots,\frac{(N-2)\pi}{N}\right\}$. Now, $E_{j+1}$ and $E_l$ are incompatible if and only if $\eta> \frac{1}{\sin \frac{x}{2}+\cos \frac{x}{2}}$. Since we need this to hold for all pairs $\{E_{j+1},E_l\}$ such that $x\geq\frac{2\pi}{N}$, 
the necessary and sufficient condition for obtaining all the incompatibility relations required in an $N$-cycle scenario is
\begin{equation}
\eta>\max_{x\in\{\frac{q\pi}{N}\}_{q=2}^{N-2}} \frac{1}{\sin \frac{x}{2}+\cos \frac{x}{2}} = \frac{1}{\sin\frac{\pi}{N}+\cos\frac{\pi}{N}}.
\end{equation}
which follows from properties of function $f(x)$ that we discussed and which are also illustrated in Fig.~\ref{graphf}.
\begin{figure}
\centering
\includegraphics[scale=0.27]{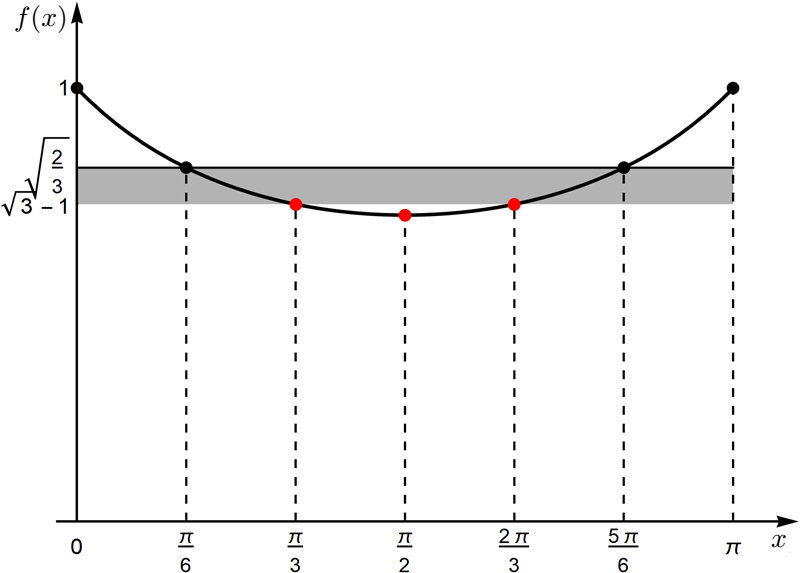}
\caption{The graph of the function $f(x)$. Red dots represent the discrete values $f(x)$ takes in the case of $N=6$ planar symmetric POVMs. Grey area with the upper boundary included and the lower boundary excluded represents the interval for $\eta$ where we have a $6$-cycle realized, cf.~Eq.~\eqref{ncyclebounds}.}
\label{graphf}
\end{figure}
\end{proof}
\end{proposition}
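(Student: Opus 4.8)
The plan is to reduce the statement entirely to the pairwise joint measurability of the $N$ POVMs in $s$, all of which share the common sharpness $\eta$ so that Corollary~\ref{Leta2} applies to every pair. Since the $N$-cycle scenario is a graph (every hyperedge has size at most two) and, for $N\geq4$, any subset of $s$ of size three or more contains at least one non-adjacent pair $\{E_i,E_j\}$, the joint measurability structure of $s$ equals the $N$-cycle scenario if and only if (i) every adjacent pair $\{E_j,E_{j+1}\}$ (indices modulo $N$, so including the wrap-around pair $\{E_N,E_1\}$) is compatible, and (ii) every non-adjacent pair is incompatible. (For $N=3$ there are no non-adjacent pairs and one must instead rule out triplewise compatibility via Corollary~\ref{3unb}; the argument that follows is for $N\geq4$.) Thus the whole proposition follows once we pin down, for each pair, the exact $\eta$-window in which it is jointly measurable.

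I would write $\phi_{ij}=|i-j|\pi/N$ for the angle between the Bloch vectors of $E_i$ and $E_j$ and set $f(x)\equiv\bigl(\sin\tfrac{x}{2}+\cos\tfrac{x}{2}\bigr)^{-1}$, so that Corollary~\ref{Leta2} reads: $\{E_i,E_j\}$ is compatible iff $\eta\leq f(\phi_{ij})$ (the absolute values in Corollary~\ref{Leta2} are inert here since $\phi_{ij}/2\in(0,\pi/2)$). Using the reflection symmetry $f(x)=f(\pi-x)$, every adjacent pair, including $\{E_N,E_1\}$ with $\phi_{N,1}=(N-1)\pi/N$, has threshold $f(\pi/N)=\bigl(\sin\tfrac{\pi}{2N}+\cos\tfrac{\pi}{2N}\bigr)^{-1}$; the non-adjacent pairs have thresholds $f(q\pi/N)$ with $q$ ranging over $\{2,3,\ldots,N-2\}$. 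Condition (i) therefore becomes $\eta\leq\bigl(\sin\tfrac{\pi}{2N}+\cos\tfrac{\pi}{2N}\bigr)^{-1}$ — the upper endpoint of Eq.~\eqref{ncyclebounds}, uniform over all edges — while condition (ii) becomes $\eta>\max_{2\leq q\leq N-2}f(q\pi/N)$.

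It remains to evaluate this maximum and to verify the resulting window is nonempty. Writing $\sin\tfrac{x}{2}+\cos\tfrac{x}{2}=\sqrt2\sin\bigl(\tfrac{x}{2}+\tfrac{\pi}{4}\bigr)$ shows $f$ is strictly decreasing on $[0,\pi/2]$, strictly increasing on $[\pi/2,\pi]$, symmetric about $\pi/2$, with $f(0)=f(\pi)=1$ and $f(\pi/2)=1/\sqrt2$. Hence over the set $\{2\pi/N,\ldots,(N-2)\pi/N\}$, which is symmetric about $\pi/2$, the maximum of $f$ sits at the extreme points $q=2$ and $q=N-2$, both yielding $f(2\pi/N)=\bigl(\sin\tfrac{\pi}{N}+\cos\tfrac{\pi}{N}\bigr)^{-1}$, the lower endpoint of Eq.~\eqref{ncyclebounds}. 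Combining with (i) gives exactly the interval claimed; nonemptiness amounts to $\sin\tfrac{\pi}{2N}+\cos\tfrac{\pi}{2N}<\sin\tfrac{\pi}{N}+\cos\tfrac{\pi}{N}$, i.e.\ $\sin(\tfrac{\pi}{2N}+\tfrac{\pi}{4})<\sin(\tfrac{\pi}{N}+\tfrac{\pi}{4})$, which holds because for $N\geq4$ both arguments lie in $[\tfrac{\pi}{4},\tfrac{\pi}{2}]$ where $\sin$ is increasing. The only genuine subtlety is this last bookkeeping step: recognizing that among all non-adjacent pairs it is the ``closest'' ones ($q=2$, equivalently $q=N-2$) that are hardest to render incompatible, which is precisely where the unimodality of $f$ and its symmetry about $\pi/2$ do the work; everything else is a direct application of Corollary~\ref{Leta2}.
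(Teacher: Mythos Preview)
Your proof is correct and takes essentially the same approach as the paper: reduce everything to pairwise compatibility via Corollary~\ref{Leta2}, then exploit the unimodality and symmetry of $f(x)=(\sin\tfrac{x}{2}+\cos\tfrac{x}{2})^{-1}$ about $x=\pi/2$ to identify the binding non-adjacent threshold at $q=2$ (equivalently $q=N-2$). You are in fact slightly more careful than the paper in explicitly arguing why pairwise incompatibility of non-adjacent pairs forces incompatibility of all larger subsets (every triple for $N\geq4$ contains a non-adjacent pair), in flagging that $N=3$ must be handled separately, and in verifying non-emptiness of the interval.
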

\subsection{Marginal surgery on arbitrary M-tuples of POVMs: N-Specker scenarios on a qubit}\label{subsec3_2}
In this section we will extend the marginal surgery described for pairs of POVMs to an arbitrary $M$-element subset of the set of $N$ planar symmetric POVMs, $s\equiv\{E_1,E_2,\ldots,E_N\}$ (cf.~Definition \ref{plansymdef}). Let us select $r\equiv\{E_{k_1},E_{k_2},\ldots,E_{k_{M}}\}$, where we assign the labels $\{k_i\}_{i=1}^M$ such that $k_1=1<k_2<k_3<\cdots<k_{M}$ and denote by $E^{r}$ the joint measurement obtained by marginalization of $G$:
\begin{equation}
E^{r}(x_{k_1},x_{k_2}\ldots,x_{k_{M}})=\sum_{\vec{x}}^{x_{k_1},x_{k_2}\ldots,x_{k_{M}}}G(\vec{x}),
\end{equation}
where $(x_{k_1},x_{k_2}\ldots,x_{k_{M}})\in\{\pm1\}^{M}$ are held fixed and the summation is carried out over the remaining entries of $\vec{x}\in\{\pm1\}^N$.

Due to Remark \ref{remcyc}, the outcomes to which non-zero effects of $E^{r}$ are assigned are of the form 
\begin{align}
(x_{k_1},x_{k_2}\ldots x_{k_{M}})=&(\underbrace{+1,\ldots,+1}_{p\text{ `$+1$'s }},\underbrace{-1,\ldots,-1}_{q\text{ `$-1$'s }})\text{ or }\nonumber\\
&(\underbrace{-1,\ldots,-1}_{p\text{ `$-1$'s }},\underbrace{+1,\ldots,+1}_{q\text{ `$+1$'s }}),\nonumber\\
\text{ where } &p,q=\overline{0,M}, p+q=M,
\end{align}
and are obtained by marginalization over all strings $\vec{x}\in\{\pm1\}^N$ such that 
\begin{align}
&x_1=x_2=\ldots=x_{k_p}=+1\nonumber\\
&\textrm{ and }x_{k_{p+1}}=x_{k_{p+1}+1}=\ldots=x_{N}=-1,\textrm{ or}\nonumber\\
&x_1=x_2=\ldots=x_{k_p}=-1\nonumber\\
&\textrm{ and } x_{k_{p+1}}=x_{k_{p+1}+1}=\ldots=x_{N}=+1.
\end{align}

Now the summation becomes 
\begin{align}
&E^{r}(\underbrace{+1,\ldots,+1}_{p\text{ `$+1$'s }},\underbrace{-1,\ldots,-1}_{q\text{ `$-1$'s }})=\sum_{\vec{x}\in\{\pm1\}^N}^{\substack{x_1,\ldots,x_{k_p}=+1\\x_{k_{p+1}},\ldots,x_{N}=-1}}G(\vec{x})\nonumber\\
&=
\sum_{\vec{x}\in\{\pm1\}^N}^{\substack{x_1,\ldots,x_{k_p}=+1\\x_{k_{p+1}},\ldots,x_{N}=-1}}\frac{1}{2N}I+\frac{\gamma}{2N}\sum_{\vec{x}\in\{\pm1\}^N}^{\substack{x_1,\ldots,x_{k_p}=+1\\x_{k_{p+1}},\ldots,x_{N}=-1}}\vec{e}(\vec{x})\cdot\vec{\sigma}\nonumber\\
&=\frac{k_{p+1}-k_p}{2N}I+\frac{\gamma}{2N}\sum_{\vec{x}\in\{\pm1\}^N}^{\substack{x_1,\ldots,x_{k_p}=+1\\x_{k_{p+1}},\ldots,x_{N}=-1}}\vec{e}(\vec{x})\cdot\sigma,\label{rels1}\\\textrm{ and }\nonumber\\
&E^{r}(\underbrace{-1,\ldots,-1}_{p\text{ `$-1$'s }},\underbrace{+1,\ldots,+1}_{q\text{ `$+1$'s }})=\sum_{\vec{x}\in\{\pm1\}^N}^{\substack{x_1,\ldots,x_{k_p}=-1\\x_{k_{p+1}},\ldots,x_{N}=+1}}G(\vec{x})\nonumber\\
&=
\sum_{\vec{x}\in\{\pm1\}^N}^{\substack{x_1,\ldots,x_{k_p}=-1\\x_{k_{p+1}},\ldots,x_{N}=+1}}\frac{1}{2N}I+\frac{\gamma}{2N}\sum_{\vec{x}\in\{\pm1\}^N}^{\substack{x_1,\ldots,x_{k_p}=-1\\x_{k_{p+1}},\ldots,x_{N}=+1}}\vec{e}(\vec{x})\cdot\vec{\sigma}\nonumber\\
&=\frac{k_{p+1}-k_p}{2N}I-\frac{\gamma}{2N}\sum_{\vec{x}\in\{\pm1\}^N}^{\substack{x_1,\ldots,x_{k_p}=-1\\x_{k_{p+1}},\ldots,x_{N}=+1}}\vec{e}(\vec{x})\cdot\vec{\sigma},
\label{rels2}
\end{align}
where we used the fact that there are $k_{p+1}-k_p$ non-zero effects assigned to outcomes that start with $+1$ in the first $k_p$ slots and end with $-1$ starting from the $k_{p+1}$-th slot (and the same is true when the first $k_p$ slots are $-1$ and from $k_{p+1}$ they are all $+1$), and also that $\vec{e}(-\vec{x})=-\vec{e}(\vec{x})$. There are exactly $M$ non-zero effects of $E^{r}$ assigned to outcomes that start with at least one $+1$, i.e., outcomes labelled by $M$-element strings in
 $$\{(+1,-1,\ldots,-1),(+1,+1,-1\ldots,-1),\ldots,(+1,\ldots,+1)\},$$ and another $M$ of those assigned to outcomes that start with at least one $-1$,  i.e., outcomes labelled by $M$-element strings in $$\{(-1,+1,\ldots,+1),(-1,-1,+1\ldots,+1),\ldots,(-1,\ldots,-1)\},$$ 
thus $2M$ non-zero effects in all. Equations \eqref{rels1} and \eqref{rels2} show that we can focus on specifying those effects that start with at least one $+1$ and then those that start with at least one $-1$ can be found just by inverting the sign of the corresponding geometric parts (i.e., the $\vec{\sigma}$-dependent parts of the POVM elements). In order to obtain the effects of $E^{r}$, we need to evaluate the sum $$\sum_{\vec{x}\in\{\pm1\}^N}^{\substack{x_1,\ldots,x_{k_p}=+1\\x_{k_{p+1}},\ldots,x_{N}=-1}}\vec{e}(\vec{x})$$
and we will do so by converting it into the sum over some index $i$ as we did in the previous section for the case of pairs of POVMs (see Appendix \ref{AppA}, Lemma \ref{id3} for how the limits on $i$ in the summation are obtained):
\begin{equation}
\resizebox{1\hsize}{!}{$\sum\limits_{\vec{x}\in\{\pm1\}^N}^{\substack{x_1,\ldots,x_{k_p}=+1\\x_{k_{p+1}},\ldots,x_{N}=-1}}\vec{e}(\vec{x})=\begin{cases}
\sum\limits_{i=k_p-\frac{N-1}{2}}^{k_{p+1}-\frac{N+1}{2}}\left(\cos\frac{(i-1)\pi}{N},\sin\frac{(i-1)\pi}{N},0\right)\\
\hspace{3.5cm}\text{for odd }N\\
\sum\limits_{i=k_p-\frac{N}{2}}^{k_{p+1}-1-\frac{N}{2}}\left(\cos\frac{(2i-1)\pi}{2N},\sin\frac{(2i-1)\pi}{2N},0\right)\\
\hspace{3.5cm}\text{for even }N.
\end{cases}$}
\end{equation}
where we have assumed that $p<M$. Using results from Appendix \ref{AppA}, Lemma \ref{id2}, we get in both cases
\begin{align}
&\sum_{\vec{x}\in\{\pm1\}^N}^{\substack{x_1,\ldots,x_{k_p}=+1\\x_{k_{p+1}},\ldots,x_{N}=-1}}\vec{e}(\vec{x})=\frac{\sin\left(\frac{k_{p+1}-k_p}{2N}\pi\right)}{\sin\frac{\pi}{2N}}\vec{t}(k_p,k_{p+1}),\textrm{ where}\nonumber\\
&\vec{t}(k_p,k_{p+1})\nonumber\\
&=\left(\sin\left(\frac{k_{p+1}+k_p-2}{2N}\pi\right),-\cos\left(\frac{k_{p+1}+k_p-2}{2N}\pi\right),0\right).
\end{align}

The case $p=M$ corresponds to the outcome $(x_1,x_{k_2},\ldots,x_{k_M})=(\pm1,\pm1,\ldots,\pm1)$ (all signs are the same), so that we have:
\begin{align}
&E^{(s')}(\pm1,\ldots,\pm1)=\sum_{\vec{x}\in\{\pm1\}^N}^{x_1,\ldots,x_{k_M}=\pm1}G(\vec{x})\nonumber\\
&=\sum_{\vec{x}\in\{\pm1\}^N}^{x_1,\ldots,x_{k_M}=\pm1}\frac{1}{2N}I\pm\frac{\gamma}{2N}\sum_{\vec{x}\in\{\pm1\}^N}^{x_1,\ldots,x_{k_M}=\pm1}\vec{e}(\vec{x})\cdot\vec{\sigma}\nonumber\\
&=\frac{N-k_M+1}{2N}I\pm\frac{\gamma}{2N}\sum_{\vec{x}\in\{\pm1\}^N}^{x_1,\ldots,x_{k_M}=\pm1}\vec{e}(\vec{x})\cdot\vec{\sigma}.
\end{align}
Note that there are $N-k_M+1$ terms in the summation above for $x_1=x_2=\ldots=x_{k_M}=+1$, given by
\begin{align}
&(x_{k_M+1},\ldots,x_N)\nonumber\\
&\in\{(-1,\ldots,-1),(+1,-1,\ldots,-1),(+1,+1,-1,\ldots,-1),\nonumber\\
&\ldots,(+1,\ldots,+1)\}
\end{align}
and similarly for $x_1=x_2=\ldots=x_{k_M}=-1$,
\begin{align}
&(x_{k_M+1},\ldots,x_N)\nonumber\\
&\in\{(+1,\ldots,+1),(-1,+1,\ldots,+1),(-1,-1,+1,\ldots,-1),\nonumber\\
&\ldots,(-1,\ldots,-1)\}.
\end{align} 

We can directly use Eq.~\eqref{jzid2} to get
\begin{align}
&\sum_{\vec{x}\in\{\pm1\}}^{x_1,\ldots,x_{k_M}=\pm1}\vec{e}(\vec{x})=\frac{\sin\left(\frac{N-k_M+1}{2N}\pi\right)}{\sin\frac{\pi}{2N}}\vec{s}(k_M),\nonumber\\
&\textrm{where }\vec{s}(k_M)=\left(\cos\frac{(k_M-1)\pi}{2N},\sin\frac{(k_M-1)\pi}{2N},0\right).
\end{align}
Now we have a complete specification of the effects of the joint POVM $E^{r}$:
\begin{subequations}
\begin{align}
&E^{r}(\underbrace{\pm1,\ldots,\pm1}_{p (<M) \text{ `$\pm1$'s }},\underbrace{\mp1,\ldots,\mp1}_{q\text{ `$\mp1$'s }})\nonumber\\
&=\frac{k_{p+1}-k_p}{2N}I\pm\frac{\gamma\sin\frac{(k_{p+1}-k_p)\pi}{2N}}{2N\sin\frac{\pi}{2N}}\vec{\sigma}\cdot\vec{t}(k_p,k_{p+1}),\textrm{ where}\nonumber\\
&\vec{t}(k_p,k_{p+1})\nonumber\\
&=\left(\sin\left(\frac{k_{p+1}+k_p-2}{2N}\pi\right),-\cos\left(\frac{k_{p+1}+k_p-2}{2N}\pi\right),0\right),\\&\textrm{ and}\nonumber\\
&E^{r}(\pm1,\ldots\ldots\ldots\ldots\ldots\pm1)\nonumber\\
&=\frac{N-k_M+1}{2N}I\pm\frac{\gamma\sin\frac{(N-k_M+1)\pi}{2N}}{2N\sin\frac{\pi}{2N}}\vec{\sigma}.\vec{s}(k_M),\nonumber\\
&\textrm{where }\vec{s}(k_M)=\left(\cos\frac{(k_M-1)\pi}{2N},\sin\frac{(k_M-1)\pi}{2N},0\right).
\label{margarbM}
\end{align}
\end{subequations}
Note that $\vec{t}(k_p,k_{p+1})$ is the unit vector perpendicular to the direction that bisects the angle between $\vec{n}_{k_p}$ and $\vec{n}_{k_{p+1}}$ and oriented such that it makes an acute angle with $\vec{n}_{k_p}$, while $\vec{s}(k_M)$ is a unit vector that bisects the angle between $\vec{n}_{k_M}$ and $\vec{n}_1$.

The POVM $E^{r}$ is valid on $r$ for $\eta\in(0,1/(N\sin(\pi/2N))]$ since it is a marginal of the joint POVM $G$. Now let us apply marginal surgery on $E^{(s')}$ to obtain a joint POVM for $s'$ that is valid for a broader range of the purity parameter $\eta$. Motivated by the form of $E^{r}$ we consider the following ansatz for the joint POVM of $r$:
\begin{align}
&G^{r}(\underbrace{\pm1,\ldots,\pm1}_{p (<M)\text{ `$\pm1$'s }},\underbrace{\mp1,\ldots,\mp1}_{q\text{ `$\mp1$'s }})\nonumber\\
&=\alpha(k_p,k_{p+1})I\pm \mu\frac{\sin\frac{(k_{p+1}-k_p)\pi}{2N}}{\sin\frac{\pi}{2N}}\vec{t}(k_p,k_{p+1})\cdot\vec{\sigma},\nonumber\\
&G^{r}(\pm1,\ldots\ldots\ldots\ldots\ldots\pm1)\nonumber\\
&=\beta(k_M)I\pm\mu\frac{\sin\frac{(N-k_M+1)\pi}{2N}}{\sin\frac{\pi}{2N}}\vec{s}(k_M)\cdot\vec{\sigma}
\label{optarbM}
\end{align}
Requiring the correct marginalization for $E_{k_l}$ where $l=\overline{1,M}$ we get
\begin{align}
&E_{k_l}(+1)=\sum_{\vec{x}\in\{\pm1\}^M}^{x_l=+1}G^{r}(\vec{x})=G^{r}(+1,\ldots,+1)\nonumber\\
&+\sum_{l\leq p<M}G^{r}(\underbrace{+1,\ldots,+1}_{p\text{ `$+1$'s }},\underbrace{-1,\ldots,-1}_{q\text{ `$-1$'s }})\nonumber\\
&+\sum_{1\leq p<l}G^{r}(\underbrace{-1,\ldots,-1}_{p\text{ `$-1$'s }},\underbrace{+1,\ldots,+1}_{q\text{ `$+1$'s }}),
\end{align}
which together with Eq.~\eqref{optarbM} 
yields 
\begin{equation}
\sum_{p=1}^{M-1}\alpha(k_p,k_{p+1})+\beta(k_M)=\frac{1}{2},\quad \frac{\mu}{\sin\frac{\pi}{2N}}=\frac{1}{2}\eta.
\label{constrmarg}
\end{equation}
Requiring the positivity of each effect of $G^{r}$, according to Eq.~\eqref{optarbM} we have the following constraints:
\begin{align}
&\mu\frac{\sin\frac{(k_{p+1}-k_p)\pi}{2N}}{\sin\frac{\pi}{2N}}\leq\alpha(k_p,k_{p+1}),\nonumber\\
&\mu\frac{\sin\frac{(N-k_M+1)\pi}{2N}}{\sin\frac{\pi}{2N}}\leq\beta(k_M).
\label{constrpos}
\end{align}
Eliminating everything except $\eta$ in Eqs.~\eqref{constrmarg} and \eqref{constrpos} we obtain an upper bound on $\eta$ for which we can find suitable $\alpha(k_p,k_{p+1})$ and $\beta(k_M)$ such that $G^{r}$ is a valid joint measurement on $r$ for values of $\eta$ constrained by
\begin{equation}
\eta\leq\frac{1}{\sum_{p=1}^{M-1}\sin\frac{(k_{p+1}-k_p)\pi}{2N}+\cos\frac{(k_M-1)\pi}{2N}}.
\label{boundarb}
\end{equation}
Geometrically, $(k_{p+1}-k_p)\pi/N$ and $(k_M-1)\pi/N$ are respectively the angles between $\vec{n}_{k_{p+1}}$ and $\vec{n}_{k_p}$, and $\vec{n}_{k_M}$ and $\vec{n}_1$, and we see that the upper bound on $\eta$ is an explicit function of half of those angles. In case of $M=2$, i.e., when we choose only two POVMs, Eq.~\eqref{boundarb} reduces to Eq.~\eqref{boundarb2} and in this case the condition of Eq.~\eqref{boundarb} is also necessary (cf.~Corollary \ref{Leta2}). It is readily seen from Corollary \ref{3unb} that this condition is necessary  for $M=3$ as well. For $M=N$, i.e., when we consider all $N$ POVMs, Eq.~\eqref{boundarb} reduces to $\eta\leq 1/(N\sin\pi/2N)$ which is also known to be necessary from Theorem \ref{gluslov}. So for $M\in\{2,3,N\}$ the given condition is both necessary and sufficient for joint measurability. However, for every other $M$, while it is certainly sufficient, we cannot say if it is necessary for joint measurability. We conjecture its necessity (see Conjecture \ref{conj1}).\footnote{Note that we don't need this conjecture to hold to establish the results presented in this paper.}

It remains to make a choice for $\alpha(k_p,k_{p+1})$ and $\beta(k_M)$ for each $\eta$. We do this in the following manner, subject to constraints from Eqs.~\eqref{constrmarg} and \eqref{constrpos}:
\begin{align}
\alpha(k_p,k_{p+1})&=\frac{1}{2}\eta\sin\frac{(k_{p+1}-k_p)\pi}{2N},\nonumber\\
\beta(k_M)&=\frac{1}{2}\left(1-\eta\sum_{p=1}^{M-1}\sin\frac{(k_{p+1}-k_p)\pi}{2N}\right)
\label{arbMparameters}
\end{align}
which gives for $G^{(s')}$:
\begin{align}
&G^{r}(\underbrace{\pm1,\ldots,\pm1}_{p\text{ `$\pm1$'s }},\underbrace{\mp1,\ldots,\mp1}_{q\text{ `$\mp1$'s }})\nonumber\\
&=\frac{1}{2}\eta\sin\frac{(k_{p+1}-k_p)\pi}{2N}\Big(I\pm\vec{t}(k_p,k_{p+1})\cdot\vec{\sigma}\Big),\nonumber\\&\textrm{and}\nonumber\\
&G^{r}(\pm1,\ldots\ldots\ldots\ldots\ldots\pm1)\nonumber\\
&=\frac{1}{2}\left(1-\eta\sum_{p=1}^{M-1}\sin\frac{(k_{p+1}-k_p)\pi}{2N}\right)I\nonumber\\
&\pm\frac{1}{2}\eta\cos\frac{(k_M-1)\pi}{2N}\vec{s}(k_M)\cdot\vec{\sigma}.
\label{arb1Mkonacno}
\end{align}
When $\eta$ takes the value of its upper bound, $G^{r}$ forms a rank-$1$ POVM.

Note that so far we have taken $E_{k_1}=E_1$ for calculational convenience. However, our procedure can be applied to any arbitrary choice of $M$ out of $N$ planar symmetric POVMs. Namely if we take $r'=\{E_{k_1},E_{k_2},\ldots,E_{k_M}\}$ such that $1<k_1<k_2<\dots<k_M$, then the set $r=\{E_1,E_{k_2-k_1+1},\ldots,E_{k_M-k_1+1}\}$ is geometrically equivalent to the set $r'$ because it is related to $r'$ by a counter-clockwise rotation of $\frac{(k_1-1)\pi}{N}$ in an equatorial plane ($XY$ plane) of the Bloch ball and thus it is jointly measurable if and only if $r'$ is jointly measurable (cf.~Corollary \ref{geqplansym}) and the joint POVM on $r'$ is given by 
\begin{equation}
G^{r'}=C_{2N}^{k_1-1}G^{r}.
\end{equation}
 This together with Eq.~\eqref{arb1Mkonacno} yields: 
\begin{align}
&G^{r'}(\underbrace{\pm1,\ldots,\pm1}_{p\text{ `$\pm1$'s }},\underbrace{\mp1,\ldots,\mp1}_{q\text{ `$\mp1$'s }})\nonumber\\
&=\frac{1}{2}\eta\sin\frac{(k_{p+1}-k_p)\pi}{2N}\Big(I\pm\vec{t}(k_p,k_{p+1})\cdot\vec{\sigma}\Big),\\\nonumber\\
&G^{r'}(\pm1,\ldots\ldots\ldots\ldots\ldots\pm1)\nonumber\\
&=\frac{1}{2}\left(1-\eta\sum_{p=1}^{M-1}\sin\frac{(k_{p+1}-k_p)\pi}{2N}\right)I\nonumber\\
&\pm\frac{1}{2}\eta\cos\frac{(k_M-k_1)\pi}{2N}\vec{s}(k_1,k_M)\cdot\vec{\sigma},\textrm{ where}\\
&\vec{s}(k_1,k_M)=\left(\cos\frac{(k_M+k_1-2)\pi}{2N},\sin\frac{(k_M+k_1-2)\pi}{2N},0\right).\nonumber
\label{arb1Mkonacno}
\end{align}
All of this serves as a proof for the following sufficient condition for joint measurability:
\begin{lemma}\label{MtuplePlanSymjmc}
Let $s=\{E_1,\ldots,E_N\}$ be the set of $N$ planar symmetric POVMs. A sufficient condition for its $M$-element subset $\{E_{k_1},E_{k_2},\ldots,E_{k_M}\}$ to be jointly measurable is given by
\begin{equation}
\eta\in\left(0,\frac{1}{\sum_{p=1}^{M-1}\sin\frac{(k_{p+1}-k_p)\pi}{2N}+\cos\frac{(k_M-k_1)\pi}{2N}}\right].
\label{l3}
\end{equation}
\end{lemma}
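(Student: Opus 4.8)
The plan is to exhibit an explicit joint POVM for $r=\{E_{k_1},\ldots,E_{k_M}\}$ valid on the whole interval of Eq.~\eqref{l3}. First I would take the joint POVM $G$ of all $N$ planar symmetric POVMs from Theorem~\ref{gluslov}, which is valid for $\eta\in(0,1/(N\sin(\pi/2N))]$, and marginalise it over the outcomes of the POVMs not in $r$ to obtain a joint POVM $E^{r}$ for $r$. By Remark~\ref{remcyc}, the only outcome strings $\vec x\in\{\pm1\}^N$ carrying non-zero effects of $G$ are the ``cyclic'' ones --- a run of consecutive $+1$'s followed by a run of consecutive $-1$'s, or vice versa --- so after marginalisation the effect of $E^{r}$ indexed by a string with $p$ leading $\pm1$'s and $q=M-p$ trailing $\mp1$'s is a sum of $k_{p+1}-k_p$ (respectively $N-k_M+1$, for the all-equal string) of the unit vectors $\vec e(\vec x)$. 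The computational core is evaluating these vector sums: I would convert each into a geometric sum over a single index $i$, whose range is fixed by the sign conditions $\sgn(\vec n_{k_j}\cdot\vec e(\vec x))=x_{k_j}$, and close it using the identities collected in Appendix~\ref{AppA} (Lemmas~\ref{id1}, \ref{id2}, \ref{id3}). This produces $E^{r}$ in the closed form of Eq.~\eqref{margarbM}: each effect is $cI\pm(\text{const})\,\vec u\cdot\vec\sigma$ with $\vec u$ equal either to the bisector $\vec s(k_M)$ of the angle between $\vec n_{k_M}$ and $\vec n_1$, or to the vector $\vec t(k_p,k_{p+1})$ perpendicular to the bisector between $\vec n_{k_p}$ and $\vec n_{k_{p+1}}$.

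Next, motivated by this form, I would posit the ansatz of Eq.~\eqref{optarbM}, keeping exactly the directions $\vec s(k_M)$ and $\vec t(k_p,k_{p+1})$ but replacing the fixed coefficients of $E^{r}$ by free non-negative scalars $\alpha(k_p,k_{p+1})$, $\beta(k_M)$ and $\mu$. Imposing that the marginals recover each $E_{k_l}$ --- which, since $E_{k_l}$ appears in precisely one all-equal effect and in a contiguous run of block-type effects, reduces to the two linear relations in Eq.~\eqref{constrmarg} --- and that every effect be positive semidefinite (Eq.~\eqref{constrpos}), I would eliminate all parameters but $\eta$ from this linear system; the feasibility condition is exactly the upper bound in Eq.~\eqref{boundarb}, which is the bound of the lemma in the case $k_1=1$. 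To turn this feasibility statement into an explicit construction I would then write down the particular solution of Eq.~\eqref{arbMparameters}, check that it satisfies Eqs.~\eqref{constrmarg} and \eqref{constrpos} throughout the range (the constraint $\beta(k_M)\ge0$ being precisely the stated bound), and conclude that the resulting operators of Eq.~\eqref{arb1Mkonacno} form a valid joint POVM for $r$.

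Finally, to lift the normalising assumption $k_1=1$, I would use geometric equivalence: the rotation $C_{2N}^{k_1-1}\in S_N$ carries $\{E_1,E_{k_2-k_1+1},\ldots,E_{k_M-k_1+1}\}$ onto $\{E_{k_1},\ldots,E_{k_M}\}$, so by Corollary~\ref{geqplansym} these subsets share a joint measurability structure and a joint POVM transforms covariantly as $G^{r'}=C_{2N}^{k_1-1}G^{r}$. Since the bound in Eq.~\eqref{boundarb} involves the labels only through the differences $k_{p+1}-k_p$ and $k_M-k_1$, it is invariant under this rotation, yielding Eq.~\eqref{l3} in full generality.

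I expect the main obstacle to be the vector-sum computation in the first step. One must carefully translate the sign conditions defining which $\vec e(\vec x)$ occur into the exact summation range for the index $i$, and treat even and odd $N$ separately, since the admissible set $\{\vec e(\vec x)\}$ is $\{\pm\vec n_k\}$ for odd $N$ but consists of the angle-bisecting directions for even $N$ (cf.\ Remark~\ref{eNset}); getting the ceilings and floors on the index range right, and verifying that the two parities collapse to the same closed-form trigonometric expression, is the delicate part. Once $E^{r}$ is in hand, the ansatz, the elementary linear algebra leading to the bound, and the covariance argument are all routine.
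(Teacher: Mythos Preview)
Your proposal is correct and follows essentially the same approach as the paper: marginalise the global joint POVM $G$ of Theorem~\ref{gluslov} (with $k_1=1$), use Remark~\ref{remcyc} and the Appendix~\ref{AppA} identities to compute the resulting effects, pass to the ansatz of Eq.~\eqref{optarbM}, extract the $\eta$-bound from the marginalisation/positivity system, exhibit the explicit solution of Eq.~\eqref{arbMparameters}, and finally rotate by $C_{2N}^{k_1-1}$ via Corollary~\ref{geqplansym} to handle general $k_1$. You have also correctly identified the even/odd $N$ bookkeeping in the index range as the main technical nuisance.
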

As we already noted, we conjecture this condition is necessary as well, i.e.,
\begin{conjecture}\label{conj1}
	The sufficient condition for the joint measurability of any $M$-element subset of the set of $N$ planar symmetric POVMs given by Eq.~\eqref{l3} is also necessary.
\end{conjecture}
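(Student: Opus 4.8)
The plan is to prove the conjectured \emph{necessity} of Eq.~\eqref{l3} by a linear-programming duality argument, using the rank-$1$ joint POVM of Eq.~\eqref{arb1Mkonacno} as the primal optimizer and exhibiting an explicit matching dual certificate; combined with Lemma~\ref{MtuplePlanSymjmc} this would settle Conjecture~\ref{conj1}. By Corollary~\ref{geqplansym} I may take $k_1=1$, since the bound depends only on the differences $k_{p+1}-k_p$ and $k_M-k_1$, all invariant under $C_{2N}^{k_1-1}$. For an arbitrary joint POVM $G(\vec x)=\tfrac12\big(\gamma(\vec x)I+\vec g(\vec x)\cdot\vec\sigma\big)$ of $\{E_{k_1},\dots,E_{k_M}\}$, marginalization gives $\sum_{\vec x}x_l\,\vec g(\vec x)=2\eta\,\vec n_{k_l}$ and $\sum_{\vec x}\gamma(\vec x)=2$, while positivity gives $\|\vec g(\vec x)\|\le\gamma(\vec x)$. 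Hence, for \emph{any} planar vectors $\vec u_1,\dots,\vec u_M$,
\begin{align*}
2\eta\sum_{l=1}^M\vec u_l\cdot\vec n_{k_l}&=\sum_{\vec x\in\{\pm1\}^M}\vec g(\vec x)\cdot\Big(\sum_{l}x_l\vec u_l\Big)\\
&\le\sum_{\vec x}\gamma(\vec x)\,\Big\|\sum_l x_l\vec u_l\Big\|\le 2\max_{\vec x}\Big\|\sum_l x_l\vec u_l\Big\|,
\end{align*}
so $\eta\le \max_{\vec x}\|\sum_l x_l\vec u_l\|\big/\sum_l\vec u_l\cdot\vec n_{k_l}$ for every choice of the $\vec u_l$ (no planarity of $G$ is needed, since only the in-plane part of $\vec g$ enters the pairing). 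Taking $\vec u_l=\vec n_{k_l}$ recovers the necessary condition of Theorem~\ref{NNecSuf}(1); the task is to choose the $\vec u_l$ optimally.

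The optimal dual vectors are dictated by complementary slackness with the rank-$1$ POVM of Eq.~\eqref{arb1Mkonacno} at the boundary value of $\eta$: on its support -- the ``cyclic'' strings of Remark~\ref{remcyc} -- the vector $\sum_l x_l\vec u_l$ must be parallel to the Bloch direction of the corresponding effect and of a common norm $C>0$. Propagating this through the consecutive single-bit flips that link the cyclic outcomes forces
\begin{align*}
\vec u_1&=\tfrac{C}{2}\big(\hat s+\hat t_1\big),\qquad \vec u_M=\tfrac{C}{2}\big(\hat s-\hat t_{M-1}\big),\\
\vec u_p&=\tfrac{C}{2}\big(\hat t_p-\hat t_{p-1}\big)\quad (2\le p\le M-1),
\end{align*}
where $\hat t_p\equiv\vec t(k_p,k_{p+1})$ and $\hat s\equiv\vec s(k_1,k_M)$ are the unit vectors already appearing in Eq.~\eqref{arb1Mkonacno}. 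The crucial step is then the geometric lemma $\max_{\vec x}\|\sum_l x_l\vec u_l\|=C$. I would prove it by observing that the convex hull of $\{\sum_l x_l\vec u_l:\vec x\in\{\pm1\}^M\}$ is the centrally symmetric zonogon $\sum_l[-\vec u_l,\vec u_l]$, whose vertices are exactly the partial sums $\pm(\vec u_1+\dots+\vec u_p-\vec u_{p+1}-\dots-\vec u_M)$ \emph{provided} $\vec u_1,\dots,\vec u_M$ are sorted by angle within an arc of width $<\pi$; by construction those vertices equal $\pm C\hat t_p$ and $\pm C\hat s$, all on the origin-centred circle of radius $C$, while the origin is interior, so the zonogon is inscribed in that circle and every one of the $2^M$ points $\sum_l x_l\vec u_l$ -- cyclic or not -- has norm $\le C$. (That the vertex-giving sign patterns coincide precisely with the cyclic strings of Remark~\ref{remcyc} is the structural reason the certificate is tight: equality in the chain above can hold only when $G$ is supported on those outcomes.)

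It then remains to evaluate the two scalars. The denominator telescopes: grouping the contributions of the interior $\vec u_p$ yields $\sum_l\vec u_l\cdot\vec n_{k_l}=\tfrac{C}{2}\big[\hat s\cdot(\vec n_{k_1}+\vec n_{k_M})+\sum_{p=1}^{M-1}\hat t_p\cdot(\vec n_{k_p}-\vec n_{k_{p+1}})\big]$, and since $\hat t_p$ is the unit vector along $\vec n_{k_p}-\vec n_{k_{p+1}}$ and $\hat s$ bisects $\vec n_{k_1},\vec n_{k_M}$, this equals $C\big(\sum_{p=1}^{M-1}\sin\tfrac{(k_{p+1}-k_p)\pi}{2N}+\cos\tfrac{(k_M-k_1)\pi}{2N}\big)$. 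Substituting into $\eta\le \max_{\vec x}\|\sum_l x_l\vec u_l\|/\sum_l\vec u_l\cdot\vec n_{k_l}=C\big/\big(C[\,\cdots]\big)$ gives exactly the bound of Lemma~\ref{MtuplePlanSymjmc}, proving it is necessary.

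The hard part will be the geometric lemma, and in particular the angular-ordering claim: one must first fix the orientations of $\hat t_p$ and $\hat s$ consistently with Eq.~\eqref{arb1Mkonacno} (so that complementary slackness is set up correctly), and then verify that, writing $\beta_p$ for the angle of $\vec u_p$, one has $\beta_1<\dots<\beta_M$ with $\beta_M-\beta_1<\pi$ for \emph{every} admissible selection $1=k_1<\dots<k_M\le N$ -- not merely for evenly spaced or small instances. This reduces to elementary inequalities among the $k_p$ (for $2\le p\le M-2$ the step $\beta_{p+1}-\beta_p>0$ amounts to $(k_{p+1}-k_{p-1})+(k_{p+2}-k_p)>0$, and the two endpoint steps and the total-width bound follow from $k_1\ge1$, $k_M\le N$), but establishing it uniformly -- and thereby ruling out that a non-cyclic sign pattern could exceed $C$ for a lopsided selection -- is the delicate point. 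A reassuring consistency check is that this dual already reproduces the known tight bounds for $M=2$ (Corollary~\ref{Leta2}), $M=3$ coplanar (Corollary~\ref{3copljmc}), and $M=N$ (Theorem~\ref{plansymjmc}).
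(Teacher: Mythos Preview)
The paper does \emph{not} prove Conjecture~\ref{conj1}: it is stated explicitly as an open problem, with the remark that it is known to hold only for $M\in\{2,3,N\}$ and that its truth is not needed for the other results. There is therefore no ``paper's own proof'' to compare against; what you have written is a proposed resolution of an open conjecture, not a re-derivation.

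Your duality approach is sound and, as far as I can verify, correct. The weak-duality inequality
\[
\eta\le\frac{\max_{\vec x}\big\|\sum_l x_l\vec u_l\big\|}{\sum_l\vec u_l\cdot\vec n_{k_l}}
\]
follows exactly as you describe (and indeed does not require planarity of the joint POVM $G$, since only the in-plane component of $\vec g(\vec x)$ contributes to the pairing while positivity bounds the full norm). Your choice of dual vectors telescopes correctly: the partial sums $\vec u_1+\cdots+\vec u_p-\vec u_{p+1}-\cdots-\vec u_M$ equal $C\hat t_p$, the full sum equals $C\hat s$, and the denominator evaluates to $C\big(\sum_{p}\sin\tfrac{(k_{p+1}-k_p)\pi}{2N}+\cos\tfrac{(k_M-k_1)\pi}{2N}\big)$ precisely because $\hat t_p$ is the unit vector along $\vec n_{k_p}-\vec n_{k_{p+1}}$ and $\hat s$ along $\vec n_{k_1}+\vec n_{k_M}$.

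The step you flag as delicate --- the angular ordering of the $\vec u_l$ --- does in fact go through uniformly. Writing $\theta_p=\tfrac{(k_p+k_{p+1}-2)\pi}{2N}-\tfrac{\pi}{2}$ for the angle of $\hat t_p$ and $\phi=\tfrac{(k_M-1)\pi}{2N}$ for that of $\hat s$ (taking $k_1=1$), one computes $\beta_1=\tfrac{\phi+\theta_1}{2}$, $\beta_p=\tfrac{\theta_p+\theta_{p-1}}{2}+\tfrac{\pi}{2}$ for $2\le p\le M-1$, and $\beta_M=\tfrac{\phi+\theta_{M-1}}{2}+\tfrac{\pi}{2}$. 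The strict monotonicity $\beta_p<\beta_{p+1}$ for the interior indices reduces to $\theta_{p+1}>\theta_{p-1}$, which is immediate; the endpoint inequalities $\beta_1<\beta_2$ and $\beta_{M-1}<\beta_M$ reduce to $N>k_M+1-k_2-k_3$ and $k_{M-2}+k_{M-1}<N+k_M+1$, both of which hold since $1=k_1<\cdots<k_M\le N$; and the total width $\beta_M-\beta_1=\tfrac{\theta_{M-1}-\theta_1}{2}+\tfrac{\pi}{2}<\pi$ follows from $k_{M-1}+k_M-k_1-k_2\le 2N-4$. (One also checks each $\vec u_l\ne 0$, which rules out $\hat s=-\hat t_1$ and $\hat t_p=\hat t_{p-1}$; both are impossible under the index constraints.) With the ordering in hand, the zonogon $\sum_l[-\vec u_l,\vec u_l]$ has its $2M$ vertices exactly at the cyclic sign-pattern sums, all of norm $C$, so by convexity every one of the $2^M$ points $\sum_l x_l\vec u_l$ lies in the disk of radius $C$, giving $\max_{\vec x}\|\sum_l x_l\vec u_l\|=C$.

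In short: your argument is new relative to the paper (which offers none), and once the angular-ordering check above is written out it constitutes a complete proof of Conjecture~\ref{conj1}. It would be worth noting separately that the same certificate also tightens Theorem~\ref{NNecSuf}(1) for this family, since the bound it produces is strictly stronger than the $\vec u_l=\vec n_{k_l}$ choice whenever $M\ge 3$.
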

\begin{corollary}
Let $s=\{E_1,\ldots,E_N\}$ be the set of $N$ planar symmetric POVMs. Any $(N-1)$-element subset of $s$ is jointly measurable if 
\begin{equation}
\eta\in\left(0,\frac{1}{(N-2)\sin\frac{\pi}{2N}+\sin\frac{\pi}{N}}\right].
\end{equation}
\begin{proof}
Let $r=\{E_{k_1},\ldots,E_{k_{N-1}}\}$ where $k_1<k_2<\dots<k_{N-1}$. We will split the proof into two cases:

\noindent {\em Case 1}: $k_1,\ldots,k_{N-1}$ are successive integers which geometrically corresponds to $\vec{n}_{k_1},\ldots,\vec{n}_{k_{N-1}}$ being as closely grouped as possible, i.e., the angular separation between the successive vectors is $\pi/N$ and the total angular span is $\frac{(N-1)\pi}{N}$. The two possibilities for $r$ are $\{E_1,E_2,\dots,E_{N-1}\}$ and $\{E_2,E_3,\dots,E_N\}$. In this case, $k_{p+1}-k_p=1$ and $k_{N-1}-k_1=N-2$ so we have by Eq.~\eqref{l3} that $r$ is compatible if
\begin{equation}
\eta\leq\frac{1}{\sum_{p=1}^{N-2}\sin\frac{\pi}{2N}+\cos\frac{(N-2)\pi}{2N}}=\frac{1}{(N-2)\sin\frac{\pi}{2N}+\sin\frac{\pi}{N}}.
\end{equation}

\noindent{\em Case 2}:$k_1,\ldots,k_{N-1}$ are not successive integers, i.e., there is exactly one ``missing integer" (say, $i$) between a pair of them. There are $N-2$ such possibilities and in each of them $k_1=1$ and $k_{N-1}=N$. Therefore, $k_{N-1}-k_1={N-1}$ and $k_{p+1}-k_p=1$ for $N-3$ successive pairs while $k_{p+1}-k_p=2$ for one pair, when we skip the ``missing integer" (i.e., $p=i-1$). Hence, from Eq.~\eqref{l3}, $r$ is jointly measurable if:
\begin{align}
\eta&\leq\frac{1}{\sum_{p=1}^{N-2}\sin\frac{(k_{p+1}-k_p)\pi}{2N}+\cos\frac{(N-1)\pi}{2N}}\nonumber\\
&=\frac{1}{(N-3)\sin\frac{\pi}{2N}+\sin\frac{\pi}{N}+\cos\left(\frac{\pi}{2}-\frac{\pi}{2N}\right)}\nonumber\\
&=\frac{1}{(N-2)\sin\frac{\pi}{2N}+\sin\frac{\pi}{N}},
\end{align} 
which completes the proof.
\end{proof}
\label{l4}
\end{corollary}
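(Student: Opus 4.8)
The plan is to invoke Lemma~\ref{MtuplePlanSymjmc} with $M = N-1$ and then to identify the worst case among all $N$ choices of an $(N-1)$-element subset $r = \{E_{k_1},\ldots,E_{k_{N-1}}\}$ of $s$: a single value of $\eta$ will be sufficient for \emph{every} such subset precisely when it lies below the minimum, over all subsets, of the upper bound furnished by Eq.~\eqref{l3}. I would begin by noting that exactly one index in $\{1,\ldots,N\}$ is absent from $\{k_1,\ldots,k_{N-1}\}$, which leaves only two structural configurations to examine.

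The first case is when the omitted index is an endpoint ($1$ or $N$): the surviving indices form a block of $N-1$ consecutive integers, so every consecutive difference satisfies $k_{p+1}-k_p = 1$ and $k_{N-1}-k_1 = N-2$. Substituting these into Eq.~\eqref{l3} gives the denominator $(N-2)\sin\frac{\pi}{2N} + \cos\frac{(N-2)\pi}{2N}$, and then I would simplify using $\cos\frac{(N-2)\pi}{2N} = \cos\!\left(\frac{\pi}{2}-\frac{\pi}{N}\right) = \sin\frac{\pi}{N}$ to land on the claimed threshold.

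The second case is when the omitted index $i$ lies strictly between $2$ and $N-1$: here $k_1 = 1$ and $k_{N-1} = N$, so $k_{N-1}-k_1 = N-1$, while $N-3$ of the consecutive differences equal $1$ and exactly one equals $2$ (the pair straddling $i$). The denominator in Eq.~\eqref{l3} becomes $(N-3)\sin\frac{\pi}{2N} + \sin\frac{\pi}{N} + \cos\frac{(N-1)\pi}{2N}$, and invoking $\cos\frac{(N-1)\pi}{2N} = \cos\!\left(\frac{\pi}{2}-\frac{\pi}{2N}\right) = \sin\frac{\pi}{2N}$ collapses it to the same quantity $(N-2)\sin\frac{\pi}{2N} + \sin\frac{\pi}{N}$. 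Since both cases yield identical thresholds, their minimum is that common value, and the corollary follows.

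The only point requiring any care is the bookkeeping in the second case --- correctly counting that one and only one of the consecutive gaps has size $2$ and tracking how $k_{N-1}-k_1$ changes accordingly --- together with the observation that we must take the worst case over subsets; everything else is elementary trigonometry, so I do not anticipate a genuine obstacle.
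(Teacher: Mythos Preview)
Your proposal is correct and follows essentially the same approach as the paper: both invoke Lemma~\ref{MtuplePlanSymjmc} with $M=N-1$, split into the two cases according to whether the omitted index is an endpoint or an interior index, and verify via the identities $\cos\frac{(N-2)\pi}{2N}=\sin\frac{\pi}{N}$ and $\cos\frac{(N-1)\pi}{2N}=\sin\frac{\pi}{2N}$ that both cases yield the same threshold. One minor wording slip: in your second case the omitted index should range over $\{2,\ldots,N-1\}$ (not ``strictly between $2$ and $N-1$''), but this does not affect the argument.
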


\begin{corollary}\label{N-Specker's}($N$-Specker scenario on a qubit with planar symmetric POVMs) Let $s=\{E_1,E_2,\ldots,E_N\}$ be the set of planar symmetric POVMs with purity $\eta$. For 
\begin{equation}
\eta\in\left(\frac{1}{N\sin\frac{\pi}{2N}},\frac{1}{(N-2)\sin\frac{\pi}{2N}+\sin\frac{\pi}{N}}\right],
\end{equation}
the joint measurability structure of $s$ forms an $N$-Specker scenario (cf.~Example~\ref{NSpeckdefn}).
\begin{proof}
We have that $s$ is compatible if and only if $$\eta\in\left(0,\frac{1}{N\sin\frac{\pi}{2N}}\right],$$
while any $(N-1)$-element subset of $s$ is compatible if
$$\eta\in\left(0,\frac{1}{(N-2)\sin\frac{\pi}{2N}+\sin\frac{\pi}{N}}\right].$$
Since
\begin{align*}
&(N-2)\sin\frac{\pi}{2N}+\sin\frac{\pi}{N}<N\sin\frac{\pi}{2N},
\end{align*}
we have an open gap
$$\eta\in\left(\frac{1}{N\sin\frac{\pi}{2N}},\frac{1}{(N-2)\sin\frac{\pi}{2N}+\sin\frac{\pi}{N}}\right],$$
for which $s$ is incompatible but every $(N-1)$-element subset of $s$ is compatible, thus realizing an $N$-Specker scenario.
\end{proof} 
\end{corollary}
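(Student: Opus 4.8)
The plan is to stitch together two facts already established in the excerpt: the \emph{exact} threshold for compatibility of the full set $s$, and a \emph{sufficient} threshold for compatibility of all its $(N-1)$-element subsets. First I would invoke Theorem~\ref{plansymjmc} (equivalently Theorem~\ref{gluslov}), which says that the $N$ planar symmetric POVMs $s=\{E_1,\ldots,E_N\}$ are jointly measurable if and only if $\eta\leq \frac{1}{N\sin\frac{\pi}{2N}}$; hence $s$ is incompatible precisely when $\eta> \frac{1}{N\sin\frac{\pi}{2N}}$. Next I would invoke Corollary~\ref{l4}, which guarantees that every $(N-1)$-element subset of $s$ is jointly measurable whenever $\eta\leq \frac{1}{(N-2)\sin\frac{\pi}{2N}+\sin\frac{\pi}{N}}$. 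Note that we only need this as a sufficient condition: for an $N$-Specker structure it suffices that the $(N-1)$-subsets are compatible, and compatibility of every smaller subset then follows automatically from the downward-closure property that every joint measurability structure satisfies.

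The one genuine computation is to check that these two thresholds leave a nonempty gap, i.e.\ that $(N-2)\sin\frac{\pi}{2N}+\sin\frac{\pi}{N}<N\sin\frac{\pi}{2N}$ for every integer $N\geq 3$. Rearranging, this is equivalent to $\sin\frac{\pi}{N}<2\sin\frac{\pi}{2N}$, which is immediate from the double-angle identity $\sin\frac{\pi}{N}=2\sin\frac{\pi}{2N}\cos\frac{\pi}{2N}$ together with $\cos\frac{\pi}{2N}<1$ (valid since $0<\frac{\pi}{2N}<\frac{\pi}{2}$). Consequently $\frac{1}{N\sin\frac{\pi}{2N}}<\frac{1}{(N-2)\sin\frac{\pi}{2N}+\sin\frac{\pi}{N}}$, so the half-open interval in the statement is a genuine, nonempty window of $\eta$.

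Finally I would assemble the conclusion: for any $\eta$ in $\left(\frac{1}{N\sin\frac{\pi}{2N}},\frac{1}{(N-2)\sin\frac{\pi}{2N}+\sin\frac{\pi}{N}}\right]$, the full set $s$ is incompatible (the lower endpoint is strict, so it is correctly excluded), while each of its $(N-1)$-element subsets is compatible (the upper endpoint uses a non-strict bound, so it is correctly included). By downward closure, every subset of cardinality at most $N-1$ is then compatible, which is exactly the $N$-Specker scenario of Example~\ref{NSpeckdefn}. I do not expect any real obstacle: the result is essentially a repackaging of Theorem~\ref{plansymjmc} and Corollary~\ref{l4}, the only nontrivial input being the elementary trigonometric inequality that makes the purity window nonempty; the one point requiring a little care is that Corollary~\ref{l4} is used here only as a sufficient (not necessary) condition, which is all the $N$-Specker structure demands.
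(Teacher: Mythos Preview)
Your proposal is correct and follows essentially the same approach as the paper: invoke the necessary-and-sufficient threshold from Theorem~\ref{plansymjmc} for incompatibility of the full set, the sufficient threshold from Corollary~\ref{l4} for compatibility of every $(N-1)$-subset, and check that the resulting interval is nonempty via the inequality $(N-2)\sin\frac{\pi}{2N}+\sin\frac{\pi}{N}<N\sin\frac{\pi}{2N}$. You are slightly more explicit than the paper in justifying that inequality (via the double-angle identity) and in noting that downward closure handles all smaller subsets, but the argument is the same.
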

Corollary \ref{N-Specker's} shows that we can realize any $N$-Specker scenario on a qubit with planar symmetric POVMs. Ref.~\cite{KHF14} provided a constructive proof that all conceivable joint measurability structures for any finite set of measurements are realizable with POVMs in quantum theory. Crucial to the construction in Ref.~\cite{KHF14} are the so-called {\em minimal incompatible sets} of POVMs (of which, $N$-Specker scenarios are examples)\footnote{A minimal incompatible set of vertices in a joint measurability structure is such that every proper subset of this set is compatible, i.e., shares a hyperedge. $N$-Specker ($N\geq 3$) scenarios constitute all examples of minimal incompatible sets, except one: a pair of incompatible measurements. Such a pair forms a minimal incompatible set since each POVM in the pair is trivially compatible with itself but the two POVMs are incompatible. One could even call it a ``$2$-Specker" scenario.} constructed for subhypergraphs in the hypergraph representing the joint measurability structure to be realized. The construction in Ref.~\cite{KHF14} proceeds by decomposing a joint measurability structure into minimal incompatible sets. Each minimal incompatible set is then realized on some Hilbert space $\mathcal{H}_i$ and the dimensionality of the overall Hilbert space, $\mathcal{H}=\oplus_i \mathcal{H}_i$, required to realize a joint measurability structure based on this recipe is $\dim \mathcal{H}=\sum_i\dim \mathcal{H}_i$. Our construction of all $N$-Specker scenarios on a qubit here can therefore be used to make the construction of all joint measurability structures in Ref.~\cite{KHF14} maximally efficient given the recipe adopted there, i.e., we have $\dim \mathcal{H}_i=2$ for all minimal incompatible sets indexed by $i$.
\section{Sufficient condition for joint measurability of binary qubit POVM\lowercase{s} from marginal surgery}\label{sec4}
In the previous sections, we started with the setting of planar symmetric POVMs (cf.~Definition \ref{plansymdef} and Figure \ref{plansymm}) and then applied marginal surgery to obtain joint POVMs for arbitrary $M$-tuples of them. It turns out that geometric insights from Eq.~\eqref{arb1Mkonacno}, which is the result of the aforementioned procedure, can be used to construct the joint POVM for an arbitrary set of $N$ coplanar unbiased binary qubit POVMs with the same purity $\eta$ bounded above by some $\eta_{\rm max}$ that depends on the particular setting.
\begin{figure}[htb!]
\centering
\includegraphics[scale=0.47]{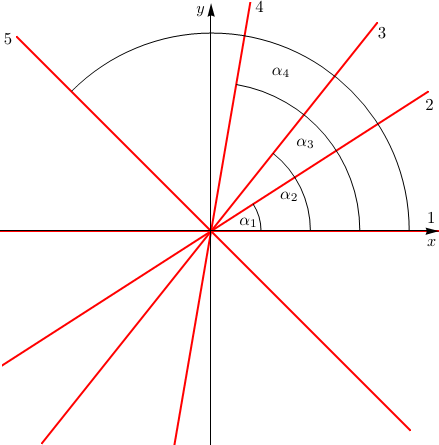}
\caption{Lines defined by $\vec{n}_k$}
\label{figarbplan}
\end{figure}
\begin{theorem}\label{counbiqu}
Let $s=\{E_1,E_2,\ldots,E_N\}$ be a set of coplanar unbiased binary qubit POVMs with the same purity defined by
\begin{equation}
E_k(x_k)\equiv \frac{1}{2}\left(1+x_k\eta\vec{n}_k\cdot\vec{\sigma}\right),\quad x_k\in\{\pm1\},
\end{equation}
and let $\alpha_k$, $k=\overline{1,N-1}$, be the angles defined by the rays in the upper half plane of the lines determined by $\vec{n}_k$ (where the $X$-axis is chosen to be along $\vec{n}_1$)(see Fig. ~\ref{figarbplan}). Then a sufficient  condition for the compatibility of $s$ is
\begin{equation}
\eta\leq\frac{1}{\sum_{k=1}^{N-1}\sin\frac{\alpha_{k}-\alpha_{k-1}}{2}+\cos\frac{\alpha_{N-1}}{2}},\textrm{ where }\alpha_0\equiv0,
\end{equation} 
and a joint POVM satisfying this constraint is
\begin{align}
&G^{s}(\underbrace{\pm1,\ldots,\pm1}_{p\text{ `$\pm1$'s }},\underbrace{\mp1,\ldots,\mp1}_{N-p\text{ `$\mp1$'s }})\nonumber\\
&=\frac{1}{2}\eta\sin\frac{\alpha_{p}-\alpha_{p-1}}{2}\Big(I\pm\vec{t}_p\cdot\vec{\sigma}\Big),\textrm{ where}\nonumber\\
&\vec{t_p}=\left(\sin\frac{\alpha_{p}+\alpha_{p-1}}{2},-\cos\frac{\alpha_{p}+\alpha_{p-1}}{2},0\right), 1\leq p<N,\nonumber\\&\textrm{and}\nonumber\\
&G^{s}(\pm1,\ldots\ldots\ldots\ldots\ldots\pm1)\nonumber\\
&=\frac{1}{2}\left(1-\eta\sum_{p=1}^{N-1}\sin\frac{\alpha_{p}-\alpha_{p-1}}{2}\right)I\pm\frac{1}{2}\eta\cos\frac{\alpha_{N-1}}{2}\vec{s}\cdot\vec{\sigma},\nonumber\\
&\textrm{where }\vec{s}=\left(\cos\frac{\alpha_{N-1}}{2},\sin\frac{\alpha_{N-1}}{2},0\right),\label{jpovmarbcop}
\end{align}
setting all other effects of $G^s$ to $0$.
\begin{proof}
We will prove this by construction of a joint POVM for $s$ using basic geometric insights gained in the previous subsection. Without loss of generality, we can choose that all of the Bloch vectors are in the upper half plane. With this convention,  $\alpha_{k-1}$ is the angle between $\vec{n}_k$ and the $x$-axis i.e.,
\begin{equation}
\vec{n}_k=(\cos\alpha_{k-1},\sin\alpha_{k-1},0).
\end{equation}
Let us return to Eq.~\eqref{optarbM}. We have previously mentioned (between Eqs.~\eqref{margarbM} and \eqref{optarbM}) that:
\begin{itemize}
\item the quantity $\displaystyle\frac{(k_{p+1}-k_p)\pi}{N}$ is the angle between two successive Bloch vectors, $\vec{n}_{k_p}$ and $\vec{n}_{k_{p+1}}$, from the $M$-element set of POVMs $\{E_{k_1},E_{k_2},\dots,E_{k_M}\}$.
\item $\vec{t}(k_p,k_{p+1})$ is the unit vector that is orthogonal to the line that bisects the angle between $\vec{n}_{k_p}$ and $\vec{n}_{k_{p+1}}$ and makes an acute angle with respect to $\vec{n}_{k_p}$;
\item $\vec{s}(k_M)$ is a unit vector that bisects the angle between $\vec{n}_1$ and $\vec{n}_{k_M}$ and is oriented towards the upper half plane.
\end{itemize}
This suggests searching for a joint POVM for $s$ of the following form given in Eq.~\eqref{jpovmarbcop} (cf.~Eq.~\eqref{arbMparameters}), setting all other effects of $G^s$ to $0$.
The positivity constraint on the effects implies that  each $G^s(\vec{x})$ is a valid effect for (cf.~Eq.~\eqref{boundarb})\\
\begin{align}
&1-\eta\sum_{p=1}^{N-1}\sin\frac{\alpha_{p}-\alpha_{p-1}}{2}\geq\eta\cos\frac{\alpha_{N-1}}{2}\nonumber\\
&\Longrightarrow \eta\leq\frac{1}{\sum_{k=1}^{N-1}\sin\frac{\alpha_{p}-\alpha_{p-1}}{2}+\cos\frac{\alpha_{N-1}}{2}}.
\end{align}
Now we have to show that the marginalization is correct:
\begin{equation}
\sum_{\vec{y}\in\{\pm\}^N}^{y_k=\pm1}G^s(\vec{y})=\frac{1}{2}(I\pm\eta\vec{n}_k\cdot\vec{\sigma})=E_k(\pm1).\label{margidth1}
\end{equation}
We show in detail that this is indeed the case in Appendix~\ref{AppA}, Lemma~\ref{appAmarg} which completes this proof.
\end{proof}
\begin{figure}[htb!]
\centering
\includegraphics[scale=0.57]{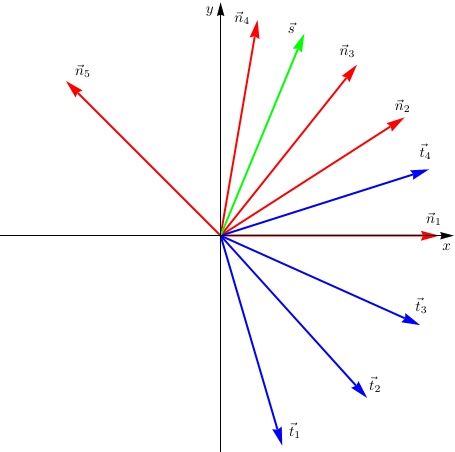}
\caption{An example with $5$ POVMs}
\end{figure}
\label{tplan}
\end{theorem}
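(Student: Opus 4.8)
The plan is to prove the theorem \emph{constructively}: exhibit the candidate joint POVM $G^s$ of Eq.~\eqref{jpovmarbcop}, check that it is a genuine POVM (positivity of every effect and normalization $\sum_{\vec x}G^s(\vec x)=I$), and verify that its marginals reproduce each $E_k$. The positivity step is what will pin down the stated upper bound on $\eta$, while the marginalization step is the real work.

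First I would fix coordinates. By Remark~\ref{unbiasedremark} relabelling the $\pm1$ outcomes of any $E_k$ does not affect joint measurability, so without loss of generality all Bloch vectors can be taken in the closed upper half plane and the POVMs indexed so that $0=\alpha_0\le\alpha_1\le\dots\le\alpha_{N-1}<\pi$, i.e. $\vec n_k=(\cos\alpha_{k-1},\sin\alpha_{k-1},0)$. In particular every increment obeys $0\le\alpha_p-\alpha_{p-1}<\pi$, so $\sin\tfrac{\alpha_p-\alpha_{p-1}}{2}\ge0$ and $\cos\tfrac{\alpha_{N-1}}{2}\ge0$; this is what makes the construction below well defined. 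The ansatz for $G^s$ assigns nonzero effects only to the $2N$ ``cyclic'' strings $(\underbrace{\pm1,\dots,\pm1}_{p},\underbrace{\mp1,\dots,\mp1}_{N-p})$, $p=0,\dots,N$, in exact analogy with Eq.~\eqref{arb1Mkonacno} from the planar-symmetric analysis, with $\vec t_p$ the unit vector perpendicular to the bisector of $\vec n_p,\vec n_{p+1}$ making an acute angle with $\vec n_p$, and $\vec s$ the bisector of $\vec n_1$ and $\vec n_N$. Normalization is then immediate: for each $p\in\{1,\dots,N-1\}$ the two ``mixed'' effects add to $\eta\sin\tfrac{\alpha_p-\alpha_{p-1}}{2}\,I$ (the $\vec\sigma$ parts cancelling in $\pm$ pairs), the two ``all-equal'' effects add to $\big(1-\eta\sum_{p=1}^{N-1}\sin\tfrac{\alpha_p-\alpha_{p-1}}{2}\big)I$, and the grand total is $I$. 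For positivity, each mixed effect has the form $\tfrac12\eta\sin\tfrac{\alpha_p-\alpha_{p-1}}{2}(I\pm\vec t_p\cdot\vec\sigma)$ with $\|\vec t_p\|=1$, hence positive semidefinite since $\sin\tfrac{\alpha_p-\alpha_{p-1}}{2}\ge0$; the all-equal effect $\tfrac12\big((1-\eta\sum_p\sin\tfrac{\alpha_p-\alpha_{p-1}}{2})I\pm\eta\cos\tfrac{\alpha_{N-1}}{2}\,\vec s\cdot\vec\sigma\big)$ with $\|\vec s\|=1$ is positive semidefinite iff $1-\eta\sum_{p=1}^{N-1}\sin\tfrac{\alpha_p-\alpha_{p-1}}{2}\ge\eta\cos\tfrac{\alpha_{N-1}}{2}$, which is precisely the claimed bound on $\eta$ (and it also forces $1-\eta\sum_p\sin(\cdot)\ge0$ because $\cos\tfrac{\alpha_{N-1}}{2}\ge0$). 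When $\eta$ saturates the bound, $G^s$ becomes a rank-one POVM.

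The remaining and principal step is to check the marginals, $\sum_{\vec y:\,y_k=\pm1}G^s(\vec y)=\tfrac12(I\pm\eta\vec n_k\cdot\vec\sigma)$. The $I$-part comes out to $\tfrac12$ by the same bookkeeping as in the normalization check. For the $\vec\sigma$-part one must establish the two-dimensional vector identity
\[
\cos\tfrac{\alpha_{N-1}}{2}\,\vec s+\sum_{p=k}^{N-1}\sin\tfrac{\alpha_p-\alpha_{p-1}}{2}\,\vec t_p-\sum_{p=1}^{k-1}\sin\tfrac{\alpha_p-\alpha_{p-1}}{2}\,\vec t_p=\vec n_k,\qquad k=\overline{1,N},
\]
which is exactly the content of the deferred Lemma~\ref{appAmarg}. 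I would prove it by induction on $k$. The base case $k=1$ follows from the telescoping identity $\sum_{p=1}^{N-1}\sin\tfrac{\alpha_p-\alpha_{p-1}}{2}\,\vec t_p=\tfrac12(\vec n_1-\vec n_N)$ (itself a sum of $\vec n_p-\vec n_{p+1}=2\sin\tfrac{\alpha_p-\alpha_{p-1}}{2}\,\vec t_p$, a sum-to-product identity) together with the half-angle identity $\vec n_1+\vec n_N=2\cos\tfrac{\alpha_{N-1}}{2}\,\vec s$. Passing from $k$ to $k+1$ moves the term $p=k$ from the $+\vec t_p$ group to the $-\vec t_p$ group, changing the left-hand side by exactly $-2\sin\tfrac{\alpha_k-\alpha_{k-1}}{2}\,\vec t_k=-(\vec n_k-\vec n_{k+1})$, which matches the change of the right-hand side. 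The one place requiring care is the combinatorial bookkeeping: correctly identifying which cyclic strings have $y_k=+1$ (the all-$+1$ string, the strings starting with at least $k$ copies of $+1$, and the strings starting with at most $k-1$ copies of $-1$) and the sign with which each $\vec t_p$ and $\vec s$ then enters. Everything past that is routine trigonometry, and one can sanity-check the final bound against Theorem~\ref{gluslov} by setting $\alpha_p-\alpha_{p-1}=\pi/N$, which returns $\eta\le 1/(N\sin\tfrac{\pi}{2N})$.
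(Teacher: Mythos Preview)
Your proposal is correct and follows essentially the same route as the paper. Both arguments fix the same upper-half-plane convention, posit the same $2N$-effect ansatz, extract the bound on $\eta$ from positivity of the all-equal effect, and reduce the marginalization to the identities $\sin\tfrac{\alpha_p-\alpha_{p-1}}{2}\,\vec t_p=\tfrac12(\vec n_p-\vec n_{p+1})$ and $\cos\tfrac{\alpha_{N-1}}{2}\,\vec s=\tfrac12(\vec n_1+\vec n_N)$; the only cosmetic difference is that you organise the $\vec\sigma$-part as an induction on $k$, whereas the paper's Lemma~\ref{appAmarg} substitutes these identities directly and lets the resulting sums telescope to $\vec n_k$.
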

\noindent The sufficient condition given by this theorem is known to be necessary as well for $N=2$ and $N=3$ (cf. Corollary \ref{Leta2} and Corollary \ref{3unbcopjmc}), as well as for arbitrary $N$ planar symmetric POVMs (cf. Theorem \ref{plansymjmc}). Therefore, we make the following conjecture:
\begin{conjecture}\label{conj2}
The sufficient condition for joint measurability of coplanar unbiased binary qubit POVMs with same purity given by Theorem \ref{counbiqu} is also necessary.
\end{conjecture}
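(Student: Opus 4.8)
The plan is to prove necessity by showing that the feasible purities form an interval $(0,\eta^{*}]$ and that $\eta^{*}\le\eta_{\max}$, where $1/\eta_{\max}=\sum_{k=1}^{N-1}\sin\frac{\alpha_{k}-\alpha_{k-1}}{2}+\cos\frac{\alpha_{N-1}}{2}$ is the bound of Theorem~\ref{counbiqu}. The interval structure is immediate: if $s$ is compatible at purity $\eta$ with joint POVM $G$, then for $\eta'<\eta$ the faded set $E_k^{(\eta')}=\tfrac{\eta'}{\eta}E_k^{(\eta)}+(1-\tfrac{\eta'}{\eta})\tfrac12 I$ is compatible via the correspondingly faded $G$; together with sufficiency (Theorem~\ref{counbiqu}) this already gives $\eta^{*}\ge\eta_{\max}$, so only the reverse bound remains. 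A preliminary simplification is to note, using the reflection $R:z\mapsto-z$, that since every $\vec n_k$ lies in the $XY$-plane we have $Rs=s$, whence by Proposition~\ref{geeq1} both $G$ and $RG$ are joint POVMs and so is their average; thus we may assume every effect $G(\vec x)=\tfrac12(g_0(\vec x)I+\vec g(\vec x)\cdot\vec\sigma)$ has an in-plane Bloch vector, reducing the whole analysis to two dimensions. It is also worth recording the symmetric reformulation $1/\eta_{\max}=\sum_{j=1}^{N}\sin\frac{\delta_j}{2}$, where $\delta_1,\dots,\delta_N$ (with $\sum_j\delta_j=\pi$) are the consecutive angular gaps between the $N$ Bloch lines read around the semicircle, since this is the form in which the witness below is most transparent.

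The main tool will be the conic (semidefinite) dual of the feasibility problem. Compatibility of $s$ at purity $\eta$ is the existence of $G(\vec x)\succeq0$, $\vec x\in\{\pm1\}^{N}$, with $\sum_{\vec x}G(\vec x)=I$ and $\sum_{\vec x:\,x_k=+1}G(\vec x)=E_k(+1)$ for all $k$. By Farkas' lemma for the positive-semidefinite cone, infeasibility is certified by Hermitian witnesses $W_0,W_1,\dots,W_N$ such that $W_0+\sum_{k:\,x_k=+1}W_k\succeq0$ for every $\vec x$, while $\Tr[W_0]+\sum_{k}\Tr[W_k E_k(+1)]<0$. Writing $W_0=\tfrac12(c_0 I+\vec w_0\cdot\vec\sigma)$ and $W_k=\tfrac12(c_kI+\vec w_k\cdot\vec\sigma)$ with in-plane $\vec w_k$, these become the purely geometric conditions
\begin{equation}
c_0+\sum_{k:\,x_k=+1}c_k\ \ge\ \Big\|\vec w_0+\sum_{k:\,x_k=+1}\vec w_k\Big\|\quad\text{for all }\vec x,
\label{eq:witpos}
\end{equation}
together with $c_0+\tfrac12\sum_k c_k+\tfrac{\eta}{2}\sum_k\vec w_k\cdot\vec n_k<0$. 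Necessity thus reduces to the concrete task: for every $\eta>\eta_{\max}$, exhibit scalars $\{c_k\}$ and planar vectors $\{\vec w_k\}$ satisfying \eqref{eq:witpos} for all $2^{N}$ sign patterns and violating the scalar threshold.

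I would construct the witness by complementary slackness against the optimal primal joint POVM $G^{s}$ of Theorem~\ref{counbiqu}, which at $\eta=\eta_{\max}$ is rank one and supported only on the ``consecutive'' outcome strings of Remark~\ref{remcyc} (blocks of $+1$'s followed by $-1$'s and their reverses), each effect pointing along a bisector direction $\vec t_p$ or $\vec s$. Complementary slackness dictates that the witness be tight on exactly these outcomes and orthogonal there to the primal effect: for each supported $\vec x$ with primal direction $\vec d(\vec x)$ I would impose equality in \eqref{eq:witpos} together with $\vec w_0+\sum_{x_k=+1}\vec w_k=-\big(c_0+\sum_{x_k=+1}c_k\big)\,\vec d(\vec x)$, i.e.\ the witness vector anti-aligned with the rank-one primal direction. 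Since the supported outcomes are the $N$ cyclic ``intervals'' of Bloch lines, this yields a determined linear system for $\{c_k,\vec w_k\}$ whose solution I expect to read off directly from the gaps $\delta_j$; strong duality (Slater's condition holds, as $G(\vec x)=2^{-N}I$ is strictly feasible at $\eta=0$ and feasibility is strict below threshold) then guarantees that the resulting objective equals $A-\eta B$ with $A/B=\eta_{\max}$, so that $\eta>\eta_{\max}$ makes the companion inequality negative and certifies incompatibility.

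The hard part will be verifying that \eqref{eq:witpos} holds --- with strict inequality --- on all the remaining, \emph{non-consecutive}, outcome strings, of which there are exponentially many. By construction the inequality is an equality on the cyclic-interval subsets, so what must be shown is that replacing an interval subset of the lines by a non-interval one strictly shrinks $\|\vec w_0+\sum_{x_k=+1}\vec w_k\|$ relative to the scalar budget $c_0+\sum_{x_k=+1}c_k$. The natural route is a rearrangement/convexity argument: the $\vec w_k$ should be ordered monotonically in angle (mirroring $\alpha_0<\alpha_1<\dots<\alpha_{N-1}$), so that subset-sums over contiguous blocks are extremal and any ``split'' in the chosen index set moves the endpoint strictly inside the admissible disk. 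Carrying this out uniformly for \emph{arbitrary} (non-equiangular) gaps $\delta_j$ is precisely the step with no symmetry to lean on: in the planar symmetric case the group $S_N$ of Corollary~\ref{geqplansym} collapses all these checks to one, which is why that case is a theorem (Theorem~\ref{plansymjmc}) whereas the general coplanar statement remains Conjecture~\ref{conj2}. A proof would therefore hinge on a purely geometric lemma controlling all non-contiguous subset-sums of a monotone angular sequence, and I expect this to be the crux of any resolution of the conjecture.
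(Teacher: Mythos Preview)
The statement you are addressing is Conjecture~\ref{conj2}, which the paper explicitly leaves \emph{open}: no proof is given, and the authors only record that the sufficient condition of Theorem~\ref{counbiqu} is known to be necessary in the special cases $N=2$, $N=3$, and the planar symmetric family (Theorem~\ref{plansymjmc}). There is therefore no paper proof to compare against.

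Your proposal is not a proof but a strategy, and you say so yourself in the final paragraph. The SDP-duality framework you set up is sound: the reduction to in-plane witnesses via the $z\mapsto-z$ symmetrization is correct, Slater's condition does hold (so strong duality is available), and complementary slackness against the rank-one optimal $G^s$ of Theorem~\ref{counbiqu} is the natural way to generate a candidate dual certificate. The structure you anticipate---tightness exactly on the $2N$ ``consecutive'' outcome strings and strict slack on all others---is also what one would expect from the form of the primal optimizer.

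The genuine gap is exactly the one you flag: verifying \eqref{eq:witpos} on the exponentially many non-consecutive sign patterns for \emph{arbitrary} angular gaps $\delta_1,\dots,\delta_N$. Your suggested rearrangement heuristic (that contiguous-block subset sums of a monotone angular sequence are extremal) is plausible but is not a proof, and it is not obvious that the scalar budgets $c_k$ produced by complementary slackness cooperate with this geometric picture in the right way once the $\delta_j$ are unequal. Until that lemma is actually stated and proved, the argument does not close, and Conjecture~\ref{conj2} remains open---in agreement with the paper's own assessment.
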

The truth of this conjecture would imply the truth of Conjecture \ref{conj1}.
 
 Notice that we can rewrite:
\begin{align}
\sin\frac{\alpha_p-\alpha_{p-1}}{2}\vec{t}_p&=\frac{1}{2}(\vec{n}_p-\vec{n}_{p+1})\\
\cos\frac{\alpha_{N-1}}{2}\vec{s}&=\frac{1}{2}(\vec{n}_1+\vec{n}_N),
\end{align}
which suggest further generalizations of the result given by Theorem \ref{tplan}. We pursue this suggestion below for the case of $N$, possibly biased, binary qubit POVMs.
\begin{theorem}\label{suffNnsp}
Let $s=\{E_1,\ldots,E_N\}$ be the set of some binary qubit POVMs (possibly biased) specified with 
\begin{equation}
E_k(\pm1)=\frac{1}{2}\bigg((1\pm b_k)I\pm\vec{a}_k\cdot\vec{\sigma}\bigg).
\end{equation}
The set $s$ is compatible if 
\begin{equation}
||\vec{a_1}+\vec{a}_N||+\sum_{p=1}^{N-1}||\vec{a}_p-\vec{a}_{p+1}||\leq 2(1-\max\{|b_k|\}).
\end{equation}
Adopting the convention that the outcomes of $E_k$ are labelled such that positive bias is associated with the outcome `$+1$', i.e. all $b_k\geq 0$, and that the POVMs themselves are labelled by non-decreasing bias ($b_{k}\geq b_{p}$ for $k>p$), a joint POVM that satisfies the given joint measurability condition on $s$ is given by  
\begin{subequations}
\label{arbcoppovm}
\begin{align}
&G^{s}(\underbrace{+1,\ldots,+1}_{p\text{ `$+1$'s }},\underbrace{-1,\ldots,-1}_{N-p\text{ `$-1$'s }})=\alpha_p+ \vec{T}_p\cdot\vec{\sigma},\\
&G^{s}(\underbrace{-1,\ldots,-1}_{p\text{ `$-1$'s }},\underbrace{+1,\ldots,+1}_{N-p\text{ `$+1$'s }})=\gamma_p - \vec{T}_p\cdot\vec{\sigma},\\
&G^{s}(+1,\ldots\ldots\ldots\ldots\ldots+1)=\beta+\vec{S}\cdot\vec{\sigma},\\
&G^{s}(-1,\ldots\ldots\ldots\ldots\ldots-1)=\delta-\vec{S}\cdot\vec{\sigma},
\end{align}
\end{subequations}
where 
\begin{subequations}
\begin{align}
\vec{T}_p&=\frac{1}{4}\left(\vec{a}_p-\vec{a}_{p+1}\right),\text{ }1\leq p<N\\ 
\vec{S}&=\frac{1}{4}\left(\vec{a}_1+\vec{a}_N\right),\\
\alpha_p&=\left|\left|\vec{T}_p\right|\right|,\quad \gamma_p=\left|\left|\vec{T}_p\right|\right|+\frac{b_{p+1}-b_p}{2},\label{alphap}\\
\beta&=\frac{1}{2}(1+b_1)-\sum_{p=1}^{N-1}\left|\left|\vec{T}_p\right|\right|,\\
\delta&=\frac{1}{2}(1-b_N)-\sum_{p=1}^{N-1}\left|\left|\vec{T}_p\right|\right|,\label{deltap}
\end{align}
with all other effects of $G^s$ set to $0$.
\end{subequations}
\begin{proof}
We will prove this by construction. Consider the candidate for a joint POVM of $s$ given in the form as in Eq.~\eqref{arbcoppovm} with other effects of $G^s$ set to $0$. Positivity of $G^s$ requires
\begin{equation}
\left|\left|\vec{T}_p\right|\right|\leq\alpha_p,\quad \left|\left|\vec{T}_p\right|\right|\leq\gamma_p,\quad\left|\left|\vec{S}\right|\right|\leq\beta,\quad\left|\left|\vec{S}\right|\right|\leq\delta.
\label{pos3D}
\end{equation}
As for the marginalization, it is easy to see that the geometric part of the joint POVM marginalizes correctly: 
\begin{align}
&\pm\vec{S}\pm\sum_{p=k}^{N-1}\vec{T}_k\mp\sum_{p=1}^{k-1}\vec{T}_k\nonumber\\
&=\pm\frac{1}{4}\left(\vec{a}_1+\vec{a}_N+\sum_{p=k}^{N-1}(\vec{a}_p-\vec{a}_{p+1})+\sum_{p=1}^{k-1}(\vec{a}_{p+1}-\vec{a}_{p})\right)\nonumber\\
&=\pm\frac{1}{2}\vec{a}_k.
\end{align}
Requiring the correct marginalization of the remaining part of $G^s$, we get
\begin{align}
&\sum_{p=1}^{k-1}\gamma_p+\sum_{p=k}^{N-1}\alpha_p+\beta=\frac{1}{2}(1+b_k),\nonumber\\
&\sum_{p=1}^{k-1}\alpha_p+\sum_{p=k}^{N-1}\gamma_p+\delta=\frac{1}{2}(1-b_k).
\label{marg3D}
\end{align}
From Eqs.~\eqref{pos3D} and \eqref{marg3D} we get 
\begin{equation}
\left|\left|\vec{S}\right|\right|+\sum_{p=1}^{N-1}\left|\left|\vec{T}_p\right|\right|\leq\frac{1}{2}(1-\max\{|b_k|\}).
\end{equation}
Substituting the expressions for $\vec{S}$ and $\vec{T}_p$ we get
\begin{equation}
\left|\left|\vec{a}_1+\vec{a}_N\right|\right|+\sum_{p=1}^{N-1}\left|\left|\vec{a}_p-\vec{a}_{p+1}\right|\right|\leq 2(1-\max\{|b_k|\}).
\label{usl3D}
\end{equation}
If we label $\{E_p\}_{p=1}^N$ such that $b_p$ increases with $p$ and further label outcomes such that $b_p$ is positive, we see that the choice for $\alpha_p$, $\gamma_p$, $\beta$, and $\delta$ given in Eqs.~\eqref{alphap}-\eqref{deltap} is valid given positivity and marginalization constraints. Thus, we have constructed a joint POVM for $s$.
\end{proof}
\end{theorem}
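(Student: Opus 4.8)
The plan is to prove Theorem~\ref{suffNnsp} by explicitly exhibiting the joint POVM $G^s$ displayed in Eq.~\eqref{arbcoppovm} and verifying that it does the job. The motivation for the ansatz comes directly from the two rewritings noted just before the theorem statement: the vectors appearing in the joint POVM of Theorem~\ref{counbiqu} are precisely $\tfrac12(\vec n_p-\vec n_{p+1})$ and $\tfrac12(\vec n_1+\vec n_N)$, so the natural generalization to arbitrary (possibly unequal, possibly non-coplanar) Bloch vectors $\vec a_k$ and arbitrary biases $b_k$ is to assign nonzero effects only to the $2N$ ``cyclic block'' outcome strings of length $N$ together with the all-equal strings, with $\vec\sigma$-coefficients $\pm\vec T_p=\pm\tfrac14(\vec a_p-\vec a_{p+1})$ on the block strings and $\pm\vec S=\pm\tfrac14(\vec a_1+\vec a_N)$ on the all-equal ones, leaving the scalar coefficients $\alpha_p,\gamma_p,\beta,\delta$ to be fixed and setting all remaining effects to zero. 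First I would write down this candidate and record what it must satisfy: positivity of each effect, which since every effect has the form $cI+\vec v\cdot\vec\sigma$ amounts to $\|\vec v\|\le c$ (this is Eq.~\eqref{pos3D}), and correct marginalization onto each $E_k$.

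Second, I would check marginalization in two independent pieces. The $\vec\sigma$-dependent part marginalizes correctly by a telescoping cancellation that is insensitive to the scalar coefficients: summing $\pm\vec S$ with $\pm\vec T_p$ for $p\ge k$ and $\mp\vec T_p$ for $p<k$ collapses to $\pm\tfrac12\vec a_k$. The scalar part yields, for each $k$, the pair of linear equations in Eq.~\eqref{marg3D}; differencing consecutive values of $k$ in each family forces $\gamma_p-\alpha_p=\tfrac12(b_{p+1}-b_p)$, which already signals where the hypothesis (and the labelling conventions) will enter.

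Third, I would combine positivity with marginalization: feeding $\beta,\delta\ge\|\vec S\|$ and $\alpha_p,\gamma_p\ge\|\vec T_p\|$ into Eq.~\eqref{marg3D}, evaluated at the index $k$ realizing $\max_k|b_k|$, gives $\|\vec S\|+\sum_{p=1}^{N-1}\|\vec T_p\|\le\tfrac12(1-\max_k|b_k|)$, and substituting $\vec S=\tfrac14(\vec a_1+\vec a_N)$ and $\vec T_p=\tfrac14(\vec a_p-\vec a_{p+1})$ reproduces exactly the stated sufficient condition. To finish, I would exhibit the concrete solution of Eq.~\eqref{marg3D}: with the stated conventions (relabel each POVM's outcomes so that $b_k\ge0$, and order the POVMs so that $b_1\le\cdots\le b_N$), the choice $\alpha_p=\|\vec T_p\|$, $\gamma_p=\|\vec T_p\|+\tfrac12(b_{p+1}-b_p)$, $\beta=\tfrac12(1+b_1)-\sum_p\|\vec T_p\|$, $\delta=\tfrac12(1-b_N)-\sum_p\|\vec T_p\|$ satisfies both equations of Eq.~\eqref{marg3D} identically and meets Eq.~\eqref{pos3D}: $\gamma_p\ge\alpha_p=\|\vec T_p\|$ is precisely the non-decreasing-bias ordering, while $\beta\ge\|\vec S\|$ and $\delta\ge\|\vec S\|$ are precisely the hypothesis inequality.

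The step I expect to be the main obstacle is the bias bookkeeping. In the unbiased, equal-length, coplanar setting of Theorem~\ref{counbiqu} the two marginalization families collapse into one by symmetry, but once biases are present the $(+1,\dots)$-block and the $(-1,\dots)$-block must carry different scalar weights, so one has to split the bias across them asymmetrically (hence $\alpha_p\ne\gamma_p$ and $\beta\ne\delta$ in general). The real content is to produce scalars that simultaneously solve both linear families, respect every positivity inequality, and do so exactly under the claimed bound; verifying that the two labelling conventions are precisely what make such a choice exist, with no extra slack lost, is the crux. Everything else — the telescoping of the geometric part and the passage from positivity to the stated norm inequality — is routine.
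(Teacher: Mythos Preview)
Your proposal is correct and follows essentially the same approach as the paper: posit the same $2N$-outcome ansatz, verify the $\vec{\sigma}$-part marginalizes by telescoping, extract the scalar marginalization equations~\eqref{marg3D}, combine with positivity~\eqref{pos3D} to obtain the norm inequality, and then exhibit the explicit choice~\eqref{alphap}--\eqref{deltap}. Your extra remarks (differencing consecutive $k$ to force $\gamma_p-\alpha_p=\tfrac12(b_{p+1}-b_p)$, and evaluating at the index realizing $\max_k|b_k|$) make explicit what the paper leaves implicit, but the argument is otherwise identical.
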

The sufficient condition derived here is known to be necessary in the case of $N=2$ unbiased POVMs (cf. Theorem~\ref{2unbjmc}) and in the case of $N=3$ coplanar unbiased POVMs if $\vec{a}_2$ is not a convex combination of $\vec{a}_1$ and $\vec{a}_3$ (cf. Corollary~ \ref{3copljmc}). Motivated by this, we make the following conjecture:
\begin{conjecture}\label{conj3}
		The sufficient condition given in Theorem~\ref{suffNnsp} is also necessary in the case of unbiased coplanar POVMs where 1) the Bloch vectors $\vec{a}_k$, in order from $k=1$ to $k=N$, lie on the lines that make the total angular span less than $\pi$ (as in Fig. \ref{figarbplan}), 2) the  outcomes are labelled such that they all point towards the upper half of the plane, and 3) none of the Bloch vectors is a convex combination of any other pair of Bloch vectors.
	\end{conjecture}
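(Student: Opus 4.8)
\emph{Setup and strategy.} The plan is to establish the converse of Theorem~\ref{suffNnsp} in the stated restricted setting by producing an explicit incompatibility witness that converts any joint POVM into the desired inequality. Write the effects of a putative joint POVM of $s$ as $G(\vec{x})=h_0(\vec{x})I+\vec{h}(\vec{x})\cdot\vec{\sigma}$ for $\vec{x}\in\{\pm1\}^N$, so that positivity reads $h_0(\vec{x})\ge\|\vec{h}(\vec{x})\|$, while normalization and the marginal conditions for the unbiased POVMs $E_k(\pm1)=\tfrac12(I\pm\vec a_k\cdot\vec\sigma)$ read $\sum_{\vec{x}}h_0(\vec{x})=1$, $\sum_{\vec{x}}\vec{h}(\vec{x})=\vec 0$, $\sum_{\vec{x}:x_k=+1}h_0(\vec{x})=\tfrac12$ and $\sum_{\vec{x}:x_k=+1}\vec{h}(\vec{x})=\tfrac12\vec{a}_k$. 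The first step is a planarization reduction: applying the POVM-preserving linear map $cI+\vec v\cdot\vec\sigma\mapsto cI+(R\vec v)\cdot\vec\sigma$, with $R$ the reflection through the plane of the $\vec a_k$ (cf.\ the reasoning behind Proposition~\ref{geeq1}), and averaging $G$ with its image yields another joint POVM (positivity survives because orthogonal projection does not increase norms, and the marginals are unchanged because the $\vec a_k$ are coplanar) whose $\vec h(\vec x)$ all lie in that plane; so we may assume $\vec h(\vec x)\in\mathbb R^2$. The second step is the trivial observation $1=\sum_{\vec x}h_0(\vec x)\ge\sum_{\vec x}\|\vec h(\vec x)\|$, which reduces everything to proving $\sum_{\vec{x}}\|\vec{h}(\vec{x})\|\ge\tfrac12\big(\|\vec{a}_1+\vec{a}_N\|+\sum_{p=1}^{N-1}\|\vec{a}_p-\vec{a}_{p+1}\|\big)$ for any planar $\{\vec h(\vec x)\}$ obeying the marginal constraints.

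\emph{The dual certificate.} For the third step, introduce the unit vectors $\vec{P}_0:=-\widehat{\vec{a}_1+\vec{a}_N}$, $\vec{P}_p:=\widehat{\vec{a}_p-\vec{a}_{p+1}}$ for $1\le p\le N-1$, and $\vec{P}_N:=\widehat{\vec{a}_1+\vec{a}_N}=-\vec{P}_0$, where $\widehat{\vec v}=\vec v/\|\vec v\|$ (well defined by hypotheses~2--3, which force each $\vec a_p-\vec a_{p+1}$ and $\vec a_1+\vec a_N$ to be nonzero). Set $\vec{u}_0:=\vec{P}_0$ and $\vec{u}_p:=\vec{P}_p-\vec{P}_{p-1}$ for $1\le p\le N$, and for $S\subseteq\{1,\dots,N\}$ put $\vec{w}(S):=\vec{u}_0+\sum_{k\in S}\vec{u}_k$; telescoping over the maximal runs $[a_j,b_j]$ of $S$ gives $\vec{w}(S)=\vec{P}_0+\sum_j(\vec{P}_{b_j}-\vec{P}_{a_j-1})$. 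A summation-by-parts computation, exactly parallel to the marginalization check in the proof of Theorem~\ref{counbiqu}, yields the certificate identity
\begin{equation}
\sum_{k=1}^N\vec{u}_k\cdot\vec{a}_k=\vec{P}_N\cdot\vec{a}_N-\vec{P}_0\cdot\vec{a}_1+\sum_{k=1}^{N-1}\vec{P}_k\cdot(\vec{a}_k-\vec{a}_{k+1})=\|\vec{a}_1+\vec{a}_N\|+\sum_{k=1}^{N-1}\|\vec{a}_k-\vec{a}_{k+1}\|.
\end{equation}

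\emph{The geometric core.} The fourth step, which I expect to be the main obstacle, is to show that under hypotheses~1--3 one has $\|\vec{w}(S)\|\le1$ for \emph{every} $S\subseteq\{1,\dots,N\}$. The point is that hypothesis~1 (total angular span of the lines less than $\pi$), hypothesis~2 (all $\vec a_k$ in the upper half-plane) and hypothesis~3 (no $\vec a_j$ a convex combination of two others) together should force the $N+1$ unit vectors $\vec P_0,\vec P_1,\dots,\vec P_N$ to appear in strictly monotone angular order along a semicircle running from $\vec P_0$ to its antipode $\vec P_N=-\vec P_0$ (this is where the ``fan'' structure and the no-collinearity clause are consumed: hypothesis~3 makes the broken path $\vec a_1,\dots,\vec a_N$ strictly convex, so its consecutive edge directions, and hence the $\vec P_p$, turn consistently). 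Granting the semicircular ordering, every $\vec w(S)=\vec P_0+\sum_j(\vec P_{b_j}-\vec P_{a_j-1})$ is a sum of $\vec P_0$ with chords $\vec P_{b_j}-\vec P_{a_j-1}$ of disjoint arcs of that semicircle, and one shows by induction on the number of runs of $S$ that such a ``zig-zag'' stays inside the unit disk; the cases $N=2,3$ can be checked directly and agree with Corollaries~\ref{Leta2} and~\ref{3copljmc}, confirming sharpness and serving as base cases. Once the lemma is in hand, the proof closes in one line: Cauchy--Schwarz and $\|\vec w(S_{\vec x})\|\le1$ (with $S_{\vec x}=\{k:x_k=+1\}$) give $\|\vec h(\vec x)\|\ge\vec h(\vec x)\cdot\vec w(S_{\vec x})$ for each $\vec x$; summing and using $\sum_{\vec x}\vec h(\vec x)=\vec 0$ and the marginal conditions gives $\sum_{\vec x}\|\vec h(\vec x)\|\ge\vec u_0\cdot\vec 0+\sum_k\vec u_k\cdot\tfrac12\vec a_k=\tfrac12\big(\|\vec a_1+\vec a_N\|+\sum_p\|\vec a_p-\vec a_{p+1}\|\big)$, and combining with the second step yields $\|\vec a_1+\vec a_N\|+\sum_p\|\vec a_p-\vec a_{p+1}\|\le2$, as required. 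I expect the genuine difficulty to be entirely in the geometric lemma of step four: the semicircular-ordering claim is plausible precisely in the fan regime, and it is the failure to control configurations outside it (where some $\vec P_p$ escapes the semicircle, making some signed interval-sum leave the unit disk) that presumably keeps this a conjecture; a secondary, more routine subtlety is the many-runs induction bounding arbitrary signed chord-sums of semicircular unit vectors.
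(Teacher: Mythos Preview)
The statement you are attempting to prove is labelled a \emph{conjecture} in the paper and is left open there; the paper provides no proof for it, only the remark that it is known to hold for $N=2$ (Theorem~\ref{2unbjmc}) and for $N=3$ under the stated convexity proviso (Corollary~\ref{3copljmc}). So there is no ``paper's own proof'' to compare your attempt against.

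Your LP--duality strategy is a sensible line of attack: the planarization by averaging with the reflected joint POVM is correct, the bound $1=\sum_{\vec x}h_0(\vec x)\ge\sum_{\vec x}\|\vec h(\vec x)\|$ is immediate, and your Abel-summation certificate identity $\sum_k\vec u_k\cdot\vec a_k=\|\vec a_1+\vec a_N\|+\sum_{p}\|\vec a_p-\vec a_{p+1}\|$ is valid. You also correctly isolate the crux as the geometric lemma $\|\vec w(S)\|\le1$ for all $S$, and you are candid that you do not prove it.

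There is, however, a concrete gap in your proposed route to that lemma. You assert that hypothesis~3 (``no $\vec a_j$ is a convex combination of any other pair'') forces the broken line $\vec a_1,\dots,\vec a_N$ to be strictly convex, hence the edge directions $\vec P_1,\dots,\vec P_{N-1}$ to be angularly monotone between $\vec P_0$ and $\vec P_N=-\vec P_0$. This implication fails already for $N=3$: take $\vec a_1=(0.6,0)$, $\vec a_2=(0.3,0.6)$, $\vec a_3=(0,0.1)$. All three hypotheses of the conjecture are satisfied (angles $0^\circ,\,\approx63^\circ,\,90^\circ$; all in the upper half-plane; no one a convex combination of the other two), yet $\vec P_1\approx(0.447,-0.894)$, $\vec P_2\approx(0.514,0.857)$, $\vec P_3\approx(0.986,0.164)$ are \emph{not} in monotone angular order from $\vec P_0\approx(-0.986,-0.164)$ to $-\vec P_0$, and in fact $\|\vec w(\{1,3\})\|=\|\vec P_1+\vec P_3-\vec P_2\|\approx1.83>1$. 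So your particular dual certificate is not feasible in this instance, and the semicircular-ordering mechanism you sketch cannot be what drives the conjecture in general. Whether a different (outcome-dependent) choice of unit witnesses rescues the argument, or whether the conjecture itself requires a stronger convexity hypothesis than stated, remains open---which is consistent with the paper's decision to leave this as a conjecture.
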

The truth of Conjecture \ref{conj3} would imply the truth of Conjecture \ref{conj2} and, hence, the truth of Conjecture \ref{conj1}.

\section{Miscellaneous joint measurability structures on a qubit}
\label{sec5}
In this section we give explicit construction of the joint measurability structures realizable on a qubit, many of which are not realizable with PVMs on any quantum system, using previous results.
\subsection{Progression from N-Cycle to N-Specker scenarios}\label{subsec5_1}
We start with planar symmetric POVMs $E_k(x_k)=\frac{1}{2}\big(I+\eta x_k \vec{n}_k\cdot\vec{\sigma} \big)$ (cf.~Definition \ref{plansymdef}):
We will examine how the joint measurability structure changes when we change $\eta$ for sets of planar symmetric POVMs, up to $N=6$. All sets geometrically equivalent (cf.~Definition \ref{classdef}) to $\{E_{k_1},E_{k_2},\ldots,E_{k_M}\}$ belong to a class denoted by $[E_{k_1}, E_{k_2},\ldots,E_{k_M}]$. 
\subsubsection{$N=4$ planar symmetric POVMs}
\begin{example}\upshape
According to Proposition~\ref{ncycletheorem}, we have a $4$-Cycle scenario (Fig. \ref{fig4ncyc}) if and only if $\eta\in\left(\frac{\sqrt{2}}{2},\sqrt{2-\sqrt{2}}\right].$
\begin{figure}[H]
\centering
\includegraphics[scale=0.18]{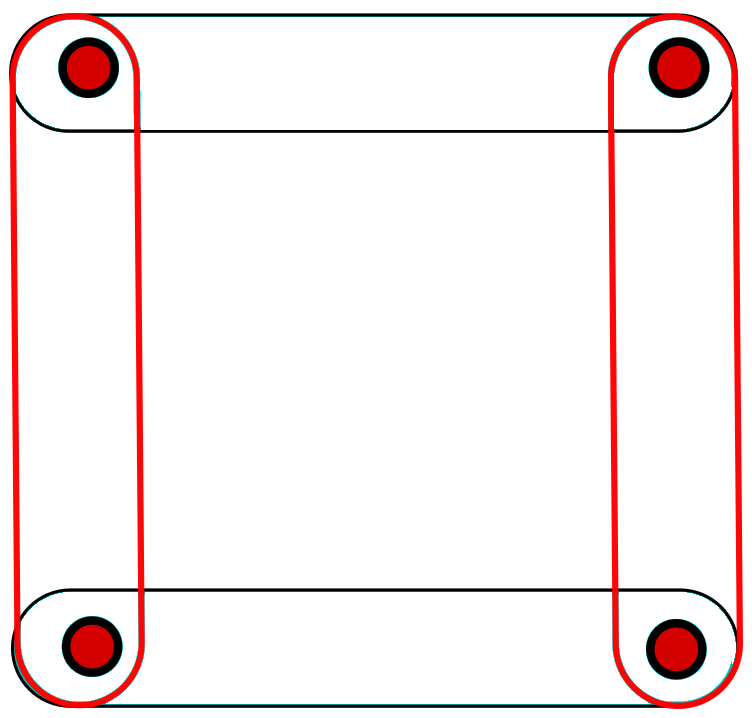}
\label{fig4ncyc}
\caption{4-cycle scenario}
\end{figure}
\end{example}
\begin{example}\upshape
Using Corollary \ref{Leta2}, $[E_1,E_3]$ is compatible when and only when
$$\eta\leq\frac{1}{\sin\frac{\pi}{4}+\cos\frac{\pi}{4}}=\frac{\sqrt{2}}{2}\approx0.70711.$$
On the other hand $[E_1,E_2,E_3]$ is incompatible, by Corollary \ref{3unb}, if and only if
$$\eta>\frac{1}{\cos\frac{\pi}{4}+\sin\frac{\pi}{8}+\sin\frac{\pi}{8}}=\frac{\sqrt{2}}{1+\sqrt{2(2-\sqrt{2})}}\approx0.67913.$$
So, if and only if $\eta\in\left(\frac{\sqrt{2}}{1+\sqrt{2(2-\sqrt{2})}},\frac{\sqrt{2}}{2}\right]$, we have the joint measurability structure $$\{\{E_1,E_2\},\{E_1,E_3\},\{E_2,E_3\}, \{E_2,E_4\},\{E_1,E_4\},\{E_2,E_4\}\},$$ where every pair of POVMs is jointly measurable (but no larger sets are), illustrated in Figure \ref{4all}. The joint measurability structure is a complete graph with four vertices.
\begin{figure}[H]
\centering
\includegraphics[scale=0.18]{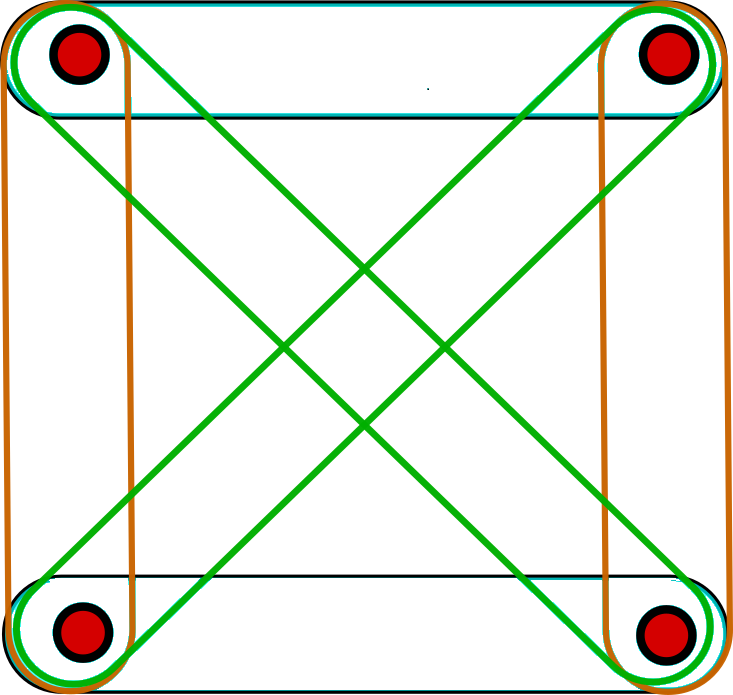}
\caption{The complete graph joint measurability structure of $N=4$ planar symmetric POVMs such that every pair (and no other subset) is jointly measurable.}
\label{4all}
\end{figure}
\end{example}
\begin{example}\upshape
Using Theorem \ref{gluslov}, we know that the set of four POVMs is incompatible if and only if
\begin{equation}
\eta>\frac{1}{4\sin\frac{\pi}{8}}=\frac{1}{2}\sqrt{1+\frac{\sqrt{2}}{2}}\approx0.65328,
\end{equation}
while $[E_1,E_2,E_3]$ is compatible if and only if (cf.~Corollary \ref{3unb})
$$\eta\leq\frac{1}{2\sin\frac{\pi}{8}+\cos\frac{\pi}{4}}=\frac{\sqrt{2}}{1+\sqrt{2(2-\sqrt{2})}}\approx0.67913,$$
meaning that we have a 4-Specker scenario (Figure \ref{fig4specker}) if and only if $\eta\in\left(\frac{1}{2}\sqrt{1+\frac{\sqrt{2}}{2}},\frac{\sqrt{2}}{1+\sqrt{2(2-\sqrt{2})}}\right]$.
\begin{figure}[H]
\centering
\includegraphics[scale=0.18]{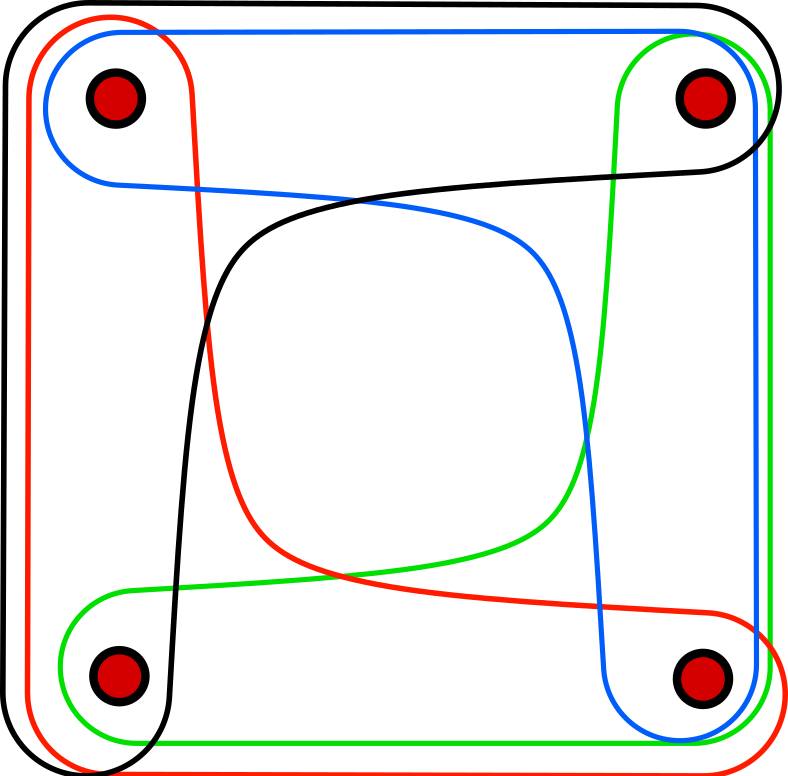}
\caption{4-Specker scenario.}
\label{fig4specker}
\end{figure}
\end{example}
\subsubsection{$N=5$ planar symmetric POVMs}
\begin{example}\upshape
According to Proposition \ref{ncycletheorem}, we have a 5-cycle scenario (Figure \ref{5cycle}) if and only if $\eta\in\left(\frac{1}{2}\left(3+\sqrt{5}-\sqrt{2(5+\sqrt{5})}\right),\frac{4}{\sqrt{5}-1+\sqrt{2(5+\sqrt{5})}}\right]$.
\begin{figure}[H]
\centering
\includegraphics[scale=0.162]{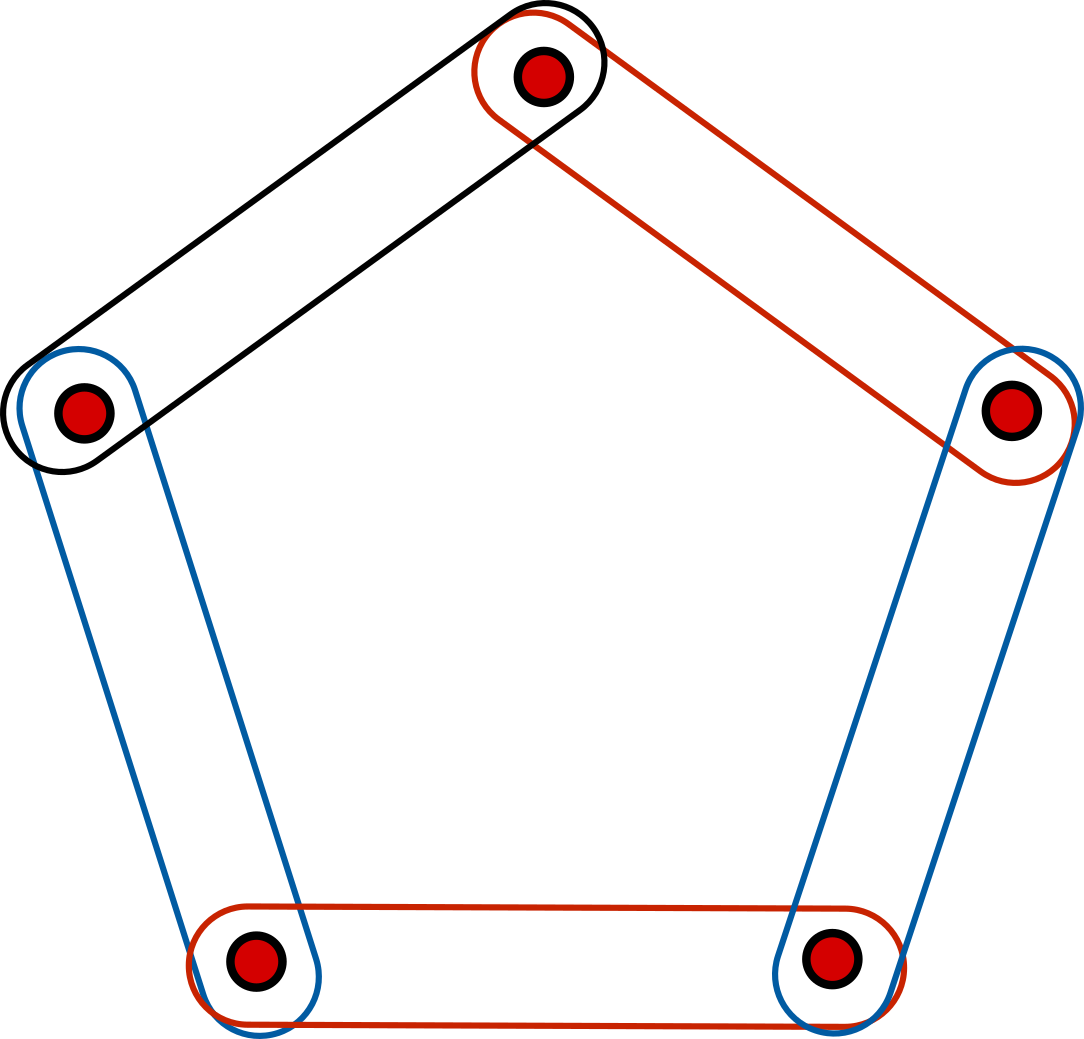}
\caption{5-cycle scenario.}
\label{5cycle}
\end{figure}
\end{example}
\begin{example}\upshape
$[E_1,E_3]$ is compatible if and only if
$$\eta\leq \frac{1}{\sin\frac{\pi}{5}+\cos\frac{\pi}{5}}=\frac{1}{2}\left(3+\sqrt{5}-\sqrt{2(5+\sqrt{5})}\right)\approx0.71592,$$
while $[E_1,E_2,E_3]$ is incompatible if and only if
$$\eta>\frac{1}{2\sin\frac{\pi}{10}+\cos\frac{\pi}{5}}=\frac{1}{11}(1+3\sqrt{5})\approx 0.70075.$$
This means that if and only if $\eta\in\left(\frac{1}{11}(1+3\sqrt{5}),\frac{1}{2}\left(3+\sqrt{5}-\sqrt{2(5+\sqrt{5})}\right)\right]$, we have the following joint measurability structure where every pair in the set of five POVMs is compatible but every other subset is incompatible, i.e.,
\begin{align}
\Big\{&\{E_1,E_2\},\{E_1,E_3\},\{E_1,E_4\},\{E_1,E_5\},\{E_2,E_3\},\nonumber\\
&\{E_2,E_4\},\{E_2,E_5\},\{E_3,E_4\},\{E_3,E_5\},\{E_4,E_5\}\Big\}.\nonumber
\end{align}
 This joint measurability structure therefore corresponds to a complete graph with five vertices (Figure \ref{5complete}). 
\begin{figure}[H]
\centering
\includegraphics[scale=0.162]{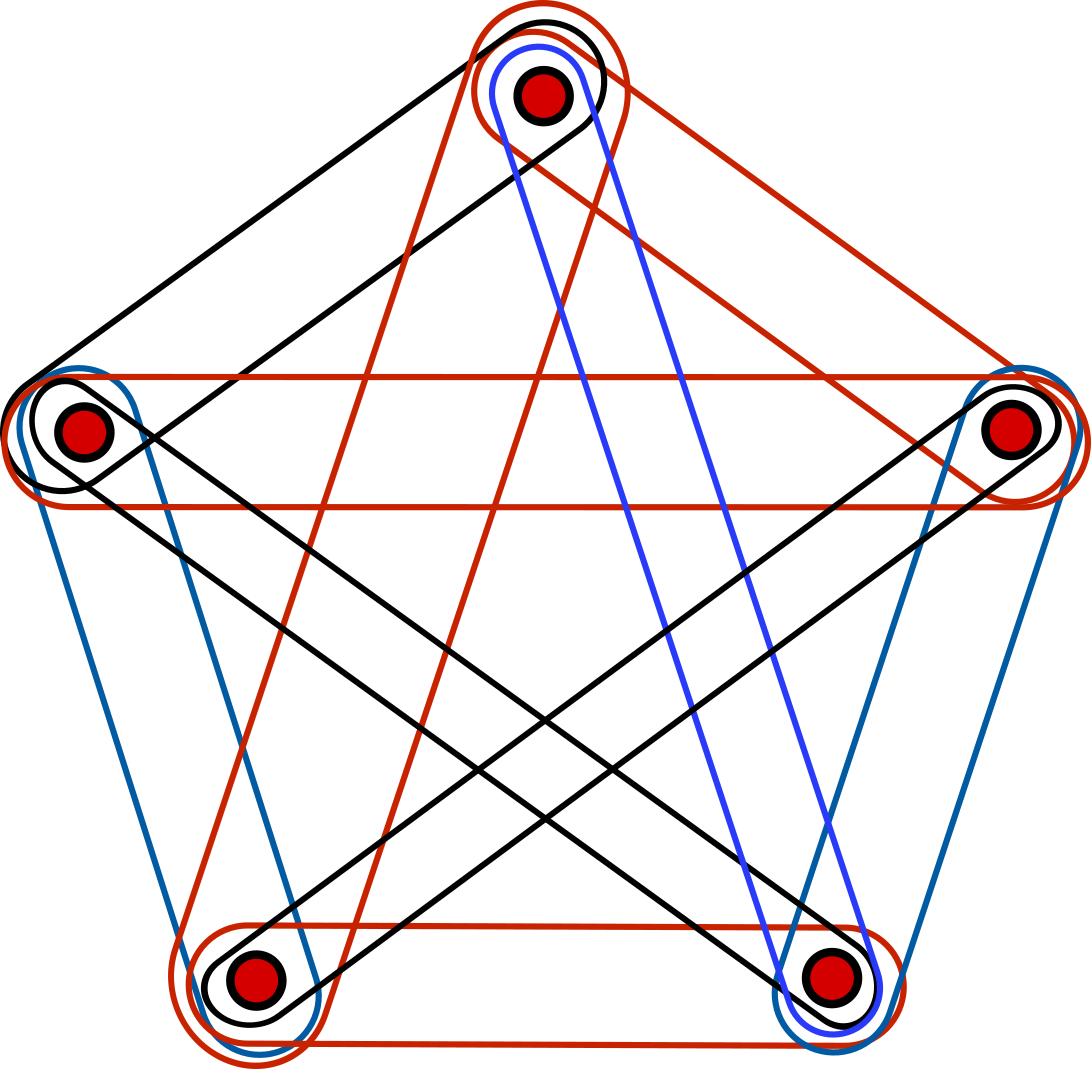}
\caption{Joint measurability structure represented by a complete graph with five vertices.}
\label{5complete}
\end{figure}
\end{example}
\begin{example}\upshape\label{ex5123}
 By Corollary \ref{3unb}, $[E_1,E_2,E_3]$ is compatible if and only if
\begin{equation}
\eta\leq\frac{1}{2\sin\frac{\pi}{10}+\cos\frac{\pi}{5}}=\frac{1}{11}(1+3\sqrt{5})\approx0.70075,
\end{equation}
while $[E_1,E_2,E_4]$ is incompatible by the same corollary if and only if
\begin{align}
\eta&>\frac{1}{\sin\frac{\pi}{10}+\sin\frac{\pi}{5}+\cos\frac{3\pi}{10}}\nonumber\\
&=\frac{4}{\sqrt{5}-1+2\sqrt{10-2\sqrt{5}}}\approx0.67359.
\end{align}
So, if and only if $$\displaystyle \eta\in\left(\frac{4}{\sqrt{5}-1+2\sqrt{10-2\sqrt{5}}},\frac{1}{11}(1+3\sqrt{5})\right],$$ we have the joint measurability structure given by  $$\displaystyle\Big\{\{1,2,3\},\{2,3,4\}, \{3,4,5\},\{4,5,1\},\{5,1,2\}\Big\},$$ represented by Figure \ref{c5123}.

\begin{figure}[H]
\centering
\includegraphics[scale=0.162]{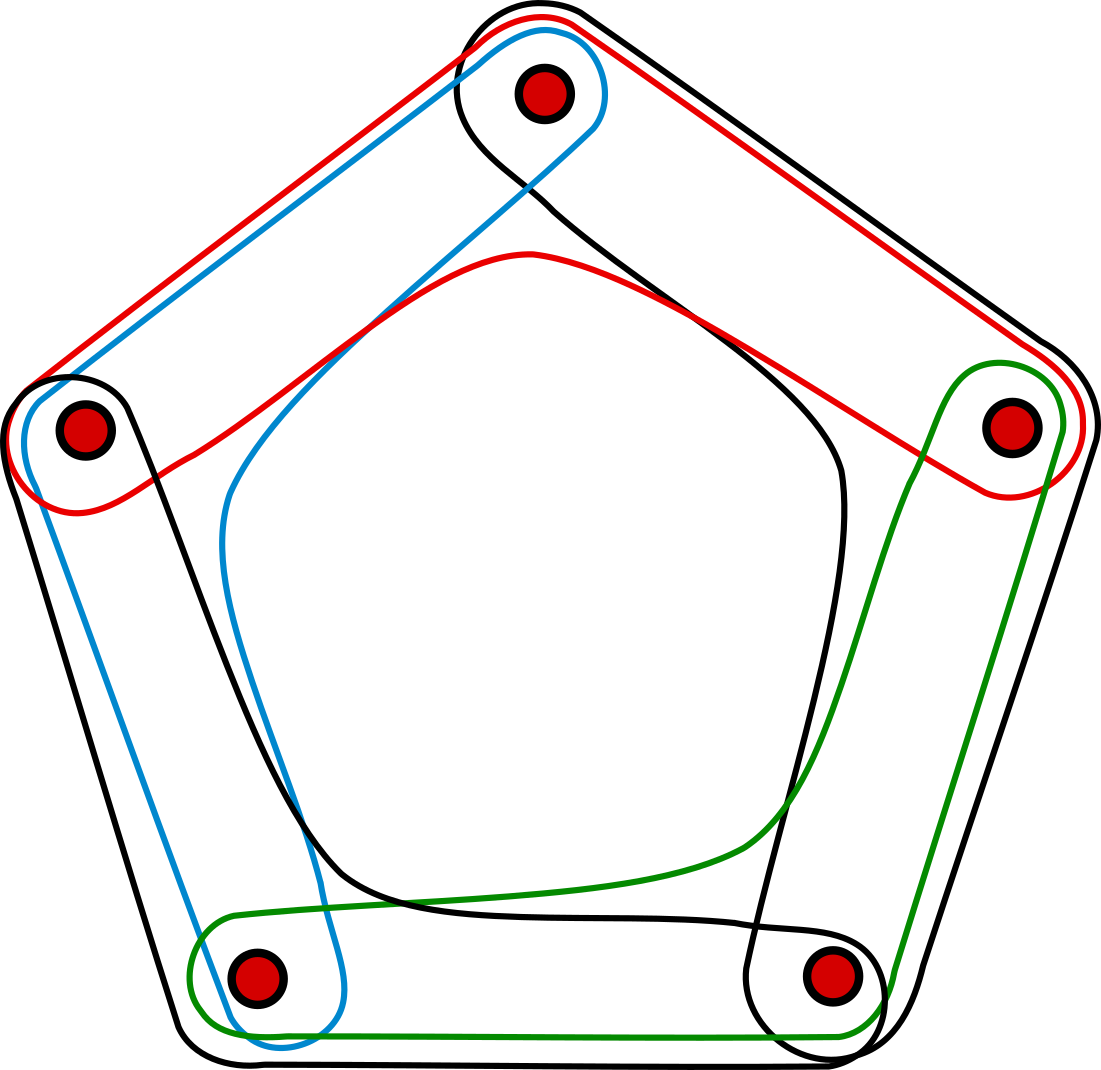}
\caption{Joint measurability structure $\{\{1,2,3\},\{2,3,4\},$ $\{3,4,5\},\{4,5,1\},\{5,1,2\}\}$ realized by five planar symmetric POVMs.}
\label{c5123}
\end{figure}

\end{example}
\begin{example}\upshape\label{ex5speck}
5-Specker scenario is realized for $\displaystyle\eta\in\left(\frac{1}{5}+\frac{1}{\sqrt{5}},\frac{4}{3(\sqrt{5}-1)+\sqrt{10-2\sqrt{5}}}\right]$, according to Corollary \ref{N-Specker's}. Remember that this is only a sufficient condition -- the upper bound is only known to be sufficient for $4$-way compatibility (cf. Lemma \ref{MtuplePlanSymjmc} and Conjecture \ref{conj1}), even though the lower bound is necessary and sufficient for $5$-way incompatibility (Theorem \ref{plansymjmc}).
\begin{figure}[H]
\centering
\includegraphics[scale=0.125]{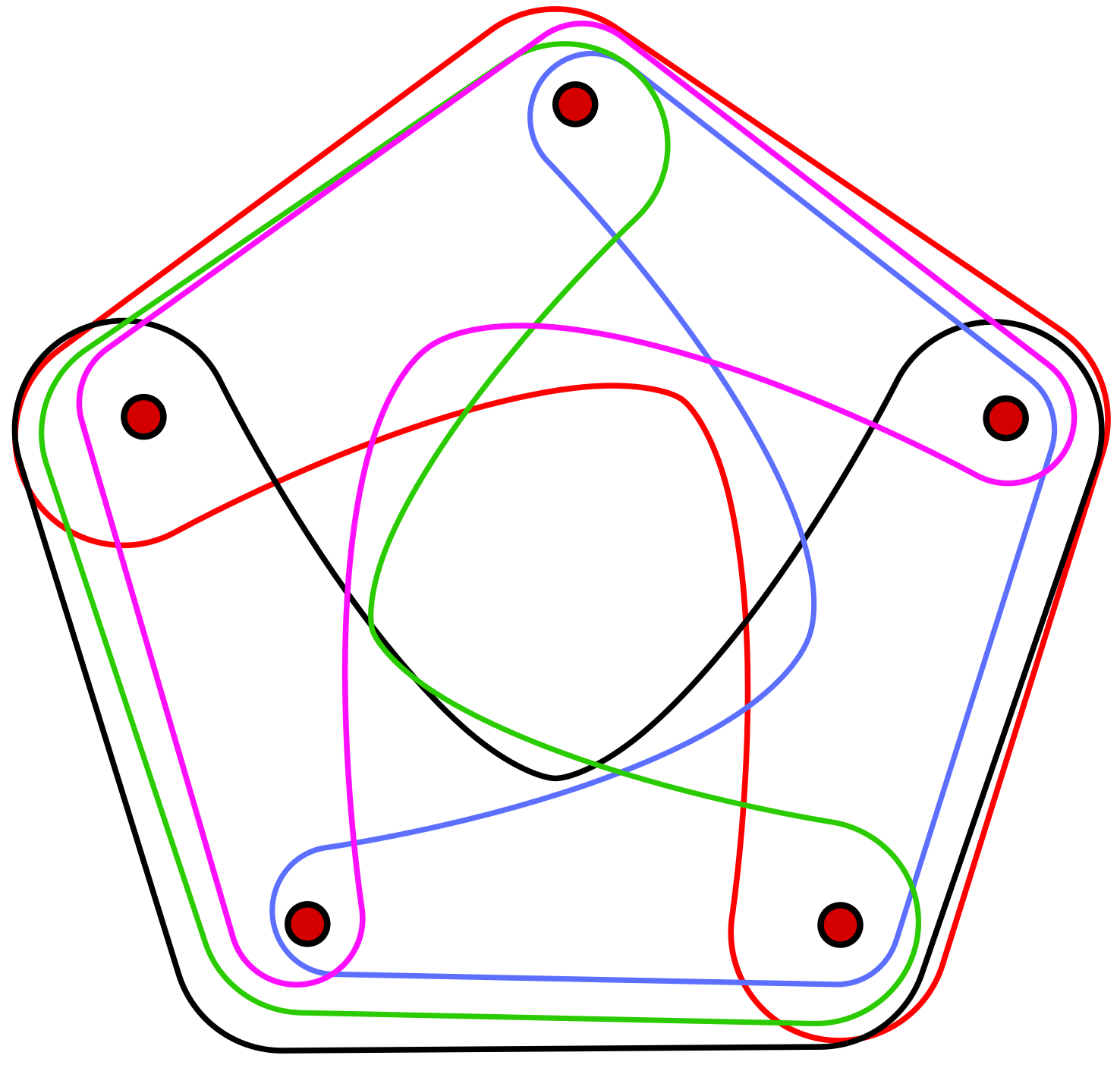}
\caption{5-Specker scenario}
\label{misc5speck}
\end{figure} 
\end{example}
\subsubsection{$N=6$ planar symmetric POVMs}
\begin{example}\upshape
According to Proposition \ref{ncycletheorem}, a $6$-cycle scenario (Fig.~\ref{fig6cycl}) is realized if and only if $\displaystyle\eta\in\left(\sqrt{3}-1,\sqrt{\frac{2}{3}}\right]$.
\begin{figure}[H]
\centering
\includegraphics[scale=0.146]{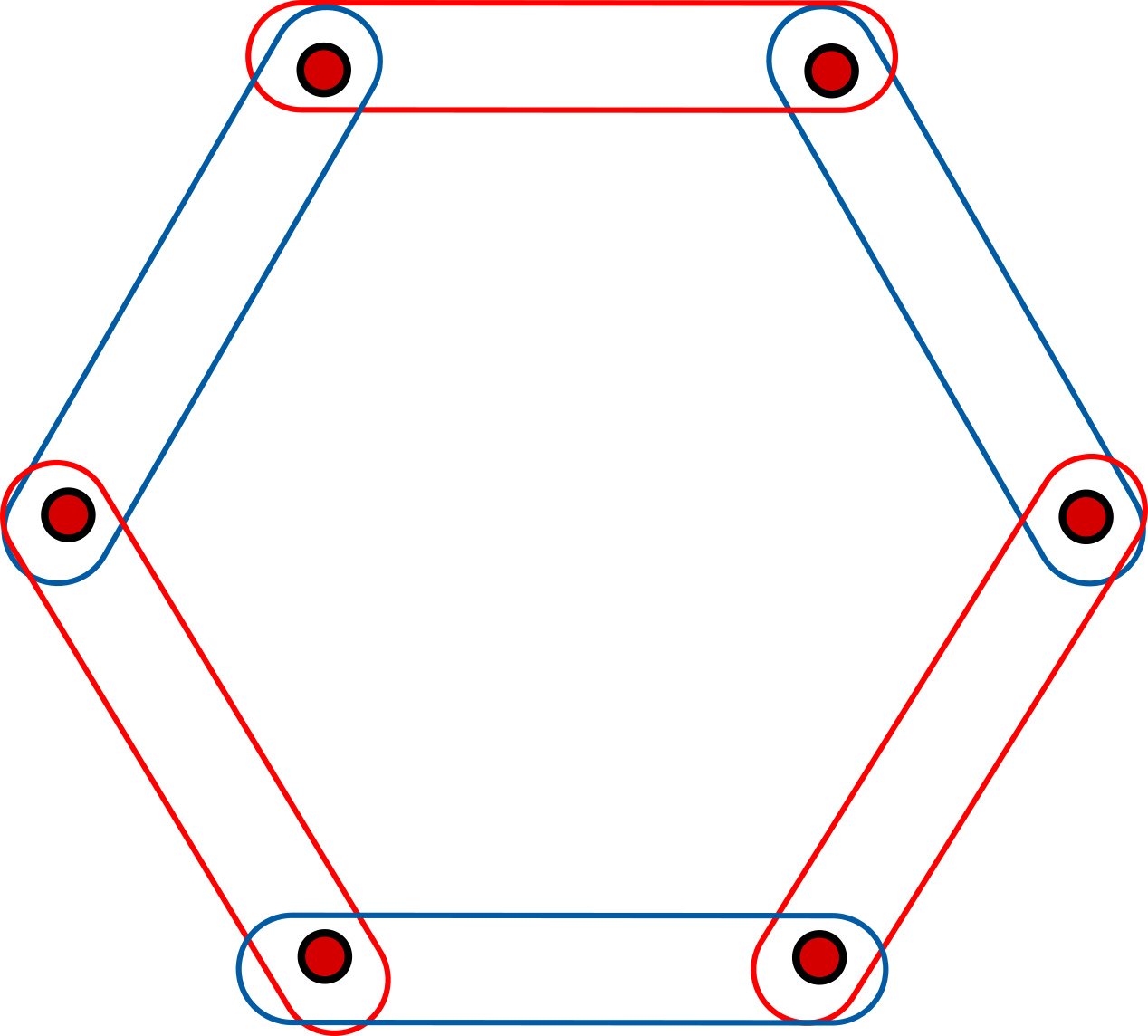}
\caption{6-cycle scenario}
\label{fig6cycl}
\end{figure}
\end{example}
\begin{example}\upshape\label{ex61213}
$[E_1,E_2]$ and $[E_1,E_3]$ are simultaneously compatible if and only if 
\begin{equation}
\eta\leq\frac{1}{\sin\frac{\pi}{6}+\cos\frac{\pi}{6}}=\sqrt{3}-1\approx0.73205,
\end{equation}
while $[E_1,E_2,E_3]$ is incompatible if and only if 
\begin{equation}
\eta>\frac{1}{2\sin\frac{\pi}{12}+\cos\frac{\pi}{6}}=\frac{2}{\sqrt{6}+\sqrt{3}-\sqrt{2}}\approx0.72272.
\end{equation}
The joint measurability structure 
\begin{align}
\Big\{&\{1,2\},\{1,3\},\{2,3\},\{2,4\},\{3,4\},\{3,5\},\{4,5\},\{4,6\},\nonumber\\
&\{5,6\},\{5,1\},\{6,1\},\{6,2\}\Big\}\nonumber
\end{align}
 is then realized if and only if
$\eta\in\left(\frac{2}{\sqrt{6}+\sqrt{3}-\sqrt{2}},\sqrt{3}-1\right]$.
\begin{figure}[H]
\centering
\includegraphics[scale=0.146]{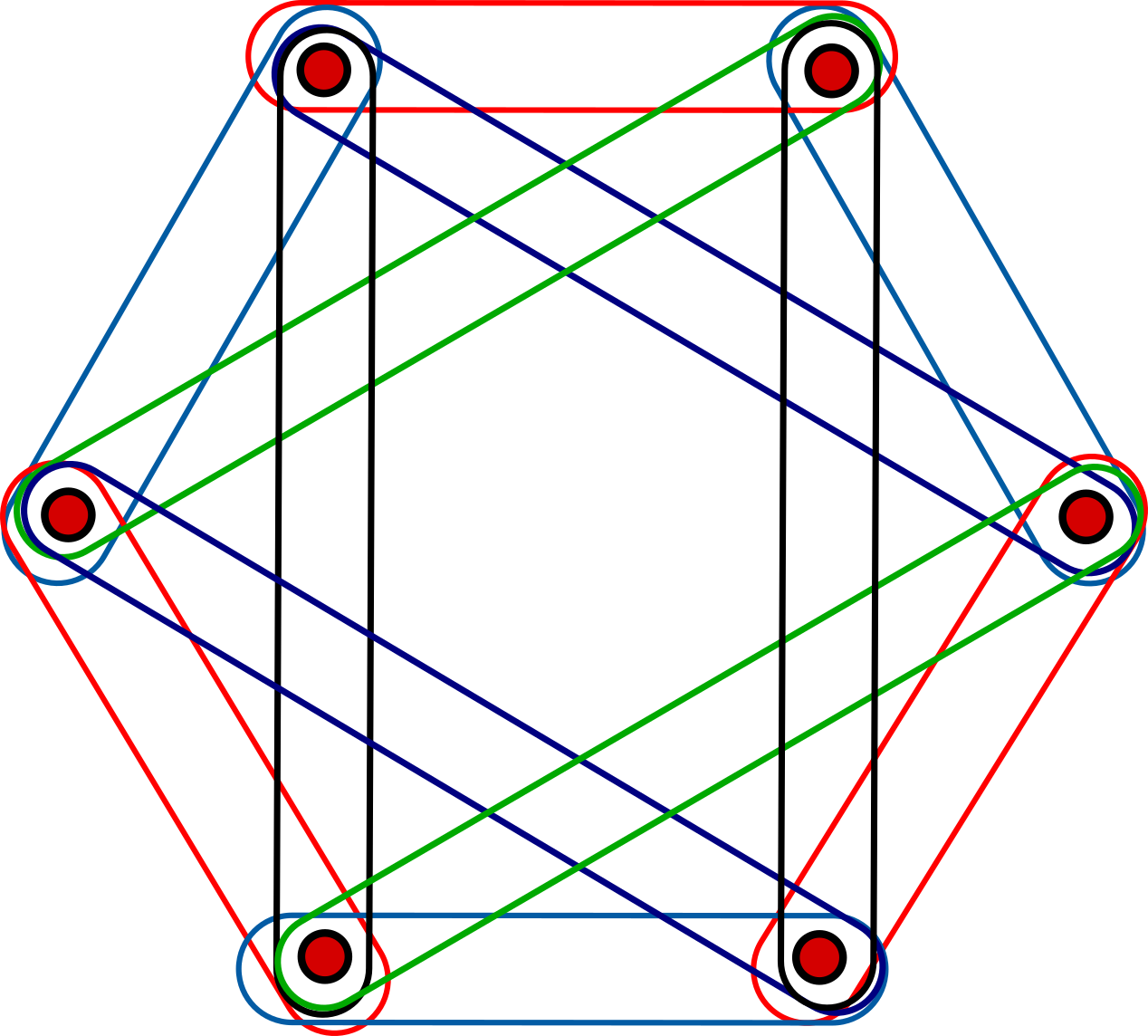}
\caption{Joint measurability structure $\Big\{\{1,2\},\{1,3\},\{2,3\},$ $\{2,4\},\{3,4\},\{3,5\},\{4,5\},\{4,6\},\{5,6\},$ $\{5,1\},\{6,1\},\{6,2\}\Big\}$.}
\end{figure}
\end{example}
\begin{example}\upshape\label{ex6123}
$[E_1,E_2,E_3]$ is compatible if and only if 
\begin{equation}
\eta\leq\frac{1}{2\sin\frac{\pi}{12}+\cos\frac{\pi}{6}}=\frac{2}{\sqrt{6}+\sqrt{3}-\sqrt{2}}\approx0.72272,
\end{equation}
while $[E_1,E_4]$ is incompatible if and only if 
\begin{equation}
\eta>\frac{1}{\sin\frac{\pi}{4}+\cos\frac{\pi}{4}}=\frac{\sqrt{2}}{2}\approx0.70711.
\end{equation}
This means that the compatibility structure $\Big\{\{1,2,3\},\{2,3,4\},\{3,4,5\},\{4,5,6\},\{5,6,1\},\{6,1,2\}\Big\}$ is realized if and only if $\eta\in\left(\frac{\sqrt{2}}{2},\frac{2}{\sqrt{6}+\sqrt{3}-\sqrt{2}}\right]$.
\begin{figure}[H]
\centering
\includegraphics[scale=0.146]{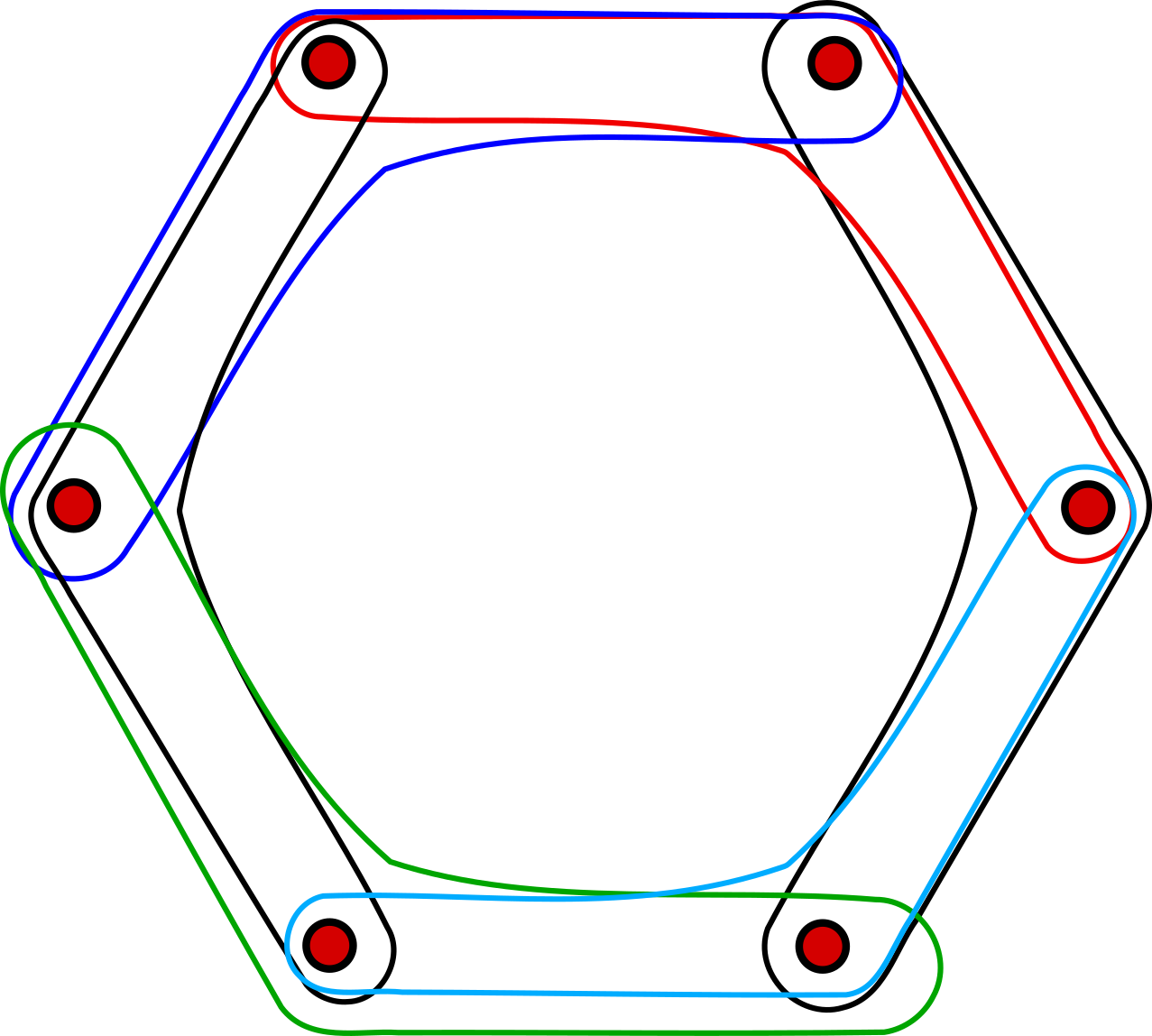}
\caption{Joint measurability structure $\Big\{\{1,2,3\},\{2,3,4\},$ $\{3,4,5\},\{4,5,6\},\{5,6,1\},\{6,1,2\}\Big\}$.}
\end{figure}
\end{example}
\begin{example}\upshape\label{ex612314}
$[E_1,E_2,E_3]$ and $[E_1,E_4]$ are compatible if and only if
\begin{equation}
\eta\leq\frac{1}{\sin\frac{\pi}{4}+\cos\frac{\pi}{4}}=\frac{\sqrt{2}}{2}\approx0.70711,
\end{equation}
while $[E_1,E_2,E_4]$ is incompatible if and only if 
\begin{equation}
\eta>\frac{1}{\sin\frac{\pi}{12}+\sin\frac{\pi}{6}+\cos\frac{\pi}{4}}=\frac{4}{2+\sqrt{2}+\sqrt{6}}\approx0.68236,
\end{equation}
so the joint measurability structure 
\begin{align}
\Big\{&\{1,2,3\},\{2,3,4\},\{3,4,5\},\{4,5,6\},\{5,6,1\},\{6,1,2\},\nonumber\\
&\{1,4\},\{2,5\},\{3,6\}\Big\}
\end{align}
is realized for and only for $\eta\in\left(\frac{4}{2+\sqrt{2}+\sqrt{6}},\frac{\sqrt{2}}{2}\right]$.

\begin{figure}[H]
\centering
\includegraphics[scale=0.146]{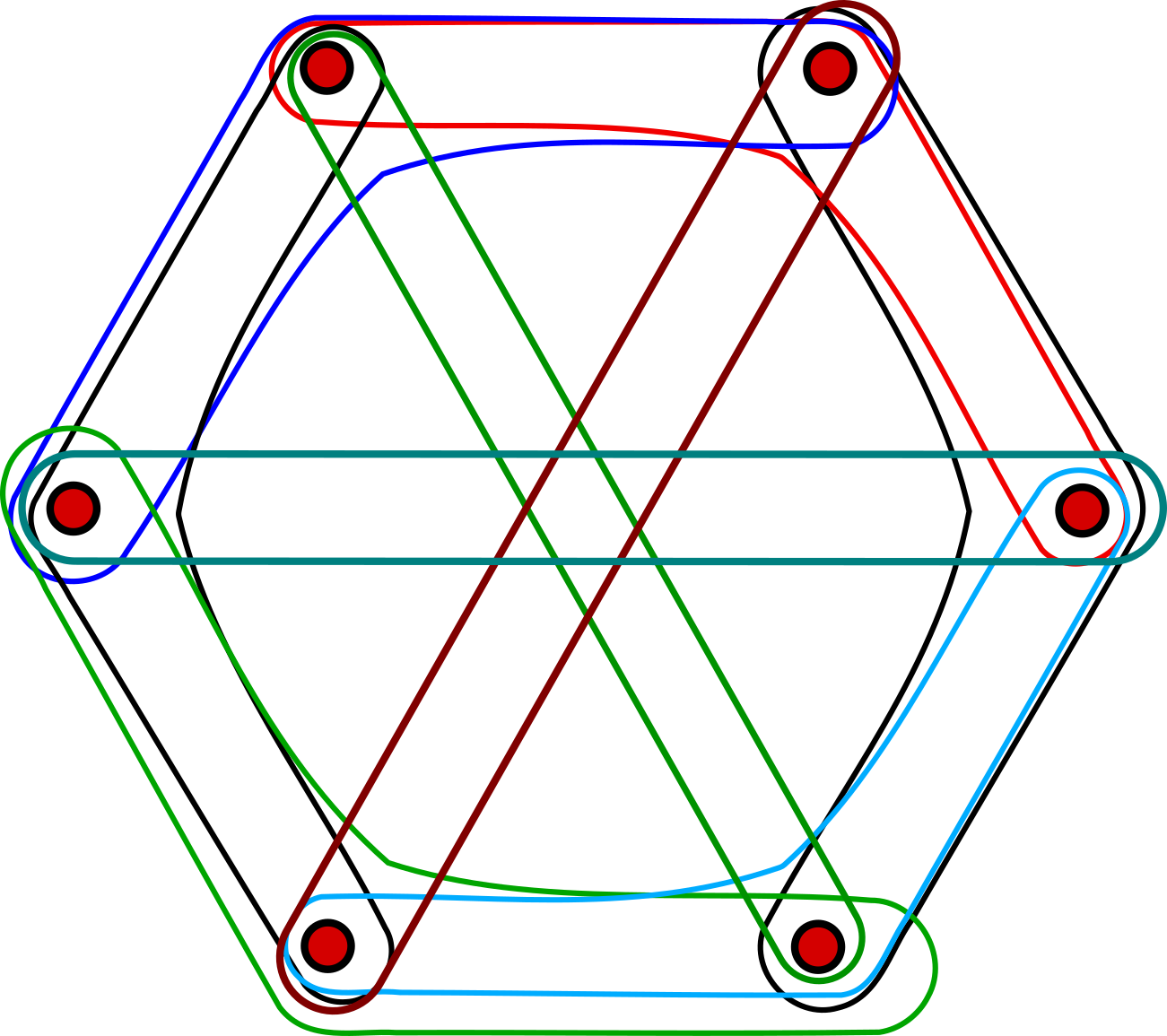}
\caption{Joint measurability structure  $\{\{1,2,3\},$ $\{2,3,4\},\{3,4,5\},\{4,5,6\},\{5,6,1\},\{6,1,2\},\{1,4\},\{2,5\},\{3,6\}\}$.}
\end{figure}
\end{example}
\begin{example}\upshape
According to Corollary \ref{N-Specker's}, 6-Specker scenario is realized for $\eta\in\left(\frac{\sqrt{2+\sqrt{3}}}{3},\frac{2}{1-2\sqrt{2}+2\sqrt{6}}\right]\approx(0.64396, 0.65135]$. Again, the upper bound has to be respected while it is only enough to respect the lower bound so this is only a sufficient condition.
\begin{figure}[H]
\centering
\includegraphics[scale=0.146]{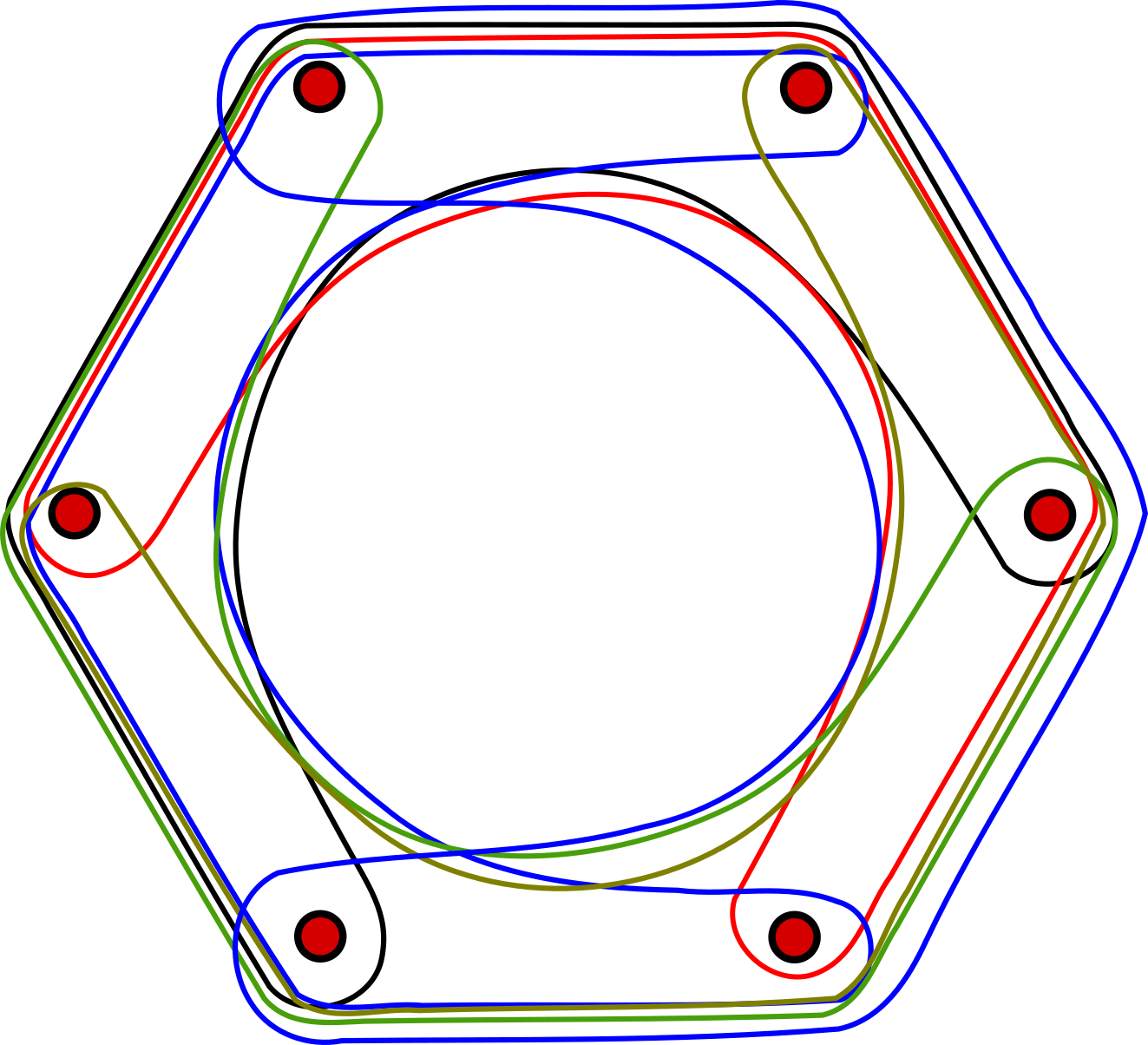}
\caption{6-Specker scenario hypergraph.}
\end{figure}
\end{example}
We could continue enumerating examples for $N=7,8,\ldots$. By continuously decreasing $\eta$ (i.e., by making the measurements more noisy) from $1$ to $0$, we go from $N$ incompatible POVMs to $N$ compatible ones, passing through various joint measurability structures in between. In a sense, we get more and more compatibility as we add noise. For $N=4$, we know the exact compatibility structure for each $\eta\in[0,1]$, which we can verify passing through examples for $4$ POVMs in the next subsection. This follows from the fact that Conjecture \ref{conj1} holds for $N=4$. 
However, passing through the examples for $N=5$, we see that we cannot infer the exact compatibility relation for $\displaystyle\left(\frac{4}{3(\sqrt{5}-1)+\sqrt{10-2\sqrt{5}}},\frac{4}{\sqrt{5}-1+2\sqrt{10-2\sqrt{5}}}\right]\approx\\(0.66014,0.673588]$ (cf. Examples \ref{ex5123} and \ref{ex5speck}). Its hypergraph is somewhere ``in between'' Figs.~\ref{c5123} and \ref{misc5speck}, and is just a $(5,3)$-compatibility hypergraph at least for $\eta$ in some subset of the range $(0.66014,0.673588]$. Namely, the presented upper bound is both necessary and sufficient for $3$-way joint measurability since then and only then both $[E_1,E_2,E_3]$ and $[E_1,E_2,E_4]$ are compatible at the same time. However, the lower bound is not sufficient (but only necessary) to ensure $4$-way incompatibility. In other words, we don't know if and  when (i.e., at which value of $\eta$), as we add noise (i.e. decrease $\eta$ from the upper towards the lower bound), the joint measurability structure changes from a $(5,3)$-compatible scenario to a $5$-Specker scenario. A similar situation arises for $N=6,7,8,\ldots$ and so on. If the conjectured necessity of Eq.~\eqref{l3} (i.e., Conjecture \ref{conj1}) holds, then we can fill in all of these gaps in understanding the joint measurability structure for arbitrary $N$.
\subsection{All joint measurability structures with $4$ vertices are realizable with binary qubit POVMs}\label{subsec5_2}
We first enumerate by hand all unlabeled hypergraphs (representing all conceivable joint measurability structures) that can be realized with $4$ vertices. There are exactly $20$ of them and they are given in Figure \ref{4hgs}. In every case, with the exception of number $6$, it will be possible to realize the desired joint measurability structure as a sub-structure of some larger joint measurability structure realized by $N\geq4$ planar symmetric POVMs. We discuss each joint measurability structure of Figure \ref{4hgs} in turn:

\begin{figure*}
\centering
\includegraphics[scale=0.24]{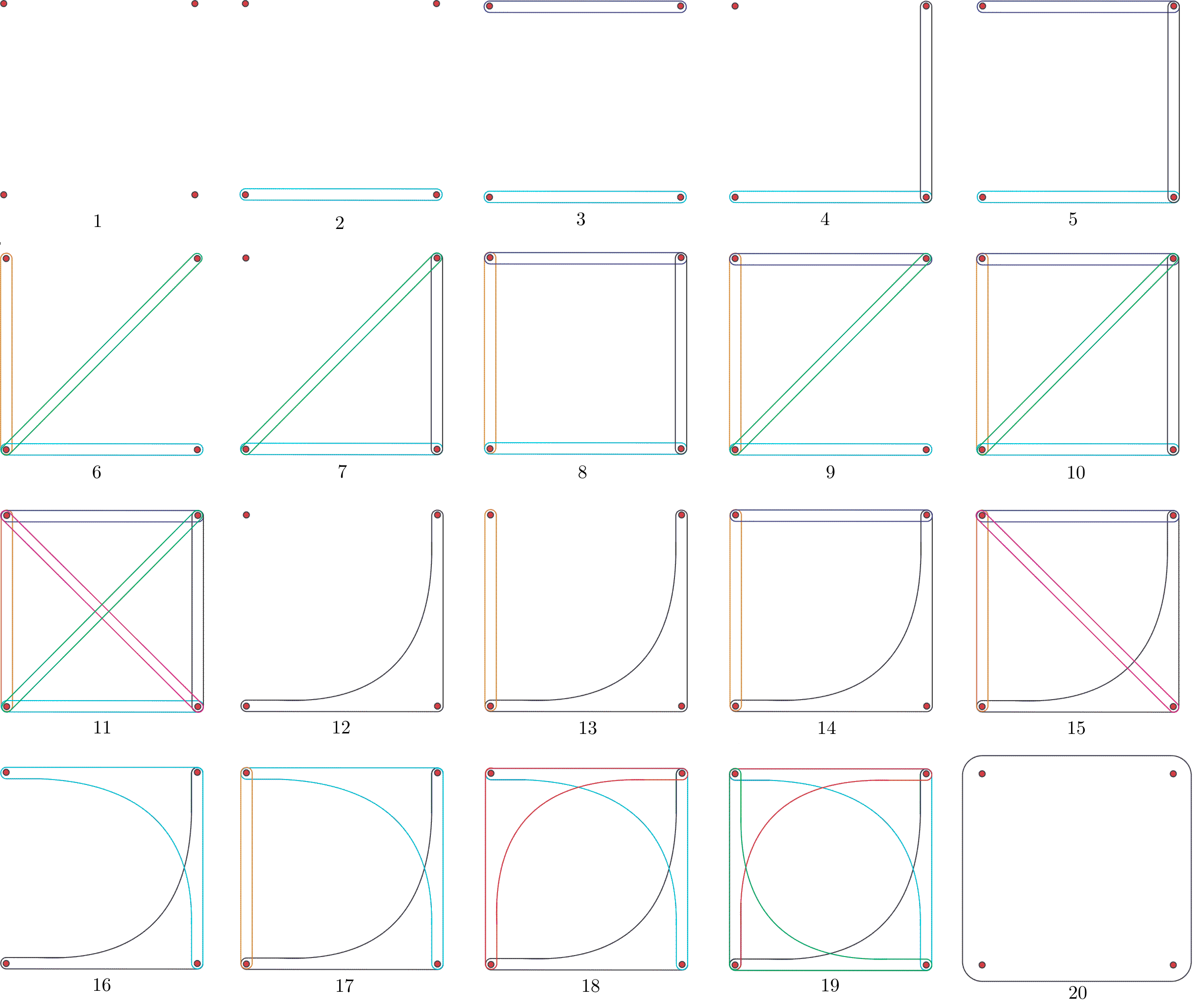}
\caption{All possible unlabelled hypergraphs on 4 vertices.}
\label{4hgs}
\end{figure*}

\begin{enumerate}

\item This is a set of $4$ incompatible POVMs. We can realize it with $4$ planar symmetric POVMs with sharpness
\begin{equation}
\eta>\frac{1}{4\sin\frac{\pi}{8}}\approx 0.65328.
\end{equation}

\item Consider $N=7$ planar symmetric POVMs. They comprise a $7$-cycle for 
\begin{align}
\eta&\in\left(\frac{1}{\sin\frac{\pi}{7}+\cos\frac{\pi}{7}},\frac{1}{\sin\frac{\pi}{14}+\cos\frac{\pi}{14}}\right]\nonumber\\
&\approx\left(0.74915,0.83510\right].
\end{align}
Therefore, $[E_1,E_2]$ is compatible, while $[E_1,E_3]$ and $[E_1,E_4]$ are incompatible. Therefore, the desired hypergraph is realized for $\eta$ in the given interval on the subset $\{E_1,E_2,E_4,E_6\}$.

\item Consider $N=6$ planar symmetric POVMs. They constitute a $6$-cycle for 
\begin{align}
\eta&\in\left(\frac{1}{\sin\frac{\pi}{12}+\cos\frac{\pi}{12}},\frac{1}{\sin\frac{\pi}{6}+\cos\frac{\pi}{6}}\right]\nonumber\\
&\approx(0.73206,0.81649].
\end{align}
Since $[E_1,E_2]$ is compatible and $[E_1,E_4]$ is incompatible the desired hybergraph is realized on $\{E_1,E_2,E_4,E_5\}$ for $\eta$ in the given range.

\item Consider $N=6$ planar symmetric POVMs that constitute a $6$-cycle like in the previous example but now take the following POVMs to realize the desired hypergraph $\{E_1,E_2,E_3,E_5\}$.

\item Consider $N=5$ planar symmetric POVMs. They make a $5$-cycle for 
\begin{align}
\eta&\in\left(\frac{1}{\sin\frac{\pi}{5}+\cos\frac{\pi}{5}},\frac{1}{\sin\frac{\pi}{10}+\cos\frac{\pi}{10}}\right]\nonumber\\
&\approx(0.71593,0.79360].
\end{align}
It is easy to see that the desired hypergraph is realized on $\{E_1,E_2,E_3,E_4\}$ for $\eta$ in the given range.

\item It turns out that this hypergraph cannot be realized with coplanar qubit POVMs with the same purity. We study it more thoroughly in the next section, where we explicitly prove this claim (see Lemma \ref{limit}) but we also show that it can still be realized with unbiased binary qubit POVMs (see Examples \ref{exnotsp} and \ref{exnoncopl}).

\item Consider $N=8$ planar symmetric POVMs. $[E_1,E_2]$ and $[E_1,E_3]$ are both compatible for 
\begin{equation}
\eta\leq\frac{1}{\sin\frac{\pi}{8}+\cos\frac{\pi}{8}}=0.76536,
\end{equation}
while $[E_1,E_2,E_3]$ and $[E_1,E_6]$, $[E_2,E_6]$  are incompatible for 
\begin{equation}
\eta>\frac{1}{2\sin\frac{\pi}{16}+\cos\frac{\pi}{8}}=0.76100.
\end{equation}
Therefore, for 
\begin{align}
\eta&\in\left(\frac{1}{2\sin\frac{\pi}{16}+\cos\frac{\pi}{8},\frac{1}{\sin\frac{\pi}{8}+\cos\frac{\pi}{8}}}\right]\nonumber\\
&\approx(0.76100,0.76536],
\end{align}
we have realized the desired hypergraph on $\{E_1,E_2,E_3,E_6\}$

\item This is a $4$-cycle scenario and it is already described in the previous section how it can be realized with $N=4$ planar symmetric POVMs.

\item Consider $N=7$ planar symmetric POVMs. $[E_1,E_2]$, and $[E_1,E_3]$ are compatible for
\begin{equation}
\eta\leq\frac{1}{\sin\frac{\pi}{7}+\cos\frac{\pi}{7}}=0.74914,
\end{equation}
while $[E_1,E_2,E_3]$ and $[E_1,E_4]$ are incompatible for 
\begin{equation}
\eta>\frac{1}{2\sin\frac{\pi}{14}+\cos\frac{\pi}{7}}=0.74294.
\end{equation}
So for 
\begin{align}
\eta&\in\left(\frac{1}{2\sin\frac{\pi}{14},\cos\frac{\pi}{7}},\frac{1}{\sin\frac{\pi}{7}+\cos\frac{\pi}{7}}\right]\nonumber\\
&\approx(0.74294,0.74914]
\end{align}
we have realized the desired hypergraph on $\{E_1,E_2,E_3,E_6\}$.

\item Take $N=6$ planar symmetric POVMs. $[E_1,E_2]$ and $[E_1,E_3]$ are both compatible for 
\begin{equation}
\eta\leq\frac{1}{\sin\frac{\pi}{6}+\cos\frac{\pi}{6}}=0.73205.
\end{equation}
On the other hand $[E_1,E_2,E_3]$ and $[E_1,E_4]$ are incompatible for
\begin{equation}
\eta>0.72272.
\end{equation}
This means that for 
\begin{align}
\eta&\in\left(\frac{1}{2\sin\frac{\pi}{12}+\cos\frac{\pi}{6}},\frac{1}{\sin\frac{\pi}{6}+\cos\frac{\pi}{6}}\right]\nonumber\\
&\approx(0.72272,0.73205]
\end{align}
the desired hypergraph is realized on $\{E_1,E_2,E_3,E_5\}$.

\item This is a complete graph for $N=4$ and we have already shown its realizability with qubit POVMs in the previous section. 

\item Consider $N=8$ planar symmetric POVMs. $[E_1,E_2,E_3]$ is compatible for 
\begin{equation}
\eta\leq\frac{1}{2\sin\frac{\pi}{16}+\cos\frac{\pi}{8}}=0.76100,
\end{equation}
while $[E_1,E_4]$ is incompatible for 
\begin{equation}
\eta>\frac{1}{\sin\frac{3\pi}{16}+\cos\frac{3\pi}{16}}=0.72096.
\end{equation}
Therefore, we have realized the desired hypergraph on $\{E_1,E_2,E_3,E_6\}$ for 
\begin{align}
\eta&\in\left(\frac{1}{\sin\frac{3\pi}{16}+\cos\frac{3\pi}{16}},\frac{1}{2\sin\frac{\pi}{16}+\cos\frac{\pi}{8}}\right]\nonumber\\
&\approx(0.72096,0.76100].
\end{align}

\item Consider $N=7$ planar symmetric POVMs. $[E_1,E_2,E_3]$ is compatible for 
\begin{equation}
\eta\leq\frac{1}{2\sin\frac{\pi}{14}+\cos\frac{\pi}{7}}=0.74293,
\end{equation}
while $[E_1,E_4]$ is incompatible for 
\begin{equation}
\eta>\frac{1}{\sin\frac{3\pi}{14}+\cos\frac{3\pi}{14}}=0.71159.
\end{equation}
So for 
\begin{align}
\eta&\in\left(\frac{1}{\sin\frac{3\pi}{14}+\cos\frac{3\pi}{14}},\frac{1}{2\sin\frac{\pi}{14}+\cos\frac{\pi}{7}}\right]\nonumber\\
&\approx(0.71159,0.74293],
\end{align}
we have realized the desired hypergraph on 
$\{E_1,E_2,E_3,E_6\}$.

\item Consider $N=6$ planar symmetric POVMs. $[E_1,E_2,E_3]$ is compatible for 
\begin{equation}
\eta\leq\frac{1}{2\sin\frac{\pi}{12}+\cos\frac{\pi}{6}}=0.72271,
\end{equation}
while $[E_1,E_4]$ and $[E_1,E_3,E_5]$ are incompatible for 
\begin{equation}
\eta>\frac{1}{\sin\frac{3\pi}{12}+\cos\frac{3\pi}{12}}=0.70711.
\end{equation}
Therefore, we have realized the desired hypergraph for
\begin{align}
\eta&\in\left(\frac{1}{\sin\frac{3\pi}{12}+\cos\frac{3\pi}{12}},\frac{1}{2\sin\frac{\pi}{12}+\cos\frac{\pi}{6}}\right]\nonumber\\
&\approx(0.70711,0.72271],
\end{align}
on $\{E_1,E_2,E_3,E_5\}$.
\item Consider $N=6$ planar symmetric POVMs. $[E_1,E_2,E_3]$, and $[E_1,E_4]$ are both compatible for  
\begin{equation}
\eta\leq\frac{1}{\sin\frac{3\pi}{12}+\cos\frac{3\pi}{12}}=0.70710,
\end{equation}
while $[E_1,E_2,E_4]$ and $[E_1,E_3,E_5]$ are both incompatible for
\begin{equation}
\eta>\frac{1}{\sin\frac{\pi}{12}+\sin\frac{\pi}{6}+\cos\frac{3\pi}{12}}=0.68217.
\end{equation}
Therefore, we have realized the desired hypergraph on $\{E_1,E_2,E_3,E_5\}$ for
\begin{align}
\eta&\in\left(\frac{1}{\sin\frac{\pi}{12}+\sin\frac{\pi}{6}+\cos\frac{3\pi}{12}},\frac{1}{\sin\frac{3\pi}{12}+\cos\frac{3\pi}{12}}\right]\nonumber\\
&\approx(0.68217,0.70710].
\end{align}
\item Consider $N=6$ planar symmetric POVMs with $\eta$ same as for the hypergraph 14, but now take $\{E_1,E_2,E_3,E_4\}$ to realize the desired hypergraph.
\item Consider $N=5$ planar symmetric POVMs. $[E_1,E_2,E_3]$ is compatible for 
\begin{equation}
\eta\leq\frac{1}{2\sin\frac{\pi}{10}+\cos\frac{\pi}{5}}=0.70074,
\end{equation}
while $[E_1,E_3,E_4]$ in incompatible for
\begin{equation}
\eta>\frac{1}{\sin\frac{\pi}{5}+\sin\frac{\pi}{10}+\cos\frac{3\pi}{5}}=0.67359.
\end{equation}
Therefore, taking
\begin{align}
\eta&\in\left(\frac{1}{\sin\frac{\pi}{5}+\sin\frac{\pi}{10}+\cos\frac{3\pi}{5}},\frac{1}{2\sin\frac{\pi}{10}+\cos\frac{\pi}{5}}\right]\nonumber\\
&\approx(0.67359,0.70074],
\end{align}
we have realized the desired hypergraph on $\{E_1,E_2,E_3,E_4\}$.

\item Consider $N=7$ planar symmetric POVMs. $[E_1,E_2,E_3]$ and $[E_1,E_2,E_4]$ are both compatible for
\begin{equation}
\eta\leq\frac{1}{\sin\frac{\pi}{12}+\sin\frac{\pi}{6}+\cos\frac{3\pi}{12}}=0.68216,
\end{equation}
while $[E_1,E_3,E_5]$ is incompatible for
\begin{equation}
\eta>\frac{1}{2\sin\frac{\pi}{5}+\cos\frac{2\pi}{5}}=0.67359.
\end{equation}
Therefore, for
\begin{align}
\eta&\in\left(\frac{1}{2\sin\frac{\pi}{5}+\cos\frac{2\pi}{5}},\frac{1}{\sin\frac{\pi}{12}+\sin\frac{\pi}{6}+\cos\frac{3\pi}{12}}\right]\nonumber\\
&\approx(0.67359,0.68216],
\end{align}
we have realized the desired hypergraph on $\{E_1,E_2,E_3,E_5\}$.

\item This hypergraph represents a $4$-Specker scenario that was already realized in the previous section. 
\item This is the hypergraph of $4$ compatible POVMs. We can just take $4$ compatible planar symmetric POVMs.
\end{enumerate}

\subsection{Limitations of hypergraph relizability with coplanar unbiased POVMs with the same purity}\label{subsec5_3}
\begin{lemma}\label{limit}
Not all joint measurability structures can be realized with unbiased binary coplanar qubit POVMs with the same purity.
\end{lemma}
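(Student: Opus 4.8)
The plan is to prove the lemma by exhibiting a single joint measurability structure on four vertices that is not so realizable; the cleanest choice is the ``claw'' $K_{1,3}$, i.e.\ the structure (hypergraph~$6$ of Figure~\ref{4hgs}) on $\{E_1,E_2,E_3,E_4\}$ in which $\{E_i,E_4\}$ is jointly measurable for $i=1,2,3$ while $\{E_1,E_2\}$, $\{E_1,E_3\}$ and $\{E_2,E_3\}$ are all incompatible (so that no subset of cardinality $\ge 3$ can be compatible). Suppose, toward a contradiction, that there exist unbiased binary coplanar qubit POVMs $E_k(x_k)=\tfrac12\bigl(I+x_k\eta\,\vec n_k\cdot\vec\sigma\bigr)$, $k=1,\dots,4$, of a common sharpness $\eta$, with unit Bloch vectors $\vec n_1,\dots,\vec n_4$ all lying in one plane, that realize this structure. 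By Remark~\ref{unbiasedremark} only the lines carrying the $\vec n_k$ matter; I shall record the configuration as four points $\ell_1,\dots,\ell_4$ of the real projective line $\mathbb{RP}^1\cong\mathbb{R}/\pi\mathbb{Z}$, equipped with the metric $d(\ell_i,\ell_j)=\min(\phi_{ij},\pi-\phi_{ij})\in[0,\pi/2]$, where $\phi_{ij}\in[0,\pi]$ is the angle between $\vec n_i$ and $\vec n_j$.

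The first step is to translate the (in)compatibility constraints into metric ones. By Corollary~\ref{Leta2}, $\{E_i,E_j\}$ is jointly measurable if and only if $\eta\le\bigl(|\sin(\phi_{ij}/2)|+|\cos(\phi_{ij}/2)|\bigr)^{-1}$, and a short trigonometric check shows the right-hand side equals $f\bigl(d(\ell_i,\ell_j)\bigr)$ with $f(\theta):=\bigl(\sin\tfrac\theta2+\cos\tfrac\theta2\bigr)^{-1}$. Since $f$ is strictly decreasing on $[0,\pi/2]$ from $f(0)=1$ to $f(\pi/2)=1/\sqrt2$, and our structure contains an incompatible pair, we must have $\eta>1/\sqrt2$; hence $\theta^{*}:=f^{-1}(\eta)$ is a well-defined number in $[0,\pi/2)$, and $\{E_i,E_j\}$ is jointly measurable exactly when $d(\ell_i,\ell_j)\le\theta^{*}$. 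Thus a realization of the claw amounts to a choice of points $\ell_1,\dots,\ell_4\in\mathbb{RP}^1$ and a threshold $\theta^{*}\in[0,\pi/2)$ with $d(\ell_i,\ell_4)\le\theta^{*}$ for $i=1,2,3$ and $d(\ell_i,\ell_j)>\theta^{*}$ for all distinct $i,j\in\{1,2,3\}$.

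It remains to show such a configuration cannot exist, which is a short packing argument. Put $\ell_4$ at the point $0$. Because $\theta^{*}<\pi/2$, the ball $A=\{x\in\mathbb{RP}^1:d(x,0)\le\theta^{*}\}$ is a single arc of length $2\theta^{*}<\pi$ centred at $0$, whose points we may parametrize by the real interval $[-\theta^{*},\theta^{*}]$ (with $\pm\theta^{*}$ corresponding to the two endpoints); all of $\ell_1,\ell_2,\ell_3$ lie in $A$. Relabelling the three leaves if necessary (the claw is symmetric in them), let their parameters be $y_1\le y_2\le y_3$, so that $y_1,y_3\in[-\theta^{*},\theta^{*}]$ and therefore $y_3-y_1\le 2\theta^{*}$. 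For sorted indices, $d(\ell_i,\ell_j)=\min\bigl(y_j-y_i,\;\pi-(y_j-y_i)\bigr)\le y_j-y_i$; hence the incompatibilities $d(\ell_1,\ell_2)>\theta^{*}$ and $d(\ell_2,\ell_3)>\theta^{*}$ force $y_2-y_1>\theta^{*}$ and $y_3-y_2>\theta^{*}$, and adding these gives $y_3-y_1>2\theta^{*}$, contradicting $y_3-y_1\le 2\theta^{*}$. Therefore the claw has no realization by unbiased binary coplanar qubit POVMs of a common sharpness, and the lemma follows.

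The step I expect to be the crux is the one just used: ensuring that the metric of $\mathbb{RP}^1$ does not let the three ``leaf'' lines acquire pairwise distance $>\theta^{*}$ by wrapping ``the long way'' around the circle while staying within $\theta^{*}$ of $\ell_4$. This is exactly what the bound $\theta^{*}<\pi/2$ forbids, and that bound is nothing more than the observation that an incompatible pair of unbiased equal-sharpness qubit POVMs requires $\eta>1/\sqrt2$; it makes the neighbourhood $A$ an honest sub-arc of length strictly below the total circumference $\pi$, which is what lets the problem linearize on $[-\theta^{*},\theta^{*}]$ and the length count close. (If a second, ``denser'' witness is wanted, the same method --- feeding Corollary~\ref{Leta2} together with the coplanar triple criterion of Corollary~\ref{3unbcopjmc} into constraints on line angles --- shows that the structure whose maximal compatible sets are $\{E_1,E_2,E_3\}$, $\{E_1,E_2,E_4\}$ and $\{E_1,E_3,E_4\}$ is likewise unrealizable in this restricted class; but the claw affords the shortest proof.)
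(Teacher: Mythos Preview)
Your proof is correct and uses the same counterexample as the paper (the claw $K_{1,3}$, hypergraph~6 of Figure~\ref{4hgs}; the paper puts the hub at $E_1$ rather than $E_4$, which is irrelevant) and the same key ingredient, Corollary~\ref{Leta2}. The difference is in execution. The paper fixes angles $0<\alpha<\beta<\gamma<\pi$ for the three leaf directions, writes out the six pairwise (in)compatibility inequalities explicitly, and then pushes through a page of sum-to-product identities and case splits to reach $\gamma>\pi$. You instead observe that, since the right-hand side of Corollary~\ref{Leta2} depends only on the projective angle, compatibility becomes the single metric condition $d(\ell_i,\ell_j)\le\theta^{*}$ on $\mathbb{RP}^1$; the obstruction then collapses to a two-line packing argument (three points within $\theta^{*}$ of the hub lie in an arc of length $2\theta^{*}$, so the outer two are at most $2\theta^{*}$ apart, contradicting two consecutive gaps exceeding $\theta^{*}$). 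Your route is shorter, conceptually cleaner, and in fact slightly sharper---you only use two of the three leaf--leaf incompatibilities---while the paper's computation makes the angle constraints more explicit at the cost of heavier trigonometry. Both ultimately rest on the same monotonicity of $f(\theta)=(\sin\tfrac{\theta}{2}+\cos\tfrac{\theta}{2})^{-1}$ on $[0,\pi/2]$, which you invoke directly and the paper rederives piecewise.
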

\begin{proof}
We provide an example of a joint measurability structure that does not admit such a realization. The hypergraph for this example is shown in Fig.~\ref{count3}). This hypergraph is the same as the one in Fig.~\ref{4hgs} labelled by number $6$.
\begin{figure}[H]
\centering
\includegraphics[scale=0.335]{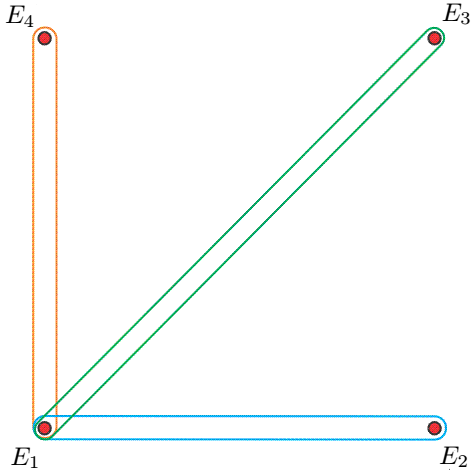}
\caption{Hypergraph representation of
(in)compatibility structure $\{\{E_1,E_2\},\{E_1,E_3\},\{E_1,E_4\}\}$.}
\label{count3}
\end{figure}
Suppose that this hypergraph can be realized with binary, unbiased, planar qubit POVMs with the same purity denoted by $\{E_1,E_2,E_3,E_4\}$. Without loss of generality, we can take the directions of their Bloch vectors as in Fig.~\ref{blochcount3}, i.e., that they are of the form
\begin{equation}
E_{k}(x_k)=\frac{1}{2}\left(I+x_k\eta\vec{n}_k\cdot\vec{\sigma}\right).
\end{equation}
\begin{figure}[H]
\centering
\includegraphics[scale=0.47]{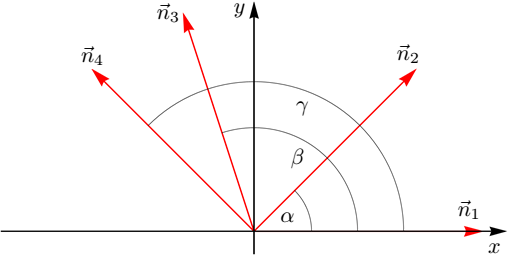}
\caption{Situation of Bloch vectors.}
\label{blochcount3}
\end{figure}
In this case we have that 
\begin{equation}
0<\alpha<\beta<\gamma<\pi.
\label{nejednakost1}
\end{equation}
According to Corollary \ref{Leta2}, the prescribed pairwise joint measurability relations require that following inequalities hold:
\begin{align}
&\eta\leq\frac{1}{\cos\frac{\alpha}{2}+\sin\frac{\alpha}{2}},\nonumber\\
&\eta\leq\frac{1}{\cos\frac{\beta}{2}+\sin\frac{\beta}{2}},\nonumber\\ &\eta\leq\frac{1}{\cos\frac{\gamma}{2}+\sin\frac{\gamma}{2}},\label{nejednakost2}
\end{align}
as well as
\begin{align}
&\eta>\frac{1}{\cos\frac{\beta-\alpha}{2}+\sin\frac{\beta-\alpha}{2}},\nonumber\\ &\eta>\frac{1}{\cos\frac{\gamma-\alpha}{2}+\sin\frac{\gamma-\alpha}{2}},\nonumber\\ &\eta>\frac{1}{\cos\frac{\gamma-\beta}{2}+\sin\frac{\gamma-\beta}{2}}
\label{nejednakost3}
\end{align}
Using first inequalities from both Eqs.~\eqref{nejednakost2} and \eqref{nejednakost3}, we get 
\begin{equation}
\cos\frac{\beta-\alpha}{2}+\sin\frac{\beta-\alpha}{2} >\cos\frac{\alpha}{2}+\sin\frac{\alpha}{2}
\end{equation}
Rearranging terms in this new inequality we find 
\begin{equation}
\cos\frac{\beta-\alpha}{2}-\cos\frac{\alpha}{2}>\sin\frac{\alpha}{2}-\sin\frac{\beta-\alpha}{2}.
\end{equation}
Applying sum to product rules this becomes 
\begin{equation}
\sin\frac{\beta}{4}\sin\frac{2\alpha-\beta}{4}>\sin\frac{2\alpha-\beta}{4}\cos\frac{\beta}{4}.
\label{nj11}
\end{equation}
Inequality \eqref{nj11} branches into two cases:
\begin{align}
\tan\frac{\beta}{4}&>1\quad\text{for}\quad 2\alpha>\beta,\quad\text{or}\\
\tan\frac{\beta}{4}&<1\quad\text{for}\quad 2\alpha<\beta,
\end{align}
which reduces to 
\begin{align}
\beta>&\pi\quad\text{for}\quad 2\alpha>\beta,\quad\text{or}\\
\beta<&\pi\quad\text{for}\quad 2\alpha<\beta.
\end{align}
We take only the possibility that is consistent with the constraint from Eq.~\eqref{nejednakost1} (in particular, $\beta<\pi$). Hence, we consider the case $2\alpha<\beta$. In the similar succession of steps combining the second relation in Eq.~\eqref{nejednakost3} with the first one in Eq.~\eqref{nejednakost2} as well as by combining the third relation from Eq.~\eqref{nejednakost3} with the second one from Eq.~\eqref{nejednakost2} we obtain $2\alpha<\gamma$ and $2\beta<\gamma$. In total, we obtain that the following constraints must hold:
\begin{equation}
2\alpha<\gamma,\quad 2\beta<\gamma, \quad 2\alpha<\beta.
\label{nejednakost4}
\end{equation}
Now if we combine the second relation from Eq.~\eqref{nejednakost3} with the third one from Eq.~\eqref{nejednakost2} we will get 
\begin{equation}
\cos\frac{\gamma-\alpha}{2}+\sin\frac{\gamma-\alpha}{2}>\cos\frac{\gamma}{2}+\sin\frac{\gamma}{2}.
\end{equation}
Rearranging this becomes:
\begin{equation}
\cos\frac{\gamma-\alpha}{2}-\cos\frac{\gamma}{2}>\sin\frac{\gamma}{2}-\sin\frac{\gamma-\alpha}{2}.
\end{equation}
Applying sum to product rules we get
\begin{equation}
\sin\frac{2\gamma-\alpha}{4}\sin\frac{\alpha}{4}>\sin\frac{\alpha}{4}\cos\frac{2\gamma-\alpha}{4}.
\end{equation}
Using Eq.~\ref{nejednakost1} i.e. that $\alpha\in(0,\pi)$ we can cancel the factor of $\sin\frac{\alpha}{4}$ knowing that it is positive so we further obtain 
\begin{equation}
\tan\frac{2\gamma-\alpha}{4}>1\Rightarrow 2\gamma-\alpha>\pi.
\end{equation}
Combining the first relation from Eq. \ref{nejednakost3} with the second one from Eq. \ref{nejednakost2} and as well as the third relations from both of them, we obtain in a similar succession of steps $2\beta-\alpha>\pi$ and $2\gamma-\beta>\pi$. In total this yields 
\begin{equation}
2\gamma-\alpha>\pi,\quad 2\beta-\alpha>\pi,\quad 2\gamma-\beta>\pi.
\label{nejednakost5}
\end{equation}
Now using the second relations in both Eqs. \eqref{nejednakost4} and \eqref{nejednakost5} we will obtain
\begin{equation}
\gamma-\alpha>\pi\Longrightarrow \gamma>\pi
\end{equation}
which is in contradiction with the constraint given by Eq.~\eqref{nejednakost1}. This means that the system of constraints listed in Eqs.~\eqref{nejednakost1}, \eqref{nejednakost2} and \eqref{nejednakost3} is inconsistent. Hence, the proposed joint measurability structure (Fig.~\ref{count3}) cannot be realized with unbiased binary coplanar qubit POVMs with the same purity.
\end{proof}
However, the hypergraph in Fig.~\ref{count3} can be realized with binary qubit POVMs if we relax some of the constraints. We provide two examples with two distinct ways of achieving this: in one of them we give up on the same purity for the POVMs and in the other one we allow one of them to not be coplanar with the others.
\begin{example}[Giving up on the same purity]\upshape\label{exnotsp} In this example we stay in an equatorial plane of the Bloch ball but we let the POVMs differ in their purity. Here it is enough to choose three POVMs, say $E_1, E_2$ and $E_3$ to have same purity $\eta$ such that the sets $\{E_1,E_2\}$ and $\{E_1,E_3\}$ are compatible but that the set $\{E_2,E_3\}$ is incompatible, and to assign some other purity $\mu$ to $E_4$. The strategy is then to arrange the Bloch vectors of $E_1$, $E_2$ and $E_3$ (using Corollary \ref{Leta2}) such that (in)compatibility relations prescribed in Fig.~\ref{count3} hold. We can then choose $E_4$ to be a measurement in the same direction as $E_1$ such that it's a projective measurement (setting $\mu=1$). This choice ensures that $E_4$ is incompatible with $E_2$ and $E_3$ but compatible with $E_1$, which is what we wanted. One of many possible choices is $E_k(\pm1)=\frac{1}{2}(I\pm\vec{a}_k\cdot\vec{\sigma})$, where $||\vec{a}_{1,2,3}||=\eta=\sqrt{\frac{2}{3}}$ and $||\vec{a}_4||=\mu=1$ with the Bloch vectors arranged as in Fig.~\ref{fexnotsp}.
\begin{figure}
\centering
\includegraphics[scale=0.43]{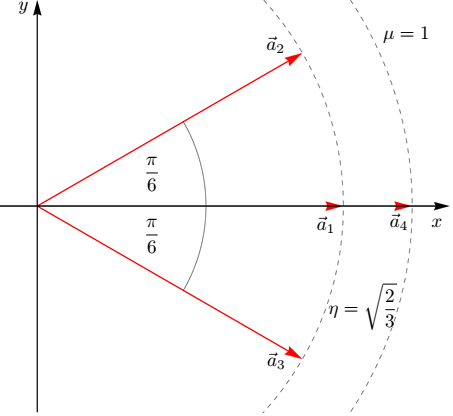}
\caption{Bloch vectors $\vec{a}_k$ corresponding to POVMs $E_k$.}
\label{fexnotsp}
\end{figure}
\end{example}
\begin{example}[Allowing non-coplanar POVMs]\upshape\label{exnoncopl}
Here we keep the same purity $\eta$ of four POVMs $\{E_1,E_2,E_3,E_4\}$ but allow for non-coplanar directions. The strategy is to put the Bloch vectors of $E_2$ and $E_3$ symmetrically in the equatorial $XY$ plane with respect to the Bloch vector of $E_1$ such that they make an angle $\alpha<\frac{\pi}{4}$ with it. Then we put the Bloch vector of $E_4$ in $XZ$ plane such that it makes the same angle $\alpha$ with the direction of the Bloch vector of $E_1$ (see Fig.~\ref{fexnoncopl}). The Bloch vector of $E_4$ then makes an angle $\phi$ with the Bloch vectors of $E_2$ and $E_3$ which satisfies $\alpha<\phi<2\alpha$. Therefore, the range for $\phi$ is a subset of $\left(0,\frac{\pi}{2}\right)$.
\begin{figure}
\centering
\includegraphics[scale=0.3]{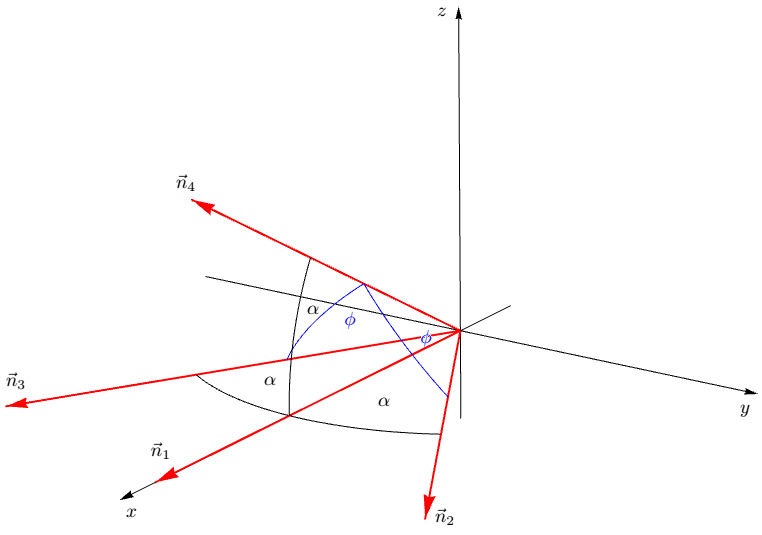}
\caption{Directions of Bloch vectors $\vec{n}_k$ corresponding to POVMs $E_k$ that are used in this example.}
\label{fexnoncopl}
\end{figure}
Now we search for $\eta$ such that the prescribed joint measurability relations in Fig. \ref{count3} are satisfied. According to Corollary \ref{Leta2}, we must have 
\begin{align}
\eta\leq\frac{1}{\sin\frac{\alpha}{2}+\cos\frac{\alpha}{2}},\label{noncopl0}
\end{align} 
to have pairwise compatibility of $E_1$ with each of $E_2$, $E_3$ and $E_4$, while at the same time we must demand
\begin{align}
\eta&>\frac{1}{\sin\frac{\phi}{2}+\cos\frac{\phi}{2}},\label{noncopl1}\\
\eta&>\frac{1}{\sin\frac{2\alpha}{2}+\frac{2\alpha}{2}},\label{noncopl2}
\end{align} 
to exclude other possible pairwise compatibility relations in $\{E_1,E_2,E_3,E_4\}$. Because $\phi<2\alpha<\frac{\pi}{2}$, and from the property that the function  
\begin{equation}
f(x)=\frac{1}{\cos\frac{x}{2}+\sin\frac{x}{2}}.
\end{equation}
is decreasing for $x\in(0,\pi/2]$ (see the graph of this function in Fig~\ref{graphf}), Eq.~\eqref{noncopl2} is automatically satisfied when Eq.~\eqref{noncopl1} is satisfied. Also from this decreasing property and the fact that $\phi>\alpha$ we find that Eqs.~\eqref{noncopl0} and \eqref{noncopl1} leave an open gap for the purity, namely,
\begin{equation}
\eta\in\left(\frac{1}{\cos\frac{\phi}{2}+\sin\frac{\phi}{2}},\frac{1}{\cos\frac{\alpha}{2}+\sin\frac{\alpha}{2}}\right],
\label{j90}
\end{equation}
such that the joint measurability structure from Fig.~\ref{count3} is realized. 
\end{example}

\section{Discussion and Outlook}\label{sec6}
In the preceding sections, we have done a fairly exhaustive study of joint measurability structures realizable on a qubit. One of our key motivations was to answer the following question: Is it possible to realize every conceivable joint measurability structure with qubit POVMs? While we haven't settled this question, the realizability of countably infinite sets of joint measurability structures (e.g., $N$-Specker and $N$-cycle scenarios for all $N$) makes us conjecture that this is the case:

\begin{conjecture}\label{conj4}
	All joint measurability structures are realizable with qubit POVMs.
\end{conjecture}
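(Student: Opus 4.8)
The natural plan is to fuse the combinatorial decomposition behind Ref.~\cite{KHF14} with the geometric flexibility of marginal surgery, replacing the direct sum of auxiliary Hilbert spaces used there by a single Bloch ball. Given a target joint measurability structure $(V,E)$ with $N=|V|$, one first decomposes it into its minimal incompatible sets $\{M_j\}_j$, each of which is either a pair of incompatible vertices or an $N_j$-Specker scenario with $N_j\geq 3$; by Theorem~\ref{2unbjmc} and Corollary~\ref{N-Specker's} every such $M_j$ already has a standalone qubit realization produced by marginal surgery on planar symmetric POVMs. The whole content of Conjecture~\ref{conj4} is that these standalone realizations can be merged: for every $v\in V$ one must exhibit a single binary qubit POVM $E_v$ that simultaneously discharges the role of $v$ in all of the $M_j$ containing it, without spawning any spurious compatible or incompatible subset.

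Concretely, I would parametrize each $E_v$ by a Bloch vector $\vec{a}_v$ and a bias $b_v$ and try to certify all of the required relations using the sufficient conditions of Section~\ref{sec4}. Whenever a subset $r\subseteq V$ is required to be compatible, Theorem~\ref{suffNnsp} furnishes an explicit joint POVM provided the vectors $\{\vec{a}_v\}_{v\in r}$ admit an ordering with $\|\vec{a}_1+\vec{a}_{|r|}\|+\sum_{p=1}^{|r|-1}\|\vec{a}_p-\vec{a}_{p+1}\|\leq 2\bigl(1-\max_{v\in r}|b_v|\bigr)$ --- geometrically, the vectors of $r$ should be tightly clustered and/or the POVMs sufficiently unsharp. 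Whenever $r$ must be incompatible, one applies the Pal--Ghosh necessary condition (Theorem~\ref{necc3povms}) to a triple inside $r$, which forces the corresponding vectors apart. The existence question then becomes a purely geometric feasibility problem: place $N$ biased points on the Bloch ball so that every hyperedge of $E$ obeys the clustering inequality while every minimal non-edge contains a violating triple. The worked examples of Sections~\ref{subsec5_1} and \ref{subsec5_2} solve this when $(V,E)$ is a subhypergraph of a planar symmetric scenario, and Examples~\ref{exnotsp} and \ref{exnoncopl} show how unequal purities and out-of-plane tilts can destroy unwanted symmetric compatibilities; a general recipe ought to interpolate between these --- starting from a large planar symmetric scaffold, assigning several scaffold vertices to each target vertex when necessary, and perturbing purities and polar angles to carve out exactly $E$.

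I expect the main obstacle to be the inherent tension between the two families of constraints: enforcing all the required compatibilities pulls Bloch vectors together and drives purities down, while enforcing all the required incompatibilities pushes them apart and drives purities up, and on a qubit the Bloch ball is ``small'' enough that a consistent assignment need not obviously exist --- the $N=4$ structure labelled $6$ in Fig.~\ref{4hgs} already shows that the most symmetric ansatz fails (Lemma~\ref{limit}). A second, more technical obstacle is that Theorems~\ref{counbiqu} and \ref{suffNnsp} are merely \emph{sufficient} for joint measurability, and outside the Pal--Ghosh condition for triples we lack sharp \emph{necessary} conditions; certifying that a designated $k$-subset is genuinely incompatible --- rather than merely failing our sufficient test --- is thus the crux. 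A promising route is induction on $N$: assume Conjecture~\ref{conj4} for all structures on fewer than $N$ vertices, pick a minimal incompatible set $M$ of the target, realize the restriction of $(V,E)$ to $V\setminus M$ on a qubit, and then graft $M$ on by an explicit marginal-surgery step that perturbs the existing POVMs only slightly --- little enough to keep all previously certified relations intact by continuity of the inequalities in Theorems~\ref{necc3povms} and \ref{suffNnsp}, yet enough to install the new Specker-type incompatibility. Making this perturbation argument quantitative --- bounding the size of the scaffold and the admissible window of purities at each step --- is, I anticipate, where the real difficulty lies.
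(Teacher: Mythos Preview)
The statement you are addressing is a \emph{conjecture}, not a theorem: the paper explicitly leaves it open and offers no proof, so there is nothing to compare your attempt against. What you have written is, by your own framing, a programme rather than a proof --- you flag the two main obstacles (the tension between clustering and spreading Bloch vectors on the two-sphere, and the absence of sharp necessary conditions for incompatibility beyond triples) and conclude that making the inductive perturbation step quantitative ``is where the real difficulty lies.'' That is an honest assessment, and it is in the same spirit as the paper's own remark that ``clever use of [marginal surgery], coupled with some new results on joint measurability of interesting families of POVMs, could potentially be the key,'' without claiming to possess that key.

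One point deserves sharper emphasis. The reason the minimal-incompatible-set decomposition of Ref.~\cite{KHF14} succeeds is that the direct-sum construction makes the pieces \emph{non-interacting}: a subset is compatible in $\bigoplus_j\mathcal H_j$ if and only if it is compatible in every summand $\mathcal H_j$. On a single qubit there is no analogue of this orthogonality, so your inductive ``graft $M$ on'' step must simultaneously respect \emph{every} (in)compatibility relation in the target, not just those touching $M$. Concretely, the $\eta$-window in which planar symmetric POVMs realize an $N_j$-Specker scenario depends on $N_j$ (Corollary~\ref{N-Specker's}), and for two minimal incompatible sets of different sizes sharing a vertex these windows are generically disjoint; your escape hatches (unequal purities, out-of-plane tilts) rescued the single $N=4$ obstruction in Fig.~\ref{4hgs}, but you give no argument that they scale. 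The paper itself proposes a candidate counter-example of exactly this flavour (a star on $N>4$ vertices), so the conjecture remains genuinely open and your outline, while a reasonable line of attack, does not close it.
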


 As we have already noted, our construction of $N$-Specker scenarios for all $N$ on a qubit also renders the recipe for constructing arbitrary joint measurability structures in Ref.~\cite{KHF14} maximally efficient in terms of the dimensions required.

Critical to the constructions in this paper is the method of marginal surgery that lets us use a joint POVM for some given set of compatible POVMs to then construct non-trivial joint measurability structures. We did this by focussing, in particular, on the family of planar symmetric POVMs that were previously studied in Ref.~\cite{ULMH16}. We suspect that clever use of this method, coupled with some new results on joint measurability of interesting families of POVMs, could potentially be the key to demonstrating the realizability of arbitrary joint measurability structures with qubit POVMs. Marginal surgery may also be useful in problems where one needs to introduce incompatibility between specific subsets of POVMs, given a compatible set of POVMs.

Before we speculate on possible applications and directions for future work, let us recall the key results of this paper: 

\begin{enumerate}
	
	\item Introduction of the method of marginal surgery (Section \ref{sec3}). 
	\item Qubit realizability of joint measurability structures using marginal surgery: $N$-cycle and $N$-Specker scenarios for all $N$ (Section \ref{subsec3_1}, Proposition  \ref{ncycletheorem} and Section \ref{subsec3_2}, Corollary \ref{N-Specker's}), $N$-complete scenarios up to $N=5$ and miscellaneous other joint measurability structures (Section \ref{subsec5_1}).
	\item A sufficent condition for the joint measurability of coplanar, unbiased, binary qubit POVMs with the same purity (Section \ref{sec4}, Theorem \ref{counbiqu}).
	\item A sufficient condition for the joint measurability of any set of binary qubit POVMs (Section \ref{sec4}, Theorem \ref{suffNnsp}).
	\item Qubit realizability of arbitrary joint measurability structures comprising $4$ vertices (Section \ref{subsec5_2}).
\end{enumerate}

The open questions raised by this paper and directions for future work include:
\begin{enumerate}
	\item Conjectures \ref{conj1}, \ref{conj2}, and \ref{conj3}, where we have that the truth of Conjecture \ref{conj3} implies that Conjecture \ref{conj2} holds, which in turn implies that Conjecture \ref{conj1} holds. The importance of these conjectures lies in the following observations: if Conjecture \ref{conj1} holds, then we can determine the exact joint measurability structure exhibited by planar symmetric POVMs for every value of $\eta$; if Conjecture \ref{conj2} holds, then we can determine the exact joint measurability structure exhibited by coplanar and unbiased binary qubit POVMs with the same purity; lastly, if Conjecture~\ref{conj3} holds, then we can determine the exact joint measurability structure for arbitrary sets of coplanar and unbiased binary qubit POVMs.
	
	\item Conjecture \ref{conj4}, namely, the realizability of arbitrary joint measurability structures with qubit POVMs. Potential counter-examples to this conjecture could include, for example, a joint measurability structure with $N$ vertices for some $N>4$ such that one vertex is compatible with each of the remaining $N-1$ vertices but all the other pairs of vertices are incompatible. This would be a generalization of the joint measurability structure in Fig.~\ref{count3}. Another candidate counter-example is an $N$-complete joint measurability structure for some $N>5$.	
	
	\item The study of noise-robust contextuality for joint measurability structures not covered by previous work and implications of this for the project of characterizing the set of quantum correlations from principles \cite{KG14,Kunjwal14,Kunjwal16, GKS18}. In particular, given a joint measurability structure that is only realizable with non-projective POVMs in quantum theory, is there a meaningful distinction to be made between quantum and post-quantum theories? How should this distinction be made precise? An early hint that this is indeed the case was provided by the work of Ref.~\cite{KG14} and followed up in Ref.~\cite{Kunjwal16}.
\end{enumerate}

More broadly, the study of incompatibility as a resource for quantum information protocols has been a very active area of research in recent years \cite{Pusey15,SSC19, BCZ19}. Investigating the role that realizability of particular joint measurability structures can play in quantum information could lead to insights on previously under-appreciated aspects of incompatibility as well as their relevance for the foundations and applications of quantum theory. The satisfaction of a particular joint measurability structure by a set of POVMs is usually the first pre-requisite that determines their potential relevance, for example, in generating correlations that can demonstrate nonclassicality, whether in the case of Bell inequality violations \cite{WPF09, BV18} or of contextuality \cite{LSW11, KG14, ZCL17}. We hope that the results of this paper and the questions they motivate will go a long way towards elucidating the role of joint measurability structures in quantum theory and beyond.

\ack
We thank Tom\'a\v s Gonda who suggested that we find the symmetry point group of planar symmetric POVMs. This research was supported in part by the Perimeter Institute for Theoretical Physics, where N.A. was an undergraduate summer research student when most of this work was carried out. Research at Perimeter Institute is supported by the Government of Canada through the Department of Innovation, Science and Economic Development Canada and by the Province of Ontario through the Ministry of Research, Innovation and Science. R.K. is supported by the Program of Concerted Research Actions (ARC) of the Universit\'e libre de Bruxelles.

\bibliographystyle{apsrev4-1}
\nocite{apsrev41Control}
\bibliography{masterbibfile}

\begin{thebibliography}{51}%
\makeatletter
\providecommand \@ifxundefined [1]{%
 \@ifx{#1\undefined}
}%
\providecommand \@ifnum [1]{%
 \ifnum #1\expandafter \@firstoftwo
 \else \expandafter \@secondoftwo
 \fi
}%
\providecommand \@ifx [1]{%
 \ifx #1\expandafter \@firstoftwo
 \else \expandafter \@secondoftwo
 \fi
}%
\providecommand \natexlab [1]{#1}%
\providecommand \enquote  [1]{``#1''}%
\providecommand \bibnamefont  [1]{#1}%
\providecommand \bibfnamefont [1]{#1}%
\providecommand \citenamefont [1]{#1}%
\providecommand \href@noop [0]{\@secondoftwo}%
\providecommand \href [0]{\begingroup \@sanitize@url \@href}%
\providecommand \@href[1]{\@@startlink{#1}\@@href}%
\providecommand \@@href[1]{\endgroup#1\@@endlink}%
\providecommand \@sanitize@url [0]{\catcode `\\12\catcode `\$12\catcode
  `\&12\catcode `\#12\catcode `\^12\catcode `\_12\catcode `\%12\relax}%
\providecommand \@@startlink[1]{}%
\providecommand \@@endlink[0]{}%
\providecommand \url  [0]{\begingroup\@sanitize@url \@url }%
\providecommand \@url [1]{\endgroup\@href {#1}{\urlprefix }}%
\providecommand \urlprefix  [0]{URL }%
\providecommand \Eprint [0]{\href }%
\providecommand \doibase [0]{http://dx.doi.org/}%
\providecommand \selectlanguage [0]{\@gobble}%
\providecommand \bibinfo  [0]{\@secondoftwo}%
\providecommand \bibfield  [0]{\@secondoftwo}%
\providecommand \translation [1]{[#1]}%
\providecommand \BibitemOpen [0]{}%
\providecommand \bibitemStop [0]{}%
\providecommand \bibitemNoStop [0]{.\EOS\space}%
\providecommand \EOS [0]{\spacefactor3000\relax}%
\providecommand \BibitemShut  [1]{\csname bibitem#1\endcsname}%
\let\auto@bib@innerbib\@empty
\bibitem [{\citenamefont {Heinosaari}\ \emph {et~al.}(2008)\citenamefont
  {Heinosaari}, \citenamefont {Reitzner},\ and\ \citenamefont {Stano}}]{HRS08}%
  \BibitemOpen
  \bibfield  {author} {\bibinfo {author} {\bibfnamefont {T.}~\bibnamefont
  {Heinosaari}}, \bibinfo {author} {\bibfnamefont {D.}~\bibnamefont
  {Reitzner}}, \ and\ \bibinfo {author} {\bibfnamefont {P.}~\bibnamefont
  {Stano}},\ }\bibfield  {title} {\enquote {\bibinfo {title} {{Notes on Joint
  Measurability of Quantum Observables}},}\ }\href
  {https://doi.org/10.1007/s10701-008-9256-7} {\bibfield  {journal} {\bibinfo
  {journal} {Foundations of Physics}\ }\textbf {\bibinfo {volume} {38}},\
  \bibinfo {pages} {1133} (\bibinfo {year} {2008})}\BibitemShut {NoStop}%
\bibitem [{\citenamefont {Bell}(1964)}]{Bell64}%
  \BibitemOpen
  \bibfield  {author} {\bibinfo {author} {\bibfnamefont {J.~S.}\ \bibnamefont
  {Bell}},\ }\bibfield  {title} {\enquote {\bibinfo {title} {{On the
  Einstein-Podolsky-Rosen paradox}},}\ }\href
  {http://cds.cern.ch/record/111654} {\bibfield  {journal} {\bibinfo  {journal}
  {Physics}\ }\textbf {\bibinfo {volume} {1}},\ \bibinfo {pages} {195}
  (\bibinfo {year} {1964})}\BibitemShut {NoStop}%
\bibitem [{\citenamefont {Bell}(1966)}]{Bell66}%
  \BibitemOpen
  \bibfield  {author} {\bibinfo {author} {\bibfnamefont {J.~S.}\ \bibnamefont
  {Bell}},\ }\bibfield  {title} {\enquote {\bibinfo {title} {{On the Problem of
  Hidden Variables in Quantum Mechanics}},}\ }\href
  {https://link.aps.org/doi/10.1103/RevModPhys.38.447} {\bibfield  {journal}
  {\bibinfo  {journal} {Rev. Mod. Phys.}\ }\textbf {\bibinfo {volume} {38}},\
  \bibinfo {pages} {447} (\bibinfo {year} {1966})}\BibitemShut {NoStop}%
\bibitem [{\citenamefont {Brunner}\ \emph {et~al.}(2014)\citenamefont
  {Brunner}, \citenamefont {Cavalcanti}, \citenamefont {Pironio}, \citenamefont
  {Scarani},\ and\ \citenamefont {Wehner}}]{BCP14}%
  \BibitemOpen
  \bibfield  {author} {\bibinfo {author} {\bibfnamefont {N.}~\bibnamefont
  {Brunner}}, \bibinfo {author} {\bibfnamefont {D.}~\bibnamefont {Cavalcanti}},
  \bibinfo {author} {\bibfnamefont {S.}~\bibnamefont {Pironio}}, \bibinfo
  {author} {\bibfnamefont {V.}~\bibnamefont {Scarani}}, \ and\ \bibinfo
  {author} {\bibfnamefont {S.}~\bibnamefont {Wehner}},\ }\bibfield  {title}
  {\enquote {\bibinfo {title} {Bell nonlocality},}\ }\href
  {https://link.aps.org/doi/10.1103/RevModPhys.86.419} {\bibfield  {journal}
  {\bibinfo  {journal} {Rev. Mod. Phys.}\ }\textbf {\bibinfo {volume} {86}},\
  \bibinfo {pages} {419} (\bibinfo {year} {2014})}\BibitemShut {NoStop}%
\bibitem [{\citenamefont {Kochen}\ and\ \citenamefont {Specker}(1967)}]{KS67}%
  \BibitemOpen
  \bibfield  {author} {\bibinfo {author} {\bibfnamefont {S.}~\bibnamefont
  {Kochen}}\ and\ \bibinfo {author} {\bibfnamefont {E.~P.}\ \bibnamefont
  {Specker}},\ }\bibfield  {title} {\enquote {\bibinfo {title} {{The Problem of
  Hidden Variables in Quantum Mechanics}},}\ }\href
  {http://www.jstor.org/stable/24902153} {\bibfield  {journal} {\bibinfo
  {journal} {Journal of Mathematics and Mechanics}\ }\textbf {\bibinfo {volume}
  {17}},\ \bibinfo {pages} {59} (\bibinfo {year} {1967})}\BibitemShut {NoStop}%
\bibitem [{\citenamefont {Wolf}\ \emph {et~al.}(2009)\citenamefont {Wolf},
  \citenamefont {Perez-Garcia},\ and\ \citenamefont {Fernandez}}]{WPF09}%
  \BibitemOpen
  \bibfield  {author} {\bibinfo {author} {\bibfnamefont {M.~M.}\ \bibnamefont
  {Wolf}}, \bibinfo {author} {\bibfnamefont {D.}~\bibnamefont {Perez-Garcia}},
  \ and\ \bibinfo {author} {\bibfnamefont {C.}~\bibnamefont {Fernandez}},\
  }\bibfield  {title} {\enquote {\bibinfo {title} {{Measurements Incompatible
  in Quantum Theory Cannot Be Measured Jointly in Any Other No-Signaling
  Theory}},}\ }\href {\doibase 10.1103/PhysRevLett.103.230402} {\bibfield
  {journal} {\bibinfo  {journal} {Phys. Rev. Lett.}\ }\textbf {\bibinfo
  {volume} {103}},\ \bibinfo {pages} {230402} (\bibinfo {year}
  {2009})}\BibitemShut {NoStop}%
\bibitem [{\citenamefont {Quintino}\ \emph {et~al.}(2014)\citenamefont
  {Quintino}, \citenamefont {V\'ertesi},\ and\ \citenamefont
  {Brunner}}]{QVB14}%
  \BibitemOpen
  \bibfield  {author} {\bibinfo {author} {\bibfnamefont {M.~T.}\ \bibnamefont
  {Quintino}}, \bibinfo {author} {\bibfnamefont {T.}~\bibnamefont {V\'ertesi}},
  \ and\ \bibinfo {author} {\bibfnamefont {N.}~\bibnamefont {Brunner}},\
  }\bibfield  {title} {\enquote {\bibinfo {title} {{Joint Measurability,
  Einstein-Podolsky-Rosen Steering, and Bell Nonlocality}},}\ }\href {\doibase
  10.1103/PhysRevLett.113.160402} {\bibfield  {journal} {\bibinfo  {journal}
  {Phys. Rev. Lett.}\ }\textbf {\bibinfo {volume} {113}},\ \bibinfo {pages}
  {160402} (\bibinfo {year} {2014})}\BibitemShut {NoStop}%
\bibitem [{\citenamefont {Uola}\ \emph {et~al.}(2014)\citenamefont {Uola},
  \citenamefont {Moroder},\ and\ \citenamefont {G\"uhne}}]{UMG14}%
  \BibitemOpen
  \bibfield  {author} {\bibinfo {author} {\bibfnamefont {R.}~\bibnamefont
  {Uola}}, \bibinfo {author} {\bibfnamefont {T.}~\bibnamefont {Moroder}}, \
  and\ \bibinfo {author} {\bibfnamefont {O.}~\bibnamefont {G\"uhne}},\
  }\bibfield  {title} {\enquote {\bibinfo {title} {{Joint Measurability of
  Generalized Measurements Implies Classicality}},}\ }\href {\doibase
  10.1103/PhysRevLett.113.160403} {\bibfield  {journal} {\bibinfo  {journal}
  {Phys. Rev. Lett.}\ }\textbf {\bibinfo {volume} {113}},\ \bibinfo {pages}
  {160403} (\bibinfo {year} {2014})}\BibitemShut {NoStop}%
\bibitem [{\citenamefont {Kunjwal}\ \emph {et~al.}(2014)\citenamefont
  {Kunjwal}, \citenamefont {Heunen},\ and\ \citenamefont {Fritz}}]{KHF14}%
  \BibitemOpen
  \bibfield  {author} {\bibinfo {author} {\bibfnamefont {R.}~\bibnamefont
  {Kunjwal}}, \bibinfo {author} {\bibfnamefont {C.}~\bibnamefont {Heunen}}, \
  and\ \bibinfo {author} {\bibfnamefont {T.}~\bibnamefont {Fritz}},\ }\bibfield
   {title} {\enquote {\bibinfo {title} {Quantum realization of arbitrary joint
  measurability structures},}\ }\href {\doibase 10.1103/PhysRevA.89.052126}
  {\bibfield  {journal} {\bibinfo  {journal} {Phys. Rev. A}\ }\textbf {\bibinfo
  {volume} {89}},\ \bibinfo {pages} {052126} (\bibinfo {year}
  {2014})}\BibitemShut {NoStop}%
\bibitem [{\citenamefont {Heinosaari}\ \emph {et~al.}(2016)\citenamefont
  {Heinosaari}, \citenamefont {Miyadera},\ and\ \citenamefont {Ziman}}]{HMZ16}%
  \BibitemOpen
  \bibfield  {author} {\bibinfo {author} {\bibfnamefont {T.}~\bibnamefont
  {Heinosaari}}, \bibinfo {author} {\bibfnamefont {T.}~\bibnamefont
  {Miyadera}}, \ and\ \bibinfo {author} {\bibfnamefont {M.}~\bibnamefont
  {Ziman}},\ }\bibfield  {title} {\enquote {\bibinfo {title} {An invitation to
  quantum incompatibility},}\ }\href {\doibase 10.1088/1751-8113/49/12/123001}
  {\bibfield  {journal} {\bibinfo  {journal} {Journal of Physics A:
  Mathematical and Theoretical}\ }\textbf {\bibinfo {volume} {49}},\ \bibinfo
  {pages} {123001} (\bibinfo {year} {2016})}\BibitemShut {NoStop}%
\bibitem [{\citenamefont {Uola}\ \emph {et~al.}(2016)\citenamefont {Uola},
  \citenamefont {Luoma}, \citenamefont {Moroder},\ and\ \citenamefont
  {Heinosaari}}]{ULMH16}%
  \BibitemOpen
  \bibfield  {author} {\bibinfo {author} {\bibfnamefont {R.}~\bibnamefont
  {Uola}}, \bibinfo {author} {\bibfnamefont {K.}~\bibnamefont {Luoma}},
  \bibinfo {author} {\bibfnamefont {T.}~\bibnamefont {Moroder}}, \ and\
  \bibinfo {author} {\bibfnamefont {T.}~\bibnamefont {Heinosaari}},\ }\bibfield
   {title} {\enquote {\bibinfo {title} {Adaptive strategy for joint
  measurements},}\ }\href {\doibase 10.1103/PhysRevA.94.022109} {\bibfield
  {journal} {\bibinfo  {journal} {Phys. Rev. A}\ }\textbf {\bibinfo {volume}
  {94}},\ \bibinfo {pages} {022109} (\bibinfo {year} {2016})}\BibitemShut
  {NoStop}%
\bibitem [{\citenamefont {Cavalcanti}\ and\ \citenamefont
  {Skrzypczyk}(2016)}]{CS16}%
  \BibitemOpen
  \bibfield  {author} {\bibinfo {author} {\bibfnamefont {D.}~\bibnamefont
  {Cavalcanti}}\ and\ \bibinfo {author} {\bibfnamefont {P.}~\bibnamefont
  {Skrzypczyk}},\ }\bibfield  {title} {\enquote {\bibinfo {title} {Quantitative
  relations between measurement incompatibility, quantum steering, and
  nonlocality},}\ }\href {\doibase 10.1103/PhysRevA.93.052112} {\bibfield
  {journal} {\bibinfo  {journal} {Phys. Rev. A}\ }\textbf {\bibinfo {volume}
  {93}},\ \bibinfo {pages} {052112} (\bibinfo {year} {2016})}\BibitemShut
  {NoStop}%
\bibitem [{\citenamefont {Carmeli}\ \emph {et~al.}(2019)\citenamefont
  {Carmeli}, \citenamefont {Heinosaari},\ and\ \citenamefont {Toigo}}]{CHT19}%
  \BibitemOpen
  \bibfield  {author} {\bibinfo {author} {\bibfnamefont {C.}~\bibnamefont
  {Carmeli}}, \bibinfo {author} {\bibfnamefont {T.}~\bibnamefont {Heinosaari}},
  \ and\ \bibinfo {author} {\bibfnamefont {A.}~\bibnamefont {Toigo}},\
  }\bibfield  {title} {\enquote {\bibinfo {title} {{Quantum Incompatibility
  Witnesses}},}\ }\href {\doibase 10.1103/PhysRevLett.122.130402} {\bibfield
  {journal} {\bibinfo  {journal} {Phys. Rev. Lett.}\ }\textbf {\bibinfo
  {volume} {122}},\ \bibinfo {pages} {130402} (\bibinfo {year}
  {2019})}\BibitemShut {NoStop}%
\bibitem [{\citenamefont {Skrzypczyk}\ \emph {et~al.}(2019)\citenamefont
  {Skrzypczyk}, \citenamefont {\ifmmode \check{S}\else
  \v{S}\fi{}upi\ifmmode~\acute{c}\else \'{c}\fi{}},\ and\ \citenamefont
  {Cavalcanti}}]{SSC19}%
  \BibitemOpen
  \bibfield  {author} {\bibinfo {author} {\bibfnamefont {P.}~\bibnamefont
  {Skrzypczyk}}, \bibinfo {author} {\bibfnamefont {I.}~\bibnamefont {\ifmmode
  \check{S}\else \v{S}\fi{}upi\ifmmode~\acute{c}\else \'{c}\fi{}}}, \ and\
  \bibinfo {author} {\bibfnamefont {D.}~\bibnamefont {Cavalcanti}},\ }\bibfield
   {title} {\enquote {\bibinfo {title} {{All Sets of Incompatible Measurements
  give an Advantage in Quantum State Discrimination}},}\ }\href {\doibase
  10.1103/PhysRevLett.122.130403} {\bibfield  {journal} {\bibinfo  {journal}
  {Phys. Rev. Lett.}\ }\textbf {\bibinfo {volume} {122}},\ \bibinfo {pages}
  {130403} (\bibinfo {year} {2019})}\BibitemShut {NoStop}%
\bibitem [{\citenamefont {Uola}\ \emph {et~al.}(2019)\citenamefont {Uola},
  \citenamefont {Kraft}, \citenamefont {Shang}, \citenamefont {Yu},\ and\
  \citenamefont {G\"uhne}}]{UKS19}%
  \BibitemOpen
  \bibfield  {author} {\bibinfo {author} {\bibfnamefont {R.}~\bibnamefont
  {Uola}}, \bibinfo {author} {\bibfnamefont {T.}~\bibnamefont {Kraft}},
  \bibinfo {author} {\bibfnamefont {J.}~\bibnamefont {Shang}}, \bibinfo
  {author} {\bibfnamefont {X.-D.}\ \bibnamefont {Yu}}, \ and\ \bibinfo {author}
  {\bibfnamefont {O.}~\bibnamefont {G\"uhne}},\ }\bibfield  {title} {\enquote
  {\bibinfo {title} {{Quantifying Quantum Resources with Conic Programming}},}\
  }\href {\doibase 10.1103/PhysRevLett.122.130404} {\bibfield  {journal}
  {\bibinfo  {journal} {Phys. Rev. Lett.}\ }\textbf {\bibinfo {volume} {122}},\
  \bibinfo {pages} {130404} (\bibinfo {year} {2019})}\BibitemShut {NoStop}%
\bibitem [{\citenamefont {Oszmaniec}\ and\ \citenamefont
  {Biswas}(2019)}]{OB19}%
  \BibitemOpen
  \bibfield  {author} {\bibinfo {author} {\bibfnamefont {M.}~\bibnamefont
  {Oszmaniec}}\ and\ \bibinfo {author} {\bibfnamefont {T.}~\bibnamefont
  {Biswas}},\ }\bibfield  {title} {\enquote {\bibinfo {title} {Operational
  relevance of resource theories of quantum measurements},}\ }\href {\doibase
  10.22331/q-2019-04-26-133} {\bibfield  {journal} {\bibinfo  {journal}
  {{Quantum}}\ }\textbf {\bibinfo {volume} {3}},\ \bibinfo {pages} {133}
  (\bibinfo {year} {2019})}\BibitemShut {NoStop}%
\bibitem [{\citenamefont {Heunen}\ \emph {et~al.}(2014)\citenamefont {Heunen},
  \citenamefont {Fritz},\ and\ \citenamefont {Reyes}}]{HFR14}%
  \BibitemOpen
  \bibfield  {author} {\bibinfo {author} {\bibfnamefont {C.}~\bibnamefont
  {Heunen}}, \bibinfo {author} {\bibfnamefont {T.}~\bibnamefont {Fritz}}, \
  and\ \bibinfo {author} {\bibfnamefont {M.~L.}\ \bibnamefont {Reyes}},\
  }\bibfield  {title} {\enquote {\bibinfo {title} {Quantum theory realizes all
  joint measurability graphs},}\ }\href {\doibase 10.1103/PhysRevA.89.032121}
  {\bibfield  {journal} {\bibinfo  {journal} {Phys. Rev. A}\ }\textbf {\bibinfo
  {volume} {89}},\ \bibinfo {pages} {032121} (\bibinfo {year}
  {2014})}\BibitemShut {NoStop}%
\bibitem [{\citenamefont {Spekkens}(2014)}]{Spekkens14}%
  \BibitemOpen
  \bibfield  {author} {\bibinfo {author} {\bibfnamefont {R.~W.}\ \bibnamefont
  {Spekkens}},\ }\bibfield  {title} {\enquote {\bibinfo {title} {{The Status of
  Determinism in Proofs of the Impossibility of a Noncontextual Model of
  Quantum Theory}},}\ }\href {https://doi.org/10.1007/s10701-014-9833-x}
  {\bibfield  {journal} {\bibinfo  {journal} {Foundations of Physics}\ }\textbf
  {\bibinfo {volume} {44}},\ \bibinfo {pages} {1125} (\bibinfo {year}
  {2014})}\BibitemShut {NoStop}%
\bibitem [{\citenamefont {Spekkens}(2005)}]{Spekkens05}%
  \BibitemOpen
  \bibfield  {author} {\bibinfo {author} {\bibfnamefont {R.~W.}\ \bibnamefont
  {Spekkens}},\ }\bibfield  {title} {\enquote {\bibinfo {title} {Contextuality
  for preparations, transformations, and unsharp measurements},}\ }\href
  {\doibase 10.1103/PhysRevA.71.052108} {\bibfield  {journal} {\bibinfo
  {journal} {Phys. Rev. A}\ }\textbf {\bibinfo {volume} {71}},\ \bibinfo
  {pages} {052108} (\bibinfo {year} {2005})}\BibitemShut {NoStop}%
\bibitem [{\citenamefont {Mazurek}\ \emph {et~al.}(2016)\citenamefont
  {Mazurek}, \citenamefont {Pusey}, \citenamefont {Kunjwal}, \citenamefont
  {Resch},\ and\ \citenamefont {Spekkens}}]{MPK16}%
  \BibitemOpen
  \bibfield  {author} {\bibinfo {author} {\bibfnamefont {M.~D.}\ \bibnamefont
  {Mazurek}}, \bibinfo {author} {\bibfnamefont {M.~F.}\ \bibnamefont {Pusey}},
  \bibinfo {author} {\bibfnamefont {R.}~\bibnamefont {Kunjwal}}, \bibinfo
  {author} {\bibfnamefont {K.~J.}\ \bibnamefont {Resch}}, \ and\ \bibinfo
  {author} {\bibfnamefont {R.~W.}\ \bibnamefont {Spekkens}},\ }\bibfield
  {title} {\enquote {\bibinfo {title} {{An experimental test of
  noncontextuality without unphysical idealizations}},}\ }\href
  {https://doi.org/10.1038/ncomms11780} {\bibfield  {journal} {\bibinfo
  {journal} {Nature Communications}\ }\textbf {\bibinfo {volume} {7}},\
  \bibinfo {pages} {1} (\bibinfo {year} {2016})}\BibitemShut {NoStop}%
\bibitem [{\citenamefont {Kunjwal}\ and\ \citenamefont
  {Spekkens}(2015)}]{KS15}%
  \BibitemOpen
  \bibfield  {author} {\bibinfo {author} {\bibfnamefont {R.}~\bibnamefont
  {Kunjwal}}\ and\ \bibinfo {author} {\bibfnamefont {R.~W.}\ \bibnamefont
  {Spekkens}},\ }\bibfield  {title} {\enquote {\bibinfo {title} {{From the
  Kochen-Specker Theorem to Noncontextuality Inequalities without Assuming
  Determinism}},}\ }\href {\doibase 10.1103/PhysRevLett.115.110403} {\bibfield
  {journal} {\bibinfo  {journal} {Phys. Rev. Lett.}\ }\textbf {\bibinfo
  {volume} {115}},\ \bibinfo {pages} {110403} (\bibinfo {year}
  {2015})}\BibitemShut {NoStop}%
\bibitem [{\citenamefont {Kunjwal}\ and\ \citenamefont
  {Spekkens}(2018)}]{KS18}%
  \BibitemOpen
  \bibfield  {author} {\bibinfo {author} {\bibfnamefont {R.}~\bibnamefont
  {Kunjwal}}\ and\ \bibinfo {author} {\bibfnamefont {R.~W.}\ \bibnamefont
  {Spekkens}},\ }\bibfield  {title} {\enquote {\bibinfo {title} {From
  statistical proofs of the kochen-specker theorem to noise-robust
  noncontextuality inequalities},}\ }\href {\doibase
  10.1103/PhysRevA.97.052110} {\bibfield  {journal} {\bibinfo  {journal} {Phys.
  Rev. A}\ }\textbf {\bibinfo {volume} {97}},\ \bibinfo {pages} {052110}
  (\bibinfo {year} {2018})}\BibitemShut {NoStop}%
\bibitem [{\citenamefont {Kunjwal}(2019)}]{Kunjwal19}%
  \BibitemOpen
  \bibfield  {author} {\bibinfo {author} {\bibfnamefont {R.}~\bibnamefont
  {Kunjwal}},\ }\bibfield  {title} {\enquote {\bibinfo {title} {Beyond the
  {C}abello-{S}everini-{W}inter framework: {M}aking sense of contextuality
  without sharpness of measurements},}\ }\href {\doibase
  10.22331/q-2019-09-09-184} {\bibfield  {journal} {\bibinfo  {journal}
  {{Quantum}}\ }\textbf {\bibinfo {volume} {3}},\ \bibinfo {pages} {184}
  (\bibinfo {year} {2019})}\BibitemShut {NoStop}%
\bibitem [{\citenamefont {Kunjwal}(2020)}]{Kunjwal20}%
  \BibitemOpen
  \bibfield  {author} {\bibinfo {author} {\bibfnamefont {R.}~\bibnamefont
  {Kunjwal}},\ }\bibfield  {title} {\enquote {\bibinfo {title} {Hypergraph
  framework for irreducible noncontextuality inequalities from logical proofs
  of the {K}ochen-{S}pecker theorem},}\ }\href {\doibase
  10.22331/q-2020-01-10-219} {\bibfield  {journal} {\bibinfo  {journal}
  {{Quantum}}\ }\textbf {\bibinfo {volume} {4}},\ \bibinfo {pages} {219}
  (\bibinfo {year} {2020})}\BibitemShut {NoStop}%
\bibitem [{\citenamefont {Liang}\ \emph {et~al.}(2011)\citenamefont {Liang},
  \citenamefont {Spekkens},\ and\ \citenamefont {Wiseman}}]{LSW11}%
  \BibitemOpen
  \bibfield  {author} {\bibinfo {author} {\bibfnamefont {Y.-C.}\ \bibnamefont
  {Liang}}, \bibinfo {author} {\bibfnamefont {R.~W.}\ \bibnamefont {Spekkens}},
  \ and\ \bibinfo {author} {\bibfnamefont {H.~M.}\ \bibnamefont {Wiseman}},\
  }\bibfield  {title} {\enquote {\bibinfo {title} {Specker's parable of the
  overprotective seer: A road to contextuality, nonlocality and
  complementarity},}\ }\href {https://doi.org/10.1016/j.physrep.2011.05.001}
  {\bibfield  {journal} {\bibinfo  {journal} {Physics Reports}\ }\textbf
  {\bibinfo {volume} {506}},\ \bibinfo {pages} {1} (\bibinfo {year}
  {2011})}\BibitemShut {NoStop}%
\bibitem [{\citenamefont {Kunjwal}\ and\ \citenamefont {Ghosh}(2014)}]{KG14}%
  \BibitemOpen
  \bibfield  {author} {\bibinfo {author} {\bibfnamefont {R.}~\bibnamefont
  {Kunjwal}}\ and\ \bibinfo {author} {\bibfnamefont {S.}~\bibnamefont
  {Ghosh}},\ }\bibfield  {title} {\enquote {\bibinfo {title} {Minimal
  state-dependent proof of measurement contextuality for a qubit},}\ }\href
  {\doibase 10.1103/PhysRevA.89.042118} {\bibfield  {journal} {\bibinfo
  {journal} {Phys. Rev. A}\ }\textbf {\bibinfo {volume} {89}},\ \bibinfo
  {pages} {042118} (\bibinfo {year} {2014})}\BibitemShut {NoStop}%
\bibitem [{\citenamefont {Zhan}\ \emph {et~al.}(2017)\citenamefont {Zhan},
  \citenamefont {Cavalcanti}, \citenamefont {Li}, \citenamefont {Bian},
  \citenamefont {Zhang}, \citenamefont {Wiseman},\ and\ \citenamefont
  {Xue}}]{ZCL17}%
  \BibitemOpen
  \bibfield  {author} {\bibinfo {author} {\bibfnamefont {X.}~\bibnamefont
  {Zhan}}, \bibinfo {author} {\bibfnamefont {E.~G.}\ \bibnamefont
  {Cavalcanti}}, \bibinfo {author} {\bibfnamefont {J.}~\bibnamefont {Li}},
  \bibinfo {author} {\bibfnamefont {Z.}~\bibnamefont {Bian}}, \bibinfo {author}
  {\bibfnamefont {Y.}~\bibnamefont {Zhang}}, \bibinfo {author} {\bibfnamefont
  {H.~M.}\ \bibnamefont {Wiseman}}, \ and\ \bibinfo {author} {\bibfnamefont
  {P.}~\bibnamefont {Xue}},\ }\bibfield  {title} {\enquote {\bibinfo {title}
  {Experimental generalized contextuality with single-photon qubits},}\ }\href
  {https://doi.org/10.1364/OPTICA.4.000966} {\bibfield  {journal} {\bibinfo
  {journal} {Optica}\ }\textbf {\bibinfo {volume} {4}},\ \bibinfo {pages} {966}
  (\bibinfo {year} {2017})}\BibitemShut {NoStop}%
\bibitem [{\citenamefont {Kunjwal}(2014)}]{Kunjwal14}%
  \BibitemOpen
  \bibfield  {author} {\bibinfo {author} {\bibfnamefont {R.}~\bibnamefont
  {Kunjwal}},\ }\bibfield  {title} {\enquote {\bibinfo {title}
  {{Noncontextuality without determinism and admissible (in)compatibility
  relations: revisiting Specker's parable}},}\ }\href
  {http://pirsa.org/14010102/} {\bibfield  {journal} {\bibinfo  {journal}
  {PIRSA 14010102}\ } (\bibinfo {year} {2014})}\BibitemShut {NoStop}%
\bibitem [{\citenamefont {Kunjwal}(2016)}]{Kunjwal16}%
  \BibitemOpen
  \bibfield  {author} {\bibinfo {author} {\bibfnamefont {R.}~\bibnamefont
  {Kunjwal}},\ }\bibfield  {title} {\enquote {\bibinfo {title} {{Contextuality
  beyond the Kochen-Specker theorem}},}\ }\href
  {https://arxiv.org/abs/1612.07250} {\bibfield  {journal} {\bibinfo  {journal}
  {arXiv preprint arXiv:1612.07250}\ } (\bibinfo {year} {2016})}\BibitemShut
  {NoStop}%
\bibitem [{\citenamefont {Kunjwal}(2017)}]{Kunjwal17}%
  \BibitemOpen
  \bibfield  {author} {\bibinfo {author} {\bibfnamefont {R.}~\bibnamefont
  {Kunjwal}},\ }\bibfield  {title} {\enquote {\bibinfo {title} {{How to go from
  the KS theorem to experimentally testable noncontextuality inequalities}},}\
  }\href {http://pirsa.org/17070059} {\bibfield  {journal} {\bibinfo  {journal}
  {PIRSA 17070059}\ } (\bibinfo {year} {2017})}\BibitemShut {NoStop}%
\bibitem [{\citenamefont {Barrett}(2007)}]{Barrett07}%
  \BibitemOpen
  \bibfield  {author} {\bibinfo {author} {\bibfnamefont {J.}~\bibnamefont
  {Barrett}},\ }\bibfield  {title} {\enquote {\bibinfo {title} {Information
  processing in generalized probabilistic theories},}\ }\href {\doibase
  10.1103/PhysRevA.75.032304} {\bibfield  {journal} {\bibinfo  {journal} {Phys.
  Rev. A}\ }\textbf {\bibinfo {volume} {75}},\ \bibinfo {pages} {032304}
  (\bibinfo {year} {2007})}\BibitemShut {NoStop}%
\bibitem [{\citenamefont {Banik}\ \emph {et~al.}(2013)\citenamefont {Banik},
  \citenamefont {Gazi}, \citenamefont {Ghosh},\ and\ \citenamefont
  {Kar}}]{BGG13}%
  \BibitemOpen
  \bibfield  {author} {\bibinfo {author} {\bibfnamefont {M.}~\bibnamefont
  {Banik}}, \bibinfo {author} {\bibfnamefont {M.~R.}\ \bibnamefont {Gazi}},
  \bibinfo {author} {\bibfnamefont {S.}~\bibnamefont {Ghosh}}, \ and\ \bibinfo
  {author} {\bibfnamefont {G.}~\bibnamefont {Kar}},\ }\bibfield  {title}
  {\enquote {\bibinfo {title} {Degree of complementarity determines the
  nonlocality in quantum mechanics},}\ }\href {\doibase
  10.1103/PhysRevA.87.052125} {\bibfield  {journal} {\bibinfo  {journal} {Phys.
  Rev. A}\ }\textbf {\bibinfo {volume} {87}},\ \bibinfo {pages} {052125}
  (\bibinfo {year} {2013})}\BibitemShut {NoStop}%
\bibitem [{\citenamefont {Stevens}\ and\ \citenamefont {Busch}(2014)}]{SB14}%
  \BibitemOpen
  \bibfield  {author} {\bibinfo {author} {\bibfnamefont {N.}~\bibnamefont
  {Stevens}}\ and\ \bibinfo {author} {\bibfnamefont {P.}~\bibnamefont
  {Busch}},\ }\bibfield  {title} {\enquote {\bibinfo {title} {{Steering,
  incompatibility, and Bell-inequality violations in a class of probabilistic
  theories}},}\ }\href {\doibase 10.1103/PhysRevA.89.022123} {\bibfield
  {journal} {\bibinfo  {journal} {Phys. Rev. A}\ }\textbf {\bibinfo {volume}
  {89}},\ \bibinfo {pages} {022123} (\bibinfo {year} {2014})}\BibitemShut
  {NoStop}%
\bibitem [{\citenamefont {Filippov}\ \emph {et~al.}(2017)\citenamefont
  {Filippov}, \citenamefont {Heinosaari},\ and\ \citenamefont
  {Lepp\"aj\"arvi}}]{FHL17}%
  \BibitemOpen
  \bibfield  {author} {\bibinfo {author} {\bibfnamefont {S.~N.}\ \bibnamefont
  {Filippov}}, \bibinfo {author} {\bibfnamefont {T.}~\bibnamefont
  {Heinosaari}}, \ and\ \bibinfo {author} {\bibfnamefont {L.}~\bibnamefont
  {Lepp\"aj\"arvi}},\ }\bibfield  {title} {\enquote {\bibinfo {title}
  {Necessary condition for incompatibility of observables in general
  probabilistic theories},}\ }\href {\doibase 10.1103/PhysRevA.95.032127}
  {\bibfield  {journal} {\bibinfo  {journal} {Phys. Rev. A}\ }\textbf {\bibinfo
  {volume} {95}},\ \bibinfo {pages} {032127} (\bibinfo {year}
  {2017})}\BibitemShut {NoStop}%
\bibitem [{\citenamefont {Gonda}\ \emph {et~al.}(2018)\citenamefont {Gonda},
  \citenamefont {Kunjwal}, \citenamefont {Schmid}, \citenamefont {Wolfe},\ and\
  \citenamefont {Sainz}}]{GKS18}%
  \BibitemOpen
  \bibfield  {author} {\bibinfo {author} {\bibfnamefont {T.}~\bibnamefont
  {Gonda}}, \bibinfo {author} {\bibfnamefont {R.}~\bibnamefont {Kunjwal}},
  \bibinfo {author} {\bibfnamefont {D.}~\bibnamefont {Schmid}}, \bibinfo
  {author} {\bibfnamefont {E.}~\bibnamefont {Wolfe}}, \ and\ \bibinfo {author}
  {\bibfnamefont {A.~B.}\ \bibnamefont {Sainz}},\ }\bibfield  {title} {\enquote
  {\bibinfo {title} {Almost {Q}uantum {C}orrelations are {I}nconsistent with
  {S}pecker's {P}rinciple},}\ }\href {\doibase 10.22331/q-2018-08-27-87}
  {\bibfield  {journal} {\bibinfo  {journal} {{Quantum}}\ }\textbf {\bibinfo
  {volume} {2}},\ \bibinfo {pages} {87} (\bibinfo {year} {2018})}\BibitemShut
  {NoStop}%
\bibitem [{\citenamefont {Navascu{\'e}s}\ \emph {et~al.}(2015)\citenamefont
  {Navascu{\'e}s}, \citenamefont {Guryanova}, \citenamefont {Hoban},\ and\
  \citenamefont {Ac{\'\i}n}}]{NGH15}%
  \BibitemOpen
  \bibfield  {author} {\bibinfo {author} {\bibfnamefont {M.}~\bibnamefont
  {Navascu{\'e}s}}, \bibinfo {author} {\bibfnamefont {Y.}~\bibnamefont
  {Guryanova}}, \bibinfo {author} {\bibfnamefont {M.~J.}\ \bibnamefont
  {Hoban}}, \ and\ \bibinfo {author} {\bibfnamefont {A.}~\bibnamefont
  {Ac{\'\i}n}},\ }\bibfield  {title} {\enquote {\bibinfo {title} {Almost
  quantum correlations},}\ }\href {https://doi.org/10.1038/ncomms7288}
  {\bibfield  {journal} {\bibinfo  {journal} {Nature communications}\ }\textbf
  {\bibinfo {volume} {6}},\ \bibinfo {pages} {1} (\bibinfo {year}
  {2015})}\BibitemShut {NoStop}%
\bibitem [{\citenamefont {Ac{\'\i}n}\ \emph {et~al.}(2015)\citenamefont
  {Ac{\'\i}n}, \citenamefont {Fritz}, \citenamefont {Leverrier},\ and\
  \citenamefont {Sainz}}]{AFL15}%
  \BibitemOpen
  \bibfield  {author} {\bibinfo {author} {\bibfnamefont {A.}~\bibnamefont
  {Ac{\'\i}n}}, \bibinfo {author} {\bibfnamefont {T.}~\bibnamefont {Fritz}},
  \bibinfo {author} {\bibfnamefont {A.}~\bibnamefont {Leverrier}}, \ and\
  \bibinfo {author} {\bibfnamefont {A.~B.}\ \bibnamefont {Sainz}},\ }\bibfield
  {title} {\enquote {\bibinfo {title} {{A Combinatorial Approach to Nonlocality
  and Contextuality}},}\ }\href {https://doi.org/10.1007/s00220-014-2260-1}
  {\bibfield  {journal} {\bibinfo  {journal} {Communications in Mathematical
  Physics}\ }\textbf {\bibinfo {volume} {334}},\ \bibinfo {pages} {533}
  (\bibinfo {year} {2015})}\BibitemShut {NoStop}%
\bibitem [{\citenamefont {Cabello}(2012)}]{Cabello12}%
  \BibitemOpen
  \bibfield  {author} {\bibinfo {author} {\bibfnamefont {A.}~\bibnamefont
  {Cabello}},\ }\bibfield  {title} {\enquote {\bibinfo {title} {Specker's
  fundamental principle of quantum mechanics},}\ }\href
  {https://arxiv.org/abs/1212.1756} {\bibfield  {journal} {\bibinfo  {journal}
  {arXiv preprint arXiv:1212.1756}\ } (\bibinfo {year} {2012})}\BibitemShut
  {NoStop}%
\bibitem [{\citenamefont {Yu}\ \emph {et~al.}(2010)\citenamefont {Yu},
  \citenamefont {Liu}, \citenamefont {Li},\ and\ \citenamefont {Oh}}]{YLL10}%
  \BibitemOpen
  \bibfield  {author} {\bibinfo {author} {\bibfnamefont {S.}~\bibnamefont
  {Yu}}, \bibinfo {author} {\bibfnamefont {N.-l.}\ \bibnamefont {Liu}},
  \bibinfo {author} {\bibfnamefont {L.}~\bibnamefont {Li}}, \ and\ \bibinfo
  {author} {\bibfnamefont {C.~H.}\ \bibnamefont {Oh}},\ }\bibfield  {title}
  {\enquote {\bibinfo {title} {Joint measurement of two unsharp observables of
  a qubit},}\ }\href {\doibase 10.1103/PhysRevA.81.062116} {\bibfield
  {journal} {\bibinfo  {journal} {Phys. Rev. A}\ }\textbf {\bibinfo {volume}
  {81}},\ \bibinfo {pages} {062116} (\bibinfo {year} {2010})}\BibitemShut
  {NoStop}%
\bibitem [{\citenamefont {Busch}(1986)}]{Busch86}%
  \BibitemOpen
  \bibfield  {author} {\bibinfo {author} {\bibfnamefont {P.}~\bibnamefont
  {Busch}},\ }\bibfield  {title} {\enquote {\bibinfo {title} {Unsharp reality
  and joint measurements for spin observables},}\ }\href {\doibase
  10.1103/PhysRevD.33.2253} {\bibfield  {journal} {\bibinfo  {journal} {Phys.
  Rev. D}\ }\textbf {\bibinfo {volume} {33}},\ \bibinfo {pages} {2253}
  (\bibinfo {year} {1986})}\BibitemShut {NoStop}%
\bibitem [{\citenamefont {Pal}\ and\ \citenamefont {Ghosh}(2011)}]{PG11}%
  \BibitemOpen
  \bibfield  {author} {\bibinfo {author} {\bibfnamefont {R.}~\bibnamefont
  {Pal}}\ and\ \bibinfo {author} {\bibfnamefont {S.}~\bibnamefont {Ghosh}},\
  }\bibfield  {title} {\enquote {\bibinfo {title} {{Approximate joint
  measurement of qubit observables through an Arthur{\textendash}Kelly
  model}},}\ }\href {\doibase 10.1088/1751-8113/44/48/485303} {\bibfield
  {journal} {\bibinfo  {journal} {Journal of Physics A: Mathematical and
  Theoretical}\ }\textbf {\bibinfo {volume} {44}},\ \bibinfo {pages} {485303}
  (\bibinfo {year} {2011})}\BibitemShut {NoStop}%
\bibitem [{\citenamefont {Kupitz}\ and\ \citenamefont {Martini}(1994)}]{KM94}%
  \BibitemOpen
  \bibfield  {author} {\bibinfo {author} {\bibfnamefont {Y.~S.}\ \bibnamefont
  {Kupitz}}\ and\ \bibinfo {author} {\bibfnamefont {H.}~\bibnamefont
  {Martini}},\ }\bibfield  {title} {\enquote {\bibinfo {title} {{The
  Fermat-Torricelli point and isosceles tetrahedra}},}\ }\href
  {https://doi.org/10.1007/BF01228057} {\bibfield  {journal} {\bibinfo
  {journal} {Journal of Geometry}\ }\textbf {\bibinfo {volume} {49}},\ \bibinfo
  {pages} {150} (\bibinfo {year} {1994})}\BibitemShut {NoStop}%
\bibitem [{\citenamefont {Yu}\ and\ \citenamefont {Oh}(2013)}]{YO13}%
  \BibitemOpen
  \bibfield  {author} {\bibinfo {author} {\bibfnamefont {S.}~\bibnamefont
  {Yu}}\ and\ \bibinfo {author} {\bibfnamefont {C.}~\bibnamefont {Oh}},\
  }\bibfield  {title} {\enquote {\bibinfo {title} {Quantum contextuality and
  joint measurement of three observables of a qubit},}\ }\href
  {https://arxiv.org/abs/1312.6470} {\bibfield  {journal} {\bibinfo  {journal}
  {arXiv preprint arXiv:1312.6470}\ } (\bibinfo {year} {2013})}\BibitemShut
  {NoStop}%
\bibitem [{\citenamefont {Plastria}(2006)}]{Plastria06}%
  \BibitemOpen
  \bibfield  {author} {\bibinfo {author} {\bibfnamefont {F.}~\bibnamefont
  {Plastria}},\ }\bibfield  {title} {\enquote {\bibinfo {title} {{Four-point
  Fermat location problems revisited. New proofs and extensions of old
  results}},}\ }\href {\doibase 10.1093/imaman/dpl007} {\bibfield  {journal}
  {\bibinfo  {journal} {IMA J. Manag. Math.}\ }\textbf {\bibinfo {volume}
  {17}},\ \bibinfo {pages} {387} (\bibinfo {year} {2006})}\BibitemShut
  {NoStop}%
\bibitem [{\citenamefont {Baumslag}(1968)}]{Baumslag68}%
  \BibitemOpen
  \bibfield  {author} {\bibinfo {author} {\bibfnamefont {B.}~\bibnamefont
  {Baumslag}},\ }\href {https://cds.cern.ch/record/109046} {\emph {\bibinfo
  {title} {{Schaum's outline of theory and problems of group theory}}}},\
  Schaum's outline\ (\bibinfo  {publisher} {McGraw-Hill},\ \bibinfo {address}
  {New York, NY},\ \bibinfo {year} {1968})\BibitemShut {NoStop}%
\bibitem [{\citenamefont {Rotman}(2012)}]{Rotman12}%
  \BibitemOpen
  \bibfield  {author} {\bibinfo {author} {\bibfnamefont {J.~J.}\ \bibnamefont
  {Rotman}},\ }\href {https://doi.org/10.1007/978-1-4612-4176-8} {\emph
  {\bibinfo {title} {{An Introduction to the Theory of Groups}}}},\ Vol.\
  \bibinfo {volume} {148}\ (\bibinfo  {publisher} {Springer Science \& Business
  Media},\ \bibinfo {year} {2012})\BibitemShut {NoStop}%
\bibitem [{\citenamefont {Rose}(2009)}]{Rose09}%
  \BibitemOpen
  \bibfield  {author} {\bibinfo {author} {\bibfnamefont {H.~E.}\ \bibnamefont
  {Rose}},\ }\href {https://doi.org/10.1007/978-1-84882-889-6} {\emph {\bibinfo
  {title} {{A Course on Finite Groups}}}}\ (\bibinfo  {publisher} {Springer
  Science \& Business Media},\ \bibinfo {year} {2009})\BibitemShut {NoStop}%
\bibitem [{\citenamefont {Johnson}(2012)}]{Johnson12}%
  \BibitemOpen
  \bibfield  {author} {\bibinfo {author} {\bibfnamefont {D.~L.}\ \bibnamefont
  {Johnson}},\ }\href {https://doi.org/10.1007/978-1-4471-0243-4} {\emph
  {\bibinfo {title} {Symmetries}}}\ (\bibinfo  {publisher} {Springer Science \&
  Business Media},\ \bibinfo {year} {2012})\BibitemShut {NoStop}%
\bibitem [{\citenamefont {Pusey}(2015)}]{Pusey15}%
  \BibitemOpen
  \bibfield  {author} {\bibinfo {author} {\bibfnamefont {M.~F.}\ \bibnamefont
  {Pusey}},\ }\bibfield  {title} {\enquote {\bibinfo {title} {Verifying the
  quantumness of a channel with an untrusted device},}\ }\href {\doibase
  10.1364/JOSAB.32.000A56} {\bibfield  {journal} {\bibinfo  {journal} {J. Opt.
  Soc. Am. B}\ }\textbf {\bibinfo {volume} {32}},\ \bibinfo {pages} {A56}
  (\bibinfo {year} {2015})}\BibitemShut {NoStop}%
\bibitem [{\citenamefont {Buscemi}\ \emph {et~al.}(2020)\citenamefont
  {Buscemi}, \citenamefont {Chitambar},\ and\ \citenamefont {Zhou}}]{BCZ19}%
  \BibitemOpen
  \bibfield  {author} {\bibinfo {author} {\bibfnamefont {F.}~\bibnamefont
  {Buscemi}}, \bibinfo {author} {\bibfnamefont {E.}~\bibnamefont {Chitambar}},
  \ and\ \bibinfo {author} {\bibfnamefont {W.}~\bibnamefont {Zhou}},\
  }\bibfield  {title} {\enquote {\bibinfo {title} {{Complete Resource Theory of
  Quantum Incompatibility as Quantum Programmability}},}\ }\href {\doibase
  10.1103/PhysRevLett.124.120401} {\bibfield  {journal} {\bibinfo  {journal}
  {Phys. Rev. Lett.}\ }\textbf {\bibinfo {volume} {124}},\ \bibinfo {pages}
  {120401} (\bibinfo {year} {2020})}\BibitemShut {NoStop}%
\bibitem [{\citenamefont {Bene}\ and\ \citenamefont
  {V{\'{e}}rtesi}(2018)}]{BV18}%
  \BibitemOpen
  \bibfield  {author} {\bibinfo {author} {\bibfnamefont {E.}~\bibnamefont
  {Bene}}\ and\ \bibinfo {author} {\bibfnamefont {T.}~\bibnamefont
  {V{\'{e}}rtesi}},\ }\bibfield  {title} {\enquote {\bibinfo {title}
  {{Measurement incompatibility does not give rise to Bell violation in
  general}},}\ }\href {\doibase 10.1088/1367-2630/aa9ca3} {\bibfield  {journal}
  {\bibinfo  {journal} {New Journal of Physics}\ }\textbf {\bibinfo {volume}
  {20}},\ \bibinfo {pages} {013021} (\bibinfo {year} {2018})}\BibitemShut
  {NoStop}%
\end{thebibliography}%

\appendix
\section{Proofs of various identities}
\label{AppA}
\begin{lemma}\upshape
\label{id1}
Let $\vec{e}(\vec{x})$ be the Bloch vectors described in Theorem \ref{gluslov} and Remark \ref{rem1}. Then
\begin{equation}\label{id1eqn}
\sum_{\vec{x}\in\{\pm1\}^N}^{x_1,\ldots,x_k=+1}\vec{e}(\vec{x})=\begin{cases}\sum\limits_{i= p+\frac{3-N}{2}}^{\frac{N+1}{2}} \left(\cos\frac{(i-1)\pi}{N},\sin\frac{(i-1)\pi}{N},0\right)\\
\hspace{3.5cm} \text{for odd N},\\
\sum\limits_{i= p+1-\frac{N}{2}}^{\frac{N}{2}} \left(\cos\frac{(2i-1)\pi}{2N},\sin\frac{(2i-1)\pi}{2N},0\right)\\ \hspace{3.5cm}\text{for even N.}
\end{cases}
\end{equation}
\end{lemma}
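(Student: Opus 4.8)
The plan is to translate the combinatorial sum over outcome strings $\vec{x}\in\{\pm1\}^N$ with the first $k$ entries fixed to $+1$ into a sum over a geometric index $i$ labelling the unit vectors $\vec{e}(\vec{x})$ from Theorem~\ref{gluslov}. The key structural fact, from Remark~\ref{remcyc}, is that every outcome $\vec{x}$ carrying a non-zero effect corresponds (bijectively) to a unit vector $\vec{e}(\vec{x})$, and by Remark~\ref{rem1} the full set $\{\vec{e}(\vec{x})\}_{\vec{x}}$ is exactly $\{(\cos\tfrac{(i-1)\pi}{N},\sin\tfrac{(i-1)\pi}{N},0)\}_{i=-N+1}^{N}$ for odd $N$, and $\{(\cos\tfrac{(2i-1)\pi}{2N},\sin\tfrac{(2i-1)\pi}{2N},0)\}_{i=-N+1}^{N}$ for even $N$. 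So the left-hand side of \eqref{id1eqn} is just the sum of those $\vec{e}$ that additionally satisfy the sign constraints $x_1=\dots=x_k=+1$, i.e.\ $\sgn(\vec{n}_j\cdot\vec{e})=+1$ for $j=1,\dots,k$ (equivalently, using $p=k-1$, for the first $p+1$ directions $\vec{n}_j$). The task reduces to determining, for each parity of $N$, precisely which values of the index $i$ meet all $k$ of these half-plane conditions.

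First I would fix notation: write $\vec{n}_j=(\cos\tfrac{(j-1)\pi}{N},\sin\tfrac{(j-1)\pi}{N},0)$ and, for odd $N$, $\vec{e}_i=(\cos\tfrac{(i-1)\pi}{N},\sin\tfrac{(i-1)\pi}{N},0)$. Then $\vec{n}_j\cdot\vec{e}_i=\cos\tfrac{(i-j)\pi}{N}$, which is strictly positive precisely when $|i-j|<N/2$ (the boundary $|i-j|=N/2$ cannot occur for odd $N$, and for even $N$ the shifted-by-half indexing similarly avoids the zero). Imposing this for $j=1$ and $j=k=p+1$ simultaneously gives a system of two inequalities $|i-1|<N/2$ and $|i-(p+1)|<N/2$; since the intermediate directions $\vec{n}_2,\dots,\vec{n}_p$ lie angularly between $\vec{n}_1$ and $\vec{n}_{p+1}$, the two extreme constraints automatically imply the rest, so no separate check for $2\le j\le p$ is needed. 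Solving the pair of inequalities for integer $i$ and taking ceilings/floors yields $\lceil p+1-\tfrac N2\rceil\le i\le\lfloor\tfrac{N+1}{2}\rfloor$, which for odd $N$ simplifies to $p+\tfrac{3-N}{2}\le i\le\tfrac{N+1}{2}$ (both endpoints integral). For even $N$ the same argument with $\vec{e}_i=(\cos\tfrac{(2i-1)\pi}{2N},\sin\tfrac{(2i-1)\pi}{2N},0)$ gives $\vec{n}_j\cdot\vec{e}_i=\cos\tfrac{(2i-2j+1)\pi}{2N}>0$ iff $|2i-2j+1|<N$, and combining $j=1$ with $j=p+1$ produces $p+1-\tfrac N2\le i\le\tfrac N2$, matching the stated range.

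The main obstacle I expect is purely bookkeeping: being careful with the off-by-one shifts ($p=k-1$), with the half-integer indexing in the even case, and with the ceiling/floor rounding so that the summation limits come out as the clean closed forms quoted in \eqref{id1eqn}. A secondary subtlety is justifying rigorously that satisfying the two ``extreme'' sign conditions $\sgn(\vec{n}_1\cdot\vec{e})=\sgn(\vec{n}_{p+1}\cdot\vec{e})=+1$ forces $\sgn(\vec{n}_j\cdot\vec{e})=+1$ for all $j$ in between; this follows from the fact that the positive half-planes $\{\vec{e}:\vec{n}_j\cdot\vec{e}>0\}$ are rotations of one another through the monotone sequence of angles $(j-1)\pi/N$, so their pairwise intersections are nested — the intersection over $j\in\{1,\dots,p+1\}$ equals the intersection of the two with the largest angular separation. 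Once these two points are handled, \eqref{id1eqn} falls out by simply collecting the surviving indices.
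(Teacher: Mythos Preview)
Your proposal is correct and follows essentially the same route as the paper: both reduce the outcome-string sum to a range of the geometric index $i$ by imposing the two sign constraints $\vec{n}_1\cdot\vec{e}_i>0$ and $\vec{n}_k\cdot\vec{e}_i>0$, computing these dot products as cosines, and intersecting the resulting integer intervals (separately for odd and even $N$). The only difference is that you explicitly justify why the intermediate constraints $j=2,\dots,p$ may be dropped via the nesting of half-planes, whereas the paper simply starts from the two extreme conditions without comment; your version is slightly more self-contained in this respect.
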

\begin{proof}
We have to sum over all $\vec{e}(\vec{x})$ such that $\vec{x}$ ``starts" with $+1$ in the first $k$ slots. The boundaries for the summation index $i$ are determined from the conditions
\begin{subequations}
\begin{align}
\vec{n}_1\cdot\left(\cos\frac{(i-1)\pi}{N},\sin\frac{(i-1)\pi}{N},0\right)&>0,\hphantom{a}\text{and}\nonumber\\
\vec{n}_k\cdot\left(\cos\frac{(i-1)\pi}{N},\sin\frac{(i-1)\pi}{N},0\right)&>0,\hphantom{a}\text{for odd } N;\\
\vec{n}_1\cdot\left(\cos\frac{(2i-1)\pi}{2N},\sin\frac{(2i-1)\pi}{2N},0\right)&>0,\hphantom{a}\text{and}nonumber\\ 
\vec{n}_k\cdot\left(\cos\frac{(2i-1)\pi}{2N},\sin\frac{(2i-1)\pi}{2N},0\right)&>0,\hphantom{a}\text{for even } N.
\end{align}
\end{subequations}
Then for odd $N$:
\begin{equation}
\vec{n}_1\cdot\left(\cos\frac{(i-1)\pi}{N},\sin\frac{(i-1)\pi}{N},0\right)= \cos\frac{(i-1)\pi}{N}>0.
\end{equation} 
This inequality gives 
\begin{align}
\frac{(i-1)\pi}{N}\in\left(-\frac{\pi}{2},\frac{\pi}{2}\right)&\Longrightarrow 1-\frac{N}{2}<i<1+\frac{N}{2}\nonumber\\
&\Longrightarrow \frac{3-N}{2}\leq i\leq \frac{1+N}{2}.
\label{usl21}
\end{align}
Also, 
\begin{align}
&\vec{n}_k\cdot\left(\cos\frac{(i-1)\pi}{N},\sin\frac{(i-1)\pi}{N},0\right)\nonumber\\
&=\cos\frac{(k-1)\pi}{N}\cos\frac{(i-1)\pi}{N}+\sin\frac{(k-1)\pi}{N}\sin\frac{(i-1)\pi}{N}\nonumber\\
&=\cos\frac{(k-1)\pi-(i-1)\pi}{N} =\cos\frac{(k-i)\pi}{N}>0.
\end{align}
This inequality gives 
\begin{align}
\frac{(k-i)\pi}{N}\in\left(-\frac{\pi}{2},\frac{\pi}{2}\right)&\Longrightarrow k-\frac{N}{2}<i<k+\frac{N}{2}\nonumber\\
&\Longrightarrow k+\frac{1-N}{2}\leq i\leq k+\frac{N-1}{2}\nonumber\\&\Longrightarrow p+\frac{3-N}{2}\leq i \leq p+\frac{N+1}{2}.
\label{usl2k}
\end{align}
Taking the intersection of Eqs.~\eqref{usl21} and \eqref{usl2k} we get 
\begin{equation}
p+\frac{3-N}{2}\leq i \leq \frac{N+1}{2}.
\label{n}
\end{equation}
Now for even $N$:
\begin{equation}
\vec{n}_1\cdot\left(\cos\frac{(2i-1)\pi}{2N},\sin\frac{(2i-1)\pi}{2N},0\right)= \cos\frac{(2i-1)\pi}{2N}>0,
\end{equation}
which in similar succession of steps as in \eqref{usl21} gives 
\begin{equation}
1-\frac{N}{2}\leq i\leq\frac{N}{2}.
\label{usl21p}
\end{equation}
Also, 
\begin{align}
&\vec{n}_k\cdot\left(\cos\frac{(2i-1)\pi}{2N},\sin\frac{(2i-1)\pi}{2N},0\right)\nonumber\\
&= \cos\frac{(k-1)\pi}{N}\cos\frac{(2i-1)\pi}{2N}+\sin\frac{(k-1)\pi}{N}\sin\frac{(2i-1)\pi}{2N}\nonumber\\
&=\cos\frac{2(k-1)\pi-(2i-1)\pi}{2N}=\cos\frac{2(k-i)-1}{2N}\pi>0.
\end{align}
Again this will give 
\begin{equation}
p+1-\frac{N}{2}\leq i \leq p+\frac{N}{2}.
\label{usl2kp}
\end{equation}
The intersection of Eqs.~\eqref{usl21p} and \eqref{usl2kp} yields
\begin{equation}
p+1-\frac{N}{2}\leq i \leq \frac{N}{2}.
\label{p}
\end{equation}
The conditions from Eqs.~\eqref{n} and \eqref{p} can be combined into 
\begin{equation}
\left\lceil p+1-\frac{N}{2} \right\rceil\leq i \leq \left\lfloor \frac{N+1}{2} \right\rfloor,
\end{equation}
which completes the proof of Eq.~\eqref{id1eqn}.
\end{proof}

\begin{lemma}
\label{id2}
\upshape The following three identities hold:
\begin{align}
&\sum_{k=a}^{b}e^{i k \phi}=\frac{e^{ia\phi}-e^{i(b+1)\phi}}{1-e^{i\phi}},\nonumber\\
&\sum_{k=a}^{b}\sin(k\phi)=\frac{\sin\left(\frac{1-a+b}{2}\phi\right)\sin\left(\frac{a+b}{2}\phi\right)}{\sin\frac{\phi}{2}},\nonumber\\
&\sum_{k=a}^b\cos(k\phi)=\frac{\sin\left(\frac{1-a+b}{2}\phi\right)\cos\left(\frac{a+b}{2}\phi\right)}{\sin\frac{\phi}{2}}.
\end{align}
\end{lemma}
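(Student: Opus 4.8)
The plan is to prove the first identity by the standard finite geometric series argument and then obtain the other two by taking imaginary and real parts after rewriting the right-hand side in a manifestly symmetric form. Throughout I would make the standing assumption $e^{i\phi}\neq 1$ (equivalently, $\phi$ not an integer multiple of $2\pi$); this is harmless since in every application of this lemma in the paper one has $\phi=\pi/N$ or $\phi=\pi/(2N)$.

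First I would write $\sum_{k=a}^{b} e^{ik\phi} = e^{ia\phi}\sum_{j=0}^{b-a}(e^{i\phi})^{j}$ and use the elementary formula $\sum_{j=0}^{m}z^{j}=\frac{1-z^{m+1}}{1-z}$ with $z=e^{i\phi}$ and $m=b-a$, which gives $\sum_{k=a}^{b} e^{ik\phi} = e^{ia\phi}\,\frac{1-e^{i(b-a+1)\phi}}{1-e^{i\phi}} = \frac{e^{ia\phi}-e^{i(b+1)\phi}}{1-e^{i\phi}}$. This is the first identity. The key step for the remaining two is to factor the "midpoint" phase out of numerator and denominator: the numerator equals $e^{i\frac{a+b+1}{2}\phi}\bigl(e^{-i\frac{b-a+1}{2}\phi}-e^{i\frac{b-a+1}{2}\phi}\bigr) = -2i\,e^{i\frac{a+b+1}{2}\phi}\sin\!\bigl(\tfrac{b-a+1}{2}\phi\bigr)$, and the denominator equals $e^{i\frac{\phi}{2}}\bigl(e^{-i\frac{\phi}{2}}-e^{i\frac{\phi}{2}}\bigr) = -2i\,e^{i\frac{\phi}{2}}\sin\frac{\phi}{2}$. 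Dividing, the $-2i$ factors cancel and one is left with $\sum_{k=a}^{b} e^{ik\phi} = e^{i\frac{a+b}{2}\phi}\,\dfrac{\sin\!\bigl(\tfrac{b-a+1}{2}\phi\bigr)}{\sin\frac{\phi}{2}}$.

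Since $\dfrac{\sin\!\bigl(\tfrac{b-a+1}{2}\phi\bigr)}{\sin\frac{\phi}{2}}$ is real, taking $\Imag$ of both sides gives $\sum_{k=a}^{b}\sin(k\phi)=\dfrac{\sin\!\bigl(\tfrac{1-a+b}{2}\phi\bigr)\sin\!\bigl(\tfrac{a+b}{2}\phi\bigr)}{\sin\frac{\phi}{2}}$, and taking $\Real$ gives the corresponding formula with $\cos\!\bigl(\tfrac{a+b}{2}\phi\bigr)$ in place of $\sin\!\bigl(\tfrac{a+b}{2}\phi\bigr)$, using $b-a+1=1-a+b$ to match the stated form. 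I do not anticipate any real obstacle; the only point needing care is the explicit recording of the hypothesis $e^{i\phi}\neq1$ so that the division by $1-e^{i\phi}$ is legitimate, together with checking that the factored half-angle rewriting is valid for all $\phi$ in the relevant range.
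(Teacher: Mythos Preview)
Your proof is correct and follows the same strategy as the paper: establish the first identity via the finite geometric series and then read off the other two as the imaginary and real parts. The only difference is cosmetic---you factor out the midpoint phase $e^{i\frac{a+b}{2}\phi}$ from numerator and denominator to obtain a real coefficient times $e^{i\frac{a+b}{2}\phi}$ and then take $\Real$ and $\Imag$ in one line, whereas the paper rationalizes the denominator to $4\sin^2\frac{\phi}{2}$ and works through several sum-to-product steps; your route is shorter but not a different idea.
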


\begin{proof}
We will explicitly sum only the first equality and then obtain the other two as its real and imaginary part. Using the geometric series sum identity $\sum_{k=0}^n q^k=\frac{1-q^{n+1}}{1-q}$ we get:
\begin{align}
\sum_{k=a}^{b}e^{ik\phi}&=\sum_{k=0}^{b}e^{ik\phi}-\sum_{k=0}^{a-1}e^{ik\phi}=\frac{e^{ia\phi}-e^{i(b+1)\phi}}{1-e^{i\phi}}.
\end{align}
Now we find 
\begin{align}
&\sum_{k=a}^b\sin(k\phi)=\Imag\left(\frac{e^{ia\phi}-e^{i(b+1)\phi}}{1-e^{i\phi}}\right)\nonumber\\
&=\Imag\left(\frac{e^{ia\phi}-e^{i(b+1)\phi}-e^{i(a-1)\phi}+e^{ib\phi}}{4\sin^2\frac{\phi}{2}}\right)\nonumber\\
&=\frac{\sin a\phi+\sin b\phi -\Big(\sin(a-1)\phi+\sin(b+1)\phi\Big)}{4\sin^2\frac{\phi}{2}}\nonumber\\
&=\frac{2\sin\left(\frac{a+b}{2}\phi\right)\cos\left(\frac{a-b}{2}\phi\right)-2\sin\left(\frac{a+b}{2}\phi\right)\cos\left(\frac{a-b-2}{2}\phi\right)}{4\sin^2\frac{\phi}{2}}\nonumber\\
&=\frac{\sin\left(\frac{a+b}{2}\phi\right)\left(\cos\left(\frac{a-b}{2}\phi\right)-\cos\left(\frac{a-b-2}{2}\phi\right)\right)}{2\sin^2\frac{\phi}{2}}\nonumber\\&=\frac{\sin\left(\frac{a+b}{2}\phi\right)\times 2\sin\left(\frac{1-a+b}{2}\phi\right)\sin\frac{\phi}{2}}{2\sin^2\frac{\phi}{2}}\nonumber\\
&=\frac{\sin\left(\frac{1-a+b}{2}\phi\right)\sin\left(\frac{a+b}{2}\phi\right)}{\sin\frac{\phi}{2}},\textrm{ and}\\
&\sum_{k=a}^b\cos(k\phi)=\Real\left(\frac{e^{ia\phi}-e^{i(b+1)\phi}}{1-e^{i\phi}}\right)=\ldots=\nonumber\\
&=\frac{\sin\left(\frac{1-a+b}{2}\phi\right)\cos\left(\frac{a+b}{2}\phi\right)}{\sin\frac{\phi}{2}},
\end{align}
which completes the proof.
\end{proof}

\begin{corollary}\label{appAcorr}
\upshape The following identities hold:
\begin{align}
&\boxed{\sum_{i=p+1-\frac{N}{2}}^{\frac{N}{2}}\sin\left(\frac{2i-1}{2N}\pi\right)=\frac{\sin\frac{p\pi}{N}}{2\sin\frac{\pi}{2N}}},\nonumber\\
&\boxed{\sum_{i=p+1-\frac{N}{2}}^{\frac{N}{2}}\cos\left(\frac{2i-1}{2N}\pi\right)=\frac{\cos^2\frac{p\pi}{2N}}{\sin\frac{\pi}{2N}}},\nonumber\\
&\boxed{\sum_{i=p+\frac{3-N}{2}}^{\frac{N+1}{2}}\sin\left(\frac{i-1}{N}\pi\right)=\frac{\sin\frac{p\pi}{N}}{2\sin\frac{\pi}{2N}}},\nonumber\\
&\boxed{\sum_{i=p+\frac{3-N}{2}}^{\frac{N+1}{2}}\cos\left(\frac{i-1}{N}\pi\right)=\frac{\cos^2\frac{p\pi}{2N}}{\sin\frac{\pi}{2N}}}.
\label{corl2}
\end{align}
\end{corollary}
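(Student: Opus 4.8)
The plan is to reduce all four boxed identities to the closed-form evaluations of Lemma~\ref{id2} by an elementary change of summation variable and, where needed, by factoring out a constant phase. No new ideas are required; the work is bookkeeping.

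First I would treat the two sums whose summand involves $\frac{(i-1)\pi}{N}$. Substituting $j=i-1$ turns the summation range into $j\in\{p+\tfrac{1-N}{2},\ldots,\tfrac{N-1}{2}\}$ and the summand into $\sin(j\phi)$ or $\cos(j\phi)$ with $\phi=\pi/N$. Applying Lemma~\ref{id2} with $a=p+\tfrac{1-N}{2}$ and $b=\tfrac{N-1}{2}$, one has $a+b=p$ and $1-a+b=N-p$, so $\tfrac{a+b}{2}\phi=\tfrac{p\pi}{2N}$ and $\tfrac{1-a+b}{2}\phi=\tfrac{\pi}{2}-\tfrac{p\pi}{2N}$. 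Since $\sin(\tfrac{\pi}{2}-\tfrac{p\pi}{2N})=\cos\tfrac{p\pi}{2N}$, the sine sum collapses to $\frac{\cos\frac{p\pi}{2N}\sin\frac{p\pi}{2N}}{\sin\frac{\pi}{2N}}=\frac{\sin\frac{p\pi}{N}}{2\sin\frac{\pi}{2N}}$ by the double-angle formula, and the cosine sum to $\frac{\cos^2\frac{p\pi}{2N}}{\sin\frac{\pi}{2N}}$; these are precisely the third and fourth boxed identities.

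Next, for the two sums with summand involving $\frac{(2i-1)\pi}{2N}$, I would write $\frac{(2i-1)\pi}{2N}=i\phi-\tfrac{\phi}{2}$ with $\phi=\pi/N$ and pass to exponentials, $\sum_i e^{\mathrm{i}(i\phi-\phi/2)}=e^{-\mathrm{i}\phi/2}\sum_i e^{\mathrm{i}i\phi}$, evaluating the geometric sum by the first identity of Lemma~\ref{id2} with $a=p+1-\tfrac{N}{2}$, $b=\tfrac{N}{2}$ and using $1-e^{\mathrm{i}\phi}=-2\mathrm{i}\sin\tfrac{\phi}{2}\,e^{\mathrm{i}\phi/2}$. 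This reduces the required sums to $\frac{\cos((a-1)\phi)-\cos(b\phi)}{2\sin(\phi/2)}$ in the sine case and $\frac{\sin(b\phi)-\sin((a-1)\phi)}{2\sin(\phi/2)}$ in the cosine case; with $(a-1)\phi=\tfrac{p\pi}{N}-\tfrac{\pi}{2}$, $b\phi=\tfrac{\pi}{2}$, and the identities $\cos(\tfrac{p\pi}{N}-\tfrac{\pi}{2})=\sin\tfrac{p\pi}{N}$, $\sin(\tfrac{p\pi}{N}-\tfrac{\pi}{2})=-\cos\tfrac{p\pi}{N}$, $1+\cos\tfrac{p\pi}{N}=2\cos^2\tfrac{p\pi}{2N}$, one recovers the first and second boxed identities.

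The whole argument is mechanical; the only step needing care is the half-integer argument $\tfrac{2i-1}{2N}$ in the even-$N$ sums, which is why I would route those two through the complex-exponential form rather than trying to match Lemma~\ref{id2} term by term. Everything else is index arithmetic together with standard trigonometric identities, and a quick sanity check at $p=0$ (both sine sums vanish, both cosine sums equal $1/\sin\frac{\pi}{2N}$ for the even case and $\cos^2 0/\sin\frac{\pi}{2N}$ as expected) confirms the evaluations.
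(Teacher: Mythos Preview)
Your proof is correct and follows essentially the same approach as the paper: both reduce the four sums to Lemma~\ref{id2} via elementary reindexing and trigonometric identities. The paper proves only the first boxed identity explicitly (by expanding $\sin(i\pi/N-\pi/2N)$ via the angle-subtraction formula and then applying Lemma~\ref{id2} to the resulting sine and cosine sums), whereas you route the half-integer sums through the exponential form directly and handle the $(i-1)\pi/N$ sums by the substitution $j=i-1$; these are minor tactical variations on the same idea.
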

\begin{proof}
We prove one of the identities using Lemma \ref{id2} and the rest can be verified similarly:
\begin{align}
&\sum_{i=p+1-\frac{N}{2}}^{\frac{N}{2}}\sin\left(\frac{2i-1}{2N}\pi\right)=\sum_{i=p+1-\frac{N}{2}}^{\frac{N}{2}}\sin\left(i\frac{\pi}{N}-\frac{\pi}{2N}\right)\nonumber\\
&=\sum_{i=p+1-\frac{N}{2}}^{\frac{N}{2}}\left(\sin\left(i\frac{\pi}{N}\right)\cos\frac{\pi}{2N}-\cos\left(i\frac{\pi}{N}\right)\sin\left(\frac{\pi}{2N}\right)\right)\nonumber\\
&=\cos\left(\frac{\pi}{2N}\right)\sum_{i=p+1-\frac{N}{2}}^{\frac{N}{2}}\sin\left(i\frac{\pi}{N}\right)\nonumber\\
&\hphantom{=}-\sin\left(\frac{\pi}{2N}\right)\sum_{i=p+1-\frac{N}{2}}^{\frac{N}{2}}\cos\left(i\frac{\pi}{N}\right)\nonumber\\
&=\cos\left(\frac{\pi}{2N}\right)\frac{\sin\frac{(N-p)\pi}{2N}\sin\frac{(p+1)\pi}{2N}}{\sin\frac{\pi}{2N}}\nonumber\\
&\hphantom{=}-\sin\left(\frac{\pi}{2N}\right)\frac{\sin\frac{(N-p)\pi}{2N}\cos\frac{(p+1)\pi}{2N}}{\sin\frac{\pi}{2N}}\nonumber\\
&=\frac{\sin\frac{(N-p)\pi}{2N}}{\sin\frac{\pi}{2N}}\left(\sin\frac{(p+1)\pi}{2N}\cos\frac{\pi}{2N}-\cos\frac{(p+1)\pi}{2N}\sin\frac{\pi}{2N}\right)\nonumber\\
&=\frac{\sin\left(\frac{N-p}{2}\frac{\pi}{N}\right)}{\sin\frac{\pi}{2N}}\sin\frac{p\pi}{2N}=\frac{\sin\frac{p\pi}{2N}\cos\frac{p\pi}{2N}}{\sin\frac{\pi}{2N}}\nonumber\\
&=\frac{\sin\frac{p\pi}{N}}{2\sin\frac{\pi}{2N}}.
\end{align}
\end{proof}
\begin{lemma} We derive the bounds for index $i$. 
\begin{proof}
\label{id3}
Recalling that (cf.~Theorem \ref{gluslov}) the set $\Big\{\vec{e}(\vec{x})|\vec{x}\in\{\pm1\}^N\Big\}$ is equal to 
$$\Big\{\big(\cos\frac{(i-1)\pi}{N},\sin\frac{(i-1)\pi}{N},0\big)\Big|i=\overline{-N+1,N}\Big\}$$ for odd $N$ and 
$$\Big\{\big(\cos\frac{(2i-1)\pi}{2N},\sin\frac{(2i-1)\pi}{2N},0\big)\Big|i=\overline{-N+1,N}\Big\}$$ for even $N$ we obtain the following conditions on index $i$. 

For odd $N$ we have the following $4$ inequalities:
\begin{align}
\vec{n}_1\cdot\left(\cos\frac{(i-1)\pi}{N},\sin\frac{(i-1)\pi}{N},0\right)&>0,\\
\vec{n}_{k_p}\cdot\left(\cos\frac{(i-1)\pi}{N},\sin\frac{(i-1)\pi}{N},0\right)&>0,\\
\vec{n}_{k_{p+1}}\cdot\left(\cos\frac{(i-1)\pi}{N},\sin\frac{(i-1)\pi}{N},0\right)&<0,\\
\vec{n}_N\cdot\left(\cos\frac{(i-1)\pi}{N},\sin\frac{(i-1)\pi}{N},0\right)&<0.
\end{align}

The first of them becomes
\begin{align}
\vec{n}_1\cdot\left(\cos\frac{(i-1)\pi}{N},\sin\frac{(i-1)\pi}{N}\right)= \cos\frac{(i-1)\pi}{N}>0
\end{align}
yielding 
\begin{equation}
\frac{(i-1)\pi}{N}\in\left(-\frac{\pi}{2},\frac{\pi}{2}\right)\Rightarrow 1-\frac{N}{2}<i<1+\frac{N}{2}. \label{a1}
\end{equation}
The second one becomes
\begin{equation}
\vec{n}_{k_p}\cdot\left(\cos\frac{(i-1)\pi}{N},\sin\frac{(i-1)\pi}{N}\right)=\cos\frac{k_p-i}{N}\pi>0,
\end{equation}
yielding 
\begin{equation}
\frac{k_p-i}{N}\pi\in\left(-\frac{\pi}{2},\frac{\pi}{2}\right)\Rightarrow k_p-\frac{N}{2}<i<k_p+\frac{N}{2}.\label{a2}
\end{equation}
The intersection of \eqref{a1} and \eqref{a2} gives us 
\begin{equation}
k_p-\frac{N}{2}<i<\frac{N}{2}+1.\label{a}
\end{equation}
The third inequality gives us 
\begin{equation}
\vec{n}_{k_{p+1}}\cdot\left(\cos\frac{(i-1)\pi}{N},\sin\frac{(i-1)\pi}{N}\right)=\cos\frac{k_{p+1}-i}{N}\pi<0,
\end{equation}
yielding 
\begin{equation}
\frac{|i-k_{p+1}|}{N}\pi\in\left(\frac{\pi}{2},\frac{3\pi}{2}\right),
\end{equation}
so that 
\begin{equation}
\frac{i-k_{p+1}}{N}\pi\in\left(-\frac{3\pi}{2},-\frac{\pi}{2}\right)\textrm{ or }\frac{i-k_{p+1}}{N}\pi\in\left(\frac{\pi}{2},\frac{3\pi}{2}\right).
\end{equation}
Thus, for $i$ we have
\begin{align}
&k_{p+1}-\frac{3N}{2}<i<k_{p+1}-\frac{N}{2}, \textrm{ or }\nonumber\\
&k_{p+1}+\frac{N}{2}<i<k_{p+1}+\frac{3N}{2}.
\end{align}
Noting that $-N+1\leq i\leq N$ and $1\leq k_{p+1}\leq N$, the condition $k_{p+1}-\frac{3N}{2}<i<k_{p+1}-\frac{N}{2}$ covers the possible values of $i$ allowed in either situation, whether $i<k_{p+1}$ or $i>k_{p+1}$. Hence, we have

\begin{equation}
\label{b1}
k_{p+1}-\frac{3N}{2}<i<k_{p+1}-\frac{N}{2}.
\end{equation}

Finally, similar to the previous case, the fourth inequality gives us
\begin{equation}
\vec{n}_N\cdot\left(\cos\frac{(i-1)\pi}{N},\sin\frac{(i-1)\pi}{N}\right)=\cos\frac{(i-N)\pi}{N}<0,
\end{equation}
yielding
\begin{equation}
\frac{i-N}{N}\pi\in\left(-\frac{3\pi}{2},-\frac{\pi}{2}\right),
\end{equation}
so $i$ has to satisfy
\begin{equation}
-\frac{N}{2}<i<\frac{N}{2}.
\label{b2}
\end{equation}
Taking the intersection between \eqref{b1} and \eqref{b2} we obtain
\begin{equation}
-\frac{N}{2}<i<k_{p+1}-\frac{N}{2}\label{b}
\end{equation}
Finally taking the intersection between \eqref{a} and \eqref{b} we have
\begin{equation}
k_p-\frac{N}{2}<i<k_{p+1}-\frac{N}{2},
\end{equation}
so that the bounds for summation over $i$ are given by
\begin{equation}
\boxed{k_p-\frac{N-1}{2}\leq i \leq k_{p+1}-\frac{N+1}{2}}
\end{equation}
for odd $N$.

For even $N$, the starting inequalities are
\begin{align}
&\vec{n}_1\cdot\left(\cos\frac{(2i-1)\pi}{2N},\sin\frac{(2i-1)\pi}{2N},0\right)>0,\nonumber\\
&\vec{n}_{k_p}\cdot\left(\cos\frac{(2i-1)\pi}{2N},\sin\frac{(2i-1)\pi}{2N},0\right)>0,\\
&\vec{n}_{k_{p+1}}\cdot\left(\cos\frac{(2i-1)\pi}{2N},\sin\frac{(2i-1)\pi}{2N},0\right)<0,\nonumber\\
&\vec{n}_N\cdot\left(\cos\frac{(2i-1)\pi}{2N},\sin\frac{(2i-1)\pi}{2N},0\right)<0,
\end{align}
which, in a similar succession of steps as in the case of odd $N$, gives us
\begin{equation}
\boxed{k_p-\frac{N}{2}\leq i \leq k_{p+1}-\frac{N+2}{2}.}
\end{equation}
\end{proof}
\end{lemma}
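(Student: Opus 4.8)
The plan is to prove Lemma~\ref{id3}, i.e.\ to derive the admissible range of the scalar index $i$ over which the vector sum $\sum_{\vec{x}}\vec{e}(\vec{x})$ runs (with $\vec{x}$ restricted to strings having $x_1=\dots=x_{k_p}=+1$ and $x_{k_{p+1}}=\dots=x_N=-1$) once it is rewritten as a sum of equatorial unit vectors indexed by $i$. First I would recall from Theorem~\ref{gluslov} and Remark~\ref{rem1} that the outcomes carrying a non-zero effect of the joint POVM $G$ are in bijection with the unit vectors $\vec{e}(\vec{x})$, which are $(\cos\tfrac{(i-1)\pi}{N},\sin\tfrac{(i-1)\pi}{N},0)$ for odd $N$ and $(\cos\tfrac{(2i-1)\pi}{2N},\sin\tfrac{(2i-1)\pi}{2N},0)$ for even $N$, with $i$ ranging over $\overline{-N+1,N}$, and that this bijection is realised by $x_j=\sgn(\vec{n}_j\cdot\vec{e}(\vec{x}))$ (Eq.~\eqref{plansymout}). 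Hence the $i$'s we want are exactly those for which $\vec{n}_j\cdot\vec{e}_i>0$ for every $j\le k_p$ and $\vec{n}_j\cdot\vec{e}_i<0$ for every $j\ge k_{p+1}$.

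The key reduction is that these $N$ sign conditions collapse to just four. Since the lines of the $\vec{n}_j$ are equiangularly ordered and span a total angle strictly less than $\pi$ (because $k_p,k_{p+1}\le N$), the half-planes $\{\vec{n}_j\cdot\vec{e}>0\}$ for $1\le j\le k_p$ are angularly nested, so their intersection equals that of just the two extreme ones $j=1$ and $j=k_p$; likewise $\{\vec{n}_j\cdot\vec{e}<0\}$ for $k_{p+1}\le j\le N$ reduces to $j=k_{p+1}$ and $j=N$ (this is the same cyclic/block structure as in Remark~\ref{remcyc}). I would then evaluate these four inner products using $\vec{n}_j=(\cos\tfrac{(j-1)\pi}{N},\sin\tfrac{(j-1)\pi}{N},0)$: for odd $N$ each is $\cos\tfrac{(j-i)\pi}{N}$ up to relabelling, and $\cos(\cdot)>0$ resp.\ $<0$ places $\tfrac{(j-i)\pi}{N}$ in $(-\tfrac\pi2,\tfrac\pi2)$ resp.\ in its complement. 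Solving gives $1-\tfrac N2<i<1+\tfrac N2$ from $\vec{n}_1$, $k_p-\tfrac N2<i<k_p+\tfrac N2$ from $\vec{n}_{k_p}$, $k_{p+1}-\tfrac{3N}2<i<k_{p+1}-\tfrac N2$ from $\vec{n}_{k_{p+1}}$ (after discarding the branch incompatible with $-N+1\le i\le N$), and $-\tfrac N2<i<\tfrac N2$ from $\vec{n}_N$; intersecting all four yields $k_p-\tfrac N2<i<k_{p+1}-\tfrac N2$. The even-$N$ case is identical with $\cos\tfrac{(2(j-i)-1)\pi}{2N}$ in place of $\cos\tfrac{(j-i)\pi}{N}$.

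Finally I would convert the real inequality $k_p-\tfrac N2<i<k_{p+1}-\tfrac N2$ into integer bounds, which is where the parity of $N$ enters: for odd $N$, $\tfrac N2$ is a half-integer and one gets $k_p-\tfrac{N-1}2\le i\le k_{p+1}-\tfrac{N+1}2$, while for even $N$ the $(2i-1)$ parametrization shifts things and yields $k_p-\tfrac N2\le i\le k_{p+1}-\tfrac{N+2}2$; in both cases the number of summands is $k_{p+1}-k_p$, as it must be. Plugging these limits into Lemma~\ref{id2} (via Corollary~\ref{appAcorr}) then evaluates the vector sum in closed form, which is all that Section~\ref{subsec3_2} requires. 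I expect the main obstacle to be not any single computation but the geometric bookkeeping: (i) making rigorous the claim that the four extreme sign conditions subsume the other $N-4$ — one must invoke both the angular ordering of the $\vec{n}_j$ and the fact that their span is below $\pi$, otherwise a ``wrap-around'' vector could slip through; and (ii) tracking the ceilings and floors, and the two disconnected branches of each $\cos(\cdot)<0$ condition, so that no off-by-one error creeps in. The trigonometry itself is routine.
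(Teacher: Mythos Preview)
Your proposal is correct and follows essentially the same route as the paper: write down the four extremal sign conditions on $\vec{n}_1,\vec{n}_{k_p},\vec{n}_{k_{p+1}},\vec{n}_N$, reduce each to a cosine inequality, solve for the real interval on $i$, intersect, and then pass to integer bounds using the parity of $N$. If anything, you are slightly more careful than the paper, since you explicitly justify why the $N$ sign conditions collapse to the four extremal ones via the angular nesting of the half-planes, whereas the paper simply writes down the four inequalities without comment.
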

\newpage
\begin{lemma}We derive the marginalization identity \eqref{margidth1} in the proof to Theorem~\ref{counbiqu}.\label{appAmarg}
\begin{proof}
\begin{widetext}
\begin{align}
\sum_{\vec{y}\in\{\pm\}^N}^{y_k=\pm1}G^s(\vec{y})&=G^s(\pm1,\ldots,y_k,\ldots,\pm1)+\sum_{k\leq p<N}^{y_k=\pm1}G^{s}(\underbrace{\pm1,\ldots,\pm1}_{p\text{ `$\pm1$'s }},\underbrace{\mp1,\ldots,\mp1}_{N-p\text{ `$\mp1$'s }})\nonumber\\
&+\sum_{1\leq p<k}^{y_k=\pm1}G^{s}(\underbrace{\mp1,\ldots,\mp1}_{p\text{ `$\mp1$'s }},\underbrace{\pm1,\ldots,\pm1}_{N-p\text{ `$\pm1$'s }})\nonumber\\
&=\frac{1}{2}\left(1-\eta\sum_{p=1}^{N-1}\sin\frac{\alpha_p-\alpha_{p-1}}{2}\right)I+\frac{1}{2}\eta\sum_{p=k}^{N-1}\sin\frac{\alpha_p-\alpha_{p-1}}{2}I+\frac{1}{2}\eta\sum_{p=1}^{k-1}\sin\frac{\alpha_p-\alpha_{p-1}}{2}I\nonumber\\
&\pm\frac{1}{2}\eta\cos\frac{\alpha_{N-1}}{2}\vec{s}\cdot\vec{\sigma}\pm\frac{1}{2}\eta\sum_{p=k}^{N-1}\sin\frac{\alpha_{p}-\alpha_{p-1}}{2}\vec{t}_p\cdot\vec{\sigma}\mp\frac{1}{2}\eta\sum_{p=1}^{k-1}\sin\frac{\alpha_{p}-\alpha_{p-1}}{2}\vec{t}_p\cdot\vec{\sigma}\nonumber\\
&=\frac{1}{2}I\pm\frac{1}{2}\eta\left(\cos\frac{\alpha_{N-1}}{2}\vec{s}+\sum_{p=k}^{N-1}\sin\frac{\alpha_p-\alpha_{p-1}}{2}\vec{t}_p-\sum_{p=1}^{k-1}\sin\frac{\alpha_p-\alpha_{p-1}}{2}\vec{t}_p\right)\cdot\vec{\sigma}=\frac{1}{2}I\pm\frac{1}{2}\eta\vec{g}_k\cdot\vec{\sigma}.
\label{intermed}
\end{align}
\end{widetext}
The geometric part becomes:
\begin{widetext}
\begin{align}
\vec{g}_k&=\cos\frac{\alpha_{N-1}}{2}\frac{\vec{n}_1+\vec{n}_N}{||\vec{n}_1+\vec{n}_N||}+\sum_{p=k}^{N-1}\sin\frac{\alpha_p-\alpha_{p-1}}{2}\frac{(\vec{n}_{p+1}+\vec{n}_p)\times\vec{e}_z}{||\vec{n}_{p+1}+\vec{n}_p||}-\sum_{p=1}^{k-1}\sin\frac{\alpha_p-\alpha_{p-1}}{2}\frac{(\vec{n}_{p+1}+\vec{n}_p)\times\vec{e}_z}{||\vec{n}_{p+1}+\vec{n}_p||}\nonumber\\
&=\cos\frac{\alpha_{N-1}}{2}\frac{\vec{n}_1+\vec{n}_N}{2\cos\frac{\alpha_{N-1}}{2}}+\sum_{p=k}^{N-1}\sin\frac{\alpha_p-\alpha_{p-1}}{2}\frac{(\vec{n}_{p+1}+\vec{n}_p)\times\vec{e}_z}{2\cos\frac{\alpha_{p}-\alpha_{p-1}}{2}}-\sum_{p=1}^{k-1}\sin\frac{\alpha_p-\alpha_{p-1}}{2}\frac{(\vec{n}_{p+1}+\vec{n}_p)\times\vec{e}_z}{2\cos\frac{\alpha_{p}-\alpha_{p-1}}{2}}\nonumber\\
&=\frac{1}{2}(\vec{n}_1+\vec{n}_N)+\frac{1}{2}\sum_{p=k}^{N-1}\tan\frac{\alpha_p-\alpha_{p-1}}{2}\cot\frac{\alpha_p-\alpha_{p-1}}{2}(\vec{n}_p-\vec{n}_{p+1})-\frac{1}{2}\sum_{p=1}^{k-1}\tan\frac{\alpha_p-\alpha_{p-1}}{2}\cot\frac{\alpha_p-\alpha_{p-1}}{2}(\vec{n}_p-\vec{n}_{p+1})\nonumber\\
&=\frac{1}{2}\left(\vec{n}_1+\vec{n}_N-\sum_{p=k}^{N-1}(\vec{n}_{p+1}-\vec{n}_p)+\sum_{p=1}^{k-1}(\vec{n}_{p+1}-\vec{n}_p)\right),
\label{geom}
\end{align}
\end{widetext}
where $\vec{e}_z=(0,0,1)$ and we have used the fact that
$$
(\vec{n}_{p+1}+\vec{n}_p)\times\vec{e}_z=\cot\frac{\alpha_{p}-\alpha_{p-1}}{2}(\vec{n}_p-\vec{n}_{p+1}),
$$
which can easily be derived by linear decomposition $(\vec{n}_{p+1}+\vec{n}_p)\times\vec{e}_z=a\vec{n}_p+b\vec{n}_{p+1}$, and finding $a$ and $b$. There are three distinct cases: $k=1$, $k=N$ and $1<k<N$, but in all of them by trivial algebraic manipulations of Eq.~\eqref{geom} we obtain
\begin{equation}
\vec{g}_k=\vec{n}_k
\end{equation}
and this in combination with Eq.~\eqref{intermed} yields
\begin{equation}
\sum_{\vec{y}\in\{\pm\}^N}^{y_k=\pm1}G^s(\vec{y})=\frac{1}{2}\left(I\pm\eta\vec{n}_k\cdot\vec{\sigma}\right)=E_k(\pm1).
\end{equation}
\end{proof}
\end{lemma}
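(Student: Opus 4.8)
The statement asserts that the operators $G^s(\vec{x})$ written down in Eq.~\eqref{jpovmarbcop} of Theorem~\ref{counbiqu}, with all other effects set to zero, have the POVMs $E_k$ as marginals, i.e.\ that Eq.~\eqref{margidth1} holds; the plan is a direct verification. Fix $k\in\{1,\dots,N\}$ and consider the outcome value $+1$ (the value $-1$ is symmetric, and also follows from $\sum_{\vec{x}}G^s(\vec{x})=I$ together with $E_k(-1)=I-E_k(+1)$). The nonzero effects of $G^s$ are indexed by the ``monotone'' strings $(\underbrace{+1,\dots,+1}_{p},\underbrace{-1,\dots,-1}_{N-p})$ and their reverses, $0\le p\le N$. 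Among them, the strings whose $k$-th entry is $+1$ are exactly: the all-$+1$ string; the strings $(\underbrace{+1,\dots,+1}_{p},\underbrace{-1,\dots,-1}_{N-p})$ for $k\le p\le N-1$; and the strings $(\underbrace{-1,\dots,-1}_{p},\underbrace{+1,\dots,+1}_{N-p})$ for $1\le p\le k-1$. I would add up the corresponding effects and split the sum into its $I$-part and its $\vec{\sigma}$-part.

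The $I$-part is immediate: the all-$+1$ effect contributes $\tfrac12\bigl(1-\eta\sum_{p=1}^{N-1}\sin\tfrac{\alpha_p-\alpha_{p-1}}{2}\bigr)I$, each of the other $N-1$ contributing effects adds $\tfrac12\eta\sin\tfrac{\alpha_p-\alpha_{p-1}}{2}I$, and since $\{k,\dots,N-1\}\cup\{1,\dots,k-1\}=\{1,\dots,N-1\}$ these cancel down to $\tfrac12 I$, which is exactly the $I$-part of $E_k(+1)$.

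The $\vec{\sigma}$-part is where the geometry enters. The point is to rewrite the bisector vector $\vec{s}$ and the anti-bisector vectors $\vec{t}_p$ in terms of the Bloch vectors $\vec{n}_k=(\cos\alpha_{k-1},\sin\alpha_{k-1},0)$. Using the definitions of $\vec{s},\vec{t}_p$ (equivalently the normalizations $\|\vec{n}_1+\vec{n}_N\|=2\cos\tfrac{\alpha_{N-1}}{2}$, $\|\vec{n}_p+\vec{n}_{p+1}\|=2\cos\tfrac{\alpha_p-\alpha_{p-1}}{2}$, and the identity $(\vec{n}_p+\vec{n}_{p+1})\times\vec{e}_z=\cot\tfrac{\alpha_p-\alpha_{p-1}}{2}(\vec{n}_p-\vec{n}_{p+1})$ obtained by a two-term linear decomposition), a short trigonometric check gives
\begin{equation}
\cos\tfrac{\alpha_{N-1}}{2}\,\vec{s}=\tfrac12(\vec{n}_1+\vec{n}_N),\qquad \sin\tfrac{\alpha_p-\alpha_{p-1}}{2}\,\vec{t}_p=\tfrac12(\vec{n}_p-\vec{n}_{p+1}).
\end{equation}
Substituting, and recalling that the reversed strings carry $-\vec{t}_p$ while the others carry $+\vec{t}_p$, the coefficient of $\tfrac12\eta\,\vec{\sigma}$ in the marginal becomes
\begin{equation}
\tfrac12(\vec{n}_1+\vec{n}_N)+\tfrac12\sum_{p=k}^{N-1}(\vec{n}_p-\vec{n}_{p+1})-\tfrac12\sum_{p=1}^{k-1}(\vec{n}_p-\vec{n}_{p+1}),
\end{equation}
and the two sums telescope to $\vec{n}_k-\vec{n}_N$ and $\vec{n}_1-\vec{n}_k$, so the bracket collapses to $\vec{n}_k$; the boundary cases $k=1$ and $k=N$, where one of the two sums is empty, are handled automatically by the same telescoping. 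Hence $\sum_{\vec{y}:\,y_k=+1}G^s(\vec{y})=\tfrac12(I+\eta\vec{n}_k\cdot\vec{\sigma})=E_k(+1)$, which is Eq.~\eqref{margidth1}.

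The main ``obstacle'' is purely organizational: enumerating correctly which monotone strings feed into a given marginal and keeping track of the single sign flip $\vec{t}_p\mapsto-\vec{t}_p$ between the two families of strings. Once that is in place, the scalar part collapses by a one-line cancellation and the vector part by a telescoping sum, so no analytic input beyond elementary trigonometry is needed.
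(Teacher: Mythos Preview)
Your proposal is correct and follows essentially the same approach as the paper: enumerate the monotone strings with $y_k=+1$, split into the $I$-part (which cancels to $\tfrac12 I$) and the $\vec{\sigma}$-part, rewrite $\cos\tfrac{\alpha_{N-1}}{2}\,\vec{s}$ and $\sin\tfrac{\alpha_p-\alpha_{p-1}}{2}\,\vec{t}_p$ in terms of $\vec{n}_1+\vec{n}_N$ and $\vec{n}_p-\vec{n}_{p+1}$, and telescope to $\vec{n}_k$. The paper routes the vector identities through the cross-product formula $(\vec{n}_{p+1}+\vec{n}_p)\times\vec{e}_z=\cot\tfrac{\alpha_p-\alpha_{p-1}}{2}(\vec{n}_p-\vec{n}_{p+1})$ before simplifying, whereas you state the resulting identities directly, but the argument is otherwise identical.
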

\end{document}